\newtheorem{theorem}{Theorem}
\newtheorem{lemma}{Lemma}[section]
\newtheorem{claim}[lemma]{Claim}
\newtheorem{proposition}[lemma]{Proposition}
\theoremstyle{definition}
\newtheorem{definition}[lemma]{Definition}
\theoremstyle{remark}
\newcommand{\depth}{\alpha}
\newcommand{\findTD}{\mathtt{FindTD}}
\newcommand{\findPTD}{\mathtt{FindPartialTD}}
\newcommand{\whatsep}{\mathtt{whatsep}}
\newcommand{\compress}[1]{\mathtt{Compress}_{#1}}
\newcommand{\alg}[1]{\mathtt{Alg}_{#1}}
\newcommand{\build}{\mathtt{build}}
\newcommand{\ds}{\mathcal{DS}}
\newcommand{\oldS}{\mathrm{old_S}}
\newcommand{\oldPin}{\mathrm{old_\pi}}
\newcommand{\sep}{\mathtt{sep}}
\newcommand{\children}{\mathrm{children}}
\newcommand{\pins}{\mathtt{pins}}
\newcommand{\bags}{\mathtt{bags}}
\newcommand{\qnei}{\textnormal{findNeighborhood}}
\newcommand{\qUsep}{\textnormal{findUSeparator}}
\newcommand{\qSsep}{\textnormal{findSSeparator}}
\newcommand{\qpin}{\textnormal{findNextPin}}
\newcommand{\ext}{\textrm{ext}}
\newcommand{\true}{\top}
\newcommand{\false}{\bot}
\newcommand{\pin}{\pi}
\newcommand{\getS}{\mathrm{get}_S}
\newcommand{\getPin}{\mathrm{get}_{\pi}}
\newcommand{\insertF}{\mathrm{insert}_{F}}
\newcommand{\insertS}{\mathrm{insert}_{S}}
\newcommand{\insertX}{\mathrm{insert}_{X}}
\newcommand{\setPin}{\mathrm{set}_{\pi}}
\newcommand{\clearX}{\mathrm{clear}_{X}}
\newcommand{\clearF}{\mathrm{clear}_{F}}
\newcommand{\clearS}{\mathrm{clear}_{S}}
\newcommand{\td}{\mathcal{T}} 
\newcommand{\tw}{\mathrm{tw}} 
\newcommand{\w}{\mathrm{w}}   
\newcommand{\R}{\mathbb{R}}   
\newcommand{\defquery}[3]{
  \vspace{1mm}
\noindent\fbox{
  \begin{minipage}{0.97\textwidth}
  #1 \\
  {\bf{Output:}} #2  \\
  {\bf{Time:}} #3
  \end{minipage}
}
  \vspace{1mm}
}
\newcommand{\apx}{\textrm{apx}}
\begin{document}

\title{A $c^k n$ 5-Approximation Algorithm for Treewidth}
\author{
  Hans L.\ Bodlaender\thanks{Department of Information and
    Computing Sciences, Utrecht University, P.O. Box 80.089, 3508 TB
    Utrecht, the Netherlands.  {h.l.bodlaender@uu.nl}}
  \and 
  Pål Grønås Drange\thanks{Department of Informatics, Univerity of
    Bergen, Norway.  \{Pal.Drange, Markus.Dregi, fomin,
      Daniel.Lokshtanov, Michal.Pilipczuk\}@ii.uib.no}
  \and
  Markus S.\ Dregi\footnotemark[2]
  \and
  Fedor V. Fomin\footnotemark[2]
  \and
  Daniel Lokshtanov\footnotemark[2]
  \and
  Micha\l{} Pilipczuk\footnotemark[2]}

\maketitle

\begin{abstract}
  We give an algorithm that for an input $n$-vertex graph $G$ and
  integer $k>0$, in time $2^{O(k)} n$ either outputs that the
  treewidth of $G$ is larger than $k$, or gives a tree decomposition
  of $G$ of width at most $5k+4$.  This is the first algorithm
  providing a constant factor approximation for treewidth which runs
  in time single-exponential in $k$ and linear in $n$.
  
  Treewidth based computations are subroutines of numerous algorithms.
  Our algorithm can be used to speed up many such algorithms to work
  in time which is single-exponential in the treewidth and linear in
  the input size.
\end{abstract}

\section{Introduction}
\label{section:introduction}
Since its invention in the 1980s, the notion of treewidth has come to
play a central role in an enormous number of fields, ranging from very
deep structural theories to highly applied areas.  An important (but
not the only) reason for the impact of the notion is that many graph
problems that are intractable on general graphs become efficiently
solvable when the input is a graph of bounded treewidth.  In most
cases, the first step of an algorithm is to find a tree decomposition
of small width and the second step is to perform a dynamic programming
procedure on the tree decomposition.

In particular, if a graph on $n$ vertices is given together with a
tree decomposition of width $k$, many problems can be solved by
dynamic programming in time $2^{O(k)}n$, i.e., single-exponential in
the treewidth and linear in $n$.  Many of the problems admitting such
algorithms have been known for over thirty years~\cite{Bodlaender87}
but new algorithmic techniques on graphs of bounded
treewidth~\cite{BodlaenderCKN12, CyganKN12} as well as new problems
motivated by various applications (just a few of many examples
are~\cite{AbrahamBDR12,Gildea11,KosterHK02,RinaudoPBD12}) continue to
be discovered.  While a reasonably good tree decomposition can be
derived from the properties of the problem sometimes, in most of the
applications, the computation of a good tree decomposition is a
challenge.
Hence the natural question here is what can be done when no tree
decomposition is given.  In other words, is there an algorithm that
for a given graph $G$ and integer $k$, in time $2^{O(k)}n$ either
correctly reports that the treewidth of $G$ is at least $k$, or finds
an optimal solution to our favorite problem (finds a maximum
independent set, computes the chromatic number, decides if $G$ is
Hamiltonian, etc.)?  To answer this question it would be sufficient to
have an algorithm that in time $2^{O(k)}n$ either reports correctly
that the treewidth of $G$ is more that $k$, or construct a tree
decomposition of width at most $ck$ for some constant $c$.

However, the lack of such algorithms has been a bottleneck, both in
theory and in practical applications of the treewidth concept.  The
existing approximation algorithms give us the choice of running times
of the form $2^{O(k)} n^2$, $2^{O(k \log{k})} n\log{n}$, or
$k^{O(k^3)} n$, see Table~\ref{table:tw_history}.  Remarkably, the
newest of these current record holders is now almost 20 years old.
This ``newest record holder'' is the linear time algorithm of
Bodlaender~\cite{Bodlaender93s,Bodlaender96} that given a graph $G$,
decides if the treewidth of $G$ is at most $k$, and if so, gives a
tree decomposition of width at most $k$ in $O(k^{O(k^3)}n)$ time.  The
improvement by Perkovi{\'{c}} and Reed~\cite{PerkovicR00} is only a
factor polynomial in $k$ faster (but also, if the treewidth is larger
than $k$, it gives a subgraph of treewidth more than $k$ with a tree
decomposition of width at most $k$, leading to an $O(n^2)$ algorithm
for the fundamental disjoint paths problem).  Recently, a version
running in logarithmic space was found by Elberfeld et
al.~\cite{ElberfeldJT10}, but its running time is not linear.

\begin{savenotes}
\begin{table}[htdp]
  \begin{center}
    \begin{tabular}{|c|c|c|c|}
      \hline
      Reference                                 & Approximation   &   $f(k)$&  $g(n)$ \\ \hline
      Arnborg  et al.~\cite{ArnborgCP87}        & exact         & $O(1)$ & $O(n^{k+2})$ \\
      Robertson \& Seymour~\cite{RobertsonS13}  & $4k + 3$      &  $O(3^{3k})$  & $n^2$\\
      Lagergren~\cite{Lagergren96}              & $8k + 7$      & $2^{O(k\log{k})}$ & $n\log^2{n}$\\
      Reed~\cite{Reed92}                        & $8k+O(1)$\footnotemark & $2^{O(k\log{k})}$& $n \log{n}$ \\
      Bodlaender~\cite{Bodlaender96}            & exact      &$O(k^{O(k^3)})$ & $n$\\
      Amir~\cite{Amir10}                        & $4.5k$        & $O(2^{3k}k^{3/2} )$ & $n^2$ \\
      Amir~\cite{Amir10}                        & $(3+2/3)k$    & $O(2^{3.6982k}k^3 )$ & $n^2$ \\
      Amir~\cite{Amir10}                        & $O(k\log{k})$ & $O(k\log{k})$ & $n^4$ \\
      Feige et al.~\cite{FeigeHL08}             & $O(k \cdot \sqrt{\log k})$ & $O(1)$  & $n^{O(1)}$ \\
      This paper                                & $3k + 4$      &  $2^{O(k)}$ & $n\log{n}$ \\
      This paper                                & $5k + 4$      &  $2^{O(k)}$ & $n$ \\
      \hline
    \end{tabular}
  \end{center}
  \caption{Overview of treewidth  algorithms.  Here $k$ is the treewidth
    and $n$ is the number of vertices of an input graph $G$.
    Each of the algorithms outputs in time $f(k)\cdot g(n)$ a
    decomposition of width given in the Approximation column.}
  \label{table:tw_history}
\end{table}
\end{savenotes}

\footnotetext{Reed~\cite{Reed92} does not state the approximation ratio of his algorithm explicitly. However, a careful analysis of his manuscript show that the algorithm can be implemented to give a tree decomposition of width at most $8k+O(1)$.}

In this paper, we give the first constant factor approximation
algorithm for the treewidth graph such that its running time is single
exponential in treewidth and linear in the size of the input
graph. Our main result is the following theorem.

\begin{theorem}\label{thm:mainThm}
  There exists an algorithm, that given an $n$-vertex graph $G$ and
  an integer $k$, in time $2^{O(k)} n$ either outputs that the treewidth
  of $G$ is larger than $k$, or constructs a tree decomposition of $G$
  of width at most $5k + 4$.
\end{theorem}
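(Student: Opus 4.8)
The plan is to combine the classical recursive construction of tree decompositions through balanced separators with a \emph{compression} mechanism that eliminates the $\log n$ factor. Concretely I would first establish the weaker statement — a width-$(3k+4)$ decomposition in time $2^{O(k)}n\log n$, the second-to-last row of Table~\ref{table:tw_history} — and then upgrade it. The basic building block is a \emph{balanced-separator oracle}: given a graph $H$ and a set $W\subseteq V(H)$ with $|W|=O(k)$, in time $2^{O(k)}|V(H)|$ it either reports ``$\tw(H)>k$'' or returns $S\subseteq V(H)$ with $|S|\le k+1$ such that every connected component of $H-S$ contains at most $|W|/2$ vertices of $W$. When $\tw(H)\le k$ such an $S$ exists (the standard balanced-separator fact), and one can find it directly: discard $H$ if $|E(H)|>k|V(H)|$ (then $\tw(H)>k$); otherwise, for each of the $2^{O(k)}$ patterns assigning the vertices of $W$ to ``left'', ``right'' or ``separator'', compute by max-flow a minimum vertex cut of value $\le k+1$ consistent with the pattern — this costs $O(k\cdot|E(H)|)=O(k^2|V(H)|)$ per pattern since the flow value is at most $k+1$ — and keep one whose cut is balanced for $W$; if no pattern yields such a cut, report $\tw(H)>k$. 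Since $H$ will always be an induced subgraph of $G$ (or, later, a minor of $G$), a ``$\tw(H)>k$'' verdict is sound for $G$.

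\textbf{The $2^{O(k)}n\log n$ algorithm.} Feed the oracle into the classical Robertson--Seymour/Reed recursion. Maintain a connected ``flap'' $C$ of $G$ together with its boundary $W=N_G(C)$, $|W|=O(k)$; call the oracle to obtain a separator $S$, $|S|\le k+1$, balanced with respect to $W$ — and, using a combined weight function, also balanced with respect to $C$, so that each flap passed to a recursive call is a constant fraction smaller; emit the bag $W\cup S$; and recurse on the sub-flaps of $G[C]-S$, each with boundary contained in $S$ together with the vertices of $W$ in its own component. Balance with respect to $W$ keeps every boundary of size $O(k)$, hence every bag of width $O(k)$ — with a slightly more careful choice of $S$, at most $3k+4$; balance with respect to $C$ keeps the recursion depth $O(\log n)$. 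As the flaps at each recursion level have total size $O(kn)$ and an oracle call on a piece of size $m$ costs $2^{O(k)}m$, the running time is $2^{O(k)}n\log n$; an oracle failure anywhere certifies $\tw(G)>k$.

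\textbf{Achieving linear time.} The logarithm comes from the fact that one vertex may lie in a flap at every recursion level. To stop re-processing settled vertices, replace the ``recurse into all flaps'' pattern by an iterative sweep, in the style of $\findPTD$, that emits the decomposition one bag at a time. Maintain a tripartition of $V(G)$ into a \emph{forgotten} set $\forgotten$ already covered by the part of $\td$ built so far, a current separator $S$ of size $\le k+1$, and an \emph{unknown} set $\unknown$, together with a set of \emph{pins} marking where future bags must attach; at each step push $S$ into $\unknown$ — using data-structure queries $\qnei$, $\qUsep$, $\qSsep$, $\qpin$ of $\ds$ — move a fresh chunk of vertices into $\forgotten$, and emit one bag. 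Whenever $\forgotten$ has grown by more than $\mathrm{poly}(k)$ vertices since the last compression, invoke $\compress{}$: run the $2^{O(k)}n\log n$ algorithm on (a bounded-boundary closure of) the newly forgotten region, which costs only $2^{O(k)}$ because that region has $\mathrm{poly}(k)$ vertices, splice the resulting width-$O(k)$ decomposition into $\td$, and replace that region in the working graph held by $\ds$ by a gadget of $O(k)$ vertices with the same interface. Then $\ds$ never holds more than $\mathrm{poly}(k)$ live vertices, each query costs $2^{O(k)}$, and over $O(n)$ steps the total is $2^{O(k)}n$. The gadget bookkeeping is what raises the width: a final bag may need to contain the current boundary, the current separator and the $O(k)$-vertex interface of an adjacent gadget, and with care this stays within $5k+4$.

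\textbf{The main obstacle.} The crux is the compression step. One must prove that replacing a forgotten region by a bounded gadget preserves \emph{exactly} the separation and attachment properties the sweep and the final assembly depend on, so that the object finally output is a genuine tree decomposition of all of $G$; and simultaneously that the repeated ``extend a bag by $S$'' and gadget-substitution operations never inflate the width past $5k+4$. In parallel, the amortized analysis must be made to go through: $\ds$ must be implemented so that $\qnei,\qUsep,\qSsep,\qpin$ each run in $2^{O(k)}$ time on the compressed working graph, and one needs a potential function — roughly $|\unknown|$ plus the size of the not-yet-compressed part of $\forgotten$ — whose decrease pays for both the sweep steps and the compressions. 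Soundness of the ``$\tw(G)>k$'' answer, by contrast, is routine: every graph on which a subroutine fails is an induced subgraph of $G$ or, after compression, a minor of $G$, so its large treewidth certifies $\tw(G)>k$.
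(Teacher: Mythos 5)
Your $O(2^{O(k)}n\log n)$ stage is broadly sound and matches the paper's Section~\ref{section:nlogn} in spirit (balanced separators, flap recursion, $O(\log n)$ recursion depth via $U$-balance), though the claimed width $3k+4$ is not justified by the max-flow oracle you describe: the $2^{O(|W|)}$ pattern-enumeration only yields a $\frac{2}{3}$-balanced $W$-separator, which gives Reed's $8k+7$, not $3k+4$; getting down to a factor~3 requires finding a genuine $\frac{1}{2}$-balanced separator by dynamic programming over an auxiliary $O(k)$-width decomposition obtained via Bodlaender's reduction scheme (Lemma~\ref{lem:prelim:bodlaender}), which is not part of your oracle.

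The real gap is in the linear-time upgrade. Your claim that ``$\ds$ never holds more than $\mathrm{poly}(k)$ live vertices, each query costs $2^{O(k)}$'' cannot be right: compressing the \emph{forgotten} region into a small gadget does nothing about the \emph{unknown} region $\unknown$, which still has $\Theta(n)$ vertices, and answering $\qSsep$/$\qUsep$ requires global information about $G[\unknown]$. No amount of gadgetry on the already-processed side reduces this, so per-query time remains $\Omega(\log n)$ (the depth of the auxiliary tree decomposition over which the DP tables must be re-propagated), not $2^{O(k)}$. Consequently your amortization does not close and the $n\log n$ factor is not eliminated. Moreover the ``bounded gadget with the same interface'' is asserted, not constructed, and even granting it, it is orthogonal to the bottleneck just described. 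The paper's actual route is different and you would need its two missing ingredients: (a) the bootstrapping chain $O(c^k n\log^{(\depth)}n)$ obtained by running the $\findPTD$ variant that stops when components drop below $\log n$ (using $\qUsep$ so that the partial decomposition has only $O(n/\log n)$ bags, hence $O(c^k n)$ time) and then recursing with the previous algorithm on the $<\log n$-sized remainders; and (b) the case split on $n$ versus $2^{2^{O(k^3)}}$, where for huge $n$ the Bodlaender--Kloks dynamic program is precompiled into an explicit finite-state tree automaton whose $2^{2^{O(k^3)}}=O(n)$-size transition table supports constant-time lookups in the RAM model (Lemma~\ref{lemma:tablelookupBodlaenderKloks}). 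Neither ingredient appears in your sketch, and without them the ``main obstacle'' you name is not merely unproven — the plan as stated would not give $2^{O(k)}n$.
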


Of independent interest are a number of techniques that we use to
obtain the result and the intermediate result of an algorithm that
either tells that the treewidth is larger than $k$ or outputs a tree
decomposition of width at most $3k+4$ in time $2^{O(k)} n \log n$.


\paragraph{Related results and techniques.}
The basic shape of our algorithm is along the same lines as about all
of the treewidth approximation algorithms~\cite{Amir10,
  BodlaenderGHK95, FeigeHL08, Lagergren96, Reed92, RobertsonS13},
i.e., a specific scheme of repeatedly finding separators.  If we ask
for polynomial time approximation algorithms for treewidth, the
currently best result is that of~\cite{FeigeHL08} that gives in
polynomial (but not linear) time a tree decomposition of width $O(k
\cdot \sqrt{\log k})$ where $k$ is the treewidth of the graph.  Their
work also gives a polynomial time approximation algorithm with ratio
$O(|V_H|^2)$ for $H$-minor free graphs.
By Austrin et al.~\cite{AustrinPW12}, assuming the Small Set Expansion
Conjecture, there is no polynomial time approximation algorithm for
treewidth with a constant performance ratio.

An important element in our algorithms is the use of a data structure
that allows to perform various queries in time $O(c^k \log n)$ each,
for some constant $c$.  This data structure is obtained by adding
various new techniques to old ideas from the area of dynamic
algorithms for graphs of bounded treewidth~\cite{Bodlaender92b,
  CohenSTV93, ChaudhuriZ98, ChaudhuriZ00, Hagerup00}.

A central element in the data structure is a tree decomposition of the
input graph of bounded (but too large) width such that the tree used
in the tree decomposition is binary and of logarithmic depth.  To
obtain this tree decomposition, we combine the following techniques:
following the scheme of the exact linear time
algorithms~\cite{Bodlaender96, PerkovicR00}, but replacing the call to
the dynamic programming algorithm of Bodlaender and
Kloks~\cite{BodlaenderK96} by a recursive call to our algorithm, we
obtain a tree decomposition of $G$ of width at most $10k+9$ (or
$6k+9$, in the case of the $O(c^k n \log n)$ algorithm of
Section~\ref{section:nlogn}.)  Then, we use a result by Bodlaender and
Hagerup~\cite{BodlaenderH98} that this tree decomposition can be
turned in a tree decomposition with a logarithmic depth binary tree in
linear time, or more precisely, in $O(\log n)$ time and $O(n)$ operations on an EREW PRAM.
The cost of this transformation is increasing the width of the decomposition roughly three times.
The latter result is an application of classic results from parallel computing for solving problems on trees,
in particular Miller-Reif tree contraction~\cite{MillerR89,MillerR91}.

Using the data structure to ``implement'' the algorithm of Robertson
and Seymour~\cite{RobertsonS13} already gives an $O(c^k n\log n)$
3-approximation for treewidth (Section~\ref{section:nlogn}).
Additional techniques are needed to speed this algorithm up.  We build
a series of algorithms, with running times of the forms $O(c^k n \log
\log n)$, $ O(c^k n \log \log \log n), \ldots$, etc.  Each algorithm
``implements'' Reeds algorithm~\cite{Reed92}, but with a different
procedure to find balanced separators of the subgraph at hand, and
stops when the subgraph at hand has size $O(\log n)$.  In the latter
case, we call the previous algorithm of the series on this subgraph.

Finally, to obtain a linear time algorithm, we consider two cases, one
case for when $n$ is ``small'' (with respect to $k$), and one case
when $n$ is ``large'', where we consider $n$ to be small if
\[
n \leq 2^{2^{c_0 k^3}} \textrm{, for some constant $c_0$.}
\]
For small values of $n$, we apply $O(c^k n \log^{(2)} n)$ algorithm
from Section~\ref{section:logi}.  This will yield a linear running
time in $n$ since $\log^{(2)}n \leq k$.  For larger values of $n$, we
show that the linear time algorithms of~\cite{Bodlaender96}
or~\cite{PerkovicR00} can be implemented in truly linear time, without any overhead depending on $k$. This seemingly surprising result can be roughly obtained as follows. 
We explicitly construct the finite state tree automaton of the dynamic
programming algorithm in sublinear time before processing the graph, and then view the dynamic programming routine as a run of the automaton, where productions are implemented as constant time table lookups. Viewing a dynamic programming algorithm on a tree decomposition as a finite state automaton traces
back to early work by Fellows and Langston~\cite{FellowsL89}, see e.g., also~\cite{AbrahamsonF93}. Our algorithm assumes the RAM model of computation~\cite{SavageBook}, and the only aspect of the RAM model which is exploited by our algorithm is the ability to look up an entry in a table in constant time, independently of the size of the table. This capability is crucially used in almost every linear time graph algorithm including breadth first search and depth first search.

%

\paragraph{Overview of the paper.}
In Section~\ref{sec:outline} we give the outline of the main algorithms, focusing on explaining main intuitions rather than formal details of the proofs.
Some concluding remarks and open questions are made in Section~\ref{section:conclusions}. 
Sections~\ref{section:nlogn}, \ref{section:logi}
and~\ref{section:linear} give the formal descriptions of the main algorithms: first the $O(c^k n
\log n)$ algorithm, then the series of $O(c^k n \log^{(\depth)} n)$
algorithms, before describing the $O(c^k n)$ algorithm.  Each of the
algorithms described in these sections use queries to a data structure
which is described in Section~\ref{section:datastructure}.

\paragraph{Notation.}
We give some basic definitions and notation, used throughout the paper.  For
$\depth \in \mathbb{N}$, the function $\log^{(\depth)} n$ is defined
as follows: $\log^{(1)} n = \log n$, and for $\depth > 1$,
$\log^{(\depth)} n = \log (\log^{(\depth-1)} n)$.

For the presentation of our results, it is more convenient when we
regard tree decompositions as rooted.  This yields the following
definition of tree decompositions.
\begin{definition}
  \label{def:prelim:treewidth}
  A \emph{tree decomposition} of a graph $G=(V,E)$ is a pair $\td =
  (\{B_i \mid i \in I\}, T=(I,F))$ where $T=(I,F)$ is a rooted tree,
  and $\{B_i \mid i \in I\}$ is a family of subsets of $V$, such that
  \begin{itemize}
  \item $\bigcup_{i\in I} B_i = V$,
  \item for all $\{v,w\}\in E$, there exists an $i\in I$ with $v,w\in B_i$,
    and
  \item for all $v\in V$, $I_v = \{i\in I \mid v \in B_i \}$ induces a
    subtree of $T$.
  \end{itemize}
  The \emph{width} of $\td = (\{B_i \mid i\in I\}, T=(I,F))$, denoted
  $\w(\td)$ is $\max_{i\in I} |B_i|-1$.  The \emph{treewidth} of a
  graph $G$, denoted by $\tw(G)$, is the minimum width of a tree
  decomposition of $G$.
\end{definition}
For each $v \in V$, the tree induced by $I_v$ is denoted by $T_v$; the
root of this tree, i.e., the node in the tree with smallest distance
to the root of $T$ is denoted by $r_v$.

The sets $B_i$ are called the \emph{bags} of the decomposition.  For
each node $i \in I$, we define $V_i$ to be the union of bags contained
in the subtree of $T$ rooted in $i$, including $B_i$.  Moreover, we
denote $W_i = V_i\setminus B_i$ and $G_i = G[V_i]$, $H_i = G[W_i]$.
Note that by the definition of tree decomposition, $B_i$ separates
$W_i$ from $V\setminus V_i$.

\section{Proof outline}
\label{sec:outline}
Our algorithm builds on the constant factor approximation algorithm
for treewidth described in Graph Minors XIII~\cite{RobertsonS13} with
running time $O(c^kn^2)$.  We start with a brief explanation of a
variant of this algorithm.

\subsection{The $O(3^{3k}n^2)$ time $4$-approximation algorithm from
  Graph Minors XIII.}\label{sec:nSquareAppx}
The engine behind the algorithm is a lemma that states that graphs of
treewidth $k$ have balanced separators of size $k+1$.
In particular, for any way to assign non-negative weights to the
vertices there exists a set $X$ of size at most $k+1$ such that the
total weight of any connected component of $G \setminus X$ is at most
half of the total weight of $G$.  We will use the variant of the lemma
where some vertices have weight $1$ and some have weight $0$.

\begin{lemma}[Graph Minors II~\cite{RobertsonS2}]
  \label{lemma:halfhalf}
  If $\tw(G)\leq k$ and $S \subseteq V(G)$, then there exists $X \subseteq V(G)$ with
  $|X| \leq k+1$ such that every component of $G \setminus X$ has at
  most $\frac{1}{2}|S|$ vertices which are in $S$.
\end{lemma}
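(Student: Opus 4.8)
The statement: If $\tw(G) \le k$ and $S \subseteq V(G)$, then there exists $X$ with $|X| \le k+1$ such that every component of $G \setminus X$ contains at most $\frac12|S|$ vertices of $S$.

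Let me think about how to prove this.

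We have a tree decomposition $\td = (\{B_i\}, T)$ of width $\le k$, so every bag has size $\le k+1$. The idea is that some bag $B_i$ is itself a balanced separator for $S$. Root $T$ somewhere. For a node $i$, recall $V_i$ is the union of bags in the subtree at $i$, and $B_i$ separates $W_i = V_i \setminus B_i$ from $V \setminus V_i$. Consider the quantity $|S \cap W_i|$ (or $|S \cap V_i|$). At the root this is $|S|$ (minus stuff in the root bag); at leaves it is small. Walk down from the root: pick the child maximizing $|S \cap V_{\text{child}}|$. This is a monotone non-increasing walk. Stop at the first node $i$ where $|S \cap (V_i \setminus B_i)| \le \frac12 |S|$.

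Now I claim $X = B_i$ works. The components of $G \setminus B_i$ split into: (a) those contained in $W_i$, and (b) those contained in $V \setminus V_i$. For type (a): all of $W_i$ has at most $\frac12|S|$ vertices of $S$ by our stopping choice, so each such component certainly does. For type (b): each component $C$ lies in $V \setminus V_i$; I need $|S \cap C| \le \frac12|S|$. Here is where I'd use that $i$ was chosen as a child maximizing the $S$-weight, together with the fact that the components of $V \setminus V_i$ are distributed among the subtrees hanging off $T$ at the parent of $i$ (and the part above). Actually the cleanest classical argument: among all components of $G \setminus B_i$, let $C$ be one with $|S \cap C| > \frac12|S|$ (suppose for contradiction). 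Every component of $G\setminus B_i$ is "attached" to the tree $T - i$ in a connected piece of $T$; following standard reasoning one shows $C$ would have to sit entirely on one side, and by moving to the neighbor $j$ of $i$ in $T$ on that side, $|S \cap V_j \setminus B_j|$ (appropriately oriented) would still exceed $\frac12|S|$ — contradicting either the stopping rule (if $j$ is a child) or minimality of the first stopping node along the greedy walk (if $j$ is the parent, then the parent also had a heavy child, namely along the direction of $C$, but we chose the heaviest child, so the walk would have continued toward $C$). One must be slightly careful to orient everything so that the "heavy side" is always pushed into the subtree being refined.

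The cleaner packaging: think of $B_i$ as inducing a partition of $T - \{i\}$ into subtrees $T_1, \dots, T_d$ (the components of $T - i$), and each component of $G \setminus B_i$ lives in $\bigcup_{j \in T_\ell}(B_j \setminus B_i)$ for exactly one $\ell$. Define $w(i) = \max_\ell |S \cap (\bigcup_{j \in T_\ell}(B_j \setminus B_i))|$. This $w$ is exactly the max over components-of-$G\setminus B_i$ of their $S$-weight, up to the fact that several components can share a subtree — but that only helps, since then the per-component weight is even smaller, so it suffices to bound the per-subtree weight. Now pick $i$ minimizing $w(i)$; equivalently, run the standard "find a centroid-like node" argument: if some subtree $T_\ell$ at $i$ has $S$-weight $> \frac12|S|$, move to the neighbor of $i$ inside $T_\ell$; the weight on the side you came from is now $< \frac12|S|$ (since it is contained in the complement of $T_\ell$'s contribution, which had $> \frac12$), so you make progress and never revisit, hence terminate at a node where all subtrees carry $\le \frac12|S|$. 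Output $X = B_i$; $|X| \le k+1$.

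The main obstacle — and the only place that needs care — is the bookkeeping that "each component of $G \setminus B_i$ is confined to the vertices introduced in a single component of $T - i$." This follows from the connectivity axiom of tree decompositions ($I_v$ induces a subtree): if a path in $G \setminus B_i$ crossed from one $T$-component to another, some edge of it would have both endpoints with subtrees $T_v, T_w$ on opposite sides of $i$, forcing $i$ into an edge-bag or forcing overlap through $B_i$, contradiction. Once that confinement is established, the monotone descent argument is routine and the weight bound $\frac12|S|$ drops out immediately. I would state confinement as a small sub-claim, prove it from the three axioms in two lines, and then run the descent.
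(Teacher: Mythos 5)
Your ``cleaner packaging'' version --- confinement of each component of $G \setminus B_i$ to the vertices introduced in a single component $T_\ell$ of $T - i$, followed by a centroid-style walk toward any subtree carrying more than $\frac12|S|$ weight of $S$, with the observation that the side you came from now has weight $< \frac12|S|$ so the walk never revisits a node and must terminate --- is correct and is exactly the argument the paper sketches, where it is phrased as orienting each tree edge toward the heavier side and taking the node toward which all edges point. Your earlier rooted greedy-descent version (stop at the first $i$ with $|S \cap W_i| \le \frac12|S|$) does leave the bound on components living in $V \setminus V_i$ genuinely unresolved, as you yourself flag, but the centroid rewrite you give supersedes it, so the proposal as a whole is fine and matches the paper's approach.
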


We note that the original version of~\cite{RobertsonS2} is seemingly
stronger: it gives bound $\frac{1}{2}|S\setminus X|$ instead of
$\frac{1}{2}|S|$.  However, we do not need this stronger version and
we find it more convenient to work with the weaker.  The set $X$ with
properties ensured by Lemma~\ref{lemma:halfhalf} will be called a
{\emph{balanced $S$-separator}}, or a {\emph{$\frac{1}{2}$-balanced
    $S$-separator}}.  More generally, for an {\emph{$\beta$-balanced
    $S$-separator}} $X$ every connected component of $G\setminus X$
contains at most $\beta|S|$ vertices of $S$.  If we omit the set $S$,
i.e., talk about separators instead of $S$-separators, we mean
$S=V(G)$ and balanced separators of the whole vertex set.

The proof of Lemma~\ref{lemma:halfhalf} is not too hard; start with a
tree decomposition of $G$ with width at most $k$ and orient every edge
of the decomposition tree towards the side which contains the larger
part of the set $S$.  Two edges of the decomposition can not point
``in different directions'', since then there would be disjoint parts
of the tree, both containing more than half of $S$.  Thus there has to
be a node in the decomposition tree such that all edges of the
decomposition are oriented towards it.  The bag of the decomposition
corresponding to this node is exactly the set $X$ of at most $k+1$
vertices whose deletion leaves connected components with at most
$\frac{1}{2}|S|$ vertices of $S$ each.

The proof of Lemma~\ref{lemma:halfhalf} is constructive if one has
access to a tree decomposition of $G$ of width less than $k$.  The
algorithm does not have such a decomposition at hand, after all we are
trying to compute a decomposition of $G$ of small width.  Thus we have
to settle for the following algorithmic variant of
lemma~\cite{RobertsonS2}.

\begin{lemma}[\cite{RobertsonS13}]
  \label{lem:halfhalfalg}
  There is an algorithm that given a graph $G$, a set $S$ and a $k \in
  \mathbb{N}$ either concludes that $\tw(G) > k$ or outputs a set $X$
  of size at most $k + 1$ such that every component of $G \setminus X$
  has at most $\frac{2}{3}|S|$ vertices which are in $S$ and runs in
  time $O(3^{|S|}k^{O(1)}(n + m))$.
\end{lemma}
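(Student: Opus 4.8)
The plan is to reduce the problem of finding a balanced $S$-separator to a sequence of minimum vertex-cut computations, guided by the combinatorial fact (the proof of Lemma~\ref{lemma:halfhalf}) that a balanced $S$-separator of size $\le k+1$ exists whenever $\tw(G)\le k$. The key observation is a standard ``sunflower-style'' branching: if $X$ is a $\tfrac12$-balanced $S$-separator with $|X|\le k+1$, then every component of $G\setminus X$ meets $S$ in at most $\tfrac12|S|$ vertices, so we can greedily group these components into two parts $S=S_1\uplus S_2$ (plus the part of $S$ inside $X$) with $|S_i\setminus X|\le\tfrac23|S|$ for $i=1,2$: process the components in any order, adding each to whichever side is currently smaller; since each piece has size $\le\tfrac12|S|$, neither side can overshoot $\tfrac23|S|$. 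Thus $X$ is a vertex separator of $G$ separating $A$ from $B$ for some bipartition $S\setminus X = A\uplus B$ with $|A|,|B|\le\tfrac23|S|$.

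The algorithm then enumerates all $2^{|S|}$ ways to split $S$ into two sets $A,B$ (ignoring the third ``inside $X$'' part, which only shrinks the sides), discards those with $|A|>\tfrac23|S|$ or $|B|>\tfrac23|S|$, and for each surviving pair runs a max-flow / min-vertex-cut computation between $A$ and $B$ in $G$ (using the standard vertex-splitting gadget so that a vertex cut of size $\le k+1$ is detected, and capping the flow at $k+2$ so each computation runs in time $O(k\cdot(n+m))$ via $k+2$ rounds of BFS-based augmenting paths). If for some admissible pair $(A,B)$ the minimum $A$--$B$ vertex cut has size $\le k+1$, we output that cut as $X$: by construction every component of $G\setminus X$ lies entirely on one side of the cut, hence contains vertices of only one of $A$ or $B$ (plus possibly nothing from $S$), so it meets $S$ in $\le\max(|A|,|B|)+|S\cap X|$ vertices --- wait, more carefully, a component meets $S$ in at most $|A|$ or at most $|B|$ vertices among $S\setminus X$, so at most $\tfrac23|S|$. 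If no admissible pair yields a cut of size $\le k+1$, we report $\tw(G)>k$: this is correct because, as argued above, a graph of treewidth $\le k$ would have a balanced $S$-separator realizing one of the enumerated bipartitions with a cut of size $\le k+1$.

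The running time is $O(3^{|S|})$ rather than $O(2^{|S|})$ because one should be slightly more careful and enumerate the partition of $S$ into \emph{three} parts ($A$, $B$, and $S\cap X$), which gives $3^{|S|}$ choices; alternatively one can absorb the factor into the analysis of the two-way split. For each choice we do one capped max-flow in $O(k^{O(1)}(n+m))$ time, for a total of $O(3^{|S|}k^{O(1)}(n+m))$. The main subtlety to get right is the balancedness bookkeeping: one must verify that collapsing the three-part structure into an $A$--$B$ separation does not lose balance, i.e.\ that the greedy bin-packing of components into two sides indeed keeps each side below $\tfrac23|S|$ (this uses that each component contributes at most $\tfrac12|S|$, the worst case being one side at $\tfrac12|S|$ and then adding a piece that would push it to at most $\tfrac12|S|+\tfrac12|S|$ --- so in fact one needs the pieces added \emph{after} the first to be small, which is why the $\tfrac23$ slack, not $\tfrac12$, is the right target and why the weaker conclusion suffices). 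Everything else --- the vertex-capacity flow gadget, capping augmenting-path search at $k+2$ iterations, extracting a concrete cut from a max flow --- is routine.
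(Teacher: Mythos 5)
Your overall route is the same as the paper's: invoke Lemma~\ref{lemma:halfhalf} for existence of a $\tfrac12$-balanced $S$-separator $X'$, pack the components of $G\setminus X'$ into two sides so that each side sees at most $\tfrac23|S|$ vertices of $S$, enumerate the induced tri-partition $(S_L,S_R,S_X)$ of $S$ (hence the $3^{|S|}$), and recover a small separator by a capped max-flow. Two details are off, though, and both matter.

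First, the greedy packing claim is false as stated. You write that processing the components \emph{in any order} and always adding to the currently smaller side keeps both sides below $\tfrac23|S|$ because each piece is at most $\tfrac12|S|$. Take $|S|=6$, $S\cap X'=\emptyset$, and components contributing $1,1,4$ vertices of $S$ (each $\le 3$): the greedy in this order gives sides $1,1$ and then $1+4=5>4=\tfrac23\cdot 6$. The statement becomes true if you process the pieces in decreasing order of size (then the last piece added to the heavy side has size $>q/3$ only if it is among the first two processed, and both of those cases are impossible), or if you isolate the unique piece of size $\ge \tfrac13|S|$ first, if one exists, and then greedily distribute the rest. Your closing paragraph senses that something is off here (``one needs the pieces added after the first to be small''), but the worst case you describe --- a side at $\tfrac12|S|$ receiving another $\tfrac12|S|$ --- never occurs under the greedy invariant, so it does not identify the actual failure mode.

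Second, the max-flow should be run in $G\setminus S_X$ with a cut budget of $(k+1)-|S_X|$, not in $G$ with budget $k+1$. If you compute the minimum $S_L$--$S_R$ cut $Z$ in $G$ and output $X=Z$, a component of $G\setminus Z$ can still contain vertices of $S_X\setminus Z$, and then its intersection with $S$ can be as large as $|S_L|+|S_X\setminus Z|$, which can exceed $\tfrac23|S|$ (e.g.\ $|S_L|=\tfrac12|S|$, $|S_R|$ small, $|S_X|$ comparable to $|S_L|$). If instead you output $X=Z\cup S_X$, you may exceed the size bound $k+1$. The paper's version --- cut in $G\setminus S_X$ of size at most $(k+1)-|S_X|$, then set $X=Z\cup S_X$ --- fixes both at once and is what you want.
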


\begin{proof}[Proof sketch.]
  By Lemma~\ref{lemma:halfhalf} there exists a set $X'$ of size at
  most $k+1$ such that every component of $G \setminus X'$ has at most
  $\frac{1}{2}|S|$ vertices which are in $S$.  A simple packing
  argument shows that the components can be assigned to left or right
  such that at most $\frac{2}{3}|S|$ vertices of $S$ go left and at
  most $\frac{2}{3}|S|$ go right.  Let $S_X$ be $S \cap X'$ and let
  $S_L$ and $S_R$ be the vertices of $S$ that were put left and right
  respectively.  By trying all partitions of $S$ in three parts the
  algorithm correctly guesses $S_X$, $S_L$ and $S_R$.  Now $X$
  separates $S_L$ from $S_R$ and so the minimum vertex cut between
  $S_L$ and $S_R$ in $G \setminus S_X$ is at most $|X \setminus S_X|
  \leq (k+1)-|S_X|$.  The algorithm finds using max-flow a set $Z$ of
  size at most $(k+1)-|S_X|$ that separates $S_L$ from $S_R$ in $G
  \setminus S_X$.  Since we are only interested in a set $Z$ of size
  at most $k-|S_X|$ one can run max-flow in time $O((n+m)k^{O(1)})$.
  Having found $S_L$, $S_R$, $S_X$ and $Z$ the algorithm sets $X = S_X
  \cup Z$, $L$ to contain all components of $G \setminus X$ that
  contain vertices of $S_L$ and $R$ to contain all other vertices.
  Since every component $C$ of $G \setminus X$ is fully contained in
  $L$ or $R$, the bound on $|C \cap S|$ follows.
  
  If no partition of $S$ into $S_L$, $S_R$, $S_X$ yielded a cutset $Z$
  of size $\leq (k+1)-|S_X|$, this means that $\tw(G)>k$, which the
  algorithm reports.
\end{proof}

The algorithm takes as input $G$, $k$ and a set $S$ on at most $3k+3$
vertices, and either concludes that the treewidth of $G$ is larger
than $k$ or finds a tree decomposition of width at most $4k+3$ such
that the top bag of the decomposition contains $S$.

On input $G$, $S$, $k$ the algorithm starts by ensuring that
$|S|=3k+3$.  If $|S| < 3k+3$ the algorithm just adds arbitrary
vertices to $S$ until equality is obtained.  Then the algorithm
applies Lemma~\ref{lem:halfhalfalg} and finds a set $X$ of size at
most $k+1$ such that each component $C_i$ of $G \setminus X$ satisfies
$|C_i \cap S| \leq \frac{2|S|}{3} \leq 2k+2$.  Thus for each $C_i$ we
have $|(S \cap C_i) \cup X| \leq 3k+3$.  For each component $C_i$ of
$G \setminus X$ the algorithm runs itself recursively on $(G[C_i \cup
X], (S \cap C_i) \cup X, k)$.

If either of the recursive calls returns that the treewidth is more
than $k$ then the treewidth of $G$ is more than $k$ as well.
Otherwise we have for every component $C_i$ a tree decomposition of
$G[C_i \cup X]$ of width at most $4k+3$ such that the top bag contains
$(S \cap C_i) \cup X$.  To make a tree decomposition of $G$ we make a
new root node with bag $X \cup S$, and connect this bag to the roots
of the tree decompositions of $G[C_i \cup X]$ for each component
$C_i$.  It is easy to verify that this is indeed a tree decomposition
of $G$.  The top bag contains $S$, and the size of the top bag is at
most $|S|+|X| \leq 4k+4$, and so the width if the decomposition is at
most $4k+3$ as claimed.

The running time of the algorithm is governed by the recurrence
\begin{align}\label{eqn:gm13} T(n,k) \leq O(3^{|S|}k^{O(1)}(n+m)) +
  \sum_{C_i} T(|C_i \cup X|, k) \end{align} which solves to $T(n,k) \leq
(3^{3k}k^{O(1)}n(n+m))$ since $|S|=3k+3$ and there always are at least
two non-empty components of $G \setminus X$.  Finally, we use the
following observation about the number of edges in a graph of
treewidth $k$.

\begin{lemma}[\cite{BodlaenderF05a}]
  \label{lemma:prelim:nk-edges}
  Let $G = (V,E)$ be a graph with treewidth at most $k$.  Then $|E|
  \leq |V|k$.
\end{lemma}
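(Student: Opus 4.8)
The plan is to prove the bound by a charging argument on a rooted tree decomposition: I assign each edge to exactly one of its two endpoints in such a way that no endpoint is charged more than $k$ times.

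First I fix a tree decomposition $\td = (\{B_i \mid i \in I\}, T=(I,F))$ of $G$ with $\w(\td) \le k$, so that every bag satisfies $|B_i| \le k+1$. Recall that for each vertex $v$, $T_v$ is the subtree of $T$ induced by $I_v = \{i \in I \mid v \in B_i\}$, and $r_v$ is the node of $T_v$ closest to the root of $T$. The key structural observation is this: for any edge $e = \{u,w\} \in E$, the second axiom of tree decompositions gives a bag containing both $u$ and $w$, so $T_u$ and $T_w$ share a node, and therefore $r_u$ and $r_w$ are comparable in the ancestor order of $T$. Say $r_u$ is a (weak) ancestor of $r_w$. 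Then $u \in B_{r_w}$: indeed $T_u$ is connected, contains $r_u$, and contains some node of $T_w$, which lies weakly below $r_w$; the tree path between that node and $r_u$ passes through $r_w$, so $r_w \in T_u$. Since also $w \in B_{r_w}$, both endpoints of $e$ lie in $B_{r_w}$. I then charge $e$ to the endpoint whose root is the lower of the two (breaking ties arbitrarily when $r_u = r_w$).

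Next I bound the number of edges charged to a fixed vertex $w$. By the observation, every edge charged to $w$ has the form $\{u,w\}$ with $u \in B_{r_w} \setminus \{w\}$. Since $|B_{r_w}| \le k+1$, there are at most $k$ possible values of $u$, hence at most $k$ edges are charged to $w$. As each edge is charged to exactly one of its endpoints, summing over all $v \in V$ yields $|E| \le |V|\,k$, as claimed.

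The argument is elementary; the only step needing a little care is the structural observation — that $r_u$ and $r_w$ are comparable and that the bag at the lower root contains both endpoints of the edge — which is a direct consequence of the connectedness of $T_u$ and $T_w$ together with the third axiom of tree decompositions. (An alternative proof proceeds by induction on $|V|$, using that a nonempty graph of treewidth at most $k$ has a vertex of degree at most $k$: take a leaf bag in a decomposition reduced so that no bag is contained in an adjacent bag, delete the private vertex of that bag, and apply the induction hypothesis.)
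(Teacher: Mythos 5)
The paper does not prove this lemma at all; it cites it from \cite{BodlaenderF05a} and uses it as a black box, so there is no internal proof to compare against. Your charging argument is correct: for every edge $\{u,w\}$ the subtrees $T_u$ and $T_w$ intersect, which forces $r_u$ and $r_w$ to be comparable in the ancestor order; the bag at the deeper of the two roots then contains both endpoints, and charging each edge to the endpoint with the deeper root puts at most $|B_{r_w}\setminus\{w\}|\le k$ edges on any vertex $w$, whence $|E|\le k|V|$. This is a clean, standard proof, and it actually establishes the slightly stronger statement that the $k$-trees (and hence graphs of treewidth $\le k$) are $k$-degenerate when read through the lens of your parenthetical alternative; either route is fine and fully rigorous here.
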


Thus if $|E| > nk$ the algorithm can safely output that $\tw(G) > k$.
After this, running the algorithm above takes time
$O(3^{3k}k^{O(1)}n(n+m)) = O(3^{3k}k^{O(1)}n^2)$.


\subsection{The $O(k^{O(k)}n\log n)$ time approximation algorithm of
  Reed.}\label{sec:reedAlg}
Reed~\cite{Reed92} observed that the running time algorithm of
Robertson and Seymour~\cite{RobertsonS13} can be sped up from $O(n^2)$
for fixed $k$ to $O(n \log n)$ for fixed $k$, at the cost of a worse
(but still constant) approximation ratio, and a $k^{O(k)}$ dependence
on $k$ in the running time, rather than the ${3}^{3k}$ factor in the
algorithm of Robertson and Seymour.  We remark here that
Reed~\cite{Reed92} never states explicitly the dependence on $k$ of
his algorithm, but a careful analysis shows that this dependence is in
fact of order $k^{O(k)}$.  The main idea of this algorithm is that the
recurrence in Equation~\ref{eqn:gm13} only solves to $O(n^2)$ for
fixed $k$ if one of the components of $G \setminus X$ contains almost
all of the vertices of $G$.  If one could ensure that each component
$C_i$ of $G \setminus X$ had at most $c \cdot n$ vertices for some
fixed $c < 1$, the recurrence in Equation~\ref{eqn:gm13} solves to
$O(n \log n)$ for fixed $k$.  To see that this is true we simply
consider the recursion tree.  The total amount of work done at any
level of the recursion tree is $O(n)$ for a fixed $k$.  Since the size
of the components considered at one level is always a constant factor
smaller than the size of the components considered in the previous
level, the number of levels is only $O(\log n)$ and we have $O(n \log
n)$ work in total.

By using Lemma~\ref{lemma:halfhalf} with $S = V(G)$ we see that if $G$
has treewidth $\leq k$, then there is a set $X$ of size at most $k+1$
such that each component of $G \setminus X$ has size at most
$\frac{n}{2}$.  Unfortunately if we try to apply
Lemma~\ref{lem:halfhalfalg} to {\em find} an $X$ which splits $G$ in a
balanced way using $S = V(G)$, the algorithm of
Lemma~\ref{lem:halfhalfalg} takes time $O(3^{|S|}k^{O(1)}(n+m)) =
O(3^nn^{O(1)})$, which is exponential in $n$.  Reed~\cite{Reed92} gave
an algorithmic variant of Lemma~\ref{lemma:halfhalf} especially
tailored for the case where $S = V(G)$.

\begin{lemma}[\cite{Reed92}]
  \label{lem:reedBalSep} There is an algorithm that given $G$ and $k$,
  runs in time $O(k^{O(k)}n)$ and either concludes that $\tw(G)>k$ or
  outputs a set $X$ of size at most $k+1$ such that that every
  component of $G \setminus X$ has at most $\frac{3}{4}|S|$ vertices
  which are in $S$.
\end{lemma}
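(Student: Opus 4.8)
The plan is to decide first according to the size of $S$, and in the hard case to replace the $3^{|S|}$-sized enumeration over partitions of $S$ used in Lemma~\ref{lem:halfhalfalg} by (bounded) max-flow computations. If $|E(G)|>k|V(G)|$ we report $\tw(G)>k$, which is correct by Lemma~\ref{lemma:prelim:nk-edges}; from now on $m=O(nk)$, so one run of connected components, BFS, or a max-flow computation that we abort once the flow exceeds $k+1$ costs $O(nk)=k^{O(1)}n$. If $|S|\le c_0(k+1)$ for a suitable constant $c_0$, we simply invoke Lemma~\ref{lem:halfhalfalg} on $(G,S,k)$: it either certifies $\tw(G)>k$ or returns a set of size at most $k+1$ whose removal leaves components with at most $\tfrac23|S|\le\tfrac34|S|$ vertices of $S$, in time $O(3^{|S|}k^{O(1)}(n+m))=2^{O(k)}nk=k^{O(k)}n$, which is within budget. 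So the real content is the regime $|S|>c_0(k+1)$, where $k+1$ is negligible compared with $|S|$ (say $k+1\le\tfrac1{16}|S|$).

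In that regime I would run an iterative refinement. We maintain a candidate separator $X$ with $|X|\le k+1$ together with the components of $G-X$. If every component has at most $\tfrac34|S|$ vertices of $S$, we output $X$. Otherwise there is a \emph{unique} heavy component $C$ with $|C\cap S|>\tfrac34|S|$ (uniqueness because two disjoint vertex sets cannot each carry more than half of $S$); note $|S\setminus C|<\tfrac14|S|$, so the ``outside'' part $V\setminus C\setminus X$ can never on its own be heavy, and $|X\cap S|\le k+1\le\tfrac1{16}|S|$. We then rebalance \emph{into} $C$: pick a probe set $P\subseteq S\cap C$ of size $\Theta(k)$, chosen spread out inside $C$ (e.g.\ via a BFS layering of $G[C]$), and apply Lemma~\ref{lem:halfhalfalg} to the subgraph carved out of $C$ with set $P$ — using bounded max-flow in place of guessing the $O(k)$-way partition of $P$, and pinning the $O(k)$-size interface $X\cap N(C)$ onto one fixed side — to obtain a new separator inside $C$. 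If at any point the requested separator of size at most $k+1$ does not exist, then by Lemma~\ref{lemma:halfhalf} applied to $(G,S)$ (which guarantees a $(k+1)$-set whose removal leaves every component with at most $\tfrac12|S|$ vertices of $S$, and which therefore must split the heavy $C$) we conclude $\tw(G)>k$ and report it.

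The termination and timing argument is the technical heart and the main obstacle. The intended potential is $|C|$, the vertex count of the heavy component; the probe set and the pinning must be designed so that each round decreases this potential by a constant factor, which bounds the number of rounds by $O(\log n)$. Because everything outside the current heavy component stays frozen, round $i$ inspects only $C_{i-1}$, and geometric decrease of $|C_i|$ then gives $\sum_i|C_i|=O(n)$; multiplying by the $k^{O(k)}$ per-round overhead of the single Lemma~\ref{lem:halfhalfalg}/max-flow call yields the claimed $O(k^{O(k)}n)$ bound — note that mere $O(\log n)$ rounds over the whole graph would only give $k^{O(k)}n\log n$, so localizing the work to the shrinking heavy component is essential. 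Two delicate points remain: the very first round, where $X=\varnothing$ and $C=V(G)$, so the first probe must already carve off a constant fraction rather than a small cut; and the correctness of ``rebalance into $C$'', namely showing that the separator returned for the sub-problem, together with the pinned interface (kept within the total budget of $k+1$), is again a global separator of size at most $k+1$ whose heavy component is a proper, constant-factor-smaller sub-region of $C$ — this is precisely where one uses that $X$ already isolates $C$ and that $|S\setminus C|<\tfrac14|S|$. Pinning down the probe construction so that the heavy component provably shrinks geometrically while the per-round work stays linear in $|C|$ is the crux; the small-$|S|$ base case, the edge bound, the $\tw>k$ detection via failed max-flow, and the final running-time bookkeeping are all routine.
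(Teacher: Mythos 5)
The first thing to note is that the paper offers no proof of this lemma to compare against: it is attributed to Reed~\cite{Reed92}, and the surrounding text explicitly remarks that the statement ``is never explicitly proved in \cite{Reed92}, but it follows easily from the arguments given there'' (also, in the intended reading $S=V(G)$, since the only inputs are $G$ and $k$). Your attempt must therefore stand on its own, and it does not: the step you yourself call ``the crux'' --- that each rebalancing round shrinks the heavy component $C$ by a constant factor --- is never established, and as described it fails. A $\frac{2}{3}$-balanced $P$-separator for a probe set $P\subseteq C$ of size $\Theta(k)$ controls only how many \emph{probe} vertices land in each resulting piece; it gives no control whatsoever over the \emph{cardinalities} of the pieces. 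Whatever ``spread out via a BFS layering'' is taken to mean, the returned separator may cut off a sliver of $C$ carrying a third of $P$ while leaving a component with $(1-o(1))|C|$ vertices, so the potential $|C|$ need not decrease; the loop then need not stop after $O(\log n)$ rounds, and even $\Omega(n/k)$ rounds each costing $\Omega(|C|)=\Omega(n)$ work already destroys the claimed bound. Converting ``balanced with respect to a $k$-sized sample'' into ``balanced with respect to vertex count'' is precisely the difficulty this lemma exists to overcome, and it cannot be delegated to an unspecified probe construction. (Compare Lemma~\ref{lem:c4vc}, where the paper does turn a coarse split into a cardinality-balanced one --- but only because an approximate tree decomposition is already available to supply the initial separation and to run the dynamic program; no such decomposition exists here.)

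The argument that actually works, and is what Reed's paper contains, is structurally different: instead of repeatedly re-solving a separation problem inside the current heavy component, one processes the vertices of $G$ incrementally, maintaining a partition $(A,X,B)$ with $|X|\le k+1$ together with a flow certificate of the cut's minimality, always assigning the next vertex to the currently smaller side and restoring the invariant with at most one bounded augmenting-path computation; the balance guarantee and the (amortized) linear running time fall out of the same incremental process, with the $k^{O(k)}$ factor coming from the bookkeeping over reassignments of the current separator. Your peripheral steps --- the edge-count test via Lemma~\ref{lemma:prelim:nk-edges}, the small-$|S|$ base case via Lemma~\ref{lem:halfhalfalg}, and certifying $\tw(G)>k$ through Lemma~\ref{lemma:halfhalf} when no separator exists --- are fine, but without a concrete, provably contracting rebalancing step the proposal restates the problem rather than solving it.
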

Let us remark that Lemma~\ref{lem:reedBalSep} as stated here is never
explicitly proved in~\cite{Reed92}, but it follows easily from the
arguments given there.

Having Lemmata~\ref{lem:halfhalfalg} and~\ref{lem:reedBalSep} at hand,
we show how to obtain an $8$-approximation of treewidth in time
$O(k^{O(k)}n\log n)$.  The algorithm takes as input $G$, $k$ and a set
$S$ on at most $6k+6$ vertices, and either concludes that the
treewidth of $G$ is at least $k$, or finds a tree decomposition of
width at most $8k+7$ such that the top bag of the decomposition
contains $S$.

On input $G$, $S$, $k$ the algorithm starts by ensuring that
$|S|=6k+6$.  If $|S| < 6k+6$ the algorithm just adds vertices to $S$
until equality is obtained.  Then the algorithm applies
Lemma~\ref{lem:halfhalfalg} and finds a set $X_1$ of size at most
$k+1$ such that each component $C_i$ of $G \setminus X_1$ satisfies
$|C_i \cap S| \leq \frac{2}{3}|S| \leq 4k+4$.  Now the algorithm
applies Lemma~\ref{lem:reedBalSep} and finds a set $X_2$ of size at
most $k+1$ such that each component $C_i$ of $G \setminus X_2$
satisfies $|C_i| \leq \frac{3}{4}|V(G)| \leq \frac{3}{4}n$.  Set $X =
X_1 \cup X_2$.  For each component $C_i$ of $G \setminus S$ we have
that $|(S \cap C_i) \cup X| \leq 6k+6$.  For each component $C_i$ of
$G \setminus X$ the algorithm runs itself recursively on $(G[C_i \cup
X], (S \cap C_i) \cup X, k)$.

If either of the recursive calls returns that the treewidth is more
than $k$ then the treewidth of $G$ is more than $k$ as well.
Otherwise we have for every component $C_i$ a tree decomposition of
$G[C_i \cup X]$ of width at most $8k+7$ such that the top bag contains
$(S \cap C_i) \cup X$.  Similarly as before, to make a tree
decomposition of $G$ we make a new root node with bag $X \cup S$, and
connect this bag to the roots of the tree decompositions of $G[C_i
\cup X]$ for each component $C_i$.  It is easy to verify that this is
indeed a tree decomposition of $G$.  The top bag contains $S$, and the
size of the top bag is at most $|S|+|X| \leq |S| + |X_1| + |X_2| \leq
6k + 6 + 2k + 2 = 8k + 8$, and the width of the decomposition is at
most $8k+7$ as claimed.

The running time of the algorithm is governed by the recurrence
\begin{align}\label{eqn:reedTime}
  T(n,k) \leq O\left(k^{O(k)}(n+m)\right) + \sum_{C_i} T(|C_i \cup X|, k)
\end{align}
which solves to $T(n,k) \leq O(k^{O(k)}(n+m)\log n)$ since each $C_i$ has
size at most $\frac{3}{4}n$.  By Lemma~\ref{lemma:prelim:nk-edges} we
have $m \leq kn$ and so the running time of the algorithm is upper
bounded by $O(k^{O(k)}n\log n)$.


\subsection{A new $O(c^kn \log n)$ time 3-approximation
  algorithm}\label{sec:recursiveScheme}
The goal of this section is to sketch a proof of the following
theorem.  A full proof of Theorem~\ref{theorem:nlogn} can be found in
Section~\ref{section:nlogn}.
\begin{theorem}
  \label{theorem:nlogn}
  There exists an algorithm which given a graph $G$ and an integer
  $k$, either computes a tree decomposition of $G$ of width at most
  $3k + 4$ or correctly concludes that $\tw(G) > k$, in time
  $O(c^k\cdot n \log n)$ for some $c \in \mathbb{N}$.
\end{theorem}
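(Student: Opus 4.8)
\noindent
The plan is to re-implement the recursive separator scheme of Section~\ref{sec:nSquareAppx} (the $O(3^{3k}n^2)$ Robertson--Seymour algorithm) so that every elementary operation of a recursion node becomes a query to the data structure of Section~\ref{section:datastructure}, answered in time $c^{O(k)}\log n$. Two things change. First, the bottleneck of the original scheme --- the per-node $O(c^k(n+m))$ work, which makes the total work over a long recursion branch quadratic in $n$ --- disappears: since the data structure carries a tree decomposition of $G$ of width $O(k)$ and depth $O(\log n)$ and the current subgraph is tracked implicitly inside it, each recursion node costs only $c^{O(k)}\log n$, so no auxiliary size-balanced separator like Reed's is needed. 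Second, because the separator query is answered essentially \emph{exactly} by dynamic programming over that stored width-$O(k)$ decomposition, it returns a genuine $\frac12$-balanced $S$-separator rather than the lossy $\frac23$-balanced set of Lemma~\ref{lem:halfhalfalg}, and this is what improves the approximation ratio from $4$ to $3$.

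First I would build the data structure. Its construction uses, via Bodlaender's recursion scheme --- recursing the present algorithm on a minor of $G$ with a constant fraction fewer vertices in place of the Bodlaender--Kloks dynamic program --- a tree decomposition of $G$ of width $O(k)$, which the Bodlaender--Hagerup transformation~\cite{BodlaenderH98} turns into a binary tree decomposition of width $O(k)$ and depth $O(\log n)$; the whole preprocessing obeys $T(n)=T(\gamma n)+O(c^k n\log n)$ for a constant $\gamma<1$, with geometric solution $O(c^k n\log n)$, and is within budget. The core is then a recursive procedure $\findTD(H,S)$ on an induced subgraph $H$ --- tracked implicitly through the structure's region-update operations --- and a boundary $S\subseteq V(H)$, maintaining the invariant $|S|\le 2k+4$; it returns a tree decomposition of $H$ of width at most $3k+4$ whose root bag contains $S$, or reports $\tw(G)>k$.

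A call first pads $S$ with vertices of $H$ to $|S|=\min\{|V(H)|,2k+4\}$ (if $|V(H)|\le 2k+4$, $H$ is a base case, decomposed into a single bag of width $\le 2k+3\le 3k+4$). It then issues the separator query for a set $X\subseteq V(H)$, $|X|\le k+1$, that is a $\frac12$-balanced $S$-separator of $H$; by Lemma~\ref{lemma:halfhalf} such an $X$ exists whenever $\tw(H)\le\tw(G)\le k$, so either the query returns one or we safely output $\tw(G)>k$. Since $|S|-|X|\ge(2k+4)-(k+1)>\frac12|S|$, this $X$ must split $H$ into at least two non-empty parts, so every component $C$ of $H\setminus X$ is a proper induced subgraph of $H$ (hence $|V(H[C\cup X])|<|V(H)|$, giving termination), and $|(S\cap C)\cup N_H(C)|\le\frac12|S|+|X|\le 2k+4$, so the invariant is preserved when we recurse on $(H[C\cup X],(S\cap C)\cup N_H(C))$ for each such $C$. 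Attaching the returned decompositions below a fresh root bag $S\cup X$, of size at most $(2k+4)+(k+1)$, yields width at most $3k+4$; verifying the three tree-decomposition axioms for the assembled object and the soundness of the $\tw(G)>k$ verdict is routine.

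For the running time, the leaf regions $C$ are pairwise disjoint subsets of $V(G)$, so the recursion has $O(n)$ nodes; each node performs only a bounded number of data-structure operations --- the padding, the $\frac12$-balanced separator query (via \qSsep), the enumeration of the components of $H\setminus X$ and their neighborhoods (via \qnei), and the region updates describing the recursive subinstances --- each costing $c^{O(k)}\log n$, so the decomposition phase runs in $O(c^k n\log n)$, which with the $O(c^k n\log n)$ preprocessing proves the claim. The step I expect to be the main obstacle is precisely the demand that \emph{every} per-node operation be realizable as a $c^{O(k)}\log n$ query: the subgraphs $H$ arising deep in the recursion may still have $\Omega(n)$ vertices, so one may neither materialize them nor re-run a from-scratch separator routine on them, and everything must be phrased as incremental queries against the single data structure built for $G$. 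Establishing that this data structure supports the separator, neighborhood, and region-manipulation queries within the stated bound is where essentially all of the real work lies, and is the content of Section~\ref{section:datastructure}; the remaining bookkeeping (the exact padding rule and the assembly of the decomposition) is straightforward and is carried out in Section~\ref{section:nlogn}.
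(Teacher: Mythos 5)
Your high-level plan --- a $c^{O(k)}\log n$-query data structure built over a width-$O(k)$, depth-$O(\log n)$ decomposition obtained from Bodlaender's recursive scheme and the Bodlaender--Hagerup transformation, driving a Robertson--Seymour-style recursion with genuine $\frac{1}{2}$-balanced $S$-separators --- is the same one the paper uses. The gap is in what you dismiss as ``straightforward bookkeeping.'' Padding $S$ with arbitrary vertices of $H$ is incompatible with the implicit representation of the active region: the data structure tracks $U$ as the connected component of $G\setminus S$ containing the pin $\pin$, and all queries (\qSsep{}, \qpin{}, \qnei{}) operate only on $G[S\cup U]$. Enlarging $S$ with arbitrary interior vertices may disconnect the old $U$, and any resulting piece not containing $\pin$ becomes invisible to the data structure, so those vertices would never be decomposed and the output would not cover $G$. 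A related mismatch appears in your recursion step: recursing on $\bigl(H[C\cup X],\,(S\cap C)\cup N_H(C)\bigr)$ places the vertices of $X\setminus N_H(C)$ inside the subgraph but outside the boundary --- a configuration the state $(S,\pin)$ cannot express, since the data structure's region is always exactly one connected component of $G\setminus S$ together with $S$.

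The paper avoids both problems by not padding at all: $\findTD$ maintains $|S|\le 2k+3$ directly (the arithmetic $\tfrac{1}{2}(2k+3)+(k+2)=2k+\tfrac{7}{2}$ already closes the invariant), recurses on the components of $G[S\cup U]\setminus(S\cup X)$ with new boundary $N(C)\cap(S\cup X)$ --- exactly what \qpin{} and \qnei{} supply --- and, crucially, adds the current pin $\pin$ to the separator $\sep$ before inserting $\sep$ into $S$. This last step is what replaces your ``at least two nonempty components'' argument: it guarantees that every recursion node permanently consumes one fresh interior vertex, so the recursion tree has at most $n$ nodes and $2n$ bags even when a call produces a single child or an empty separator. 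Your branching argument relies on $|S|$ always being exactly $2k+4$ --- i.e., on the padding that causes the trouble above --- so to close the gap you must either show padding preserves connectivity of $G\setminus S$ (it does not in general), or drop the padding and adopt the pin trick or an equivalent per-node progress measure.
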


The algorithm employs the same recursive compression scheme which is
used in Bodlaender's linear time
algorithm~\cite{Bodlaender93s,Bodlaender96} and the algorithm
of~Perkovi{\'{c}} and Reed~\cite{PerkovicR00}.  The idea is to solve
the problem recursively on a smaller instance, expand the obtained
tree decomposition of the smaller graph to a ``good, but not quite
good enough'' tree decomposition of the instance in question, and then
use this tree decomposition to either conclude that $\tw(G) > k$ or
find a decomposition of $G$ which is good enough.  A central concept
in this recursive approach of~\cite{Bodlaender96} is the definition of
an improved graph:

\begin{definition}\label{def:prelim:improved-graph}
  Given a graph $G=(V,E)$ and an integer $k$, the \emph{improved}
  graph of $G$, denoted $G_I$, is obtained by adding an edge between
  each pair of vertices with at least $k+1$ common neighbors of degree
  at most $k$ in $G$.
\end{definition}

Intuitively, adding the edges during construction of the improved
graph cannot spoil any tree decomposition of $G$ of width at most $k$,
as the pairs of vertices connected by the new edges will need to be
contained together in some bag anyway.  This is captured in the
following lemma.

\begin{lemma}
  \label{lem:improvedGraphTw}
  Given a graph $G$ and an integer $k \in \mathbb{N}$, $\tw(G) \leq k$ if
  and only if $\tw(G_I) \leq k$.
\end{lemma}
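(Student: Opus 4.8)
The plan is to prove the two directions separately, the easy one being that $\tw(G_I)\le k$ implies $\tw(G)\le k$. This holds because $G$ is a subgraph of $G_I$ (we only add edges), and treewidth is monotone under taking subgraphs. So the entire content of the lemma is the forward direction: if $\tw(G)\le k$, then adding the ``improving'' edges does not increase the treewidth beyond $k$.

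For the forward direction, I would fix a tree decomposition $\td=(\{B_i\mid i\in I\},T)$ of $G$ of width at most $k$ and argue that it is already a valid tree decomposition of $G_I$. The first two conditions on the vertex sets are unaffected by adding edges, and the third (connectedness of each $T_v$) does not mention edges at all, so the only thing to check is the edge-coverage condition: for each newly added edge $\{u,w\}$ of $G_I$, there must be a bag $B_i$ containing both $u$ and $w$. So let $\{u,w\}$ be such an edge; by Definition~\ref{def:prelim:improved-graph}, $u$ and $w$ have a common set $N$ of at least $k+1$ vertices, each of degree at most $k$ in $G$, each adjacent to both $u$ and $w$. The goal is to show $I_u\cap I_w\neq\emptyset$, i.e.\ that the subtrees $T_u$ and $T_w$ intersect. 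Suppose not. Then since $T$ is a tree, there is an edge $\{a,b\}\in F$ of the decomposition tree separating $T_u$ from $T_w$; let $B_a\cap B_b$ be the associated bag-intersection, which by the standard property of tree decompositions separates in $G$ the vertex set on the $u$-side from the vertex set on the $w$-side. Every vertex $x\in N$ is adjacent to both $u$ and $w$, hence by the connectedness condition $T_x$ meets both $T_u$ and $T_w$; as $T_x$ is connected, $T_x$ must use the edge $\{a,b\}$, so $x\in B_a\cap B_b$. This forces $N\subseteq B_a\cap B_b$, giving $|B_a\cap B_b|\ge |N|\ge k+1$, so one of $B_a,B_b$ has size at least $k+2$, contradicting $\w(\td)\le k$.

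Actually the separator argument can be streamlined: once we know each $x\in N$ lies in every bag on the $a$--$b$ edge, it is cleanest to pick the node $i$ on the path in $T$ between (a node of) $T_u$ and (a node of) $T_w$ that is ``closest to'' $T_u$, or simply to observe directly that any node $i$ on that path satisfies $N\subseteq B_i$ — because each $T_x$ for $x\in N$ is a connected subtree meeting both endpoints of the path, hence containing the whole path — and then $|B_i|\ge k+1$ already contradicts the width bound even before invoking degrees. Note the degree-at-most-$k$ hypothesis in the definition is not needed for this direction; it is there to keep $G_I$ computable in the relevant time bounds and is used elsewhere, not here.

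The main obstacle, such as it is, is purely bookkeeping: one must handle the degenerate cases (e.g.\ $T_u$ and $T_w$ sharing no node but the tree being a single path, or $u=w$ which cannot occur, or $N$ meeting $\{u,w\}$) and be careful that ``a common neighbor'' means adjacency in $G$, so that the connectedness condition of $\td$ genuinely forces each $T_x$ to intersect both $T_u$ and $T_w$. Once the observation ``a connected subtree meeting two disjoint connected subtrees contains the path between them'' is stated cleanly, the contradiction with the width bound is immediate, and no real computation is required.
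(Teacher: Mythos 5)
The paper does not itself prove Lemma~\ref{lem:improvedGraphTw}; it is cited from Bodlaender~\cite{Bodlaender96}, so there is nothing in the source to compare your route against. Your plan is the standard one and the right one: the reverse direction is subgraph monotonicity, and for the forward direction one shows that any width-$\le k$ tree decomposition $\td$ of $G$ already covers each newly added edge, by arguing that $T_u$ and $T_w$ must intersect. The mechanism you invoke — each common neighbor $x\in N$ has $T_x$ meeting both $T_u$ and $T_w$, hence containing the whole path between them — is exactly correct, as is your remark that the degree bound in Definition~\ref{def:prelim:improved-graph} plays no role here.

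However, the contradiction you draw is off by one. Width $\le k$ means $|B_i|\le k+1$ for every node $i$, so your ``streamlined'' conclusion $|B_i|\ge|N|\ge k+1$ does \emph{not} contradict the width bound; it is perfectly consistent with it. The first version has the same gap in disguise: from $N\subseteq B_a\cap B_b$ you only get $|B_a\cap B_b|\ge k+1$, and the assertion ``so one of $B_a,B_b$ has size at least $k+2$'' does not follow (take $B_a=B_b=N$ of size exactly $k+1$). The missing step is to put one more vertex into the bag. Choose $i$ to be the endpoint of the $T_u$--$T_w$ path that lies in $T_u$; this exists because $T_u$ and $T_w$ are nonempty disjoint subtrees and the shortest path between two disjoint subtrees has its endpoints in them. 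Then $u\in B_i$ in addition to $N\subseteq B_i$, and since $u\notin N$ (no self-loops), $|B_i|\ge |N|+1\ge k+2$, which genuinely contradicts $\w(\td)\le k$. With that one sentence inserted the proof is complete.
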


If $|E|=O(kn)$, which is the case in graphs of treewidth at most $k$,
the improved graph can be computed in $O(k^{O(1)}\cdot n)$ time using
radix sort~\cite{Bodlaender96}.

A vertex $v\in G$ is $I$-simplicial, if it is simplicial in the improved
graph $G_I$.  The intuition behind $I$-simplicial vertices is as
follows: all the neighbors of an $I$-simplicial vertex must be
simultaneously contained in some bag of any tree decomposition of
$G_I$ of width at most $k$, so we can safely remove such vertices from
the improved graph, compute the tree decomposition, and reintroduce
the removed $I$-simplicial vertices.  The crucial observation is that
if no large set of $I$-simplicial vertices can be found, then one can
identify a large matching, which can be also used for a robust
recursion step.  The following lemma, which follows from the work of
Bodlaender~\cite{Bodlaender96}, encapsulates all the main ingredients
that we will use.

\begin{lemma}\label{lem:prelim:bodlaender}
  There is an algorithm working in $O(k^{O(1)}n)$ time that, given a
  graph $G = (V,E)$ and an integer $k$, either
  \begin{enumerate}[(i)]
  \item returns a maximal matching in $G$ of cardinality at least
    $\frac{|V|}{O(k^6)}$,
  \item returns a set of at least $\frac{|V|}{O(k^6)}$ $I$-simplicial
    vertices, or,
  \item correctly concludes that the treewidth of $G$ is larger than
    $k$.
  \end{enumerate}
  Moreover, if a set $X$ of at least $\frac{|V|}{O(k^6)}$
  $I$-simplicial vertices is returned, and the algorithm is in
  addition provided with some tree decomposition $\td_I$ of
  $G_I\setminus X$ of width at most $k$, then in $O(k^{O(1)}\cdot n)$
  time one can turn $\td_I$ into a tree decomposition $\td$ of $G$ of
  width at most $k$, or conclude that the treewidth of $G$ is larger
  than $k$.
\end{lemma}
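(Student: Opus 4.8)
The plan is to assemble this statement from the combinatorial and algorithmic ingredients of Bodlaender's linear-time algorithm~\cite{Bodlaender96}, organized into (a) the decision procedure, (b) correctness of the trichotomy, and (c) the reintroduction step. For the procedure: first check whether $|E|\le k|V|$, outputting (iii) by Lemma~\ref{lemma:prelim:nk-edges} if not; then compute the improved graph $G_I$ of Definition~\ref{def:prelim:improved-graph} in $O(k^{O(1)}|V|)$ time via radix sort, and output (iii) again (using Lemma~\ref{lem:improvedGraphTw} together with Lemma~\ref{lemma:prelim:nk-edges}) if $|E(G_I)|>k|V|$. Next compute a maximal matching $M$ of $G$ greedily in $O(k|V|)$ time; if $|M|\ge |V|/(c_1k^6)$, return $M$, which is case (i). Otherwise the unmatched set $U:=V\setminus V(M)$ is independent and $|U|\ge (1-2/(c_1k^6))|V|$. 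For each $v\in U$ with $\deg_{G_I}(v)\le k$ test whether $N_{G_I}(v)$ is a clique of $G_I$ (mark $N_{G_I}(v)$, scan the adjacency lists of its members); the total cost is $O(\sum_u \deg_{G_I}(u)^2)=O(k^2|V|)$, within budget. Let $W\subseteq U$ be the vertices passing the test — the low-degree $I$-simplicial vertices in $U$. Since $G_I[W]$ has maximum degree at most $k$, a greedy procedure extracts in $O(k|V|)$ time an independent (in $G_I$) subset $X\subseteq W$ with $|X|\ge |W|/(k+1)$; if $|X|\ge |V|/(c_2k^6)$, return $X$, which is case (ii); otherwise return (iii).

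The correctness of returning (iii) in the last line rests on the combinatorial core, which is established in~\cite{Bodlaender96} and which I would cite rather than reprove in full: if $\tw(G)\le k$ then either $G$ has a matching of size $\ge |V|/(c_1k^6)$, or every maximal matching leaves $\Omega(|V|/k^5)$ unmatched vertices that are $I$-simplicial with $G_I$-degree at most $k$; in the second case $|X|\ge|W|/(k+1)=\Omega(|V|/k^6)$, so the algorithm fires (ii). The idea behind the core is that a small maximal matching forces a large independent set $U$; fixing a width-$\le k$ tree decomposition of $G$, one argues that for most $v\in U$ the set $N_{G_I}(v)$ is small and ``localized'' in the decomposition, and that failure of $N_{G_I}(v)$ to be a clique in $G_I$ exhibits a non-edge $ab$ in which $v$ is one of fewer than $k+1$ common neighbors of $a,b$ of degree at most $k$; a charging argument over these blocking non-edges then either recovers a large matching or contradicts $\tw(G)\le k$. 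Pushing this counting through with the right constants, which is what produces the polynomial $O(k^6)$, is the step I expect to be the most delicate; everything else is bookkeeping.

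For the ``moreover'' part, suppose $X$ (independent in $G_I$, each vertex $I$-simplicial) is supplied along with a tree decomposition $\td_I$ of $G_I\setminus X$ of width at most $k$, which we may assume has $O(|V|)$ nodes. If some $v\in X$ has $\deg_{G_I}(v)\ge k+1$ then $N_{G_I}[v]$ is a clique of size $\ge k+2$ in $G_I$, so $\tw(G_I)>k$ and we output $\tw(G)>k$ by Lemma~\ref{lem:improvedGraphTw}. Otherwise, for each $v\in X$, independence of $X$ in $G_I$ gives $N_{G_I}(v)\subseteq V(G_I)\setminus X$, and $I$-simpliciality makes $N_{G_I}(v)$ a clique of $G_I\setminus X$ of size at most $k$, hence contained in some bag of $\td_I$. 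Rooting $\td_I$ and precomputing for each vertex $u$ its topmost bag $r_u$ and that bag's depth, such a bag is $B_v:=r_{u^\ast}$ where $u^\ast\in N_{G_I}(v)$ maximizes $\mathrm{depth}(r_{u^\ast})$ — correct because the bags containing a fixed clique form a subtree of the decomposition tree whose topmost node is exactly such an $r_{u^\ast}$. Attaching a fresh leaf bag $N_{G_I}(v)\cup\{v\}$, of size at most $k+1$, below $B_v$ for every $v\in X$ produces a tree decomposition of $G_I$, and hence of $G$, of width at most $k$: all vertices are covered, every edge of $G_I$ is covered (no two vertices of $X$ are adjacent in $G_I$), and the subtree condition for each vertex is preserved since each new leaf hangs below a bag already containing all its other vertices. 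The whole construction runs in $O(k^{O(1)}|V|)$ time.
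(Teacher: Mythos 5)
The paper does not in fact prove this lemma; it states that the lemma ``follows from the work of Bodlaender~\cite{Bodlaender96}'' and uses it as a black box. Your proposal is a sound reconstruction of how the lemma is extracted from that work, and, like the paper, it ultimately defers the combinatorial heart---that $\tw(G)\le k$ forces either a large maximal matching or a large set of low-degree $I$-simplicial vertices, with the $O(k^6)$ coming out of the counting---to Bodlaender's paper. The surrounding machinery you supply (the edge-count screen via Lemma~\ref{lemma:prelim:nk-edges}, the greedy maximal matching, the clique test for each low-degree unmatched vertex, the depth-based bag location via $r_{u^\ast}$, and the attachment of leaf bags) is correct and within the claimed time budget.

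One point worth flagging: your reintroduction argument requires $X$ to be independent in $G_I$ and to consist only of vertices of $G_I$-degree at most $k$, neither of which is stated in the lemma. You are entitled to these assumptions because your own algorithm returns such an $X$ (the greedy independent-subset extraction in $G_I[W]$ enforces both), and this mirrors what Bodlaender's reduction actually does; but as written the ``moreover'' clause of the lemma reads as if it applied to an arbitrary $I$-simplicial set, for which the step $N_{G_I}(v)\subseteq V(G_I)\setminus X$ would fail whenever two vertices of $X$ are $G_I$-adjacent. It would be cleaner to state explicitly, right after defining $X$, that the set returned in case (ii) is additionally independent in $G_I$ with all degrees at most $k$, so that the ``moreover'' part is only ever invoked on such sets. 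With your extra degree filter the clause ``or conclude that the treewidth of $G$ is larger than $k$'' in the reintroduction step is never actually triggered by your algorithm; keeping the degree check is harmless and, as you note, covers the general statement, but it is worth being aware that it is vacuous for your own output.

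A minor notational slip: ``$|X|\ge|W|/(k+1)=\Omega(|V|/k^6)$'' should read with a second inequality, not an equality, since the bound on $|W|$ is itself a lower bound; this does not affect the argument.
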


Lemma~\ref{lem:prelim:bodlaender} allows us to reduce the problem to a
\emph{compression} variant where we are given a graph $G$, an integer
$k$ and a tree decomposition of $G$ of width $O(k)$, and the goal is
to either conclude that the treewidth of $G$ is at least $k$ or output
a tree decomposition of $G$ of width at most $3k+4$.  The proof of
Theorem~\ref{theorem:nlogn} has two parts: an algorithm for the
compression step and an algorithm for the general problem that uses
the algorithm for the compression step together with
Lemma~\ref{lem:prelim:bodlaender} as black boxes.  We now state the
properties of our algorithm for the compression step in the following
lemma.

\begin{lemma}\label{lemma:nlogn-compression}
  There exists an algorithm which on input $G,k,S_0,\td_\apx$, where
  (i) $S_0\subseteq V(G)$, $|S_0|\leq 2k+3$, (ii) $G$ and $G \setminus
  S_0$ are connected, and (iii) $\td_\apx$ is a tree decomposition of
  $G$ of width at most $O(k)$, in $O(c^k\cdot n \log n)$ time for some
  $c \in \mathbb{N}$ either computes a tree decomposition $\td$ of $G$
  with $\w(\td) \leq 3k+4$ and $S_0$ as the root bag, or correctly
  concludes that $\tw(G)>k$.
\end{lemma}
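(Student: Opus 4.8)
The plan is to implement the Robertson--Seymour recursive scheme of Section~\ref{sec:nSquareAppx}, but to speed up each separator-finding step from $O(3^{|S|}k^{O(1)}(n+m))$ to $O(c^k\log n)$ by precomputing a data structure on $\td_\apx$ and answering all the necessary queries through it. Concretely, I would maintain a recursion identical in structure to the $O(c^kn^2)$ algorithm: at a node we have the current graph $G'=G[W]$ (for $W\subseteq V(G)$) together with a boundary set $S\subseteq W$ of size exactly $3k+3$ that must go into the top bag, we call Lemma~\ref{lem:halfhalfalg} to obtain a set $X$ of size $\le k+1$ splitting $S$ in a $\frac23$-balanced way (or to conclude $\tw(G)>k$), we form the new root bag $S\cup X$ of size $\le 4k+4$, and for each component $C_i$ of $G'\setminus X$ we recurse on $G[C_i\cup X]$ with boundary $(S\cap C_i)\cup X$. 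Note that $|S\cup X|\le 4k+4$ gives width exactly $3k+4$ plus possible padding — so in fact I would pad $S$ only up to $2k+3$ rather than $3k+3$, so that $|S\cup X|\le 3k+4$: using $|S_0|\le 2k+3$ and the $\frac12$-balanced version carefully, each child's boundary $(S\cap C_i)\cup X$ has size $\le \frac12\cdot(2k+3)+(k+1) < 2k+3$ after re-padding, matching hypothesis~(i). Every ``graph'' $G[C_i\cup X]$ arising in the recursion is represented implicitly by the vertex subset, never materialized, and all operations on it are routed to the data structure of Section~\ref{section:datastructure}.

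The three algorithmic primitives I need are: (a) given $S$, find a $\frac23$-balanced $S$-separator $X$ of size $\le k+1$ in the current subgraph, or report $\tw>k$ (this is the guessing-plus-maxflow argument of Lemma~\ref{lem:halfhalfalg}, but every incidence/flow query is served by the data structure); (b) given $X$, enumerate the connected components of $G'\setminus X$, so the recursion can branch; and (c) the bookkeeping to restrict attention to $C_i\cup X$ in the recursive call. Since $|S|=O(k)$ and the maxflow value is capped at $k+1$, a single invocation of (a) costs $3^{O(k)}k^{O(1)}$ queries, each query costing $O(c^k\log n)$ by the data structure, hence $c^{O(k)}\log n$ total; (b) likewise costs $\mathrm{poly}(k)$ queries per component discovered, and the total number of components across the whole recursion is $O(n)$, so the amortized cost is fine. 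The recursion has the property that every component $C_i$ is a \emph{strict} subset of $G'$ (there are always at least two nonempty components, as $S$ is balanced and $|S|\ge 2$), but — crucially — unlike the $n\log n$ analyses for Reed's algorithm, here we do \emph{not} get geometric shrinkage of the graph size; instead, the balance is only on $|S\cap C_i|$. The size-measure that does shrink is exactly $|S|$-inside-the-component: since $\sum_i |C_i\cap S|\le |S|$ and each term is $\le\frac23|S|$, a standard charging argument (identical to the one giving the $n$-factor, not $n^2$, in the original: each vertex of $G$ appears in the ``$S$-role'' in at most $O(\log n)$ levels because of the $\frac23$-balance, but appears overall in the recursion tree at most... ) — this is the subtle point, see below.

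The main obstacle is controlling the total work, i.e., proving the $O(c^kn\log n)$ bound for the recursion as a whole, because the naive recursion tree for the Robertson--Seymour scheme has size $\Theta(n^2)$ in general (that is precisely why the unoptimized algorithm runs in $n^2$). The resolution I would pursue is that once separator-finding costs only $c^k\log n$ rather than $c^k(n+m)$, the relevant recurrence becomes $T(n)\le c^k\log n+\sum_i T(n_i)$ where the $n_i=|C_i\cup X|$ can still sum to roughly $n+ (k+1)\cdot(\#\text{components})$; I would bound the total number of recursion nodes by $O(n)$ using that each split creates $\ge 2$ children and each leaf contains $\ge 1$ ``fresh'' vertex of $G$ not in its boundary (handling the leaves where the subgraph has become so small that $|W|\le $ boundary size, which terminate immediately), giving $O(n)$ nodes and hence $O(c^kn\log n)$ total — but making ``fresh vertex'' precise requires that the recursion tree's leaves partition $V(G)$ up to boundary overlap, which in turn needs the observation that $W_i\setminus S_i$ strictly decreases and is nonempty at internal nodes. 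Writing this charging argument cleanly, and simultaneously verifying that the data structure supports (a), (b), (c) under the restriction-to-subgraph operations with only an $O(\log n)$-per-query overhead (in particular that restricting to $C_i\cup X$ does not require rebuilding the structure), is where the real work lies; the tree-decomposition-correctness of the output (third bullet of Definition~\ref{def:prelim:treewidth}, the connectivity condition) is the easy part, handled exactly as in Section~\ref{sec:nSquareAppx} since $X$ sits in both the new root bag and in every child's root bag.
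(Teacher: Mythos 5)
Your proposal diverges from the paper's proof in two places, and in both the divergence creates a real gap rather than an alternative route. First, to reach width $3k+4$ you correctly realize you need a $\tfrac12$-balanced $S$-separator, not the $\tfrac23$-balanced one that Lemma~\ref{lem:halfhalfalg} produces --- but Lemma~\ref{lem:halfhalfalg} is intrinsically a $\tfrac23$-balance algorithm (the $\tfrac23$ comes from the packing argument, and guess-plus-maxflow cannot do better without access to a tree decomposition), so ``routing its incidence/flow queries through the data structure'' cannot give you the $\tfrac12$-balance you invoke in your width calculation. The paper sidesteps this entirely: it \emph{abandons} the maxflow route and instead uses Lemma~\ref{lem:balanced-3coloring} to recast ``find a $\tfrac12$-balanced $S$-separator'' as ``find a 4-partition $(M_1,M_2,M_3,X)$ with $|M_i\cap S|\le |S|/2$ and no cross edges,'' which \emph{is} a local-checkable property and hence admits a standard tree-decomposition DP, packaged as the data-structure query $\qSsep$. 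Without that reformulation there is no $O(c^k\log n)$ query giving $\tfrac12$-balance, and your width bound collapses to $\approx 4k$.

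Second, your bound on the number of recursion nodes is not established. You argue ``each split creates $\ge 2$ children and each leaf contains a fresh vertex,'' but the first half is false in general: $G[S\cup U]\setminus(S\cup X)$ can have a single component (a long path is the canonical example), and the second half is exactly the subtle point you flag but do not resolve. The paper's fix is a small but essential trick: before enumerating components it explicitly inserts the current pin $\pin$ into the separator $\sep$ (paying $|\sep|\le k+2$ instead of $k+1$). Since $\pin\in U\setminus S$ and will now be absorbed into $S$ at the next level, each recursive call consumes a vertex that can never again serve as a pin, which immediately yields the $\le 2n$ bound on bags and hence the $O(c^k n\log n)$ total. This is also why the recursion invariant is $|S|\le 2k+3$ with $|\sep|\le k+2$ (giving $\tfrac12(2k+3)+(k+2)\le 2k+3$), not the $k+1$ you used. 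Finally, the concrete interface you would need --- enumerate components of $G[S\cup U]\setminus(S\cup X)$ and return each component's neighborhood, all without materializing the component --- is nontrivial; in the paper this is handled by the state tuple $(S,X,F,\pin)$ together with the dedicated queries $\qpin$ and $\qnei$, whose existence is a substantive part of Section~\ref{section:datastructure}, not a side remark.
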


We now give a proof of Theorem~\ref{theorem:nlogn}, assuming the
correctness of Lemma~\ref{lemma:nlogn-compression}.  The correctness
of the lemma will be argued for in Sections~\ref{sec:firstCompress}
and~\ref{sec:proofOfCompress}.
 
\begin{proof}[Proof of Theorem~\ref{theorem:nlogn}]
  Our algorithm will in fact solve a slightly more general problem.
  Here we are given a graph $G$, an integer $k$ and a set $S_0$ on at
  most $2k+3$ vertices, with the property that $G \setminus S_0$ is
  connected.  The algorithm will either conclude that $\tw(G) > k$ or
  output a tree decomposition of width at most $3k+4$ such that $S_0$
  is the root bag.  To get a tree decomposition of any (possibly
  disconnected) graph it is sufficient to run this algorithm on each
  connected component with $S_0 = \emptyset$.  The algorithm proceeds
  as follows.  It first applies Lemma~\ref{lem:prelim:bodlaender} on
  $(G, 3k+4)$.  If the algorithm of Lemma~\ref{lem:prelim:bodlaender}
  concludes that $\tw(G) > 3k+4$ the algorithm reports that $\tw(G) >
  3k+4 > k$.
  
  If the algorithm finds a matching $M$ in $G$ with at least
  $\frac{|V|}{O(k^6)}$ edges, it contracts every edge in $M$ and
  obtains a graph $G'$.  Since $G'$ is a minor of $G$ we know that
  $\tw(G') \leq \tw(G)$.  The algorithm runs itself recursively on
  $(G', k, \emptyset)$, and either concludes that $\tw(G') > k$
  (implying $\tw(G) > k$) or outputs a tree decomposition of $G'$ of
  width at most $3k+4$.  Uncontracting the matching in this tree
  decomposition yields a tree decomposition $\td_\apx$ of $G$ of width
  at most $6k+9$~\cite{Bodlaender96}.  Now we can run the algorithm of
  Lemma~\ref{lemma:nlogn-compression} on $(G,k,S_0,\td_\apx)$ and
  either obtain a tree decomposition of $G$ of width at most $3k+4$
  and $S_0$ as the root bag, or correctly conclude that $\tw(G)>k$.
  
  If the algorithm finds a set $X$ of at least $\frac{|V|}{O(k^6)}$
  $I$-simplicial vertices, it constructs the improved graph $G_I$ and
  runs itself recursively on $(G_I \setminus X, k, \emptyset)$.  If
  the algorithm concludes that $\tw(G_I \setminus X) > k$ then
  $\tw(G_I) > k$ implying $\tw(G) > k$ by
  Lemma~\ref{lem:improvedGraphTw}.  Otherwise we obtain a tree
  decomposition of $G_I \setminus X$ of width at most $3k+4$.  We may
  now apply Lemma~\ref{lem:prelim:bodlaender} and obtain a tree
  decomposition $\td_\apx$ of $G$ with the same width.  Note that we
  can not just output $\td_\apx$ directly, since we can not be sure
  that $S_0$ is the top bag of $\td_\apx$.  However we can run the
  algorithm of Lemma~\ref{lemma:nlogn-compression} on
  $(G,k,S_0,\td_\apx)$ and either obtain a tree decomposition of $G$
  of width at most $3k+4$ and $S_0$ as the root bag, or correctly
  conclude that $\tw(G)>k$.
  
  It remains to analyze the running time of the algorithm.  Suppose
  the algorithm takes time at most $T(n,k)$ on input $(G,k,S_0)$ where
  $n = |V(G)|$.  Running the algorithm of
  Lemma~\ref{lem:prelim:bodlaender} takes $O(k^{O(1)}n)$ time.  Then
  the algorithm either halts, or calls itself recursively on a graph
  with at most $n-\frac{n}{O(k^6)} = n(1-\frac{1}{O(k^6)})$ vertices
  taking time $T(n(1-\frac{1}{O(k^6)}), k)$.  Then the algorithm takes
  time $O(k^{O(1)}n)$ to either conclude that $\tw(G)>k$ or to
  construct a tree decomposition $\td_\apx$ of $G$ of width $O(k)$.
  In the latter case we finally run the algorithm of
  Lemma~\ref{lemma:nlogn-compression}, taking time $O(c^k\cdot n \log
  n)$.  This gives the following recurrence:
  \[
  T(n,k) \leq O\left(c^k\cdot n \log n\right) + T\left(n
    \left(1-\frac{1}{O(k^6)} \right), k \right)
  \]
  The recurrence leads to a geometric series and solves to $T(n,k)
  \leq O(c^kk^{O(1)}\cdot n \log n)$, completing the proof.  For a
  thorough analysis of the recurrence, see
  Equations~\ref{eqn:recurrenceOneD} and~\ref{eqn:recurrenceTwoD} in
  Section~\ref{section:nlogn}.  Pseudocode for the algorithm described
  here is given in Algorithm~\ref{alg:nlogn-alg-1} in
  Section~\ref{section:nlogn}.
\end{proof}

\subsubsection{A compression algorithm}\label{sec:firstCompress}
We now proceed to give a sketch of a proof for a slightly weakened
form of Lemma~\ref{lemma:nlogn-compression}.  The goal is to give an
algorithm that given as input a graph $G$, an integer $k$, a set $S_0$
of size at most $6k+6$ such that $G\setminus S_0$ is connected, and a
tree decomposition $\td_\apx$ of $G$, runs in time $O(c^kn \log n)$
and either correctly concludes that $\tw(G) > k$ or outputs a tree
decomposition of $G$ of width at most $8k+7$.  The paper does not
contain a full proof of this variant of
Lemma~\ref{lemma:nlogn-compression} --- we will discuss the proof of
Lemma~\ref{lemma:nlogn-compression} in
Section~\ref{sec:proofOfCompress}.  The aim of this section is to
demonstrate that the recursive scheme of
Section~\ref{sec:recursiveScheme} together with a nice trick for
finding balanced separators is already sufficient to obtain a factor
$8$ approximation for treewidth running in time $O(c^k n\log n)$.  A
variant of the trick used in this section for computing balanced
separators turns out to be useful in our final $O(c^kn)$ time
$5$-approximation algorithm.

The route we follow here
is to apply the algorithm of Reed described in
Section~\ref{sec:reedAlg}, but instead of using
Lemma~\ref{lem:reedBalSep} to find a set $X$ of size $k+1$ such that
every connected component of $G \setminus X$ is small, finding $X$ by
dynamic programming over the tree decomposition $\td_\apx$ in time
$O(c^kn)$.  There are a few technical difficulties with this approach.

The most serious issue is that, to the best of our knowledge, the only
known dynamic programming algorithms for balanced separators in graphs
of bounded treewidth take time $O(c^kn^2)$ rather than $O(c^kn)$: in
the state, apart from a partition of the bag, we also need to store
the cardinalities of the sides which gives us another dimension of
size $n$.  We now explain how it is possible to overcome this issue.
We start by applying the argument in the proof of
Lemma~\ref{lemma:halfhalf} on the tree decomposition $\td_\apx$ and
get in time $O(k^{O(1)}n)$ a partition of $V(G)$ into $L_0$, $X_0$ and
$R_0$ such that there are no edges between $L_0$ and $R_0$,
$\max(|L_0|, |R_0|) \leq \frac{3}{4}n$ and $|X_0| \leq
\w(\td_\apx)+1$.  For every way of writing $k+1 = k_L + k_X + k_R$ and
every partition of $X_0$ into $X_L \cup X_X \cup X_R$ with $|X_X| =
k_X$ we do the following.

First we find in time $O(c^kn)$ using dynamic programming over the
tree decomposition $\td_\apx$ a partition of $L_0 \cup X_0$ into
$\hat{L}_L \cup \hat{R}_L \cup \hat{X}_L$ such that there are no edges
from $\hat{L}_L$ to $\hat{R}_L$, $|\hat{X}_L| \leq k_L + k_X$, $X_X
\subseteq \hat{X}_L$, $X_R \subseteq \hat{R}_L$ and $X_L \subseteq
\hat{L}_L$ and the size $|\hat{L}_L|$ is maximized.  Then we find in
time $O(c^kn)$ using dynamic programming over the tree decomposition
$\td_\apx$ a partition of $R_0 \cup X_0$ into $\hat{L}_R \cup \hat{R}_R
\cup \hat{X}_R$ such that there are no edges from $\hat{L}_R$ to
$\hat{R}_R$, $|\hat{X}_R| \leq k_R + k_X$, $X_X \subseteq \hat{X}_R$,
$X_R \subseteq \hat{R}_R$ and $X_L \subseteq \hat{L}_R$ and the size
$|\hat{R}_R|$ is maximized.  Let $L = L_L \cup L_R$, $R = R_L \cup
R_R$ and $X = X_L \cup X_R$.  The sets $L$, $X$, $R$ form a partition
of $V(G)$ with no edges from $L$ to $R$ and $|X| \leq k_L + k_X + k_R
+ k_X - k_X \leq k+1$.

It is possible to show using a combinatorial argument (see
Lemma~\ref{lem:c4vc} in Section~\ref{section:datastructure}) that if
$\tw(G) \leq k$ then there exists a choice of $k_L$, $k_X$, $k_R$ such
that $k+1 = k_L + k_X + k_R$ and partition of $X_0$ into $X_L \cup X_X
\cup X_R$ with $|X_X| = k_X$ such that the above algorithm will output
a partition of $V(G)$ into $X$, $L$ and $R$ such that $\max(|L|,|R|)
\leq \frac{8n}{9}$.  Thus we have an algorithm that in time $O(c^kn)$
either finds a set $X$ of size at most $k+1$ such that each connected
component of $G\setminus X$ has size at most $\frac{8n}{9}$ or
correctly concludes that $\tw(G) > k$.

\smallskip

The second problem with the approach is that the algorithm of Reed is
an $8$-approximation algorithm rather than a $3$-approximation.  Thus,
even the sped up version does not quite prove
Lemma~\ref{lemma:nlogn-compression}.  It does however yield a version
of Lemma~\ref{lemma:nlogn-compression} where the compression algorithm
is an $8$-approximation.  In the proof of Theorem~\ref{theorem:nlogn}
there is nothing special about the number $3$ and so one can use this
weaker variant of Lemma~\ref{lemma:nlogn-compression} to give a
$8$-approximation algorithm for treewidth in time $O(c^kn \log n)$.
We will not give complete details of this algorithm, as we will
shortly describe a proof of Lemma~\ref{lemma:nlogn-compression} using
a quite different route.

It looks difficult to improve the algorithm above to an algorithm with
running time $O(c^kn)$.  The main hurdle is the following: both the
algorithm of Robertson and Seymour~\cite{RobertsonS13} and the
algorithm of Reed~\cite{Reed92} find a separator $X$ and proceed
recursively on the components of $G \setminus X$.  If we use $O(c^k
\cdot n)$ time to find the separator $X$, then the total running time
must be at least $O(c^k \cdot n \cdot d)$ where $d$ is the depth of
the recursion tree of the algorithm.  It is easy to see that the depth
of the tree decomposition output by the algorithms equals (up to
constant factors) the depth of the recursion tree.  However there
exist graphs of treewidth $k$ such that no tree decomposition of depth
$o(\log n)$ has width $O(k)$ (take for example powers of paths).  Thus
the depth of the constructed tree decompositions, and hence the
recursion depth of the algorithm must be at least $\Omega(\log n)$.

Even if we somehow managed to reuse computations and find the
separator $X$ in time $O(c^k \cdot \frac{n}{\log n})$ on average, we
would still be in trouble since we need to pass on the list of
vertices of the connected components of $G \setminus X$ that we will
call the algorithm on recursively.  At a first glance this has to take
$O(n)$ time and then we are stuck with an algorithm with running time
$O((c^k \cdot \frac{n}{\log n} + n) \cdot d)$, where $d$ is the
recursion depth of the algorithm.  For $d = \log n$ this is still
$O(c^kn + n\log n)$ which is slower than what we are aiming at.  In
Section~\ref{sec:proofOfCompress} we give a proof of
Lemma~\ref{lemma:nlogn-compression} that {\em almost} overcomes these
issues.

\subsubsection{A better compression algorithm}
\label{sec:proofOfCompress}
We give a sketch of the proof of Lemma~\ref{lemma:nlogn-compression}.
The goal is to give an algorithm that given as input a connected graph
$G$, an integer $k$, a set $S_0$ of size at most $2k+3$ such that
$G\setminus S_0$ is connected, and a tree decomposition $\td_\apx$ of
$G$, runs in time $O(c^kn \log n)$ and either correctly concludes that
$\tw(G) > k$ or outputs a tree decomposition of $G$ of width at most
$3k+4$ with top bag $S_0$.

Our strategy is to implement the $O(c^kn^2)$ time $4$-approximation
algorithm described in Section~\ref{sec:nSquareAppx}, but make some
crucial changes in order to (a) make the implementation run in
$O(c^kn\log n)$ time, and (b) make it a $3$-approximation rather than
a $4$-approximation.  We first turn to the easier of the two changes,
namely making the algorithm a $3$-approximation.

To get an algorithm that satisfies all of the requirements of
Lemma~\ref{lemma:nlogn-compression}, but runs in time $O(c^kn^2)$
rather than $O(c^kn\log n)$ we run the algorithm described in
Section~\ref{sec:nSquareAppx} setting $S = S_0$ in the beginning.
Instead of using Lemma~\ref{lem:halfhalfalg} to find a set $X$ such
that every component of $G \setminus X$ has at most $\frac{2}{3}|S|$
vertices which are in $S$, we apply Lemma~\ref{lemma:halfhalf} to show
the {\em existence} of an $X$ such that every component of $G
\setminus X$ has at most $\frac{1}{2}|S|$ vertices which are in $S$,
and do dynamic programming over the tree decomposition $\td_\apx$ in
time $O(c^kn)$ in order to find such an $X$.  Going through the
analysis of Section~\ref{sec:nSquareAppx} but with $X$ satisfying that
every component of $G \setminus X$ has at most $\frac{1}{2}|S|$
vertices which are in $S$ shows that the algorithm does in fact output
a tree decomposition with width $3k+4$ and top bag $S_0$ whenever
$\tw(G) \leq k$.

It is somewhat non-trivial to do dynamic programming over the tree
decomposition $\td_\apx$ in time $O(c^kn)$ in order to find an $X$
such that every component of $G \setminus X$ has at most
$\frac{2}{3}|S|$ vertices which are in $S$.  The problem is that $G
\setminus X$ could potentially have many components and we do not have
time to store information about each of these components individually.
The following lemma, whose proof can be found in
Section~\ref{sec:qssep}, shows how to deal with this problem.

\begin{lemma}\label{lem:balanced-3coloring}
  Let $G$ be a graph and $S\subseteq V(G)$.  Then a set $X$ is a balanced
  $S$-separator if and only if there exists a partition
  $(M_1,M_2,M_3)$ of $V(G)\setminus X$, such that there is no edge
  between $M_i$ and $M_j$ for $i\neq j$, and $|M_i\cap S|\leq |S|/2$
  for $i=1,2,3$.
\end{lemma}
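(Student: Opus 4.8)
The plan is to prove the two directions separately, where the reverse direction is essentially immediate and the forward direction is the one that needs an argument. First I would observe that the ``$\Leftarrow$'' direction is easy: if such a partition $(M_1,M_2,M_3)$ of $V(G)\setminus X$ exists with no edges between distinct parts, then every connected component $C$ of $G\setminus X$ is entirely contained in a single $M_i$ (a component cannot straddle two parts with no edges between them), hence $|C\cap S|\leq |M_i\cap S|\leq |S|/2$, so $X$ is a balanced $S$-separator by definition.

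For the ``$\Rightarrow$'' direction, suppose $X$ is a balanced $S$-separator, so that every connected component of $G\setminus X$ contains at most $|S|/2$ vertices of $S$. Let $C_1, C_2, \ldots, C_t$ be the connected components of $G\setminus X$. The task is to distribute these components among three groups $M_1,M_2,M_3$ so that each group contains at most $|S|/2$ vertices of $S$ in total; since there are never edges between distinct components, any such grouping automatically has no edges between the $M_i$'s. This is a simple bin-packing / scheduling argument: think of each component $C_j$ as an item of ``size'' $|C_j\cap S|\leq |S|/2$, with total size $\sum_j |C_j\cap S|\leq |S|$, and we want to pack them into three bins each of capacity $|S|/2$. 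A greedy argument works: process the components in arbitrary order, always placing the current component into a bin whose current load is smallest (equivalently, at most $|S|/3$-ish — but one must be slightly careful). I would argue as follows: repeatedly, among the three bins pick one of minimum current load and add the next component to it; if the next component has size $s\le |S|/2$ and the minimum load is $\ell$, then after adding we need $\ell + s \le |S|/2$. Since the three loads sum to at most $|S|$ before this step, the minimum load $\ell$ is at most $|S|/3$; combined with $s\le |S|/2$ this gives $\ell+s\le |S|/3+|S|/2 = 5|S|/6$, which is not quite $\le |S|/2$, so the naive bound is insufficient and a small refinement is needed.

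The refinement: sort components in decreasing order of $|C_j\cap S|$ and argue more carefully, or, cleaner, use the following classical fact. If items of sizes $s_1\ge s_2\ge\cdots$ each at most $B/2$ have total size at most $B$, they can be packed into two bins of capacity $B$ — but here we want three bins of capacity $B/2$ with total $\le B$, i.e.\ bins of capacity exactly half the total. Actually the correct and clean statement is: items each of size $\le \tfrac12\sigma$ with total $\sigma$ can be packed into three bins of capacity $\tfrac12\sigma$ each. I would prove this by first repeatedly merging: if two items both have size $\le \tfrac14\sigma$, replace them by a single item equal to their union (size still $\le\tfrac12\sigma$); this does not change total size and only makes packing harder. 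After exhaustive merging, at most one item has size $\le \tfrac14\sigma$, and all others lie in $(\tfrac14\sigma,\tfrac12\sigma]$; hence there are at most $3$ large items plus possibly one small one — but with total $\le\sigma$ there can be at most $3$ large items and at most $1$ small, in fact the counting shows at most $4$ items total, which clearly fit in $3$ bins of capacity $\tfrac12\sigma$ by inspection of the few cases. Finally, un-merging the merged items keeps each bin's load unchanged, so the original components are packed as required. Setting $M_i$ to be the union of the components assigned to bin $i$ (and leaving $M_i=\emptyset$ if empty) yields the desired partition of $V(G)\setminus X$.

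The main obstacle is exactly this packing step: the off-the-shelf greedy ``smallest-load-first'' does not immediately give the $|S|/2$ bound, so one must either sort by size, or use the merging trick above, or invoke the ``simple packing argument'' already used in the proof sketch of Lemma~\ref{lem:halfhalfalg} (which handled the two-bin case giving $\tfrac23$) adapted to three bins giving $\tfrac12$. Everything else — the equivalence of ``component contained in a single part'' with ``no edges between parts'', and the fact that bin loads equal $|M_i\cap S|$ — is routine.
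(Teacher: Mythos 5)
Your proof is correct and takes essentially the same route as the paper: both handle the backward direction trivially, and reduce the forward direction to a three-bin packing of the quantities $|C_j\cap S|$ solved by a greedy merging argument (the paper extracts this as Lemma~\ref{lem:comb-balanced-3coloring}). The paper's merging rule is a bit tidier than yours --- it always merges the two sets with the smallest current sums, and the averaging observation that with at least four sets remaining the two smallest total at most $|S|/2$ maintains the invariant directly and terminates with exactly three sets, avoiding your end-game case analysis on the surviving items --- but the underlying idea is the same.
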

Lemma~\ref{lem:balanced-3coloring} shows that when looking for a
balanced $S$-separator we can just look for a partition of $G$ into
four sets $X,M_1,M_2,M_3$ such that there is no edge between $M_i$ and
$M_j$ for $i\neq j$, and $|M_i\cap S|\leq |S|/2$ for $i=1,2,3$.  This
can easily be done in time $O(c^kn)$ by dynamic programming over the
tree decomposition $\td_\apx$.  This yields the promised algorithm
that satisfies all of the requirements of
Lemma~\ref{lemma:nlogn-compression}, but runs in time $O(c^kn^2)$
rather than $O(c^kn\log n)$.

\medskip We now turn to the most difficult part of the proof of
Lemma~\ref{lemma:nlogn-compression}, namely how to improve the running
time of the algorithm above from $O(c^kn^2)$ to $O(c^kn\log n)$ in a
way that gives hope of a further improvement to running time
$O(c^kn)$.  The $O(c^kn\log n)$ time algorithm we describe now is
based on the following observations; (a) In any recursive call of the
algorithm above, the considered graph is an induced subgraph of $G$.
Specifically the considered graph is always $G[C \cup S]$ where $S$ is
a set with at most $2k+3$ vertices and $C$ is a connected component of
$G \setminus S$.  (b) The only computationally hard step, finding the
balanced $S$-separator $X$, is done by dynamic programming over the
tree decomposition $\td_\apx$ of $G$.  The observations (a) and (b)
give some hope that one can reuse the computations done in the dynamic
programming when finding a balanced $S$-separator for $G$ during the
computation of balanced $S$-separators in induced subgraphs of $G$.
This plan can be carried out in a surprisingly clean manner and we now
give a rough sketch of how it can be done.

We start by preprocessing the tree decomposition using an algorithm of
Bodlaender and Hagerup~\cite{BodlaenderH98}.  This algorithm is a
parallel algorithm and here we state its sequential form.
\begin{proposition}[Bodlaender and Hagerup~\cite{BodlaenderH98}]
  \label{proposition:parallel}
  There is an algorithm that, given a tree decomposition of width $k$
  with $O(n)$ nodes of a graph $G$, finds a rooted binary tree
  decomposition of $G$ of width at most $3k+2$ with depth $O(\log n)$
  in $O(kn)$ time.
\end{proposition}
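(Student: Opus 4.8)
The plan is to follow the classical route of Bodlaender and Hagerup~\cite{BodlaenderH98}: rebalance the decomposition tree by parallel tree contraction, run as a sequential algorithm. Write the input decomposition as $\td_0 = (\{B_i\}_{i\in I}, T)$ with $|I| = O(n)$ and $\w(\td_0)\le k$. First I would \emph{normalize}: convert $T$ into a binary tree — every node with at most two children, hence of degree at most three — without changing the width and keeping $|I| = O(n)$. This is routine: replace a node $i$ with children $c_1,\dots,c_d$ by a small caterpillar of $d-1$ fresh copies of the bag $B_i$, attaching the $c_j$'s two at a time. All three tree-decomposition axioms are preserved, the number of nodes at most doubles, and the cost is $O(kn)$.

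Next I would build a \emph{balanced splitting tree} of $T$. The combinatorial engine is the standard fact that a tree on $m$ nodes of maximum degree three has an edge whose removal splits it into two parts each of size at most $\tfrac{2}{3}m$ (hence at least $\tfrac13 m$). Iterating this yields a rooted binary splitting tree $\mathcal{B}$ of depth $O(\log n)$ whose leaves are in bijection with the nodes of $T$, and whose internal node $x$ records the edge $e_x=(a_x,b_x)$ of $T$ used to split the subtree handled at $x$, together with one ``anchor'' node $p_x$ capturing how that subtree attaches to the rest of $T$. The efficiency point is that $\mathcal{B}$ must \emph{not} be built by naive divide and conquer: relocating a balanced edge costs $\Theta(m)$ and the recursion $T(m)=T(m_1)+T(m_2)+\Theta(m)$ with $m_1+m_2\approx m$ and $m_1,m_2\le\tfrac23 m$ sums to $\Theta(n\log n)$, not $O(n)$. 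Instead I would produce $\mathcal{B}$ by the Miller--Reif tree-contraction scheme~\cite{MillerR89,MillerR91}: $O(\log n)$ alternating \emph{rake}/\emph{compress} rounds doing $O(n)$ work in total, whose contraction order, read backwards, is a balanced binary expression tree for $T$ of depth $O(\log n)$ — this is $\mathcal{B}$ up to relabelling, and the anchor information can be propagated through the rounds at an extra $O(k)$ factor.

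Finally I would read off $\td'$ from $\mathcal{B}$, which I take as the decomposition tree. To a leaf of $\mathcal{B}$ corresponding to $i\in I$ I assign the bag $B_i$; to an internal node $x$ I assign $B'_x:=B_{a_x}\cup B_{b_x}\cup B_{p_x}$, a union of \emph{at most three} original bags, so $|B'_x|\le 3(k+1)$ and $\w(\td')\le 3k+2$. Checking that $\td'$ is a tree decomposition then reduces to the three axioms. Vertex and edge coverage are immediate, since the leaves of $\mathcal{B}$ already carry all the bags $B_i$, hence all of $V(G)$ and all edges of $G$. Connectivity is the point of the anchor bookkeeping: for each $v\in V(G)$ the $\mathcal{B}$-nodes whose bag contains $v$ are exactly the leaves in the image of the connected set $I_v$ together with the internal nodes created while contracting $I_v$ or separating it from the rest — here one uses that $B_{p_x}$ really contains the interface between the processed subtree and its complement, which follows from the usual fact that $B_a$ separates $V(T'')\setminus B_a$ from $V\setminus V(T'')$ when $a$ is the unique attachment point of $T''$. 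A last $O(kn)$ pass writes out the bags $B'_x$, each of size $O(k)$, over the $O(n)$ nodes of $\mathcal{B}$, giving total running time $O(kn)$ and depth $O(\log n)$.

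The genuine obstacle is making the two requirements compatible. The exact factor-three width bound forces a tight interface invariant — only a \emph{single} anchor bag can be afforded alongside the two endpoints of the splitting edge, whereas a careless recursion lets the interface grow to two or more bags and the bound degrades to $4k+3$ — while the linear (as opposed to $n\log n$) running time rules out recursive recomputation and forces the tree-contraction machinery. One must therefore engineer the contraction schedule to respect the anchor invariant. Carrying out precisely this is the substance of~\cite{BodlaenderH98}, and here it may simply be invoked.
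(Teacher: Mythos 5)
The paper does not prove this proposition; it imports it as a black box from Bodlaender and Hagerup~\cite{BodlaenderH98}, noting only (in the introduction) that the result ``is an application of classic results from parallel computing for solving problems on trees, in particular Miller--Reif tree contraction.'' Your sketch is a faithful reconstruction of that cited argument along exactly the route the paper gestures at: normalize to a bounded-degree tree, use tree contraction rather than naive recursion to obtain an $O(\log n)$-depth balanced splitting tree in linear work, and then assign to each internal node the union of the at most three original bags straddling the split (two endpoints of the splitting edge plus one anchor), giving $3(k+1)-1 = 3k+2$. You also correctly flag the two real subtleties --- that recomputing balanced edges recursively gives $\Theta(n\log n)$ rather than $O(n)$, and that the factor-three width bound hinges on keeping the interface to a \emph{single} anchor bag, which is precisely what the rake/compress discipline enforces --- and you defer the contraction-schedule engineering to~\cite{BodlaenderH98}, which is exactly what the paper itself does. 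So your proposal matches the paper's (implicit) treatment: this is a cited result, and your outline is a plausible and correctly-scoped account of its proof.
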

Proposition~\ref{proposition:parallel} lets us assume without loss of
generality that the tree decomposition $\td_\apx$ has depth $O(\log
n)$.

In Section~\ref{section:datastructure} we will describe a data
structure with the following properties.  The data structure takes as
input a graph $G$, an integer $k$ and a tree decomposition $\td_\apx$
of width $O(k)$ and depth $O(\log n)$.  After an initialization step
which takes $O(c^kn)$ time the data structure allows us to do certain
operations and queries.  At any point of time the data structure is in
a certain {\em state}.  The operations allow us to change the state of
the data structure.  Formally, the state of the data structure is a
$3$-tuple $(S,X,F)$ of subsets of $V(G)$ and a vertex $\pin$ called
the ``pin'', with the restriction that $\pin \notin S$.  The initial
state of the data structure is that $S=S_0$, $X = F = \emptyset$, and
$\pin$ is an arbitrary vertex of $G\setminus S_0$.  The data structure
allows operations that change $S$, $X$ or $F$ by inserting/deleting a
specified vertex, and move the pin to a specified vertex in time
$O(c^k\log n)$.

For a fixed state of the data structure, the {\em active component}
$U$ is the component of $G \setminus S$ which contains $\pin$.  The
data structure allows the query \qSsep{} which outputs in time $O(c^k
\log n)$ either an $S$-balanced separator $\hat{X}$ of $G[U \cup S]$
of size at most $k+1$, or $\bot$, which means that $\tw(G[S \cup U]) > k$.

The algorithm of Lemma~\ref{lemma:nlogn-compression} runs the
$O(c^kn^2)$ time algorithm described above, but uses the \emph{data
  structure} to find the balanced $S$-separator in time $O(c^k \log
n)$ instead of doing dynamic programming over $\td_\apx$.  All we need
to make sure is that the $S$ in the state of the data structure is
always equal to the set $S$ for which we want to find the balanced
separator, and that the active component $U$ is set such that $G[U
\cup S]$ is equal to the induced subgraph we are working on.  Since we
always maintain that $|S| \leq 2k+3$ we can change the set $S$ to
anywhere in the graph (and specifically into the correct position) by
doing $k^{O(1)}$ operations taking $O(c^k \log n)$ time each.

At a glance, it looks like that if we assume the data structure as a
black box, this is sufficient to obtain the desired $O(c^k n\log n)$
time algorithm.  However, we haven't even used the sets $X$ and $F$ in
the state of the data structure, or described what they mean! The
reason for this is of course that there is a complication.  In
particular, after the balanced $S$-separator $\hat{X}$ is found ---
how can we recurse into the connected components of $G[S \cup U]
\setminus (S \cup \hat{X})$? We need to move the pin into each of
these components one at a time, but if we want to use $O(c^k \log n)$
time in each recursion step, we cannot afford to spend $O(|S \cup U|)$
time to compute the connected components of $G[S \cup U] \setminus (S
\cup \hat{X})$.  We resolve this issue by pushing the problem into the
data structure, and showing that the appropriate queries can be
implemented there.  This is where the sets $X$ and $F$ in the state of
the data structure come in.

The rôle of $X$ in the data structure is that when queries to the data
structure depending on $X$ are called, $X$ equals the set $\hat{X}$,
i.e., the balanced $S$-separator found by the query \qSsep{}.  The set
$F$ is a set of ``finished pins'' whose intention is the following:
when the algorithm calls itself recursively on a component $U'$ of
$G[S \cup U] \setminus (S \cup \hat{X})$ after it is done with
computing a tree decomposition of $G[U' \cup N(U')]$ with $N(U')$ as
its top bag, it selects an arbitrary vertex of $U'$ and inserts it
into $F$.  

The query \qpin{} finds a new pin $\pin{}'$ in a component $U'$ of $G[S
\cup U] \setminus (S \cup \hat{X})$ that does not contain any vertices
of $F$.  And finally, the query \qnei{} allows us to find the
neighborhood $N(U')$, which in turn allows us to call the algorithm
recursively in order to find a tree decomposition of $G[U' \cup N(U')]$
with $N(U')$ as its top bag.

At this point it is possible to convince oneself that the $O(c^kn^2)$
time algorithm described in the beginning of this section can be
implemented using $O(k^{O(1)})$ calls to the data structure in each
recursive step, thus spending only $O(c^k \log n)$ time in each
recursive step.  Pseudocode for this algorithm can be found in
Algorithm~\ref{alg:nlogn-findtd}.  The recurrence bounding the running
time of the algorithm then becomes
\begin{align*} T(n,k) \leq O(c^{k}\log n) + \sum_{U_i} T(|U_i \cup \hat{X}|,
  k).  \end{align*} Here $U_1, \ldots ,U_q$ are the connected
components of $G[S \cup U] \setminus (S \cup \hat{X})$.  This
recurrence solves to $O(c^k n\log n)$, proving
Lemma~\ref{lemma:nlogn-compression}.  A full proof of
Lemma~\ref{lemma:nlogn-compression} assuming the data structure as a
black box may be found in Section~\ref{sec:nlogncompress}.

\subsection{The data structure}
\label{sec:dsoutline}
We sketch the main ideas in the implementation of the data structure.
The goal is to set up a data structure that takes as input a graph
$G$, an integer $k$ and a tree decomposition $\td_\apx$ of width
$O(k)$ and depth $O(\log n)$, and initializes in time $O(c^kn)$.  The
{\em state} of the data structure is a $4$-tuple $(S,X,F,\pin)$ where
$S$, $X$ and $F$ are vertex sets in $G$ and $\pin \in V(G) \setminus
S$.  The initial state of the data structure is
$(S_0,\emptyset,\emptyset,v)$ where $v$ is an arbitrary vertex in
$G\setminus S_0$.  The data structure should support operations that
insert (delete) a single vertex to (from) $S$, $X$ and $F$, and an
operation to change the pin $\pin$ to a specified vertex.  These
operations should run in time $O(c^k\log n)$.  For a given state of
the data structure, set $U$ to be the component of $G \setminus S$
that contains $\pin$.  The data structure should also support the
following queries in time $O(c^k\log n)$.
\begin{itemize}\setlength\itemsep{-.7mm}
\item \qSsep{}: Assuming that $|S| \leq 2k+3$, return a set $\hat{X}$ of
  size at most $k+1$ such that every component of $G[S \cup U]
  \setminus \hat{X}$ contains at most $\frac{1}{2}|S|$ vertices of
  $S$, or conclude that $\tw(G) > k$.
\item \qpin{}: Return a vertex $\pin{}'$ in a component $U'$ of $G[S \cup
  U] \setminus (S \cup \hat{X})$ that does not contain any vertices of $F$.
\item \qnei{}: Return $N(U)$ if $|N(U)| < 2k+3$ and $\bot$ otherwise.
\end{itemize}

Suppose for now that we want to set up a much simpler data structure.
Here the state is just the set $S$ and the only query we want to
support is \qSsep{} which returns a set $\hat{X}$ such that every
component of $G \setminus (S \cup \hat{X})$ contains at most
$\frac{1}{2}|S|$ vertices of $S$, or conclude that $\tw(G) > k$.  At
our disposal we have the tree decomposition $\td_\apx$ of width $O(k)$
and depth $O(\log n)$.  To set up the data structure we run a standard
dynamic programming algorithm for finding $\hat{X}$ given $S$.  Here
we use Lemma~\ref{lem:balanced-3coloring} and search for a partition
of $V(G)$ into $(M_1,M_2,M_3,X)$ such that $|X| \leq k+1$, there is no
edge between $M_i$ and $M_j$ for $i\neq j$, and $|M_i\cap S|\leq
|S|/2$ for $i=1,2,3$.  This can be done in time $O(c^kk^{O(1)}n)$ and
the tables stored at each node of the tree decomposition have size
$O(c^kk^{O(1)})$.  This finishes the initialization step of the data
structure.  The initialization step took time $O(c^kk^{O(1)}n)$.

We will assume without loss of generality that the top bag of the
decomposition is empty.  The data structure will maintain the
following invariant: after every change has been performed the tables
stored at each node of the tree decomposition correspond to a valid
execution of the dynamic programming algorithm on input $(G,S)$.  If
we are able to maintain this invariant, then answering \qSsep{}
queries is easy: assuming that each cell of the dynamic programming
table also stores solution sets (whose size is at most $k+1$) we can
just output in time $O(k^{O(1)})$ the content of the top bag of the
decomposition!

But how to maintain the invariant and support changes in time
$O(c^k\log n)$? It turns out that this is not too difficult: the
content of the dynamic programming table of a node $t$ in the tree
decomposition depends only on $S$ and the dynamic programming tables
of $t$'s children.  Thus, when the dynamic programming table of the
node $t$ is changed, this will only affect the dynamic programming
tables of the $O(\log n)$ ancestors of $t$.  If the dynamic program is
done carefully, one can ensure that adding or removing a vertex
to/from $S$ will only affect the dynamic programming tables for a
single node $t$ in the decomposition, together with all of its $O(\log
n)$ ancestors.  Performing the changes amounts to recomputing the
dynamic programming tables for these nodes, and this takes time
$O(c^kk^{O(1)}\log n)$.

It should now be plausible that the idea above can be extended to work
also for the more complicated data structure with the more advanced
queries.  Of course there are several technical difficulties, the main
one is how to ensure that the computation is done in the connected
component $U$ of $G \setminus S$ without having to store ``all
possible ways the vertices in a bag could be connected below the
bag''.  We omit the details of how this can be done in this outline.
The full exposition of the data structure can be found in
Section~\ref{section:datastructure}.

\subsection{Approximating treewidth in $O(c^kn \log^{(\depth)}n)$
  time.}
We now sketch how the algorithm of the previous section can be sped
up, at the cost of increasing the approximation ratio from $3$ to $5$.
In particular we give a proof outline for the following theorem.
\begin{theorem}\label{theorem:nlogin}
  For every $\depth\in \mathbb{N}$, there exists an algorithm which,
  given a graph $G$ and an integer $k$, in $O(c^k\cdot n
  \log^{(\depth)} n)$ time for some $c \in \mathbb{N}$ either computes a
  tree decomposition of $G$ of width at most $5k+3$ or correctly
  concludes that $\tw(G) > k$.
\end{theorem}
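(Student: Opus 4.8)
The plan is to induct on $\depth$. For $\depth = 1$ the statement follows from Theorem~\ref{theorem:nlogn} (which already yields a $3k+4 \le 5k+3$ decomposition in $O(c^k n \log n)$ time), so the work is in the inductive step: given an algorithm $\alg{\depth-1}$ running in $O(c^k n \log^{(\depth-1)} n)$ time, build $\alg{\depth}$ running in $O(c^k n \log^{(\depth)} n)$ time. The overall architecture mirrors the proof of Theorem~\ref{theorem:nlogn}: reduce to a \emph{compression} problem using Lemma~\ref{lem:prelim:bodlaender} (matching / $I$-simplicial dichotomy, plus a recursive call to $\alg{\depth}$ itself on a graph with a constant-factor-of-$k^6$ fraction fewer vertices, giving a geometric series), so the entire burden falls on proving a compression lemma analogous to Lemma~\ref{lemma:nlogn-compression} but with running time $O(c^k n \log^{(\depth)} n)$ and width bound $5k+3$.

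For the compression step I would follow the recursive-separator scheme of Section~\ref{sec:proofOfCompress}, but change two things. First, I would use Reed's scheme (Section~\ref{sec:reedAlg}) rather than the Robertson--Seymour scheme: instead of finding a balanced $S$-separator (which only controls $|C_i \cap S|$ and needs a full $\log n$-depth recursion to terminate), I would find, via the data structure, a separator $X$ of size $k+1$ that is \emph{simultaneously} a balanced $S$-separator \emph{and} makes every component of $G[S\cup U]\setminus(S\cup X)$ have at most, say, $\tfrac{8}{9}|U|$ vertices — combining the $\qSsep{}$ query with a balanced-separator query of the kind sketched in Section~\ref{sec:firstCompress}. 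Controlling both quantities costs one extra application of Lemma~\ref{lemma:halfhalf}-style reasoning, which is exactly what pushes the approximation ratio from $3$ to $5$ (two separator sets of size $k+1$ stacked on top of the $|S_0|\le 2k+3$ bag, then cleaning up constants). Second, and this is the key new ingredient, I would arrange the recursion so that it does \emph{not} bottom out at single vertices: when the active component $U$ has size at most some threshold $\tau$, instead of recursing further I extract the induced subgraph $G[U\cup S]$ (spending $O(|U\cup S|)$ time, which is affordable because $U$ is small) and call $\alg{\depth-1}$ on it directly to get a tree decomposition of width $5k+3$ with $S$ on top. Because each separator step shrinks the active component by a constant factor, after $O(\log n)$ halvings we reach size $\tau$; choosing $\tau = \Theta(\log^{(\depth-1)} n)$ — more precisely $\tau$ such that $\log \tau = \Theta(\log^{(\depth-1)} n)$, i.e. $\tau = 2^{\Theta(\log^{(\depth-1)} n)} = \Theta(\log^{(\depth-2)} n)$ — makes the depth of the "fast" part of the recursion only $O(\log n - \log^{(\depth-1)} n)$, and more importantly the total size summed over all the small base-case subgraphs is $O(n)$, so the cost of the $\alg{\depth-1}$ calls is $O(c^k n \log^{(\depth-1)}(\tau)) = O(c^k n \log^{(\depth)} n)$.

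The running-time bookkeeping is then: each fast recursive node does $O(c^k \log n)$ work via data-structure queries, and the number of fast nodes is $O(n/\tau)$ (a node with active component of size $\le \tau$ is a leaf, and the components at any level partition a subset of $V(G)$), so the fast part costs $O(c^k \tfrac{n}{\tau}\log n)$; with $\tau = \Theta(\log^{(\depth-2)} n)$ and using $\log n / \log^{(\depth-2)} n = O(\log^{(\depth-1)} n \cdots)$ — here I would verify the telescoping $\tfrac{\log n}{\tau} \le \log^{(\depth)} n$ holds for the chosen $\tau$ up to constants — this is $O(c^k n \log^{(\depth)} n)$, and the base-case part is also $O(c^k n \log^{(\depth)} n)$ as above. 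Reassembling the tree decompositions is the same gluing as in Section~\ref{sec:nSquareAppx}: a new root bag $S\cup X$ (size $\le (2k+3)+(k+1) = 3k+4$ at the first separator step, but the compression invariant only needs $|S|\le 2k+3$ throughout), and one checks the width never exceeds $5k+3$ — this is where the slightly different constant $5k+3$ versus the $5k+4$ of the main theorem comes from, and I would track it carefully. The main obstacle I anticipate is purely bookkeeping but genuinely delicate: making sure that extracting the small base-case subgraph $G[U\cup S]$ and handing it to $\alg{\depth-1}$ really costs only $O(|U\cup S|)$ time even though we only have the data structure's implicit representation of $U$ — this requires a data-structure operation that enumerates the active component when it is promised to be small, which is not among the three queries listed but is the natural analogue of $\qnei{}$ and should be implementable with the same DP-table machinery of Section~\ref{section:datastructure}; and secondly, threading the $\depth$-dependent threshold $\tau$ through the recursion so that the geometric series in the outer Bodlaender-reduction and the halving series in the inner compression both close without accumulating an extra $\log$ factor.
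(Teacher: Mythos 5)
Your overall skeleton is right and matches the paper's: build a partial tree decomposition with data-structure-assisted separators, including $\frac{8}{9}$-balanced $U$-separators so that the partial decomposition stays sublinear in size, stop recursing when the active component drops below a threshold, and finish those small components with the inductively-available $\alg{\depth-1}$. But two of your concrete design choices would break the argument.

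The threshold is wrong. You propose $\tau$ with $\log \tau = \Theta(\log^{(\depth-1)} n)$, i.e.\ $\tau = \Theta(\log^{(\depth-2)} n)$, and assert the ``telescoping'' $\log n / \tau \le \log^{(\depth)} n$. This is false: for $\depth \ge 4$ the ratio $\log n / \log^{(\depth-2)} n$ diverges far faster than $\log^{(\depth)} n$, so the ``fast'' part of your recursion alone already exceeds the budget; and for $\depth = 2$ your $\tau = 2^{\Theta(\log n)} = n^{\Theta(1)}$, so the base-case components are polynomially large and summing $O(c^k |C|\log|C|)$ over them gives $O(c^k n\log n)$ rather than $O(c^k n\log\log n)$. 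The correct choice is the depth-independent $\tau = \log n$: then $\log^{(\depth-1)}(\tau) = \log^{(\depth)} n$, so the base-case calls cost $O(c^k n\log^{(\depth)} n)$, and the number of recursive nodes in the partial decomposition is $O(n/\log n)$ (establishing this already requires the careful accounting of Claim~\ref{claim:sublinear-bags} via Lemma~\ref{lemma:bounded-tree}, not merely ``disjoint components per level''), so the fast part costs $O(c^k n)$.

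Stacking both separator types in the same step does not give width $5k+3$. If every level adds a balanced $S$-separator $X_1$ and a balanced $U$-separator $X_2$, each of size $\le k+1$, the invariant on $|S|$ does not close at $2k+3$: on recursing you get $|S| \le \tfrac{1}{2}|S| + 2(k+1)$, whose fixed point is $4k+4$; with $|S| \le 4k+4$ and $|X_1\cup X_2|\le 2k+2$ the bags reach size $6k+6$, width $6k+5$. (Your aside that ``the compression invariant only needs $|S|\le 2k+3$'' is therefore false.) The paper's cure is to \emph{alternate} rather than stack, via the $\whatsep$ flag: one level calls $\qSsep$, the next calls $\qUsep$, so only one separator of size $\le k+1$ is added per level. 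The invariant then alternates between $|S|\le 4k+3$ (before an $S$-separator level) and $|S|\le 3k+2$ (before a $U$-separator level), keeping all bags at size $\le 5k+5$; this is also why $\compress{\depth}$ and Lemma~\ref{lemma:nlogin-compression} take $|S_0|\le 4k+3$, not $2k+3$. Finally, a smaller point: you anticipate needing a new data-structure query to enumerate the small active component. The paper sidesteps this by having $\findPTD$ simply return the partial decomposition $\td'$ (it halts when $\qpin$ reports a component smaller than $\log n$), after which $\compress{\depth}$ builds the $\forg$ map and does one DFS over $G$ in $O(k^{O(1)} n)$ total time to extract all small components and their neighborhoods before calling $\alg{\depth-1}$ on each.
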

The algorithm of Theorem~\ref{theorem:nlogn} satisfies the conditions
of Theorem~\ref{theorem:nlogin} for $\depth{} = 1$.  We will show how
one can use the algorithm for $\depth{} = 1$ in order to obtain an
algorithm for $\depth{} = 2$.  In particular we aim at an algorithm
which given a graph $G$ and an integer $k$, in $O(c^k\cdot n \log \log
n)$ time for some $c \in \mathbb{N}$ either computes a tree
decomposition of $G$ of width at most $5k+3$ or correctly concludes
that $\tw(G) > k$.

We inspect the $O(c^kn \log n)$ algorithm for the compression step
described in Section~\ref{sec:proofOfCompress}.  It uses the data
structure of Section~\ref{sec:dsoutline} in order to find balanced
separators in time $O(c^k \log n)$.  The algorithm uses $O(c^k \log
n)$ time on each recursive call regardless of the size of the induced
subgraph of $G$ it is currently working on.  When the subgraph we work
on is big this is very fast.  However, when we get down to induced
subgraphs of size $O(\log \log n)$ the algorithm of Robertson and
Seymour described in Section~\ref{sec:nSquareAppx} would spend $O(c^k
(\log \log n)^2)$ time in each recursive call, while our presumably
fast algorithm still spends $O(c^k \log n)$ time.  This suggests that
there is room for improvement in the recursive calls where the
considered subgraph is very small compared to $n$.

The overall structure of our $O(c^k \log \log n)$ time algorithm is
identical to the structure of the $O(c^k \log n)$ time algorithm of
Theorem~\ref{theorem:nlogn}.  The only modifications happen in the
compression step.  The compression step is also similar to the $O(c^k
\log n)$ algorithm described in Section~\ref{sec:proofOfCompress}, but
with the following caveat.  The data structure query \qpin{} finds the
{\em largest} component where a new pin can be placed, returns a
vertex from this component, and also returns the size of this
component.  If a call of \qpin{} returns that the size of the largest
yet unprocessed component is less than $\log n$ the algorithm does not
process this component, nor any of the other remaining components in
this recursive call.  This ensures that the algorithm is never run on
instances where it is slow.  Of course, if we do not process the small
components we do not find a tree decomposition of them either.  A bit
of inspection reveals that what the algorithm will do is either
conclude that $\tw(G) > k$ or find a tree decomposition of an induced
subgraph of $G'$ of width at most $3k+4$ such that for each connected
component $C_i$ of $G \setminus V(G')$, (a) $|C_i| \leq \log n$, (b)
$|N(C_i)| \leq 2k+3$, and (c) $N(C_i)$ is fully contained in some bag
of the tree decomposition of $G'$.

How much time does it take the algorithm to produce this output? Each
recursive call takes $O(c^k \log n)$ time and adds a bag to the tree
decomposition of $G'$ that contains some vertex which was not yet in
$V(G')$.  Thus the total time of the algorithm is upper bounded by
$O(|V(G')| \cdot c^k\log n)$.  What happens if we run this algorithm,
then run the $O(c^kn \log n)$ time algorithm of
Theorem~\ref{theorem:nlogn} on each of the connected components of $G
\setminus V(G')$? If either of the recursive calls return that the
treewidth of the component is more than $k$ then $\tw(G) > k$.
Otherwise we have a tree decomposition of each of the connected
components with width $3k+4$.  With a little bit of extra care we find
tree decompositions of the same width of $C_i \cup N(C_i)$ for each
component $C_i$, such that the top bag of the decomposition contains
$N(C_i)$.  Then all of these decompositions can be glued together with
the decomposition of $G'$ to yield a decomposition of width $3k+4$ for
the entire graph $G$.

The running time of the above algorithm can be bounded as follows.  It
takes $O(|V(G')| \cdot c^k\log n)$ time to find the partial tree
decomposition of $G'$, and
$$O(\sum_i c^k|C_i| \log |C_i|) \leq O(c^k \log \log n \cdot \sum_i |C_i|) \leq O(c^k n \log \log n)$$ 
time to find the tree decompositions of all the small components.
Thus, if $|V(G')| \leq O(\frac{n}{\log n})$ the running time of the
first part would be $O(c^kn)$ and the total running time would be
$O(c^k n \log \log n)$.

How big can $|V(G')|$ be? In other words, if we inspect the algorithm
described in Section~\ref{sec:nSquareAppx}, how big part of the graph
does the algorithm see before all remaining parts have size less than
$\log n$? The bad news is that the algorithm could see almost the
entire graph.  Specifically if we run the algorithm on a path it could
well be building a tree decomposition of the path by moving along the
path and only terminating when reaching the vertex which is $\log n$
steps away from from the endpoint.  The good news is that the
algorithm of Reed described in Section~\ref{sec:reedAlg} will get down
to components of size $\log n$ after decomposing only $O(\frac{n}{\log
  n})$ vertices of $G$.  The reason is that the algorithm of Reed also
finds balanced separators of the considered subgraph, ensuring that
the size of the considered components drop by a constant factor for
each step down in the recursion tree.

Thus, if we augment the algorithm that finds the tree decomposition of
the subgraph $G'$ such that that it also finds balanced separators of
the active component and adds them to the top bag of the decomposition
before going into recursive calls, this will ensure that $|V(G')| \leq
O(\frac{n}{\log n})$ and that the total running time of the algorithm
described in the paragraphs above will be $O(c^k n \log \log n)$.  The
algorithm of Reed described in Section~\ref{sec:reedAlg} has a worse
approximation ratio than the algorithm of Robertson and Seymour
described in Section~\ref{sec:nSquareAppx}.  The reason is that we
also need to add the balanced separator to the top bag of the
decomposition.  When we augment the algorithm that finds the tree
decomposition of the subgraph $G'$ in a similar manner, the
approximation ratio also gets worse.  If we are careful about how the
change is implemented we can still achieve an algorithm with running
time $O(c^k n \log \log n)$ that meets the specifications of
Theorem~\ref{theorem:nlogin} for $\depth = 2$.

The approach to improve the running time from $O(c^k n \log n)$ to
$O(c^k n \log \log n)$ also works for improving the running time from
$O(c^k\cdot n \log^{(\depth)} n)$ to $O(c^k\cdot n \log^{(\depth{}+1)}
n)$.  Running the algorithm that finds in $O(c^kn)$ time the tree
decomposition of the subgraph $G'$ such that all components of $G
\setminus V(G')$ have size $\log n$ and running the $O(c^k\cdot n
\log^{(\depth)} n)$ time algorithm on each of these components yields
an algorithm with running time $O(c^k\cdot n \log^{(\depth{}+1)} n)$.

In the above discussion we skipped over the following issue.  How can
we compute a small balanced separator for the active component in time
$O(c^k \log n)$? It turns out that also this can be handled by the
data structure.  The main idea here is to consider the dynamic
programming algorithm used in Section~\ref{sec:firstCompress} to find
balanced separators in graphs of bounded treewidth, and show that this
algorithm can be turned into a $O(c^k \log n)$ time data structure
query.  We would like to remark here that the implementation of the
trick from Section~\ref{sec:firstCompress} is significantly more
involved than the other queries: we need to use the approximate tree
decomposition not only for fast dynamic programming computations, but
also to locate the separation $(L_0,X_0,R_0)$ on which the trick is
employed.  A detailed explanation of how this is done can be found at
the end of Section~\ref{sec:queries}.  This completes the proof sketch
of Theorem~\ref{theorem:nlogin}.  A full proof can be found in
Section~\ref{section:logi}.

\subsection{$5$-approximation in $O(c^kn)$ time.}
The algorithm(s) of Theorem~\ref{section:logi} are in fact already
$O(c^kn)$ algorithms unless $n$ is astronomically large compared to
$k$.  If, for example, $n \leq 2^{2^{2^k}}$ then $\log^{(3)}n \leq k$
and so $O(c^kn \log^{(3)}n) \leq O(c^kkn)$.  Thus, to get an algorithm
which runs in $O(c^kn)$ it is sufficient to consider the cases when
$n$ is really, really big compared to $k$.  The recursive scheme of
Section~\ref{sec:recursiveScheme} allows us to only consider the case
where (a) $n$ is really big compared to $k$ and (b) we have at our
disposal a tree decomposition $\td_\apx$ of $G$ of width $O(k)$.

For this case, consider the dynamic programming algorithm of
Bodlaender and Kloks~\cite{BodlaenderK96} that given $G$ and a tree
decomposition $\td_\apx$ of $G$ of width $O(k)$ either computes a tree
decomposition of $G$ of width $k$ or concludes that $\tw(G) > k$ in
time $O(2^{O(k^3)}n)$.  The dynamic programming algorithm can be
turned into a tree automata based
algorithm~\cite{FellowsL89,AbrahamsonF93} with running time
$O(2^{2^{O(k^3)}} + n)$ if one can inspect an arbitrary entry of a
table of size $O(2^{2^{O(k^3)}})$ in constant time.  If $n \geq
\Omega(2^{2^{O(k^3)}})$ then inspecting an arbitrary entry of a table
of size $O(2^{2^{O(k^3)}})$, means inspecting an arbitrary entry of a
table of size $O(n)$, which one can do in constant time in the RAM
model.  Thus, when $n \geq \Omega(2^{2^{O(k^3)}})$ we can find an
optimal tree decomposition in time $O(n)$.  When $n \leq
O(2^{2^{O(k^3)}})$ the $O(c^kn \log^{(3)}n)$ time algorithm of
Theorem~\ref{section:logi} runs in time $O(c^kkn)$.  This concludes
the outline of the proof of Theorem~\ref{thm:mainThm}.  A full
explanation of how to handle the case where $n$ is much bigger than
$k$ can be found in Section~\ref{section:linear}.

\section{Conclusions}
\label{section:conclusions}
In this paper we have presented an algorithm that gives a constant
factor approximation (with a factor 5) of the treewidth of a graph,
which runs in single exponential time in the treewidth and linear in
the number of vertices of the input graph.  Here we give some
consequences of the result, possible improvements and open problems.

\subsection{Consequences, corollaries and future work}
A large number of computational results use the following overall
scheme: first find a tree decomposition of bounded width, and then run
a dynamic programming algorithm on it.  Many of these results use the
linear-time exact algorithm of Bodlaender~\cite{Bodlaender96} for the
first step.
If we aim for algorithms whose running time dependency on treewidth is
single exponential, however, then our algorithm is preferable over the
exact algorithm of Bodlaender~\cite{Bodlaender96}.  Indeed, many
classical problems like \textsc{Dominating Set} and
\textsc{Independent Set} are easily solvable in time $O(c^k\cdot n)$
when a tree decomposition of width $k$ is provided.  Furthermore,
there is an on-going work on finding new dynamic programming routines
with such a running time for problems seemingly not admitting so
robust solutions; the fundamental examples are \textsc{Steiner Tree},
\textsc{Traveling Salesman} and \textsc{Feedback Vertex
  Set}~\cite{BodlaenderCKN12}.  The results of this paper show that
for all these problems we may also claim $O(c^k\cdot n)$ running time
even if the decomposition is not given to us explicitly, as we may
find its constant factor approximation within the same complexity
bound.

Our algorithm is also compatible with the celebrated Courcelle's
theorem~\cite{Courcelle90}, which states that every graph problem
expressible in monadic second-order logic (MSOL) is solvable in time
$f(k,||\varphi||)\cdot n$ when a tree decomposition of width $k$ is
provided, where $\varphi$ is the formula expressing the problem and
$f$ is some function.  Again, the first step of applying the
Courcelle's theorem is usually computing the optimum
tree-decomposition using the linear-time algorithm of
Bodlaender~\cite{Bodlaender96}.  Using the results of this paper, this
step can be substituted with finding an approximate decomposition in
$O(c^k\cdot n)$ time.  For many problems, in the overall running time
analysis we may thus significantly reduce the factor dependent on the
treewidth of the graph, while keeping the linear dependence on $n$ at
the same time.

It seems that the main novel idea of this paper, namely treating the
tree decomposition as a data structure on which logarithmic-time
queries can be implemented, can be similarly applied to all the
problems expressible in MSOL.  Extending our results in this direction
seems like a thrilling perspective for future work.

Concrete examples where the results of this paper can be applied, can
be found also within the framework of \emph{bidimensionality
  theory}~\cite{DemaineFHT05a, DemaineH08}.  In all parameterized
subexponential algorithms obtained within this framework, the
polynomial dependence of $n$ in the running time becomes linear if our
algorithm is used.  For instance, it follows immediately that every
parameterized minor bidimensional problem with parameter $k$, solvable
in time $2^{O(t)}n$ on graphs of treewidth $t$, is solvable in time
$2^{O(\sqrt{k})}n$ on graphs excluding some fixed graph as a minor.

\subsection{Improvements and open problems}
Our result is mainly of theoretical importance due to the large
constant $c$ at the base of the exponent.  One immediate open problem
is to obtain a constant factor approximation algorithm for treewidth
with running time $O(c^k n)$, where $c$ is a \emph{small} constant.

Another open problem is to find more efficient \emph{exact} FPT
algorithms for treewidth.  Bodlaender's algorithm \cite{Bodlaender96}
and the version of Reed and Perkovi{\'{c}} both use $O(k^{O(k^3)}n)$
time; the dominant term being a call to the dynamic programming
algorithm of~\cite{BodlaenderK96}.  In fact, no exact FPT algorithm
for treewidth is known whose running time as a function of the
parameter $k$ is asymptotically smaller than this; testing the
treewidth by verifying the forbidden minors can be expected to be
significantly slower.  Thus, it would be very interesting to have an
exact algorithm for testing if the treewidth of a given graph is at
most $k$ in $2^{o(k^3)} n^{O(1)}$ time.

Currently, the best approximation ratio for treewidth for algorithms
whose running time is polynomial in $n$ and single exponential in the
treewidth is the 3-approximation algorithm from
Section~\ref{section:nlogn}.  What is the best approximation ratio for
treewidth that can be obtained in this running time? Is it possible to
give lower bounds?

\section{An $O(c^k n \log n)$ 3-approximation algorithm for treewidth}
\label{section:nlogn}

In this section, we provide formal details of the proof of
Theorem~\ref{theorem:nlogn}:
\begin{theorem}[Theorem~\ref{theorem:nlogn}, restated]
  There exists an algorithm which, given a graph $G$ and an integer
  $k$, in $O(c^k\cdot n \log n)$ time for some $c \in \mathbb{N}$
  either computes a tree decomposition of $G$ of width at most $3k +
  4$ or correctly concludes that $\tw(G) > k$.
\end{theorem}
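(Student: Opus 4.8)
The plan is to prove the theorem in two layers: an outer \emph{recursive compression} layer that reduces the problem to the compression problem of Lemma~\ref{lemma:nlogn-compression}, and an inner layer that solves the compression problem; essentially all the difficulty sits in the inner layer. For the outer layer I would solve the slightly more general problem in which we are additionally given $S_0\subseteq V(G)$ with $|S_0|\le 2k+3$ and $G\setminus S_0$ connected, and must return either ``$\tw(G)>k$'' or a tree decomposition of width $\le 3k+4$ with $S_0$ as the root bag (a decomposition of an arbitrary graph then follows by running this with $S_0=\emptyset$ on each connected component). On input $(G,k,S_0)$, run the algorithm of Lemma~\ref{lem:prelim:bodlaender} with parameter $3k+4$. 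If it reports $\tw(G)>3k+4$, output $\tw(G)>k$. If it returns a matching $M$ with $|V|/O(k^6)$ edges, contract $M$ to get a minor $G'$ with $\tw(G')\le\tw(G)$, recurse on $(G',k,\emptyset)$, and on success uncontract the returned decomposition to get a decomposition $\td_\apx$ of $G$ of width $\le 6k+9$; then call the compression algorithm of Lemma~\ref{lemma:nlogn-compression} on $(G,k,S_0,\td_\apx)$. If instead it returns a set $X$ of $|V|/O(k^6)$ $I$-simplicial vertices, build $G_I$, recurse on $(G_I\setminus X,k,\emptyset)$, use the second half of Lemma~\ref{lem:prelim:bodlaender} together with Lemma~\ref{lem:improvedGraphTw} to obtain a $\td_\apx$ for $G$ of width $\le 3k+4$, and again invoke the compression algorithm (here only in order to install $S_0$ as the root bag). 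Since each recursive call is on a graph with a $\bigl(1-\tfrac{1}{O(k^6)}\bigr)$ fraction of the vertices, $T(n,k)\le O(c^kn\log n)+T\!\bigl(n(1-\tfrac1{O(k^6)}),k\bigr)$, a geometric recurrence summing to $O(c^kk^{O(1)}n\log n)$.

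For the inner layer I would prove Lemma~\ref{lemma:nlogn-compression} by first describing an $O(c^kn^2)$-time version and then speeding it up. The $O(c^kn^2)$ version mimics the Graph Minors XIII recursion of Section~\ref{sec:nSquareAppx} but with two changes. To make it a $3$-approximation I keep $|S|\le 2k+3$ throughout and, rather than the $\tfrac23$-balanced separator of Lemma~\ref{lem:halfhalfalg}, I compute an honest $\tfrac12$-balanced $S$-separator $\hat X$ of the current induced subgraph $G[S\cup U]$ of size $\le k+1$, which exists by Lemma~\ref{lemma:halfhalf} since $\tw(G[S\cup U])\le\tw(G)\le k$; then each recursive set has at most $\lfloor|S|/2\rfloor+(k+1)\le 2k+3$ vertices, every bag we create has at most $|S|+|\hat X|\le 3k+4$ vertices, and an easy induction shows the produced decomposition has width $\le 3k+4$ with $S_0$ at the root. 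To compute $\hat X$ without the extra $n$-dimension in the DP state, I use Lemma~\ref{lem:balanced-3coloring}: finding a balanced $S$-separator reduces to finding a partition $(M_1,M_2,M_3,X)$ of $V(G)$ with $|X|\le k+1$, no edge between distinct $M_i$, and $|M_i\cap S|\le|S|/2$, which is a bounded-state dynamic program over $\td_\apx$ running in $O(c^kk^{O(1)}n)$ time.

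The speedup from $O(c^kn^2)$ to $O(c^kn\log n)$ rests on two observations: every graph considered in the recursion is $G[S\cup U]$ for some $S$ with $|S|\le 2k+3$ and some connected component $U$ of $G\setminus S$, and the only expensive step is the separator DP over $\td_\apx$. Hence I would first preprocess $\td_\apx$ via Proposition~\ref{proposition:parallel} into a binary decomposition of width $O(k)$ and depth $O(\log n)$, then maintain the separator DP \emph{incrementally} in the data structure of Section~\ref{section:datastructure}: its state is a tuple $(S,X,F,\pin)$ of vertex sets plus a pin $\pin\notin S$, each insertion/deletion into $S,X,F$ or move of $\pin$ affects the DP table of only one node and its $O(\log n)$ ancestors and so costs $O(c^k\log n)$, and the query \qSsep{} reads off the root table to return a $\tfrac12$-balanced $S$-separator of $G[S\cup U]$ (with $U$ the component of $G\setminus S$ containing $\pin$) of size $\le k+1$, or report $\tw(G)>k$. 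Since $|S|\le 2k+3$ always, moving $S$ anywhere in $G$ costs $k^{O(1)}$ operations, i.e.\ $O(c^k\log n)$ time, and each recursive step of the compression algorithm uses $O(k^{O(1)})$ data-structure calls.

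The step I expect to be the main obstacle --- and the reason the set $F$ and the pin appear in the state --- is recursing into the components of $G[S\cup U]\setminus(S\cup\hat X)$ without ever spending $\Omega(|S\cup U|)$ time per recursive call, which would reintroduce a linear factor. This is handled inside the data structure: \qpin{} locates a component $U'$ of $G[S\cup U]\setminus(S\cup\hat X)$ disjoint from the ``finished'' set $F$ and returns a vertex of it; \qnei{} returns $N(U')$ provided $|N(U')|<2k+3$ (so the new $S$ is $N(U')$); the algorithm moves $S$ to $N(U')$, recurses to build a decomposition of $G[U'\cup N(U')]$ with $N(U')$ at the root, and then inserts one vertex of $U'$ into $F$ to mark it done. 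Gluing these decompositions under a bag $S\cup\hat X$ and prepending $S$ (ultimately $S_0$) yields the decomposition of $G[S\cup U]$. The genuinely hard part, which I would defer to Section~\ref{section:datastructure}, is implementing these connectivity-aware queries while each bag stores only $O(c^kk^{O(1)})$ bits --- i.e.\ without enumerating all the ways a bag's vertices can be connected below it. Granting the data structure, each recursive step costs $O(c^k\log n)$ and introduces a fresh vertex into the decomposition under construction, so $T(n,k)\le O(c^k\log n)+\sum_i T(|U_i\cup\hat X|,k)$ over the components $U_i$ of $G[S\cup U]\setminus(S\cup\hat X)$ solves to $O(c^kn\log n)$; together with the outer recursion this proves Theorem~\ref{theorem:nlogn}.
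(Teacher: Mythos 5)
Your outline reproduces the paper's two--layer structure faithfully: the outer recursion via Lemma~\ref{lem:prelim:bodlaender} and the inner compression via a dynamic tree-decomposition data structure with logarithmic-time queries. The width and size bookkeeping in the outer layer and the role of Lemmas~\ref{lemma:halfhalf}, \ref{lem:balanced-3coloring} and Proposition~\ref{proposition:parallel} match the paper.

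There is, however, one genuine gap in the complexity argument for the inner layer. You assert that ``each recursive step \dots introduces a fresh vertex into the decomposition under construction,'' which is what lets the recurrence $T(n,k)\le O(c^k\log n)+\sum_i T(|U_i\cup\hat X|,k)$ telescope to $O(c^kn\log n)$. But this assertion is not automatic: the balanced $S$-separator $\hat X$ returned by \qSsep{} can be a subset of $S$, in which case the two new bags $S$ and $S\cup\hat X$ contain no new vertex, and the unique component $U'$ of $G[S\cup U]\setminus(S\cup\hat X)$ can be all of $U$ with $N(U')=S$, so the recursive call is on the identical state and the recursion never terminates. (Concretely: take $S=\{a\}$, $U$ a path whose first vertex is adjacent only to $a$; then $\hat X=\{a\}$ is a legitimate $\tfrac12$-balanced $S$-separator, yet it makes no progress.) The paper avoids this by inserting the pin $\pin$ into $\hat X$ before recursing, which forces a fresh vertex of $U$ into $S$ at every step; since a vertex can enter $S$ at most once over the entire run, this caps the number of recursive calls at $n$ (and the number of bags at $2n$). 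You should add this step. It makes $|\hat X|\le k+2$ rather than $k+1$, so the recursive-invariant computation becomes $\lfloor(2k+3)/2\rfloor+(k+2)=2k+3$ (still fine) and the bag size becomes at most $(2k+3)+(k+2)=3k+5$, giving width $3k+4$ rather than the $3k+3$ your sketch suggests --- still within the theorem's bound, but the accounting needs the extra $+1$.
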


In fact, the algorithm that we present, is slightly more general.  The
main procedure, $\alg{1}$, takes as input a connected graph $G$, an
integer $k$, and a subset of vertices $S_0$ such that $|S_0|\leq
2k+3$.  Moreover, we have a guarantee that not only $G$ is connected,
but $G\setminus S_0$ as well.  $\alg{1}$ runs in $O(c^k\cdot n \log
n)$ time for some $c \in \mathbb{N}$ and either concludes that $\tw(G)
> k$, or returns a tree decomposition of $G$ of width $\leq 3k + 4$,
such that $S_0$ is the root bag.  Clearly, to prove
Theorem~\ref{theorem:nlogn}, we can run $\alg{1}$ on every connected
component of $G$ separately using $S_0=\emptyset$.  Note that
computation of the connected components takes $O(|V|+|E|)=O(kn)$ time,
since if $|E|>kn$, then we can safely output that $\tw(G)>k$.

The presented algorithm $\alg{1}$ uses two subroutines.  As described
in Section~\ref{sec:outline}, $\alg{1}$ uses the reduction approach
developed by the first author~\cite{Bodlaender96}; in short words, we
can either apply a reduction step, or find an approximate tree
decomposition of width $O(k)$ on which a compression subroutine
$\compress{1}$ can be employed.  In this compression step we are
either able to find a refined, compressed tree decomposition of width
at most $3k + 4$, or again conclude that $\tw(G)>k$.

The algorithm $\compress{1}$ starts by initializing the data structure
(see Section~\ref{sec:outline} for an intuitive description of the
role of the data structure), and then calls a subroutine $\findTD$.
This subroutine resembles the algorithm of Robertson and Seymour (see
Section~\ref{sec:outline}): it divides the graph using balanced
separators, recurses on the different connected components, and
combines the subtrees obtained for the components into the final tree
decomposition.

%
%
\subsection{The main procedure $\alg{1}$}

Algorithm $\alg{1}$, whose layout is proposed as
Algorithm~\ref{alg:nlogn-alg-1}, runs very similarly to the algorithm
of the first author~\cite{Bodlaender96}; we provide here all the
necessary details for the sake of completeness, but we refer
to~\cite{Bodlaender96} for a wider discussion.

First, we apply Lemma~\ref{lem:prelim:bodlaender} on graph $G$ for
parameter $3k+4$.  We either immediately conclude that $\tw(G)>3k+4$,
find a set of I-simplicial vertices of size at least
$\frac{n}{O(k^6)}$, or a matching of size at least $\frac{n}{O(k^6)}$.
Note that in the application of Lemma~\ref{lem:prelim:bodlaender} we
ignore the fact that some of the vertices are distinguished as $S_0$.

If a matching $M$ of size at least $\frac{n}{O(k^6)}$ is found, we
employ a similar strategy as in~\cite{Bodlaender96}.  We first
contract the matching $M$ to obtain $G'$; note that if $G$ had
treewidth at most $k$ then so does $G'$.  Then we apply $\alg{1}$
recursively to obtain a tree decomposition $\td'$ of $G'$ of width at
most $3k+4$, and having achieved this we decontract the matching $M$
to obtain a tree decomposition $\td$ of $G$ of width at most $6k+9$:
every vertex in the contracted graph is replaced by at most two
vertices before the contraction.  Finally, we call the sub-procedure
$\compress{1}$, which given $G,S_0,k$ and the decomposition $\td$ (of
width $O(k)$), either concludes that $\tw(G)>k$, or provides a tree
decomposition of $G$ of width at most $3k+4$, with $S_0$ as the root
bag.  $\compress{1}$ is given in details in the next section.

In case of obtaining a large set $X$ of I-simplicial vertices, we
proceed similarly as in~\cite{Bodlaender96}.  We compute the improved
graph, remove $X$ from it, apply $\alg{1}$ on $G_I\setminus X$
recursively to obtain its tree decomposition $\td'$ of width at most
$3k+4$, and finally reintroduce the missing vertices of $X$ to obtain
a tree decomposition $\td$ of $G$ of width at most $3k+4$ (recall that
reintroduction can fail, and in this case we may conclude that
$\tw(G)>k$).  Observe that the decomposition $\td$ satisfies all the
needed properties, with the exception that we have not guaranteed that
$S_0$ is the root bag.  However, to find a decomposition that has
$S_0$ as the root bag, we may again make use of the subroutine
$\compress{1}$, running it on input $G, S_0, k$ and the tree
decomposition $\td$.  Lemma~\ref{lem:prelim:bodlaender} ensures that
all the described steps, apart from the recursive calls to $\alg{1}$
and $\compress{1}$, can be performed in $O(k^{O(1)}\cdot n)$ time.
Note that the $I$-simplicial vertices can safely be reintroduced since
we used Lemma~\ref{lem:prelim:bodlaender} for parameter $3k+4$ instead
of $k$.

\begin{algorithm}[h!]

  \KwIn{A connected graph $G$, an integer $k$, and $S_0\subseteq V(G)$ s.t.
    $|S_0|\leq 2k+3$ and $G\setminus S_0$ is connected.} \KwOut{A tree
    decomposition $\td$ of $G$ with $\w(\td) \leq 3k + 4$ and $S_0$ as
    the root bag, or conclusion that $\tw(G) > k$.}  \Indp \BlankLine
  \BlankLine Run algorithm of Lemma~\ref{lem:prelim:bodlaender} for
  parameter $3k+4$ \BlankLine \If{Conclusion that $\tw(G)>3k+4$}{
    \KwRet{$\false$} } \BlankLine
  
  \If{$G$ has a matching $M$ of cardinality at least
    $\frac{n}{O(k^6)}$} {
    Contract $M$ to obtain $G'$.\\
    $\td' \leftarrow \alg{1}(G',k)$ \hspace{1cm}/*\textsf{\footnotesize{ 
    $\w(\td')\leq 3k + 4$} }*/\\
    \eIf{$\td' =
      \false$}{ \KwRet{$\false$} } {
      Decontract the edges of $M$ in $\td'$ to obtain $\td$.\\
      \KwRet{$\compress{1}(G,k,\td)$ } } }
  
  \BlankLine \If{$G$ has a set $X$ of at least $\frac{n}{O(k^6)}$
    I-simplicial vertices}{
    Compute the improved graph $G_I$ and remove $X$ from it.    \\
    $\td'\leftarrow \alg{1}{(G_I\setminus X,k)}$ 
    \hspace{1cm}/*\textsf{\footnotesize{ $\w(\td')\leq 3k + 4$} }*/\\
    \If{$\td' = \false$}{ \KwRet{$\false$} }
    Reintroduce vertices of $X$ to $\td'$ to obtain $\td$.\\
    \eIf{Reintroduction failed}{ \KwRet{$\false$} }{
      \KwRet{$\compress{1}(G,k,\td)$ } } } \Indm
  \caption{$\alg{1}$}
  \label{alg:nlogn-alg-1}
\end{algorithm}

Let us now analyze the running time of the presented algorithm,
provided that the running time of the subroutine $\compress{1}$ is
$O(c^k\cdot n\log n)$ for some $c \in \mathbb{N}$.  Since all the
steps of the algorithm (except for calls to subroutines) can be
performed in $O(k^{O(1)}\cdot n)$ time, the time complexity satisfies
the following recurrence relation:
\begin{eqnarray}\label{eqn:recurrenceOneD}
  T(n) & \leq & O(k^{O(1)}\cdot n) + O(c^k\cdot n \log n) +
  T\left(\left(1- \frac{1}{C\cdot k^6} \right)n\right); 
\end{eqnarray}
Here $C$ is the constant hidden in the $O$-notation in
Lemma~\ref{lem:prelim:bodlaender}.  By unraveling the recurrence into
a geometric series, we obtain that
\begin{eqnarray}\label{eqn:recurrenceTwoD}
  T(n) & \leq & \sum_{i=0}^\infty \left(1 - \frac{1}{Ck^6}\right)^i
  O(k^{O(1)}\cdot n + c^k\cdot  n \log n) \\ 
  \nonumber  & = & Ck^6\cdot O(k^{O(1)}\cdot n + c^k\cdot n \log n) =
  O(c_1^k\cdot n \log n), 
\end{eqnarray}
for some $c_1 > c$.

%
%
\subsection{Compression}\label{sec:nlogncompress}

In this section we provide the details of the implementation of the
subroutine $\compress{1}$.  The main goal is encapsulated in the
following lemma.

\begin{lemma}[Lemma~\ref{lemma:nlogn-compression}, restated]
  There exists an algorithm which on input $G,k,S_0,\td_\apx$, where
  (i) $S_0\subseteq V(G)$, $|S_0|\leq 2k+3$, (ii) $G$ and $G\setminus
  S_0$ are connected, and (iii) $\td_\apx$ is a tree decomposition of
  $G$ of width at most $O(k)$, in $O(c^k\cdot n \log n)$ time for some
  $c \in \mathbb{N}$ either computes a tree decomposition $\td$ of $G$
  with $\w(\td) \leq 3k+4$ and $S_0$ as the root bag, or correctly
  concludes that $\tw(G)>k$.
\end{lemma}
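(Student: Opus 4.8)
The plan is to implement the $O(c^k n^2)$ algorithm from Section~\ref{sec:nSquareAppx} (with the improvement to a $3$-approximation sketched in Section~\ref{sec:proofOfCompress}), but replace every expensive step --- finding a balanced $S$-separator and computing connected components --- by a query to the data structure of Section~\ref{section:datastructure}, which after an $O(c^k n)$ initialization answers each such query in $O(c^k \log n)$ time. First I would apply Proposition~\ref{proposition:parallel} to $\td_\apx$, so that we may assume $\td_\apx$ is binary, has width $O(k)$, and has depth $O(\log n)$; then initialize the data structure on $(G, k, \td_\apx)$ with initial state $(S_0, \emptyset, \emptyset, \pin)$ for an arbitrary $\pin \in V(G)\setminus S_0$. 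All of this costs $O(c^k n)$.

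Next I would describe the recursive procedure $\findTD$, which maintains the invariant that whenever it is called the data structure's state has $S$ equal to the set $S$ we are currently working with (of size at most $2k+3$), $X = \emptyset$, $F = \emptyset$, and the active component $U$ (the component of $G\setminus S$ containing $\pin$) equal to the vertex set of the induced subgraph $G[U\cup S]$ we must decompose. The procedure first pads $S$ up to exactly $2k+3$ vertices if necessary --- using $k^{O(1)}$ insertion operations, each $O(c^k\log n)$ --- and then calls \qSsep{} to obtain a $\frac12$-balanced $S$-separator $\hat X$ of $G[U\cup S]$ of size at most $k+1$, or a proof that $\tw(G) > k$. We insert the vertices of $\hat X$ into the data structure's set $X$, which now equals $\hat X$. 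Then, repeatedly, we call \qpin{} to get a vertex $\pin'$ lying in a not-yet-processed component $U'$ of $G[U\cup S]\setminus(S\cup\hat X)$ (components are distinguished from processed ones by the set $F$ of ``finished pins''); for each such $U'$ we call \qnei{} to learn $N(U')$, verify $|N(U')| \le 2k+3$ (if not, $\tw(G) > k$ by Lemma~\ref{lemma:halfhalf}-type reasoning, since $S\cup\hat X$ together with each component would give too large a bag otherwise --- actually the bound $|(S\cap C_i)\cup\hat X| \le \tfrac12|S| + k+1 \le 2k+3$ follows from $|S|\le 2k+3$ and balancedness), then move the pin to $\pin'$, update the data structure's $S$ to $N(U')$ (again $k^{O(1)}$ operations), clear $X$, recurse, restore $S$ to the old value, and insert an arbitrary vertex of $U'$ into $F$ to mark it finished. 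After all components are processed, create a new bag $S\cup\hat X$ (of size at most $(2k+3) + (k+1) = 3k+4 \le 3k+5$, so width $\le 3k+4$), make it the parent of the roots of the decompositions returned for the components, clean up $F$ and $X$, and return. The topmost call uses $S = S_0$, so $S_0$ is the root bag.

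For correctness I would argue, as in Section~\ref{sec:nSquareAppx} but with the $\tfrac12$-balanced separator in place of the $\tfrac23$ one, that this is a valid tree decomposition of $G$ whenever $\tw(G)\le k$: the bag connectivity and edge-covering conditions are checked exactly as there, the width bound comes from $|S\cup\hat X|\le 3k+4+1$ as above, and the separator sizes guarantee that recursion always makes progress (each $|U'| < |U|$ strictly, since $\hat X\ne\emptyset$ or $S$ splits $U$). If any \qSsep{} or \qnei{} call reports failure we correctly conclude $\tw(G) > k$. For the running time, each recursive call performs $k^{O(1)}$ data-structure operations and queries, each costing $O(c^k\log n)$, and it produces at least one new bag containing a vertex not in any ancestor bag --- namely a vertex of $\hat X$, or, in the degenerate cases, the argument is that the number of leaves of the recursion tree is $O(n)$ because the components at sibling calls are disjoint. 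This gives the recurrence $T(n,k) \le O(c^k\log n) + \sum_i T(|U_i\cup\hat X|, k)$ with $\sum_i |U_i| \le n$ and each $|U_i\cup\hat X|$ strictly smaller, which unrolls to $O(c^k n\log n)$; adding the $O(c^k n)$ initialization keeps the bound.

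The main obstacle is entirely hidden inside the data structure: proving that \qSsep{}, \qpin{}, and \qnei{} can actually be supported in $O(c^k\log n)$ time per call, given only the depth-$O(\log n)$ decomposition $\td_\apx$. The delicate point, flagged in Section~\ref{sec:dsoutline}, is that the dynamic-programming state must track which vertices lie in the \emph{active component} $U$ of $G\setminus S$ --- not merely a partition of each bag --- without storing, for every bag, all possible ways its vertices could be interconnected below it; and that a single insertion into or deletion from $S$, $X$, or $F$ must touch the DP tables of only one node and its $O(\log n)$ ancestors, so that recomputation costs only $O(c^k\log n)$. Handling that, together with the book-keeping that lets \qpin{} locate an unfinished component using $F$ and \qnei{} read off $N(U)$, is the technical heart of the matter, and I would defer it entirely to Section~\ref{section:datastructure}, treating the data structure here as a black box with the stated interface.
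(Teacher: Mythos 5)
Your overall structure is essentially the one the paper uses: initialize the data structure on a logarithmic-depth decomposition, then run a recursive $\findTD$ that performs only $k^{O(1)}$ data-structure operations per call and charges each bag with $O(c^k\log n)$ time. However, you are missing a small but essential technical step that the paper relies on for both termination and the $O(n)$ bound on the number of recursive calls: before enumerating the components of $G[S\cup U]\setminus(S\cup X)$, the paper inserts \emph{the current pin $\pin$} into $X$ in addition to the separator $\sep$ returned by $\qSsep$. This guarantees that $\pin$ lies in none of the components $U_i$ that we recurse on; since the pins at all nodes of the call tree are therefore pairwise distinct, the number of calls is at most $n$ and the number of bags at most $2n$, which is exactly the content of Claim~\ref{claim:findtd-nlogn-complexity}. (This also explains why $|\sep|\leq k+2$ rather than $k+1$ in the paper, giving bag size $(2k+3)+(k+2)=3k+5$ and width $3k+4$.)

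Your replacement argument for progress, ``$|U'|<|U|$ strictly, since $\hat X\ne\emptyset$ or $S$ splits $U$,'' does not hold. First, $\hat X$ can be empty: the preconditions allow $S_0=\emptyset$, in which case $\emptyset$ is a valid $\tfrac12$-balanced $S$-separator of the connected graph $G$, there is a single component $U'=U$, and the recursion never advances. Second, even when $\hat X\ne\emptyset$, one can have $\hat X\subseteq S$: take a star with center $c$ and leaves $v_1,\dots,v_n$, and suppose $S=\{c\}$, $U=\{v_1,\dots,v_n\}$; then $\hat X=\{c\}$ is a valid $\tfrac12$-balanced $S$-separator, but $\hat X\cap U=\emptyset$, each $U_i=\{v_i\}$ has $N(U_i)=\{c\}$, and the recursive call on $(\{c\},\{v_i\})$ reproduces the same state indefinitely. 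The phrase ``$S$ splits $U$'' is not meaningful, since $U$ is by construction a single connected component of $G\setminus S$. Finally, your bound on the \emph{leaves} of the recursion tree does not control the number of internal nodes on single-child chains, so it cannot replace the pin argument. (Padding $S$ up to exactly $2k+3$ is also not done by the paper, is unnecessary, and does not repair any of the above; the $|S|\leq 2k+3$ invariant is maintained directly via the bound $\tfrac12(2k+3)+(k+2)\leq 2k+3$.)
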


The subroutine's layout is given as
Algorithm~\ref{alg:nlogn-compress}.  Shortly speaking, we first
initialize the data structure $\ds$, with $G, k, S_0, \td$ as input,
and then run a recursive algorithm $\findTD$ that constructs the
decomposition itself given access to the data structure.  The
decomposition is returned by a pointer to the root bag.  The data
structure interface will be explained in the following paragraphs, and
its implementation is given in Section~\ref{section:datastructure}.
We refer to Section~\ref{sec:outline} for a brief, intuitive outline.

\begin{algorithm}
  
  \KwIn{Connected graph $G$, $k \in \mathbb{N}$, a set $S_0$ s.t.
    $|S_0|\leq 2k+3$ and $G\setminus S_0$ is connected, and a tree
    decomposition $\td_\apx$ with $\w(\td_\apx) \leq O(k)$}
  
  \KwOut{Tree decomposition of $G$ of width at most $3k + 4$ with
    $S_0$ as the root bag, or conclusion that $\tw(G) > k$.}  \Indp
  \BlankLine
  Initialize data structure $\ds$ with $G,k,S_0,\td_\apx$\\
  
  \KwRet{$\findTD()$}
  
  \caption{$\compress{1}(G,k,\td)$.}
  
  \label{alg:nlogn-compress}
\end{algorithm}

The initialization of the data structure takes $O(c^k n)$ time (see
Lemma~\ref{lemma:ds:init-time}).  The time complexity of $\findTD$,
given in Section~\ref{section:findTDlogn}, is $O(c^k\cdot n \log n)$.

%
%
%


%
%

\subsection{The recursive algorithm
  $\findTD$}\label{section:findTDlogn}

Subroutine $\findTD$ works on the graph $G$ with two disjoint vertex
sets $S$ and $U$ distinguished.  Intuitively, $S$ is small (of size at
most $2k+3$) and represents the root bag of the tree decomposition
under construction.  $U$ in turn, stands for the part of the graph to
be decomposed below the bag containing $S$, and is always one of the
connected components of $G\setminus S$.  As explained in
Section~\ref{sec:outline}, we cannot afford storing $U$ explicitly.
Instead, we represent $U$ in the data structure by an arbitrary vertex
$\pin$ (called the {\emph{pin}}) belonging to it, and implicitly
define $U$ to be the connected component of $G\setminus S$ that
contains $\pin$.  Formally, behavior of the subroutine $\findTD$ is
encapsulated in the following lemma:

\begin{lemma}
  \label{lemma:nlogn-s-root-bag}
  There exists an algorithm that, given access to the data structure
  $\ds$ in a state such that $|S| \leq 2k+3$, computes a tree
  decomposition $\td$ of $G[U \cup S]$ of width $\leq 3k + 4$ with $S$
  as a root bag, or correctly reports that that $\tw(G[U \cup S]) >
  k$.  If the algorithm is run on $S=\emptyset$ and $U=V(G)$, then its
  running time is $O(c^k\cdot n\log n)$ for some $c \in \mathbb{N}$.
\end{lemma}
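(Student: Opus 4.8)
The plan is to run, on top of the data structure $\ds$, the recursive Robertson--Seymour-style scheme of Section~\ref{sec:nSquareAppx}, modified in two ways: replace the use of the algorithmic Lemma~\ref{lem:halfhalfalg} ($\tfrac23$-balanced $S$-separator) by the purely existential Lemma~\ref{lemma:halfhalf} ($\tfrac12$-balanced $S$-separator) realised through the query \qSsep{} --- this is exactly what lowers the width from $4k+3$ to $3k+4$ --- and carry out every step that is not a recursive call through $k^{O(1)}$ queries/operations of $\ds$, so that each node of the recursion costs only $O(c^k\log n)$. Concretely, on a state with distinguished set $S$ ($|S|\le 2k+3$) and active component $U$: if $|S\cup U|\le 3k+5$, output the trivial decomposition with root $S$ and child $S\cup U$; otherwise enlarge $S$ to a set $\tilde S$ of size $2k+3$ by inserting vertices of $U$ (this may split $U$ into pieces $\hat U^1,\dots,\hat U^r$, which are iterated over using \qpin{} and the finished-pin set $F$, exactly as the pieces arising from a separator are), and for each $\hat U^j$ call \qSsep{}; if it returns $\bot$, report $\tw(G[U\cup S])>k$, otherwise get a $\tfrac12$-balanced $\tilde S$-separator $\hat X^j$ of $G[\tilde S\cup\hat U^j]$ with $|\hat X^j|\le k+1$. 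Then recurse into each component $W$ of $G[\hat U^j\setminus\hat X^j]$: to obtain $N_G(W)$, momentarily set $S:=\tilde S\cup\hat X^j$ so that $W$ becomes the active component, query \qnei{}, set $S:=N_G(W)$, recurse, restore the state, and put a vertex of $W$ into $F$. Finally glue: the subdecompositions obtained inside $\hat U^j$ hang below a bag $\tilde S\cup\hat X^j$ (valid since each relevant $N_G(W)\subseteq\tilde S\cup\hat X^j$), these are chained below a bag $\tilde S$, and $\tilde S$ is the child of the root bag $S$.

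Correctness goes by induction on $|S\cup U|$, every recursive call being on a strictly smaller value (the children's active regions are disjoint and together miss at least one vertex of $U$ --- a padding vertex when $|S|<2k+3$, and otherwise a vertex of $\hat X^1\cap U$, which is nonempty because $G[S\cup U]$ is then connected and $|\tilde S|-|\hat X^1|\ge k+2>\tfrac12|\tilde S|$ forbids a separator disjoint from $U$). The three facts to check for the inductive step are: (i) $N_G(W)\subseteq(\tilde S\cap C)\cup\hat X^j$ for the component $C$ of $G[\tilde S\cup\hat U^j]\setminus\hat X^j$ containing $W$, whence $|N_G(W)|\le\tfrac12|\tilde S|+(k+1)\le 2k+2$, so \qnei{} succeeds and the recursion respects $|S|\le 2k+3$; (ii) the glued object is a tree decomposition of $G[U\cup S]$ --- coverage and the edge condition follow since $\bigcup_j(\hat U^j\setminus\hat X^j)$ is $U$ minus the padding, and the subtree property holds because anything a child's subtree shares with the rest lies in the parent bag $\tilde S\cup\hat X^j$ (or $\tilde S$) --- and its width is $\le 3k+4$, since the internal bags have size $\le |\tilde S|+|\hat X^j|\le 3k+4$ and the child bags size $\le 3k+5$ by hypothesis; (iii) if $\tw(G[U\cup S])\le k$ then every instance passed to \qSsep{} is an induced subgraph of $G[U\cup S]$, hence has a $\tfrac12$-balanced separator of size $\le k+1$ by Lemma~\ref{lemma:halfhalf} (the query searches for it via the characterisation of Lemma~\ref{lem:balanced-3coloring}), so $\bot$ is never returned and a decomposition results; conversely, $\bot$ certifies treewidth $>k$. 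For the running time: $\ds$ initialises in $O(c^kn)$ time and each operation costs $O(c^k\log n)$, and a node uses $k^{O(1)}$ operations plus one \qSsep{}/\qpin{} per piece $\hat U^j$ and one \qnei{}/\qpin{}/$F$-update per recursive child; the recursion has $O(n)$ nodes, since incomparable nodes have disjoint nonempty active regions (hence $\le n$ leaves) and at every internal node $v$ one has $|U_v|-\sum_{c}|U_c|\ge 1$, so telescoping bounds the internal nodes by $n$; and the number of pieces $\hat U^j$ summed over all nodes is $O(n)$ (each either spawns a child or commits a vertex to a bag). Altogether this gives $O(c^kn\log n)$, matching the recurrence $T(N)\le O(c^k\log N)+\sum_iT(N_i)$, $\sum_iN_i\le N$, from the outline.

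I expect the main obstacle to be the interface with the data structure rather than the combinatorics above: one must be able to (a) pad $S$ and then enumerate the resulting several pieces of $U$, and also enumerate the components of $G[\hat U^j\setminus\hat X^j]$, using only $\pin$, $F$ and \qpin{}, never listing a component explicitly; (b) perform the two-step move --- grow $S$ to $\tilde S\cup\hat X^j$ so a chosen $W$ is the active component and \qnei{} yields $N_G(W)$, then shrink $S$ to $N_G(W)$ before recursing; and (c) keep a strict stack discipline so that every recursive return leaves the state $(S,X,F,\pin)$ exactly as before the call, so the enumeration can continue. That the queries \qSsep{}, \qpin{}, \qnei{} with precisely these semantics run in $O(c^k\log n)$ is the content of Section~\ref{section:datastructure}; granting it, the proof reduces to the routine verification sketched above.
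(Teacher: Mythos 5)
Your scheme is essentially the one the paper implements in the subroutine $\findTD$ (Algorithm~\ref{alg:nlogn-findtd}): a Robertson--Seymour recursion whose every non-recursive step is $k^{O(1)}$ calls to $\ds$, using the $\tfrac12$-balanced $S$-separator from \qSsep{} (via Lemma~\ref{lem:balanced-3coloring}) and gluing the pieces under a two-bag gadget. Two of your technical choices differ from the paper's, and one of them opens a small hole. You pad $S$ up to size exactly $2k+3$ and then find a separate separator $\hat X^j$ inside each resulting piece $\hat U^j$; the paper does neither---it keeps $S$ as it is, computes a single $\tfrac12$-balanced $S$-separator $\sep$ of $G[S\cup U]$, and adds the current pin $\pin$ to $\sep$, so $|\sep|\le k+2$ (bags of size at most $|S|+|\sep|\le 3k+5$, still width $3k+4$). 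That extra vertex does exactly the work your padding/``$\hat X^1\cap U\neq\emptyset$'' argument is straining to do: since $\pin\in\sep\cap U$, every component passed to the recursion misses $\pin$, the pins of distinct calls are distinct vertices, hence there are at most $n$ calls and $2n$ bags, with no case analysis.

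Your termination argument for the case $|S|=2k+3$ is not correct as stated. You claim $\hat X^1\cap U\neq\emptyset$ ``because $G[S\cup U]$ is then connected and $|\tilde S|-|\hat X^1|\ge k+2>\tfrac12|\tilde S|$ forbids a separator disjoint from $U$.'' That inference additionally needs \emph{every} vertex of $\tilde S$ to be adjacent to $U$, so that all of $\tilde S\setminus\hat X^1$ falls into the single component of $G[\tilde S\cup U]\setminus\hat X^1$ containing the connected set $U$. This is guaranteed at recursive calls (you set $S:=N_G(W)$), but not at the top-level call: $S_0$ may contain vertices whose neighbours all lie in $S_0$, and then a $\tfrac12$-balanced $\tilde S$-separator lying entirely inside $S_0$ can exist---for instance if $S_0$ induces a path of $2k+3$ vertices of which only one endpoint meets $U$, removing the middle vertex of the path already balances $\tilde S$ while touching no vertex of $U$. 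The algorithm still terminates (the unique child then has $S=N(U)\subsetneq S_0$, so it pads and makes progress one level below), but your telescoping inequality ``$|U_v|-\sum_c|U_c|\ge 1$ at every internal node'' is false at that node and the accounting needs to be patched. The paper's device of inserting the pin into the separator removes the issue entirely, and is the one idea your write-up is missing.
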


The data structure is initialized with $S=S_0$ and $\pin$ set to an
arbitrary vertex of $G\setminus S_0$; as we have assumed that
$G\setminus S_0$ is connected, this gives $U=V(G)\setminus S_0$ after
initialization.  Therefore, Lemma~\ref{lemma:nlogn-s-root-bag}
immediately yields Lemma~\ref{lemma:nlogn-compression}.

\paragraph{A gentle introduction to the data structure}
Before we proceed to the implementation of the subroutine $\findTD$,
we give a quick description of the interface of the data structure
$\ds$: what kind of queries and updates it supports, and what is the
running time of their execution.  The details of the data structure
implementation will be given in Section~\ref{section:datastructure}.

The state of the data structure is, in addition to $G,k,\td$, three
subsets of vertices, $S$, $X$ and $F$, and the pin $\pin$ with the
restriction that $\pin \notin S$.  $S$ and $\pin$ uniquely imply the
set $U$, defined as the connected component of $G\setminus S$ that
contains $\pin$.  The intuition behind these sets and the pin is the
following:

\begin{itemize}
\item $S$ is the set that will serve as a root bag for some subtree,
\item $\pin$ is a vertex which indicates the current active component,
\item $U$ is the current active component, the connected component of
  $G\setminus S$ containing $\pin$,
\item $X$ is a balanced $S$-separator (of $G[S \cup U]$) and
\item $F$ is a set of vertices marking the connected components of
  $G[S \cup U]\setminus (S\cup X)$ as ``finished''.
\end{itemize}

The construction of the data structure $\ds$ is heavily based on the
fact that we are provided with some tree decomposition of width
$O(k)$.  Given this tree decomposition, the data structure can be
initialized in $O(c^k\cdot n)$ time for some $c \in \mathbb{N}$.  At
the moment of initialization we set $S=X=F=\emptyset$ and $\pin$ to be
an arbitrary vertex of $G$.  During the run of the algorithm, the
following updates can be performed on the data structure:
\begin{itemize}
\item insert/remove a vertex to/from $S$, $X$, or $F$;
\item mark/unmark a vertex as a pin $\pin$.
\end{itemize}
All of these updates will be performed in $O(c^k\cdot \log n)$ time
for some $c \in \mathbb{N}$.

The data structure provides a number of queries that are used in the
subroutine $\findTD$.  The running time of each query is $O(c^k \cdot
\log n)$ for some $c \in \mathbb{N}$, and in many cases it is actually
much smaller.  We find it more convenient to explain the needed
queries while describing the algorithm itself.



\paragraph{Implementation of $\findTD$}
The pseudocode of the algorithm $\findTD$ is given as
Algorithm~\ref{alg:nlogn-findtd}.  Its correctness is proven as
Claim~\ref{claim:findtd-nlogn-correct}, and its time complexity is
proven as Claim~\ref{claim:findtd-nlogn-complexity}.  The subroutine
is provided with the data structure $\ds$, and the following
invariants hold at each time the subroutine is called and exited:
\begin{itemize}
\item $S \subseteq V(G)$, $|S| \leq 2k+3$,
\item $\pin$ exists, is unique and $\pin \notin S$,
\item $X = F = \emptyset$ and
\item the state of the data structure is the same on exit as it was
  when the function was called.
\end{itemize}
The latter means that when we return, be it a tree decomposition or
$\false$, the algorithm that called $\findTD$ will have $S$, $X$, $F$
and $\pin$ as they were before the call.

We now describe the consecutive steps of the algorithm $\findTD$; the
reader is encouraged to follow these steps in the pseudocode, in order
to be convinced that all the crucial, potentially expensive
computations are performed by calls to the data structure.

First we apply query \qSsep, which either finds a
$\frac{1}{2}$-balanced $S$-separator in $G[S\cup U]$ of size at most
$k+1$, or concludes that $\tw(G) > k$.  The running time of this query
is $k^{O(1)}$.  If no such separator can be found, by
Lemma~\ref{lemma:halfhalf} we infer that $\tw(G[S\cup U])>k$ and we
can terminate the procedure.  Otherwise we are provided with such a
separator $\sep$, which we add to $X$ in the data structure.
Moreover, for a technical reason, we also add the pin $\pin$ to $\sep$
(and thus also to $X$), so we end up with having $|\sep| \leq k + 2$.

The next step is a loop through the connected components of $G[S\cup
U]\setminus (S\cup \sep)$.  This part is performed using the query
\qpin.  Query \qpin, which runs in constant time, either finds an
arbitrary vertex $u$ of a connected component of $G[S\cup U]\setminus
(S\cup X)$ that does not contain any vertex from $F$, or concludes
that each of these components contains some vertex of $F$.  After
finding $u$, we mark $u$ by putting it to $F$ and proceed further,
until all the components are marked.  Having achieved this, we have
obtained a list $\pins$, containing exactly one vertex from each
connected component of $G[S\cup U]\setminus (S\cup \sep)$.  We remove
all the vertices on this list from $F$, thus making $F$ again empty.

It is worth mentioning that the query \qpin{} not only returns some
vertex $u$ of a connected component of $G[S\cup U]\setminus (S\cup
\sep)$ that does not contain any vertex from $F$, but also provides
the size of this component as the second coordinate of the return
value.  Moreover, the components are being found in decreasing order
with respect to sizes.  In this algorithm we do not exploit this
property, but it will be crucial for the linear-time algorithm.

The set $X$ will no longer be used, so we remove all the vertices of
$\sep$ from $X$, thus making it again empty.  On the other hand, we
add all the vertices from $\sep$ to $S$.  The new set $S$ obtained in
this manner will constitute the new bag, of size at most
$|S|+|\sep|\leq 3k+5$.  We are left with computing the tree
decompositions for the connected components below this bag, which are
pinpointed by vertices stored in the list $\pins$.

We iterate through the list $\pins$ and process the components one by
one.  For each vertex $u\in \pins$, we set $u$ as the new pin by
unmarking the old one and marking $u$.  Note that the set $U$ gets
redefined and now is the connected component containing considered
$u$.  First, we find the neighborhood of $U$ in $S$.  This is done
using query \qnei, which in $O(k)$ time returns either this
neighborhood, or concludes that its cardinality is larger than $2k+3$.
However, as $X$ was a $\frac{1}{2}$-balanced $S$-separator, it follows
that this neighborhood will always be of size at most $2k+3$ (a formal
argument is contained in the proof of correctness).  We continue with
$S \cap N(U)$ as our new $S$ and recursively call $\findTD$ in order
to decompose the connected component under consideration, with its
neighborhood in $S$ as the root bag of the constructed tree
decomposition.  $\findTD$ either provides a decomposition by returning
a pointer to its root bag, or concludes that no decomposition can be
found.  If the latter is the case, we may terminate the algorithm
providing a negative answer.

After all the connected components are processed, we merge the
obtained tree decompositions.  For this, we use the function
$\build(S,X,C)$ which, given sets of vertices $S$ and $X$ and a set of
pointers $C$, constructs two bags $B = S$ and $B' = S \cup X$, makes
$C$ the children of $B'$, $B'$ the child of $B$ and returns a pointer
to $B$.  This pointer may be returned from the whole subroutine, after
doing a clean-up of the data structure.

\begin{algorithm}[h!]

  \KwData{Data structure $\ds$}
  \KwOut{Tree decomposition of width at most $3k+4$ of $G[S\cup U]$ with $S$ as root bag or conclusion that $\tw(G) > k$.}
\BlankLine
\Indp
  $\oldS \leftarrow \ds.\getS()$\\
  $\oldPin \leftarrow \ds.\getPin()$\\
  $\sep \leftarrow \ds.\qSsep()$\\
  \If{$\sep = \false$}{
    \KwRet $\false$ \hspace{1cm}/*\textsf{\footnotesize{ safe to return: the 
    state not changed }}*/\\
  }
  $\ds.\insertX(\sep)$\\
  $\ds.\insertX(\pin)$\\
  $\pins \leftarrow \emptyset$\\
  \While{$(u,l) \leftarrow \ds.\qpin() \neq \false$}{
    $\pins.append(u)$\\
    $\ds.\insertF(u)$
  }
  $\ds.\clearX()$\\
  $\ds.\clearF()$\\
  $\ds.\insertS(\sep)$\\
  $\bags \leftarrow \emptyset$\\
  \For{$u \in \pins$}{
    $\ds.\setPin(u)$\\
    $\bags.append(\ds.\qnei())$\\
  }
  $\children \leftarrow \emptyset$\\
  \For{$u,b \in \pins,\bags$}{
    $\ds.\setPin(u)$\\
    $\ds.\clearS()$\\
    $\ds.\insertS(b)$\\
    $\children.append(\findTD())$
  }
  $\ds.\clearS()$\\
  $\ds.\insertS(\oldS)$\\
  $\ds.\setPin(\oldPin)$\\
  \If{$\false \in \children$}{
    \KwRet $\false$ \hspace{1cm}/*\textsf{\footnotesize{ postponed because of 
    rollback of $S$ and $\pin$}} */\\
  }  
  \KwRet $\build(\oldS, \sep, \children)$
  \caption{$\findTD$}
  \label{alg:nlogn-findtd}
\end{algorithm}

\paragraph{Invariants}
Now we show that the stated invariants indeed hold.  Initially $S = X
= F = \emptyset$ and $\pin \in V(G)$, so clearly the invariants are
satisfied.  If no $S$-separator is found, the algorithm returns
without changing the data structure and hence the invariants trivially
hold in this case.  Since both $X$ and $F$ are empty or cleared before
return or recursing, $X = F = \emptyset$ holds.  Furthermore, as $S$
is reset to $\oldS$ (consult Algorithm~\ref{alg:nlogn-findtd} for the
variable names used) and the pin to $\oldPin$ before returning, it
follows that the state of the data structure is reverted upon
returning.

The size of $S = \emptyset$ is trivially less than $2k+3$ when initialized.
Assume that for some call to $\findTD$ we have that $|\oldS| \leq
2k+3$.  When recursing, $S$ is the neighborhood of some component $C$
of $G[\oldS\cup U]\setminus (\oldS\cup \sep)$ (note that we refer to
$U$ before resetting the pin).  This component is contained in some
component $C'$ of $G[\oldS \cup U]\setminus \sep$, and all the
vertices of $\oldS$ adjacent to $C$ must be contained in $C'$.  Since
$\sep$ is a $\frac{1}{2}$-balanced $\oldS$-separator, we know that
$C'$ contains at most $\frac{1}{2}|\oldS|$ vertices of $\oldS$.
Hence, when recursing we have that $|S| \leq \frac{1}{2}|\oldS| +
|\sep| = \frac{1}{2}(2k+3)+k+2 = 2k + \frac{7}{2}$ and, since $|S|$ is
an integer, it follows that $|S| \leq 2k+3$.

Finally, we argue that the pin $\pin$ is never contained in $S$.  When
we obtain the elements of $\pins$ (returned by query \qpin) we know
that $X = \sep$ and the data structure guarantees that the pins will
be from $G[\oldS\cup U] \setminus (\oldS \cup \sep)$.  When recursing,
$S = b \subseteq (\oldS \cup \sep)$ and $\pi \in \pins$, so it follows
that $\pi \notin S$.  Assuming $\pi \notin \oldS$, it follows that
$\pi$ is not in $S$ when returning, and our argument is complete.
From here on we will safely assume that the invariants indeed hold.

\paragraph{Correctness}
\begin{claim}
  \label{claim:findtd-nlogn-correct}
  The algorithm $\findTD$ is correct, that is
  \begin{enumerate}[(a)]
  \item\label{item:1:claim:findtd-nlogn-correct} if $\tw(G) \leq k$,
    $\findTD$ returns a valid tree decomposition of $G[S\cup U]$ of
    width at most $3k+4$ and
  \item\label{item:2:claim:findtd-nlogn-correct} if $\findTD$ returns
    $\false$ then $\tw(G) > k$.
  \end{enumerate}
\end{claim}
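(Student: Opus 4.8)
The plan is to prove both parts of Claim~\ref{claim:findtd-nlogn-correct} simultaneously, by strong induction on $|U|$, the size of the active component. I will freely use the invariants established above (in particular $|S|\le 2k+3$ on every call, $\pin\in U$, and the state of $\ds$ restored on return), the elementary monotonicity fact that $G[S\cup U]$ is an induced subgraph of $G$ and hence $\tw(G[S\cup U])\le \tw(G)$, and the convention of the description that the pin is adjoined to $\sep$, so $\pin\in\sep$ and $|\sep|\le k+2$. For part (b) there are exactly two places where $\findTD$ returns $\false$. If $\qSsep$ returns $\false$, then by its specification --- equivalently, by the contrapositive of Lemma~\ref{lemma:halfhalf} applied to $G[S\cup U]$ --- there is no $\tfrac12$-balanced $S$-separator of $G[S\cup U]$ of size at most $k+1$, hence $\tw(G[S\cup U])>k$ and so $\tw(G)>k$. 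If instead some recursive call returns $\false$, then induction hypothesis (b) for that call (whose active component is strictly smaller, as shown below) already gives $\tw(G)>k$. In both cases $\false$ is returned correctly, settling (b).

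For part (a), assume $\tw(G)\le k$, so $\tw(G[S\cup U])\le k$; by Lemma~\ref{lemma:halfhalf} a $\tfrac12$-balanced $S$-separator of $G[S\cup U]$ of size at most $k+1$ exists, hence $\qSsep$ does not return $\false$, and after adjoining $\pin$ we obtain $\sep$ with $|\sep|\le k+2$. Let $C_1,\dots,C_q$ be the connected components of $G[S\cup U]\setminus(S\cup\sep)$; these are exactly the components enumerated by the $\qpin$ loop, and $\pins$ holds one vertex per component. Two observations drive the induction. First, $\pin\in U$ but $\pin\in\sep$, so no $C_i$ contains $\pin$; thus $C_i\subsetneq U$ and $|C_i|<|U|$, and the recursion is well-founded. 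Second, each $C_i$ lies inside a single component $C_i'$ of $G[S\cup U]\setminus\sep$, and every vertex of $S$ adjacent to $C_i$ lies in $C_i'$; since $\sep$ is $\tfrac12$-balanced, $|C_i'\cap S|\le\tfrac12|S|$, so $|N(C_i)|\le\tfrac12|S|+|\sep|\le 2k+3$. Hence $\qnei$ returns $b_i:=N(C_i)$ (not $\bot$), the recursive call on the state with pin in $C_i$ and set $b_i$ meets the precondition $|b_i|\le 2k+3$, and by induction hypothesis (a) it returns a valid tree decomposition $\td_i$ of $G[b_i\cup C_i]$ of width at most $3k+4$ with root bag $b_i$.

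It remains to verify that $\build(\oldS,\sep,\children)$ --- producing the bag $B=\oldS=S$ with unique child $B'=S\cup\sep$, below which hang the root bags $b_i$ of the $\td_i$ --- is a tree decomposition of $G[S\cup U]$ of width at most $3k+4$ with root bag $S$. The width is at most $\max(|S|,|S|+|\sep|)-1\le 3k+4$, and each $\td_i$ has width at most $3k+4$. Every vertex of $S\cup U$ lies in $B'$ if it belongs to $S\cup\sep$ (in particular $\pin$, which is why it is adjoined to $\sep$), and otherwise lies in exactly one $C_i$ and is covered by $\td_i$. Every edge of $G[S\cup U]$ either has both endpoints in $S\cup\sep\subseteq B'$, or has an endpoint in some $C_i$; in the latter case both endpoints lie in $C_i\cup N(C_i)=C_i\cup b_i$ and the edge is covered in $\td_i$ (there is no edge between distinct $C_i,C_j$). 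For the connectivity axiom, any vertex $x\in S\cup\sep$ that also occurs in some $\td_i$ must lie in $b_i$ (as $x\notin C_i$), which is the root bag of $\td_i$, so the occurrences of $x$ form a subtree running through $B,B'$ and into every such child; a vertex of $U\setminus\sep$ lies in a single $C_i$ and occurs only inside $\td_i$. The base case $q=0$ (empty $\pins$) is the same verification with $U\subseteq\sep$, so $S\cup U\subseteq B'$ already covers all vertices and edges. This proves (a) and completes the induction.

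The main obstacle is the assembly step. One must establish the bound $|N(C_i)|\le 2k+3$ precisely (otherwise $\qnei$ could answer $\bot$ and the recursive preconditions would fail), and then check that gluing the $\td_i$ under $B'$ respects the connectivity axiom --- this hinges on $b_i=N(C_i)$ being the \emph{root} bag of $\td_i$, which is exactly the extra guarantee baked into Lemma~\ref{lemma:nlogn-s-root-bag}. The bookkeeping around the pin --- sweeping it into $\sep$ so that it is simultaneously covered by $B'$ and excluded from every $C_i$, forcing $|C_i|<|U|$ --- is the small but essential technical point that links correctness with termination.
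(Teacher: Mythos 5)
Your proof is correct and follows essentially the same route as the paper's: induction on the recursion (you frame it as strong induction on $|U|$, using $\pin\in\sep$ to guarantee $|C_i|<|U|$, which is the same well-foundedness fact the paper exploits implicitly via the base case $U\subseteq S\cup\sep$), the same bound $|N(C_i)|\le\tfrac12|S|+|\sep|\le 2k+3$ (which the paper establishes in its preceding ``Invariants'' paragraph rather than inside the claim's proof), and the same case analysis for the vertex, edge, and connectivity axioms hinging on $N(C_i)$ being the root bag of $\td_i$. The only cosmetic difference is that you fold parts (a) and (b) into a single induction, whereas the paper dispatches (b) first with a short separate argument.
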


\begin{proof}
  We start by proving (\ref{item:2:claim:findtd-nlogn-correct}).
  Suppose the algorithm returns $\false$.  This happens when at some
  point we are unable to find a balanced $S$-separator for an induced
  subgraph $G' = G[S \cup U]$.  By Lemma~\ref{lemma:halfhalf} the
  treewidth of $G'$ is more than $k$.  Hence $\tw(G) > k$ as well.
  
  To show (\ref{item:1:claim:findtd-nlogn-correct}) we proceed by
  induction.  In the induction we prove that the algorithm creates a
  tree decomposition, and we therefore argue that the necessary
  conditions are satisfied, namely
  \begin{itemize}
  \item the bags have size at most $3k+5$,
  \item every vertex and every edge is contained in some bag and
  \item for each $v \in V(G)$ the subtree of bags containing $v$ is
    connected.
  \end{itemize}
  The base case is at the leaf of the obtained tree decomposition,
  namely when $U \subseteq S \cup \sep$.  Then we return a tree
  decomposition containing two bags, $B$ and $B'$ where $B = \{S\}$
  and $B' = \{S \cup \sep\}$.  Clearly, every edge and every vertex of
  $G[S \cup U] = G[S \cup \sep]$ is contained in the tree
  decomposition.  Furthermore, since the tree has size two, the
  connectivity requirement holds and finally, since $|S| \leq 2k+3$
  (invariant) and $\sep \leq k+2$ it follows that $|S \cup \sep | \leq
  3k+5$.  Note that due to the definition of the base case, the
  algorithm will find no pins and hence it will not recurse further.
  
  %
  %
  The induction step is as follows.  Since $U \nsubseteq S \cup \sep$,
  the algorithm have found some pins $\pi_1, \pi_2, \dots, \pi_d$ and
  the corresponding components $C_1, C_2, \dots, C_d$ in $G[S \cup
  U]\setminus (S\cup \sep)$.  Let $N_i = N(C_i) \cap (S \cup \sep)$.
  By the induction hypothesis the algorithm gives us valid tree
  decompositions $\td_i$ of $G[N_i \cup C_i]$.  Note that the root bag
  of $\td_i$ consists of the vertices in $N_i$.  By the same argument
  as for the base case, the two bags $B = S$ and $B' = S \cup \sep$
  that we construct have appropriate sizes.
  
  Let $v$ be an arbitrary vertex of $S \cup U$.  If $v \in S \cup \sep$, then
  it is contained in $B'$.  Otherwise there exists a unique $i$ such
  that $v \in C_i$.  It then follows from the induction hypothesis
  that $v$ is contained in some bag of $\td_i$.
  
  It remains to show that the edge property and the connectivity
  property hold.  Let $uv$ be an arbitrary edge of $G[S \cup U]$.  If
  $u$ and $v$ both are in $S \cup \sep$, then the edge is contained in
  $B'$.  Otherwise, assume without loss of generality that $u$ is in
  some component $C_i$.  Then $u$ and $v$ are in $N_i \cup C_i$ and
  hence they are in some bag of $\td_i$ by the induction hypothesis.
  
  Finally, for the connectivity property, let $v$ be some vertex in $S
  \cup U$.  If $v \notin S \cup \sep$, then there is a unique $i$ such
  that $v \in C_i$, hence we can apply the induction hypothesis.  So
  assume that $v \in S \cup \sep = B'$.  Let $A$ be some bag of $\td$
  containing $v$.  We will complete the proof by proving that there is
  a path of bags containing $v$ from $A$ to $B'$.  If $A$ is $B$ or
  $B'$, then this follows directly from the construction.  Otherwise
  there exists a unique $i$ such that $A$ is a bag in $\td_i$.
  Observe that $v$ is in $N_i$ as it is in $S \cup \sep$.  By the
  induction hypothesis the bags containing $v$ in $\td_i$ are
  connected and hence there is a path of bags containing $v$ from $A$
  to the root bag $R_i$ of $\td_i$.  By construction $B'$ contains $v$
  and the bags $B'$ and $R_i$ are adjacent.  Hence there is a path of
  bags containing $v$ from $A$ to $B'$ and as $A$ was arbitrary
  chosen, this proves that the bags containing $v$ form a connected
  subtree of the decomposition.  This concludes the proof of
  Claim~\ref{claim:findtd-nlogn-correct}.
\end{proof}

\paragraph{Complexity}
\begin{claim}
  \label{claim:findtd-nlogn-complexity}
  The invocation of $\findTD$ in the algorithm $\compress{1}$ runs in
  $O(c^k\cdot n \log n)$ time for some $c \in \mathbb{N}$.
\end{claim}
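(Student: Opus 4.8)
The plan is to set up a recurrence for the running time of a single call to $\findTD$ in terms of the sizes of the active components of its recursive calls, and then to argue that the recursion tree has depth $O(\log n)$, which will let the recurrence collapse to $O(c^k n \log n)$. First I would account for the work done inside one invocation of $\findTD$, \emph{excluding} the recursive calls: this consists of a constant number of calls to \qSsep{}, one \qpin{} call per component of $G[S\cup U]\setminus(S\cup\sep)$, one \qnei{} call per component, and a constant number of $\getS/\getPin/\setPin/\insertX/\clearX/\insertS/\clearS/\insertF/\clearF$ updates per component, together with the bookkeeping on the lists $\pins$, $\bags$, $\children$. Each data-structure operation costs $O(c^k\log n)$ by the stated interface (Section~\ref{section:datastructure}), so if the current call spawns $d$ children (equivalently, $G[S\cup U]\setminus(S\cup\sep)$ has $d$ nonempty components), the local work is $O(d\cdot c^k\log n)$; note that $d$ can be charged against the children themselves, so the local cost per node of the recursion tree is $O(c^k\log n)$.

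Next I would bound the total number of nodes in the recursion tree by $O(n)$. The key observation is that in the induction-step case the sets $C_1,\dots,C_d$ are nonempty, pairwise disjoint, and disjoint from $S\cup\sep$; in particular $\pin$, which was added to $\sep$, is in $S\cup\sep$ and hence in none of the $C_i$. Thus each child's active component $C_i$ is a strict subset of $U\setminus\{\pin\}$, and the active components at any fixed level of the recursion tree are pairwise disjoint subsets of $V(G)$. Moreover a node is a leaf exactly when $U\subseteq S\cup\sep$, i.e. when the component has already been ``consumed'' into the current bag; since every internal node strictly shrinks the active component and introduces $\pin$ into the bag, a standard charging argument (each recursive call is charged to the at-least-one vertex of $U$ that leaves the active component, namely $\pin$, or more simply: each internal node has $\geq 1$ child and the active components partition across a level while strictly decreasing) gives that the recursion tree has $O(n)$ nodes in total. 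Combined with the $O(c^k\log n)$ local cost, this already yields $O(c^k n\log n)$ without needing a depth bound at all.

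More carefully, to make the charging airtight I would argue: associate to each call of $\findTD$ the pin $\pin$ that is active at its entry; since at the next level $\pin\in S$ for every child (it was moved into $\sep\subseteq S$), no descendant ever has $\pin$ as its active pin, so distinct calls along any root-to-leaf path have distinct pins, and in fact the pins of \emph{all} calls in the tree are distinct because active components are nested-or-disjoint and a pin, once placed in $S$, stays in $S\cup\sep$ for the whole subtree. Hence the number of calls is at most $n$, and the total running time is $\sum_{\text{calls}} O(c^k\log n) = O(c^k n\log n)$, as claimed; formally, with $U_1,\dots,U_q$ the active components of the children,
\[
  T(|U|) \;\leq\; O(c^k \log n) + \sum_{i=1}^q T(|U_i|),
\]
and since the $U_i$ are disjoint proper subsets of $U$ and the recursion terminates when $|U|$ is small, unrolling gives $T(n) = O(c^k n \log n)$. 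The main obstacle here is not the arithmetic but justifying that every data-structure operation invoked by $\findTD$ — in particular the potentially many \qpin{} and \qnei{} calls in the two \textbf{while}/\textbf{for} loops — really costs only $O(c^k\log n)$ amortized \emph{and} that the state is correctly restored between them so that each call operates on the intended $(S,X,F,\pin)$; this is exactly what the invariants established above and the data-structure guarantees of Section~\ref{section:datastructure} are there to supply, so the proof reduces to citing them.
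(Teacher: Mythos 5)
Your proof is correct and follows essentially the same approach as the paper's: bound the number of recursive calls (the paper counts bags, at most $2n$; you count recursion-tree nodes, at most $n$) via the distinct-pin argument, and charge each call only $O(c^k\log n)$ by amortizing the per-component loop iterations to the corresponding children. One small imprecision: a pin does not literally ``stay in $S\cup\sep$'' for the whole subtree --- in a descendant call $S$ is reset to $N(C_i)$, and the old pin may simply not belong to $G[S\cup U]$ at all --- but the property you actually need, namely that the old pin lies in $S\cup\sep$ and hence in none of the pairwise-disjoint components $C_i$ (so it can never reappear as a descendant's pin), is one you also state explicitly, so the conclusion is unaffected.
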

\begin{proof}
  We will prove the complexity of the algorithm by first arguing that
  the constructed tree decomposition contains at most $2n$ bags.  Then
  we will partition the used running time between the bags, charging
  each bag with at most $O(c^k\cdot \log n)$ time.  It then follows
  that that $\findTD$ runs in $O(c^k\cdot n \log n)$ time.
  
  
  To bound the number of bags we simply observe that at each recursion
  step, we add the previous pin to $S$ and create two bags.  Since a
  vertex can only be added to $S$ one time during the entire process,
  at most $2n$ bags are created.
  
  \newcommand{\reccall}{\mathcal{C}}
  
  It remains to charge the bags.  For a call $\reccall$ to $\findTD$,
  let $B$ and $B'$ be as previously and let $R_i$ be the root bag of
  $\td_i$.  We will charge $B'$ and $R_1, \dots, R_d$ for the time
  spent on $\reccall$.  Notice that as $R_i$ will correspond to $B$ in
  the next recursion step, each bag will only be charged by one call
  to $\findTD$.  We charge $B'$ with everything in $\reccall$ not
  executed in the two loops iterating through the components, plus
  with the last call to $\qpin$ that returned $\bot$.  Since every
  update and query in the data structure is executed in $O(c^k\cdot
  \log n)$ time, and there is a constant number of queries charged to
  $B'$, it follows that $B'$ is charged with $O(c^k\cdot \log n)$
  time.  For each iteration in one of the loops we consider the
  corresponding $\pi_i$ and charge the bag $R_i$ with the time spent
  on this iteration.  As all the operations in the loops can be
  performed in $O(c^k\cdot \log n)$ time, each $R_i$ is charged with
  at most $O(c^k\cdot \log n)$ time.  Since our tree decomposition has
  at most $2n$ bags and each is charged with at most $O(c^k\cdot \log
  n)$ time, it follows that $\findTD$ runs in $O(c^k\cdot n \log n)$
  time and the proof is complete.
\end{proof}

\newcommand{\forg}{\texttt{forgotten}}

\section{$O(c^k n \log^{(\depth)} n)$ 5-approximation algorithm for
  treewidth}
\label{section:logi}

In this section we provide formal details of the proof of
Theorem~\ref{theorem:nlogin}
\begin{theorem}[Theorem~\ref{theorem:nlogin}, restated]
  For every $\depth\in \mathbb{N}$, there exists an algorithm which,
  given a graph $G$ and an integer $k$, in $O(c^k\cdot n
  \log^{(\depth)} n)$ time for some $c \in \mathbb{N}$ either computes
  a tree decomposition of $G$ of width at most $5k+4$ or correctly
  concludes that $\tw(G) > k$.
\end{theorem}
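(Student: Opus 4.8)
The plan is to prove the theorem by induction on $\depth$. For $\depth=1$ the statement is exactly Theorem~\ref{theorem:nlogn} (a tree decomposition of width at most $3k+4$ is in particular of width at most $5k+4$), so it suffices to bootstrap an algorithm meeting the specification for $\depth$ into one meeting it for $\depth+1$. Exactly as in the proof of Theorem~\ref{theorem:nlogn}, I would first strip off the Bodlaender reduction layer: apply Lemma~\ref{lem:prelim:bodlaender} with parameter $5k+4$, and in the matching and $I$-simplicial cases recurse on a constant factor smaller graph (respectively on $G_I\setminus X$), reintroduce, and thereby reduce to a \emph{compression} subproblem in which, besides $G$, $k$ and a small set $S_0$ with $G\setminus S_0$ connected, we have at our disposal a tree decomposition $\td_\apx$ of $G$ of width $O(k)$, which by Proposition~\ref{proposition:parallel} we may assume binary and of depth $O(\log n)$. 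The outer recurrence is geometric and loses only a constant factor, so all the work is in the compression step.

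For the compression step at level $\depth+1$ I would run a variant $\findPTD$ of the procedure $\findTD$ of Section~\ref{section:findTDlogn} over (an extension of) the data structure $\ds$, with two modifications. First, after obtaining the balanced $S$-separator $\sep$ from \qSsep{}, the procedure also queries the data structure for a balanced separator of the \emph{active component} $U$ itself (with balance a fixed constant below $1$) and adds it, together with the pin, to the current bag; this is the Reed-style ingredient, and it is precisely what raises the width bound from $3k+4$ to $5k+4$. Second, the procedure exploits that \qpin{} returns the components of $G[S\cup U]\setminus(S\cup\sep)$ in order of decreasing size, together with their sizes: as soon as a returned component has fewer than $\log n$ vertices, the loop over components is aborted, so that all small components are simply left undecomposed. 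Thus $\findPTD$ outputs a partial tree decomposition, of an induced subgraph $G'$ of $G$, of width at most $5k+4$, such that every component $C_i$ of $G\setminus V(G')$ has $|C_i|\le\log n$, $|N(C_i)|\le 2k+3$, and $N(C_i)$ is contained in a bag of the partial decomposition (the latter because $N(C_i)$ lies inside an $S$-set seen along the recursion, by the same argument as in the proof of correctness of $\findTD$). One then calls the level-$\depth$ algorithm (the induction hypothesis) on each $C_i$, reshapes the returned width-$(\le 5k+4)$ decomposition of $G[C_i]$ so that $N(C_i)$ sits in its root bag — this costs only $O(k|C_i|)$ time and keeps the width bound since $|C_i|+|N(C_i)|$ is small — and glues all these decompositions under the bags of the partial decomposition containing the corresponding $N(C_i)$. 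That the result is a tree decomposition of $G$ of width at most $5k+4$ is verified exactly as in Claim~\ref{claim:findtd-nlogn-correct}, and if any subcall reports treewidth $>k$ then so is $\tw(G)$.

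For the running time, $\findPTD$ spends $O(c^k\log n)$ time per recursive call and each call adds at least one vertex to $V(G')$ that was not there before, so the partial-decomposition phase runs in $O(|V(G')|\cdot c^k\log n)$ time; the purpose of the extra balanced separators of active components is exactly that the sizes of the components handled shrink by a constant factor at each step down the recursion, so the recursion reaches size $\log n$ quickly and a geometric-series estimate over the levels of the recursion tree yields $|V(G')| = O(n/\log n)$, hence $O(c^k n)$ for this phase. The calls to the level-$\depth$ algorithm cost
\[
  \sum_i O\!\left(c^k\,|C_i|\,\log^{(\depth)}|C_i|\right)
  \ \le\ O\!\left(c^k\,\log^{(\depth)}(\log n)\right)\sum_i |C_i|
  \ =\ O\!\left(c^k\,n\,\log^{(\depth+1)} n\right),
\]
which dominates, and feeding these bounds into the geometric outer recurrence of the Bodlaender layer proves the statement for $\depth+1$.

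The step I expect to be the main obstacle is implementing the new data-structure query that returns a balanced separator of the active component $U$ in $O(c^k\log n)$ time. In contrast to \qSsep{}, this cannot be done by straightforward bounded-treewidth dynamic programming within the time budget, since the dynamic-programming state would have to remember the cardinalities of the sides, adding a factor of $n$. The remedy is the two-phase trick of Section~\ref{sec:firstCompress}: first read off from $\td_\apx$ a coarse separation $(L_0,X_0,R_0)$ of $U$ with $\max(|L_0|,|R_0|)$ bounded away from $|U|$ and $|X_0|=O(k)$, then, over all splits $k+1=k_L+k_X+k_R$ and all compatible partitions of $X_0$, run two \emph{constrained} dynamic programs — one per side — that maximize the amount of mass pushed to that side. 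Turning this into a data-structure query is the delicate part, because $\td_\apx$ must serve simultaneously as the engine for the fast dynamic programming and as the object from which the separation $(L_0,X_0,R_0)$ itself is located, all while preserving the $O(\log n)$-per-update invariant of $\ds$. The remaining issues — bounding the merged bags by $5k+4$ and the bookkeeping of the gluing step — are routine in comparison.
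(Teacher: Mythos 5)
The high-level plan (outer Bodlaender reduction, a $\findPTD$ variant that bottoms out at components of size $<\log n$, and recursion on the remainders with the level-$(\depth-1)$ algorithm) is the same as the paper's, and you have correctly identified that the hard part is the $U$-separator query and its implementation via the two-phase trick of Section~\ref{sec:firstCompress}. However, your width bookkeeping does not actually give $5k+4$, and this is not a cosmetic issue. You propose to add \emph{both} the balanced $S$-separator and the balanced $U$-separator (plus the pin) to every constructed bag, which is the Reed-style scheme of Section~\ref{sec:reedAlg}. With that scheme, if $|S|\leq s$ before a call, then the new $S$ passed to a child is bounded by $\tfrac{1}{2}s + (2k+3)$, so the invariant closes only at $s\geq 4k+6$, and the corresponding bag $S\cup\sep_S\cup\sep_U\cup\{\pi\}$ then has size up to $6k+9$ — width $6k+8$, not $5k+4$. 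The paper instead \emph{alternates}: it finds an $S$-separator on even levels and a $U$-separator on odd levels (tracked by the flag $\whatsep$), and only ever adds one separator per bag; with $|S|\leq 4k+3$ on $s$-levels and $|S|\leq 3k+2$ on $u$-levels, the bags stay of size $\leq 5k+4$. Relatedly, your claim that the undecomposed components $C_i$ satisfy $|N(C_i)|\leq 2k+3$ is false — with the alternation it is $\leq 4k+3$, and with your double-separator scheme it would be even larger — and this propagates into what the recursive sub-algorithms must accept.

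Two further consequences you have glossed over. First, you say the base case $\depth=1$ is literally Theorem~\ref{theorem:nlogn}; but when $\compress{2}$ calls the level-$1$ algorithm it must hand it root-bag sets $N(C_j)$ of size up to $4k+3$, which exceeds the $|S_0|\leq 2k+3$ guarantee of $\alg{1}$. The paper handles this by re-running the invariant argument of $\findTD$ with $|S|\leq 4k+3$, which still closes ($\tfrac{1}{2}(4k+3)+(k+2)\leq 4k+3$) but now yields width $5k+4$, not $3k+4$. Second, the ``reshape a width-$(\leq 5k+4)$ decomposition so that $N(C_i)$ sits in its root bag'' step is unsupported: in general you cannot force an arbitrary set into a root bag without either paying in width or rebuilding. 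The paper simply passes $S_0=N(C_j)$ to the recursive call $\alg{\depth-1}(G[C_j\cup N(C_j)],k,N(C_j))$, which by specification returns a decomposition with $N(C_j)$ as root bag, so no reshaping is ever needed — but this is exactly why the recursive algorithms must accept $|S_0|$ up to $4k+3$, closing the circle with the previous point.
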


In the proof we give a sequence of algorithms $\alg{\depth}$ for
$\depth=2,3,\ldots$; $\alg{1}$ has been already presented in the
previous section.  Each $\alg{\depth}$ in fact solves a slightly more
general problem than stated in Theorem~\ref{theorem:nlogin}, in the
same manner as $\alg{1}$ solved a more general problem than the one
stated in Theorem~\ref{theorem:nlogn}.  Namely, every algorithm
$\alg{\depth}$ gets as input a connected graph $G$, an integer $k$ and
a subset of vertices $S_0$ such that $|S_0|\leq 4k+3$ and $G\setminus
S_0$ is connected, and either concludes that $\tw(G)>k$ or constructs
a tree decomposition of width at most $5k+4$ with $S_0$ as the root
bag.  The running time of $\alg{\depth}$ is $O(c^k\cdot n
\log^{(\depth)} n)$ for some $c \in \mathbb{N}$; hence, in order to
prove Theorem~\ref{theorem:nlogin} we can again apply $\alg{\depth}$
to every connected component of $G$ separately, using $S_0=\emptyset$.

The algorithms $\alg{\depth}$ are constructed inductively; by that we
mean that $\alg{\depth}$ will call $\alg{\depth-1}$, which again will
call $\alg{\depth-2}$, and all the way until $\alg{1}$, which was
given in the previous section.  Let us remark that a closer
examination of our algorithms in fact shows that the constants $c$ in
the bases of the exponents of consecutive algorithms can be bounded by
some universal constant.  However, of course the constant factor
hidden in the $O$-notation depends on $\depth$.

In the following we present a quick outline of what will be given in
this section.  For $\depth = 1$, we refer to the previous section, and
for $\depth > 1$, $\alg{\depth}$ and $\compress{\depth}$ are described
in this section, in addition to the subroutine $\findPTD$.


\begin{itemize}
\item $\alg{\depth}$ takes as input a graph $G$, an integer $k$ and a
  vertex set $S_0$ with similar assumptions as in the previous
  section, and returns a tree decomposition $\td$ of $G$ of width at
  most $5k+4$ with $S_0$ as the root bag.  The algorithm is almost
  exactly as $\alg{1}$ given as Algorithm~\ref{alg:nlogn-alg-1},
  except that it uses $\compress{\depth}$ for the compression step.
\item $\compress{\depth}$ is an advanced version of $\compress{1}$
  (see Algorithm~\ref{alg:nlogn-compress}), it allows $S_0$ to be of
  size up to $4k+3$ and gives a tree decomposition of width at most
  $5k+4$ in time $O(c^k\cdot n\log^{(\alpha)}n)$.  It starts by
  initializing the data structure, and then it calls $\findPTD$, which
  returns a tree decomposition $\td'$ of an induced subgraph $G'
  \subseteq G$.  The properties of $G'$ and $\td'$ are as follows.
  All the connected components $C_1, \dots C_p$ of $G\setminus V(G')$
  are of size less than $\log n$.  Furthermore, for every connected
  component $C_j$, the neighborhood $N(C_j)$ in $G$ is contained in a
  bag of $\td'$.  Intuitively, this ensures that we are able to
  construct a tree decomposition of $C_j$ and attach it to $\td'$
  without blowing up the width of $\td'$.  More precisely, for every
  connected component $C_j$, the algorithm constructs the induced
  subgraph $G_j = G[C_j \cup N(C_j)]$ and calls $\alg{\depth-1}$ on
  $G_j$, $k$, and $N(C_j)$. The size of $N(C_j)$ will be bounded by
  $4k+3$, making the recursion valid with respect to the invariants of
  $\alg{\alpha}$. If this call returned a tree decomposition $\td_j$
  with a root bag $N(C_j)$, we can conveniently attach $\td_j$ to
  $\td'$; otherwise we conclude that $\tw(G[C_j \cup N(C_j)])>k$ so
  $\tw(G)>k$ as well.
\item $\findPTD$ differs from $\findTD$ in two ways.  First, we use
  the fact that when enumerating the components separated by the
  separator using query $\qpin$, these components are identified in
  the descending order of cardinalities.  We continue the construction
  of partial tree decomposition in the identified components only as
  long as they are of size at least $\log n$, and we terminate the
  enumeration when we encounter the first smaller component.  It
  follows that all the remaining components are smaller then $\log n$;
  these remainders are exactly the components $C_1, \dots C_p$ that
  are left not decomposed by $\alg{\depth}$, and on which
  $\alg{\depth-1}$ is run.
  
  The other difference is that the data structure has a new
  \emph{flag}, $\whatsep$, which is set to either $u$ or $s$ and is
  alternated between calls.  If $\whatsep = s$, we use the same type
  of separator as $\findTD$ did, namely $\qSsep$, but if $\whatsep =
  u$, then we use the (new) query $\qUsep$.  Query $\qUsep$, instead
  of giving a balanced $S$-separator, provides a
  $\frac{8}{9}$-balanced $U$-separator, that is, a separator that
  splits the whole set $U$ of vertices to be decomposed in a balanced
  way.  Using the fact that on every second level of the decomposition
  procedure the whole set of available vertices shrinks by a constant
  fraction, we may for example observe that the resulting partial tree
  decomposition will be of logarithmic depth.  More importantly, it
  may be shown that the total number of constructed bags is at most
  $O(n / \log n)$ and hence we can spend $O(c^k\cdot \log n)$ time
  constructing each bag and still obtain running time linear in $n$.
\end{itemize}

In all the algorithms that follow we assume that the cardinality of
the edge set is at most $k$ times the cardinality of the vertex set,
because otherwise we may immediately conclude that treewidth of the
graph under consideration is larger than $k$ and terminate the
algorithm.

\subsection{The main procedure $\alg{\depth}$}

The procedure $\alg{\depth}$ works exactly as $\alg{1}$, with the
exception that it applies Lemma~\ref{lem:prelim:bodlaender} for
parameter $5k+4$ instead of $3k+4$, and calls recursively
$\alg{\depth}$ and $\compress{\depth}$ instead of $\alg{1}$ and
$\compress{1}$.  The running time analysis is exactly the same, hence
we omit it here.

\subsection{Compression algorithm}

The following lemma explains the behavior of the compression algorithm
$\compress{\depth}$.

\begin{lemma}
  \label{lemma:nlogin-compression}
  For every integer $\depth\geq 1$ there exists an algorithm, which on
  input $G,k,S_0,\td_\apx$, where (i) $S_0\subseteq V(G)$, $|S_0|\leq
  4k+3$, (ii) $G$ and $G\setminus S_0$ are connected, and (iii)
  $\td_\apx$ is a tree decomposition of $G$ of width at most $O(k)$,
  in $O(c^k\cdot n \log^{(\depth)} n)$ time for some $c \in
  \mathbb{N}$ either computes a tree decomposition $\td$ of $G$ with
  $\w(\td) \leq 5k+4$ and $S_0$ as the root bag, or correctly
  concludes that $\tw(G)>k$.
\end{lemma}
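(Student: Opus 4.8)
The plan is to prove the lemma by induction on $\depth$. The base case $\depth=1$ is essentially Lemma~\ref{lemma:nlogn-compression} (the $\compress{1}$ algorithm), with the only caveat that here we allow $|S_0|\le 4k+3$ and produce width at most $5k+4$; the bookkeeping in Section~\ref{section:findTDlogn} goes through verbatim with $2k+3$ replaced by $4k+3$ and $3k+4$ by $5k+4$, since a $\frac{1}{2}$-balanced $S$-separator of size $\le k+1$ together with the pin gives $|\sep|\le k+2$, so bags have size $\le(4k+3)+(k+2)=5k+5$ and recursive sets $S$ keep size $\le\frac12(4k+3)+(k+2)\le 4k+3$. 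For the inductive step I would assume $\alg{\depth-1}$ exists and runs in $O(c^k n\log^{(\depth-1)}n)$ time (this is exactly the statement of Theorem~\ref{theorem:nlogin} for $\depth-1$, obtained from Lemma~\ref{lemma:nlogin-compression} for $\depth-1$ via the recursive scheme of $\alg{\depth}$), and build $\compress{\depth}$ as follows.

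First I would initialize the data structure $\ds$ with $G,k,S_0,\td_\apx$ in $O(c^k n)$ time (Lemma~\ref{lemma:ds:init-time}), then invoke the subroutine $\findPTD$. The key claim to establish is that $\findPTD$ returns, in $O(c^k\cdot (n/\log n)\cdot\log n)=O(c^k n)$ time, either the verdict $\tw(G)>k$, or a tree decomposition $\td'$ of an induced subgraph $G'\subseteq G$ of width at most $5k+4$ with $S_0$ as root bag, such that every connected component $C_j$ of $G\setminus V(G')$ has $|C_j|<\log n$ and $N(C_j)$ is contained in a single bag of $\td'$ with $|N(C_j)|\le 4k+3$. This claim has two ingredients that I would prove as separate sub-claims. (a) \emph{Size of $G'$:} because $\findPTD$ alternates the flag $\whatsep$ between $s$ and $u$, on every level with $\whatsep=u$ the query $\qUsep$ produces a $\frac{8}{9}$-balanced $U$-separator, so $|U|$ drops by a factor $\frac89$ every two levels; combined with the stopping rule (abandon components of size $<\log n$), the number of bags created — each carrying a freshly added pin, hence charged to a distinct vertex — is $O(n/\log n)$. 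The separator-size accounting must be checked: on $u$-levels we add a set of size $\le k+1$ (plus the pin), and the $S$-separator bound on the following level must still keep $|S|\le 4k+3$; this is exactly why the width budget is $5k+4$ rather than $3k+4$, and it is the main place the constants need care. (b) \emph{Neighborhood containment:} on $s$-levels $\findPTD$ uses $\qSsep$, a $\frac12$-balanced $S$-separator, and the argument from the ``Invariants'' paragraph of Section~\ref{section:findTDlogn} shows $N(C_j)\cap S$ stays within one component, giving $|N(C_j)|\le 4k+3$; $N(C_j)$ is then placed (via $\qnei$ and $\build$) into the bag $S\cup\sep$ at the point where $C_j$ was abandoned. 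Correctness of $\td'$ as a tree decomposition of $G'$ follows the induction in Claim~\ref{claim:findtd-nlogn-correct} almost verbatim.

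Having $\td'$, the algorithm loops over the components $C_1,\dots,C_p$, forms $G_j=G[C_j\cup N(C_j)]$, and calls $\alg{\depth-1}$ on $(G_j,k,N(C_j))$; since $|N(C_j)|\le 4k+3$ and $G_j$, $G_j\setminus N(C_j)$ are connected (the latter because $C_j$ is a component of $G\setminus V(G')$), the recursion is valid. If some call reports $\tw(G_j)>k$ then $\tw(G)>k$; otherwise it returns $\td_j$ of width $\le 5k+4$ with root bag $N(C_j)$, which is attached below the bag of $\td'$ containing $N(C_j)$ — a standard glueing that preserves the three tree-decomposition axioms and does not increase the width. For the running time: $\findPTD$ costs $O(c^k n)$ by part (a); the recursive calls cost $\sum_j O(c^k |C_j|\log^{(\depth-1)}|C_j|)$, and since each $|C_j|<\log n$ we have $\log^{(\depth-1)}|C_j|\le\log^{(\depth-1)}(\log n)=\log^{(\depth)}n$, so the sum is $O(c^k\log^{(\depth)}n\cdot\sum_j|C_j|)\le O(c^k n\log^{(\depth)}n)$; the glueing and the outer $\alg{\depth}$-overhead are $O(k^{O(1)}n)$, so the total is $O(c^k n\log^{(\depth)}n)$ as required.

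The step I expect to be the main obstacle is establishing, inside the data structure, that the query $\qUsep$ can be answered in $O(c^k\log n)$ time: unlike $\qSsep$, a balanced $U$-separator needs the component-size trick of Section~\ref{sec:firstCompress} (locating a separation $(L_0,X_0,R_0)$ with $\max(|L_0|,|R_0|)\le\frac34|U|$ and then doing the two-sided max-flow-style dynamic programming), and turning that into a logarithmic-time dynamic query on the approximate decomposition — rather than a linear-time pass — is genuinely delicate; I would defer the full construction to Section~\ref{sec:queries} and here only cite the interface guarantee that $\qUsep$ returns a $\frac89$-balanced $U$-separator of size $\le k+1$ (or $\bot$) in $O(c^k\log n)$ time. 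The secondary obstacle is purely arithmetic: verifying that the alternation of $u$- and $s$-separators, each of size $\le k+1$ plus a pin, never pushes a root set $S$ above $4k+3$ nor a bag above $5k+5$; this dictates the $5k+4$ bound and must be done carefully but is routine.
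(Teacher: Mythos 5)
Your plan matches the paper's approach: simultaneous induction on $\depth$ with $\alg{\depth-1}$, the base case $\compress{1}$ under the relaxed bound $|S_0|\le 4k+3$ (with the same $\frac12(4k+3)+(k+2)\le 4k+3$ bookkeeping), the inductive step via $\findPTD$ followed by calls to $\alg{\depth-1}$ on the small abandoned components $G_j$, and deferral of the $\qUsep$ implementation to the data structure section.

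One inaccuracy is worth flagging: you justify the $O(n/\log n)$ bag bound for $\td'$ by charging each bag to ``a freshly added pin, hence a distinct vertex.'' That charging is the mechanism used in $\findTD$, where the pin is deliberately inserted into the separator so each recursion consumes a new vertex --- but it only yields $O(n)$ bags. In $\findPTD$ the pin is intentionally \emph{not} added to the separator, and the paper instead proves a general measure-counting lemma (Lemma~\ref{lemma:bounded-tree}) on rooted trees: a measure $\mu$ with $\mu\ge 1$, subadditive over children, and geometrically decaying along edges by a factor $C<1$ forces $|V(T)|=O(\mu(r))$. Taking $\mu(i)=w_i/\log n$, property (i) comes from your stopping rule, (ii) from disjointness of the recursed-on components, and (iii) from the $\frac89$-balance of $\qUsep$ on every other recursion level; the paper then partitions the tree nodes into $u$-leaf, $u$-internal, and $s$ classes to handle the alternation and obtain $|V(\td')|\le 42n/\log n$. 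Your balancing-plus-stopping-rule intuition is exactly what drives that lemma, but the per-pin charge needs to be replaced by this measure argument.
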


The outline of the algorithm $\compress{\depth}$ for $\depth > 1$ is
given as Algorithm~\ref{alg:nlogin-compress-i}.  Having initialized
the data structure using $\td_\apx$, the algorithm asks $\findPTD$ for
a partial tree decomposition $\td'$, and then the goal is to decompose
the remaining small components and attach the resulting tree
decompositions in appropriate places of $\td'$.

First we traverse $\td'$ in linear time and store information on where
each vertex appearing in $\td'$ is forgotten in $\td'$.  More
precisely, we compute a map $\forg: V(G) \to V(\td')\cup \{\bot\}$,
where for every vertex $v$ of $G$ we either store $\bot$ if it is not
contained in $\td'$, or we remember the top-most bag $B_i$ of $\td'$
such that $v\in B_i$ (the connectivity requirement of the tree
decomposition ensures that such $B_i$ exists and is unique).  The map
$\forg$ may be very easily computed via a DFS traversal of the tree
decomposition: when accessing a child node $i$ from a parent $i'$, we
put $\forg(v)=i$ for each $v\in B_i\setminus B_{i'}$.  Moreover, for
every $v\in B_r$, where $r$ is the root node, we put $\forg(v)=r$.
Clearly, all the vertices not assigned a value in $\forg$ in this
manner, are not contained in any bag of $\td'$, and we put value
$\bot$ for them.  Let $W$ be the set of vertices contained in $\td'$,
i.e., $W=\bigcup_{i\in V(\td')} B_i$.

Before we continue, let us show how the map $\forg$ will be used.
Suppose that we have some set $Y\subseteq W$, and we have a guarantee
that there exists a node $i$ of $\td'$ such that $B_i$ contains the
whole $Y$.  We claim the following: then one of the bags associated
with $\forg(v)$ for $v\in Y$ contains the whole $Y$.  Indeed, take the
path from $i$ to the root of the tree decomposition $\td'$, and
consider the last node $i'$ of this path whose bag contains the whole
$Y$.  It follows that $i'=\forg(v)$ for some $v\in Y$ and $Y\subseteq
B_{i'}$, so the claim follows.  Hence, we can locate the bag
containing $Y$ in $O(k^{O(1)}\cdot |Y|)$ time by testing each of $|Y|$
candidate nodes $\forg(v)$ for $v\in Y$.

The next step of the algorithm is locating the vertices which has not
been accounted for, i.e., those assigned $\bot$ by $\forg$.  The
reason each of these vertices has not been put into the tree
decomposition, is precisely because the size of its connected
component $C$ of $G\setminus W$, is smaller than $\log n$.  The
neighborhood of this component in $G$ is $N(C)$, and this neighborhood
is guaranteed to be of size at most $4k+3$ and contained in some bag
of $\td'$ (a formal proof of this fact will be given when presenting
the algorithm $\findPTD$, i.e., in Lemma~\ref{lemma:findPTD}).

Let $C_1,C_2,\ldots,C_p$ be all the connected components of $G\setminus W$,
i.e., the connected components \emph{outside} the obtained partial
tree decomposition $\td'$.  To complete the partial tree decomposition
into a tree decomposition, for every connected component $C_j$, we
construct a graph $G_j=G[C_j \cup N(C_j)]$ that we then aim to
decompose.  These graphs may be easily identified and constructed in
$O(k^{O(1)}\cdot n)$ time using depth-first search as follows.

We iterate through the vertices of $G$, and for each vertex $v$ such
that $\forg(v)=\bot$ and $v$ was not visited yet, we apply a
depth-first search on $v$ to identify its component $C$.  During this
depth-first search procedure, we terminate searching and return from a
recursive call whenever we encounter a vertex from $W$.  In this
manner we identify the whole component $C$, and all the visited
vertices of $W$ constitute exactly $N(C)$.  Moreover, the edges
traversed while searching are exactly those inside $C$ or between $C$
and $N(C)$.  To finish the construction of $G_j$, it remains to
identify edges between vertices of $N(C)$.  Recall that we have a
guarantee that $N(C)\subseteq W$ and $N(C)$ is contained in some bag
of $\td'$.  Using the map $\forg$ we can locate some such bag in
$O(k^{O(1)})$ time, and in $O(k^{O(1)})$ time check which vertices of
$N(C)$ are adjacent in it, thus finishing the construction of $G_j$.
Observe that during the presented procedure we traverse each edge of
the graph at most once, and for each of at most $n$ components $C$ we
spend $O(k^{O(1)})$ time on examination of $N(C)$.  It follows that
the total running time is $O(k^{O(1)}\cdot n)$.


Having constructed $G_j$, we run the algorithm $\alg{\depth-1}$ on
$G_j$ using $S_0=N(C_j)$.  Note that in this manner we have that both
$G_j$ and $G_j\setminus S_0$ are connected, which are requirements of
the algorithm $\alg{\depth-1}$.  If $\alg{\depth-1}$ concluded that
$\tw(G_j)>k$, then we can consequently answer that $\tw(G)>k$ since
$G_j$ is an induced subgraph of $G$.  On the other hand, if
$\alg{\depth-1}$ provided us with a tree decomposition $\td_j$ of
$G_j$ having $N(C_j)$ as the root bag, then we may simply attach this
root bag as a child of the bag of $\td'$ that contains the whole
$N(C_j)$.  Any such bag can be again located in $O(k^{O(1)})$ time
using the map $\forg$.

\begin{algorithm}
  \KwIn{Connected graph $G$, $k \in \mathbb{N}$, a set $S_0$ s.t.
    $|S_0|\leq 4k+3$ and $G\setminus S_0$ is connected, and a tree decomposition
    $\td_\apx$ with $\w(\td_\apx) \leq O(k)$}
  
  \KwOut{Tree decomposition of $G$ of width at most $5k + 4$ with
    $S_0$ as the root bag, or conclusion that $\tw(G) > k$.}
  
  \Indp \BlankLine
  Initialize data structure $\ds$ with $G,k,S_0,\td_\apx$\\
  
  $\td' \leftarrow \findPTD()$\\
  \If{$\td' = \false$}{ \KwRet{$\false$} }
  
  Create the map $\forg: V(G) \to  V(\td')$ using a DFS traversal of $\td'$\\
  
  Construct components $C_1,C_2,\ldots,C_p$ of $G\setminus W$, and graphs $G_j=G[C_j\cup N(C_j)]$ for $j=1,2,\ldots,p$\\
  
  \BlankLine \For{$j=1,2,\ldots,p$}{
    
    $\td_{j} \leftarrow \alg{\depth - 1}$ on $G_j, k, N(C_j)$\\
    
    \If{$\td_{j} = \false$}{
      
      \KwRet{$\false$}
      
    }
    
    Locate a node $i$ of $\td'$ s.t.  $N(C_j)\subseteq B_i$, by checking
    $\forg(v)$ for each $v\in N(C_j)$\\
    Attach the root of $\td_{j}$ as a child of $i$ }
  
  \KwRet{$\td'$}
  
  \caption{$\compress{\depth}$}
  \label{alg:nlogin-compress-i}
\end{algorithm}

\subsubsection{Correctness and complexity}
In this section we prove Lemma~\ref{lemma:nlogin-compression} and
Theorem~\ref{theorem:nlogin}, and we proceed by induction on $\depth$.
To this end we will assume the correctness of
Lemma~\ref{lemma:findPTD}, which will be proved later, and which
describes behavior of the subroutine $\findPTD()$.

For the base case, $\depth = 1$, we use $\compress{1}$ given as
Algorithm~\ref{alg:nlogn-compress}.  When its correctness was proved
we assumed $|S_0| \leq 2k+3$ and this is no longer the case.  However,
if $\alg{1}$ is applied with $|S_0| \leq 4k+3$ it will conclude that
$\tw(G) > k$ or give a tree decomposition of width at most $5k+4$.
The reason is as follows; Assume that $\findTD$ is applied with the
invariant $|S| \leq 4k+3$ instead of $2k+3$.  By the same argument as
in the original proof this invariant will hold, since
$\frac{1}{2}(4k+3)+k+2 \leq 4k+3$.  The only part of the correctness
(and running time analysis) affected by this change is the width of
the returned decomposition, and when the algorithm adds the separator
to $S$ it creates a bag of size at most $(4k+3) + (k+2) = 5k+5$ and
hence our argument for the base case is complete.  For the induction
step, suppose that the theorem and lemma hold for $\depth - 1$.  We
show that $\compress{\depth}$ is correct and runs in $O(c^k \cdot n
\log^{(\depth)} n)$ time.  This immediately implies correctness and
complexity of $\alg{\depth}$, in the same manner as in
Section~\ref{section:nlogn}.


To prove correctness of $\compress{\depth}$, suppose that $\td'$ is a
valid tree decomposition for some $G' \subseteq G$ that we have
obtained from $\findPTD$.  Observe that if $\td'=\false$, then $\tw(G)
> k$ by Lemma~\ref{lemma:findPTD}.  Otherwise, let $C_1, \dots, C_p$
be the connected components of $G\setminus W$, and let $G_j=G[C_j\cup
N(C_j)]$ for $j=1,2,\ldots,p$.  Let $\td_j$ be the tree decompositions
obtained from application of the algorithm $\alg{\depth-1}$ on graphs
$G_j$.  If $\td_j=\bot$ for any $j$, we infer that $\tw(G_j)>k$ and,
consequently, $\tw(G)>k$.  Assume then that for all the components we
have indeed obtained valid tree decompositions, with $N(C_j)$ as root
bags.  It can be easily seen that since $N(C_j)$ separates $C_j$ from
the rest of $G$, then attaching the root of $\td_j$ as a child of any
bag containing the whole $N(C_j)$ gives a valid tree decomposition;
the width of this tree decomposition is the maximum of widths of
$\td'$ and $\td_j$, which is at most $5k+3$.  Moreover, if we perform
this operation for all the components $C_j$, then all the vertices and
edges of the graph will be contained in some bag of the obtained tree
decomposition.

We now proceed to the time complexity of $\compress{\depth}$.  The
first thing done by the algorithm is the initialization of the data
structure and running $\findPTD$ to obtain $\td'$.  Application of
$\findPTD$ takes $O(c^k n)$ time by Lemma~\ref{lemma:findPTD}, and so
does initialization of the data structure (see
Section~\ref{section:datastructure}).  As discussed, creation of the
$\forg$ map and construction of the graphs $G_j$ takes
$O(k^{O(1)}\cdot n)$ time.


Now, the algorithm applies $\alg{\depth-1}$ to each graph $G_j$.  Let
$n_j$ be the number of vertices of $G_j$.  Note that
\[
\sum_{j=1}^p n_j= \sum_{j=1}^p |C_j|+\sum_{j=1}^p |N(C_j)|\leq n+p\cdot
(4k+3)\leq (5k+3)n.
\]
Moreover, as $n_j\leq \log n + (4k+3)$, it follows from concavity of $t
\to \log^{(\depth-1)} t$ that
\[
\log^{(\depth-1)} n_j \leq \log^{(\depth-1)}(\log n + (4k+3)) \leq
\log^{(\depth)} n + \log^{(\depth-1)} (4k+3).
\]

By the induction hypothesis, the time complexity of $\alg{\depth-1}$
on $G_j$ is $O(c^k \cdot n_j \log^{(\depth-1)} n_j)$, hence we spend
$O(c^k \cdot n_j \log^{(\depth-1)}n_j)$ time for $G_j$.  Attaching
each decomposition $\td_j$ to $\td'$ can be done in $O(k^{O(1)})$
time.

Let $C_\depth$ denote the complexity of $\compress{\depth}$ and
$A_\depth$ the complexity of $\alg{\depth}$.  By applying the
induction hypothesis we analyze the complexity of $\compress{\depth}$
(we use a constant $c_1>c$ to hide polynomial factors depending on
$k$):
\begin{align*}
  C_\depth(n,k) &= O(c^k\cdot n) + \sum\limits_{j=1}^{n}A_{\depth-1}(n_j,k)\\
  &= O(c^k\cdot n) + \sum\limits_{j=1}^{p}O(c^k\cdot n_j \log^{(\depth-1)}n_j)\\
  &\leq O(c^k\cdot n) + \sum\limits_{j=1}^{p}O(c^k\cdot n_j (\log^{(\depth)}n + \log^{(\depth-1)} (4k+3)))\\
  &= O(c_1^k\cdot n) + \sum\limits_{j=1}^{p}O(c^k\cdot n_j\log^{(\depth)}n )\\
  &\leq O(c_1^k\cdot n) + (5k+3)n \cdot O(c^k\cdot \log^{(\depth)}n) =
  O(c_1^k\cdot n\log^{(\depth)}n).
\end{align*}

We conclude that $\compress{\depth}$ is both correct and that it runs
in $O(c^k\cdot n\log^{(\depth)}n)$ time for some $c\in \mathbb{N}$.
The correctness and time complexity $O(c^kn\log^{(\depth)}n)$ of
$\alg{\depth}$ follow in the same manner as in the previous section.
And hence our induction step is complete and the correctness of
Lemma~\ref{lemma:nlogn-compression} and Theorem~\ref{theorem:nlogin}
follows.  The only assumption we made was that of the correctness of
Lemma~\ref{lemma:findPTD}, which will be given immediately.

\subsection{The algorithm $\findPTD$}

The following lemma describes behavior of the subroutine $\findPTD$.

\begin{lemma}
  \label{lemma:findPTD}
  There exists an algorithm that, given data structure $\ds$ in a
  state such that $|S| \leq 4k+3$ if $\whatsep = s$ or $|S| \leq 3k+2$
  if $\whatsep = u$, in time $O(c^k n)$ either concludes that $\tw(G[U
  \cup S]) > k$, or give a tree decomposition $\td'$ of $G' \subseteq
  G[U \cup S]$ such that
  \begin{itemize}
  \item the width of the decomposition is at most $5k+4$ and $S$ is
    its root bag;
  \item for every connected component $C$ of $G[U \cup S]\setminus V(G')$, the
    size of the component is less than $\log n$, its neighborhood is
    of size at most $4k+3$, and there is a bag in the decomposition
    $\td'$ containing this whole neighborhood.
  \end{itemize}
\end{lemma}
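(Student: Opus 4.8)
\textbf{Overall approach.} The plan is to build $\findPTD$ as a variant of $\findTD$ (Algorithm~\ref{alg:nlogn-findtd}) with two modifications, exactly as foreshadowed in the outline. First, the separator query alternates: on levels where the data-structure flag $\whatsep=s$ we call $\qSsep$ to obtain a $\frac12$-balanced $S$-separator (as in $\findTD$); on levels where $\whatsep=u$ we call $\qUsep$ to obtain an $\frac89$-balanced $U$-separator. Second, when we enumerate the components of $G[S\cup U]\setminus(S\cup\sep)$ via repeated $\qpin$ calls — which, crucially, return the components in \emph{descending} order of size together with that size — we stop the first time a returned component has size $<\log n$, and we do \emph{not} recurse into that component or any later (hence smaller) one. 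The recursion otherwise proceeds exactly as in $\findTD$: add $\sep$ (and the old pin) to $X$, collect one pin per component via $\qpin$, then for the large components set $S\leftarrow S\cap N(U)$ (via $\qnei$) and recurse, and finally glue the returned decompositions under the two new bags $B=S$, $B'=S\cup\sep$ using $\build$. The claimed output $\td'$ is the tree decomposition of the induced subgraph $G'$ spanned by all bags that actually get created; the undecomposed components are precisely the small ones we skipped.

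\textbf{Key steps, in order.}
(1) \emph{Invariants on $|S|$ and the pin.} Argue, exactly as in the $\findTD$ correctness paragraph, that the invariant $|S|\le 4k+3$ is maintained: when $\whatsep=s$, recursing gives $|S|\le\frac12(4k+3)+(k+2)\le 4k+3$; when $\whatsep=u$ the precondition is $|S|\le 3k+2$ and the $U$-separator has size $\le k+1$ (plus the pin), and one checks the resulting $N(U)$ bag still fits in $4k+3$ and that on the next (alternated) level the relevant bound holds. Here I will need to be slightly careful about how $\whatsep$ alternates relative to the two different size preconditions in the lemma statement — this bookkeeping is the fussiest part of the argument. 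Also recheck $\pin\notin S$ and the ``state restored on exit'' invariant verbatim from $\findTD$.
(2) \emph{Correctness of $\td'$ as a tree decomposition of $G'$.} This is the induction of Claim~\ref{claim:findtd-nlogn-correct} almost verbatim: bags have size $|S|+|\sep|\le(4k+3)+(k+2)=5k+5$, so width $\le 5k+4$; every vertex/edge of $G'$ lies in a bag; the subtree condition holds because each recursive subtree attaches under $B'$ which contains $N(C_i)$. The \emph{only} new point is that now some components are not decomposed, so $G'$ is a proper induced subgraph of $G[S\cup U]$ — but the pieces we \emph{do} build still assemble correctly since $N(C_i)\subseteq S\cup\sep=B'$.
(3) \emph{Properties of the leftover components.} Each skipped component $C$ of $G[S\cup U]\setminus V(G')$ satisfies $|C|<\log n$ by the stopping rule and the descending-size guarantee of $\qpin$. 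For the neighborhood: $C$ sits inside some component $C'$ of $G[S\cup U]\setminus\sep$ at the level where it was skipped, and because the separator used there is balanced, $N(C)\cap S$ has size at most $\frac12|S|$ (or $\frac89|S|$ for a $U$-separator), while $N(C)\setminus S\subseteq\sep$; combining gives $|N(C)|\le 4k+3$. Finally $N(C)$ is fully contained in the bag $B'=S\cup\sep$ created at that level, which is a bag of $\td'$.
(4) \emph{If $\qSsep$/$\qUsep$ ever returns $\bot$}, conclude $\tw(G[S\cup U])>k$ by Lemma~\ref{lemma:halfhalf} (for $\qSsep$) or by the guarantee of $\qUsep$ (which, per the outline in Section~\ref{sec:firstCompress}, is established via Lemma~\ref{lem:c4vc}); propagate this up, restoring state before returning as in $\findTD$.

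\textbf{Running time — the main obstacle.} The nontrivial claim is the total $O(c^k n)$ bound, which rests on showing the number of constructed bags is $O(n/\log n)$; each bag costs $O(c^k\log n)$ data-structure time (the same charging scheme as Claim~\ref{claim:findtd-nlogn-complexity}: charge $B'$ and each root bag $R_i$ of a recursive subtree), and $O(n/\log n)\cdot O(c^k\log n)=O(c^k n)$. To bound the number of bags: every recursive call creates two bags and ``consumes'' its pin by moving it into $S$, so it suffices to bound by $O(n/\log n)$ the number of recursive calls, equivalently the number of nodes of the recursion tree. Charge leaves: a leaf call either has $U\subseteq S\cup\sep$ (a genuine base case) or all its children components have size $<\log n$; in the latter case the call's active component $U$ — minus $S\cup\sep$ — is a disjoint union of at least one piece, but that does not immediately give $|U|\ge\log n$. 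The real point is the $\whatsep=u$ levels: because every $U$-separator is $\frac89$-balanced, along any root-to-leaf path the size of the active set $|U|$ drops by a factor $\le\frac89$ on each $u$-level, so after $O(\log n)$ levels $|U|<\log n$ and recursion must have stopped; hence the recursion tree has depth $O(\log n)$, and since at internal nodes the active sets of the children are disjoint subsets of the parent's active set (minus separator), a node at depth $d$ has active set of size $\le(8/9)^{\lfloor d/2\rfloor}n$. The number of internal nodes at which we still recurse is then controlled by the fact that each such node's active component has size $\ge\log n$ and the active components at a fixed depth are vertex-disjoint, giving $\le n/\log n$ nodes per level and $O(\log n)$ levels — which is $O(n)$, not yet $O(n/\log n)$. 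The fix, and the genuinely delicate part, is to charge more cleverly: each \emph{leaf} of the recursion tree that is a true base case ``uses up'' a distinct chunk of $\ge$ some vertices, and each internal node has $\ge 2$ children, so the number of internal nodes is at most the number of leaves; and the number of leaves is $O(n/\log n)$ because the skipped-components at the leaves partition a large portion of $V$ into pieces each still of size comparable to — no: each skipped component is \emph{small}. I would instead follow the outline's own phrasing: the $U$-separator levels guarantee that the set $V(G')$ of vertices that ever get placed into a bag has size $O(n/\log n)$ (each bag introduces a fresh vertex, and the geometric shrinking of active components forces the recursion to terminate after total bag-count $O(n/\log n)$), and combine with the per-bag cost $O(c^k\log n)$. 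Pinning down this counting argument rigorously — reconciling ``each bag is one new vertex'' with ``active components shrink geometrically so recursion stops early'' to get the product $O(c^k n)$ rather than $O(c^k n\log n)$ — is where the proof will require the most care, and is exactly the place where the alternation with $\qUsep$ (rather than $\qSsep$ alone) is essential.
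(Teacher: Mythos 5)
Your structural plan — alternate $\qSsep$ and $\qUsep$, stop at components of size $<\log n$, glue subtrees under $B'=S\cup\sep$ — matches the paper. But two of the load-bearing steps do not close, and a third is wrong in a minor way.

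\textbf{The $|S|$ invariant does not close as written.} You compute the new $S$ on an $s$-level as $\frac12(4k+3)+(k+2)$, i.e.\ you add the pin to $\sep$ just as $\findTD$ does. That gives $|S|\le 3k+3$, which violates the $\whatsep=u$ precondition $|S|\le 3k+2$ that the lemma statement (and the next $u$-level) requires. The paper's $\findPTD$ deliberately \emph{omits} the $\insertX(\pin)$ step, so $|\sep|\le k+1$ and the bound becomes $\frac12(4k+3)+(k+1)=3k+\frac52$, hence $|S|\le 3k+2$. This is not cosmetic: the whole reason $\findTD$ inserts $\pin$ into $\sep$ is to make the easy ``each call retires a pin, so $\le 2n$ bags'' argument go through; $\findPTD$ cannot afford that trick (it would blow the bag-size budget), and that is exactly why the bag-count requires a different argument. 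Also, your bound on $N(C)$ at $u$-levels (``at most $\frac89|S|$'') misreads what $\qUsep$ promises — it is $\frac89$-balanced with respect to $|U|$, not $|S|$. The correct bound there is simply $|N(C)|\le |S|+|\sep|\le(3k+2)+(k+1)=4k+3$.

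\textbf{The bag-counting argument is missing, not just rough.} You correctly identify that a naive per-level count gives $O(n/\log n)$ nodes per level times $O(\log n)$ levels $= O(n)$, and that this is not enough; and you note that the ``each bag is a fresh vertex'' counting is unavailable. But the repair you gesture at (``geometric shrinking forces the recursion to terminate after total bag-count $O(n/\log n)$'') is the claim, not an argument. The paper proves this with a dedicated combinatorial lemma (Lemma~\ref{lemma:bounded-tree}): for a rooted tree with a measure $\mu\ge 1$ that is subadditive on children and geometrically decreasing along edges, $|V(T)|\le\bigl(1+\frac1{1-C}\bigr)\mu(r)-1$. It is then applied to a carefully chosen subset of the bags. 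Concretely, the paper splits the $\oldS$-bags $I_\sm$ (which are half of all bags) into $I^s_\sm$, $I^{u,\inte}_\sm$, $I^{u,\leaf}_\sm$: the $u$-leaf bags are $\le n/\log n$ because their ``below'' vertex sets are disjoint and each of size $\ge\log n$; the $u$-internal bags inject into the $s$-bags via a grandchild map; and the $s$-bags form a tree on which Lemma~\ref{lemma:bounded-tree} applies with $\mu(i)=w_i/\log n$ and $C=\frac89$, giving $|I^s_\sm|\le 10n/\log n$. None of this three-way case split or the bounded-tree lemma appears in your plan; without it you have no proof that the bag count is $o(n)$, and hence no proof of the $O(c^kn)$ running time. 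Since you yourself flag this as the ``genuinely delicate part'' and then do not supply it, the running-time half of the lemma is unproved.
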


The pseudocode of $\findPTD$ is presented as
Algorithm~\ref{alg:nlogin-findptd}.  The algorithm proceeds very
similarly to the subroutine $\findTD$, given in
Section~\ref{section:nlogn}.  The main differences are the following.
\begin{itemize}
\item We alternate usage of $\qSsep$ and $\qUsep$ between the levels
  of the recursion to achieve that the resulting tree decomposition is
  also balanced.  A special flag in the data structure, $\whatsep$,
  that can be set to $s$ or $u$, denotes whether we are currently
  about to use $\qSsep$ or $\qUsep$, respectively.  When initializing
  the data structure we set $\whatsep=s$, so we start with finding a
  balanced $S$-separator.
\item When identifying the next components using query $\qpin$, we
  stop when a component of size less than $\log n$ is discovered.  The
  remaining components are left without being decomposed.
\end{itemize}
The new query $\qUsep$, provided that we have the data structure with
$S$ and $\pin$ distinguished, gives a $\frac{8}{9}$-balanced separator
of $U$ in $G[U]$ of size at most $k+1$.  That is, it returns a subset
$Y$ of vertices of $U$, with cardinality at most $k+1$, such that
every connected component of $G[U]\setminus Y$ has at most
$\frac{8}{9}|U|$ vertices.  If such a separator cannot be found (which
is signalized by $\bot$), we may safely conclude that $\tw(G[U])>k$
and, consequently $\tw(G)>k$.  The running time of query $\qUsep$ is
$O(c^k\cdot \log n)$.

We would like to remark that the usage of balanced $U$-separators make
it not necessary to add the pin to the obtained separator.  Recall
that this was a technical trick that was used in
Section~\ref{section:nlogn} to ensure that the total number of bags of
the decomposition was linear.

\begin{algorithm}[h!]
  \KwData{Data structure $\ds$} \KwOut{Partial tree decomposition of
    width at most $k$ of $G[S \cup U]$ with $S$ as root bag or conclusion
    that $\tw(G) > k$.}
\Indp
  \BlankLine
  $\oldS \leftarrow \ds.\getS()$\\
  $\oldPin \leftarrow \ds.\getPin()$\\
  $\mathtt{old_w} \leftarrow \ds.\whatsep$
  
  \eIf{$\ds.\whatsep = s$}{
    $\sep \leftarrow \ds.\qSsep()$\\
    $\ds.\whatsep \leftarrow u$ }{
    $\sep \leftarrow \ds.\qUsep()$\\
    $\ds.\whatsep \leftarrow s$ }

  \If{$\sep = \false$}{
    $\ds.\whatsep \leftarrow \mathtt{old_w}$\\
    \KwRet $\false$ \hspace{1cm}/*\textsf{\footnotesize{ safe to return: the 
    state not changed } }*/\\
  }
  $\ds.\insertX(\sep)$\\
  $\pins \leftarrow \emptyset$\\

  \While{$(u,l) \leftarrow \ds.\qpin() \neq \false \; \mathrm{and} \; l \geq \log
    n$}{
    $\pins.append(u)$\\
    $\ds.\insertF(u)$\\
  }
  
  $\ds.\clearX()$\\
  $\ds.\clearF()$\\
  $\ds.\insertS(\sep)$\\
  $\bags \leftarrow \emptyset$\\
  \For{$u \in \pins$}{
    $\ds.\setPin(u)$\\
    $\bags.append(\ds.\qnei())$\\
  }
  $\children \leftarrow \emptyset$\\

  \For{$u,b \in \pins,\bags$}{
    $\ds.\setPin(u)$\\
    $\ds.\clearS()$\\
    $\ds.\insertS(b)$\\
    $\children.append(\findPTD())$ }
  
  $\ds.\whatsep \leftarrow \mathtt{old_w}$
  
   $\ds.\clearS()$\\
  $\ds.\insertS(\oldS)$\\
  $\ds.\setPin(\oldPin)$\\
  \If{$\false \in \children$}{
    \KwRet $\false$  \hspace{1cm}/*\textsf{\footnotesize{ postponed because of 
    rollback of $S$ and $\pin$ } }*/\\ }
    \KwRet $\build(\oldS, \sep, \children)$
 
  \caption{$\findPTD$}
  \label{alg:nlogin-findptd}
\end{algorithm}

\subsubsection{Correctness}
The invariants of Algorithm~\ref{alg:nlogin-findptd} are as for
Algorithm~\ref{alg:nlogn-findtd}, except for the size of $S$, in which
case we distinguish whether $\whatsep$ is $s$ or $u$.  In the case of
$s$ the size of $S$ is at most $4k+3$ and for $u$ the size of $S$ is
at most $3k+2$.

If $\whatsep = u$ then, since $|S| \leq 3k+2$ and we add an $U$-separator
of size at most $k+1$ and make this our new $S$, the size of the new
$S$ will be at most $4k+3$ and we set $\whatsep = s$.  For every
component $C$ on which we recurse, the cardinality of its
neighborhood ($S$ at the moment of recursing) is therefore bounded by
$4k+3$.  So the invariant holds when $\whatsep = u$.

We now show that the invariant holds when $\whatsep = s$.  Now
$|\oldS| \leq 4k+3$.  We find $\frac{1}{2}$-balanced $S$-separator
$\sep$ of size at most $k+1$.  When recursing, the new $S$ is the
neighborhood of some component $C$ of $G[\oldS\cup U]\setminus
(\oldS\cup \sep)$ (note that we refer to $U$ before resetting the
pin).  This component is contained in some component $C'$ of $G[\oldS
\cup U]\setminus \sep$, and all the vertices of $\oldS$ adjacent to
$C$ must be contained in $C'$.  Since $\sep$ is a
$\frac{1}{2}$-balanced $\oldS$-separator, we know that $C'$ contains
at most $\frac{1}{2}|\oldS|$ vertices of $\oldS$.  Hence, when
recursing we have that $|S| \leq \frac{1}{2}|\oldS| + |\sep| =
\frac{1}{2}(4k+3)+k+1 = 3k + \frac{5}{2}$ and, since $|S|$ is an
integer, it follows that $|S| \leq 3k+2$.  Hence, the invariant also
hold when $\whatsep = s$.

Note that in both the checks we did not assume anything about the size
of the component under consideration.  Therefore, it also holds for
components on which we do not recurse, i.e., those of size at most
$\log n$, that the cardinalities of their neighborhoods will be
bounded by $4k+3$.

The fact that the constructed partial tree decomposition is a valid
tree decomposition of the subgraph induced by vertices contained in
it, follows immediately from the construction, similarly as in
Section~\ref{section:nlogn}.  A simple inductive argument also shows
that the width of this tree decomposition is at most $5k+4$: at each
step of the construction, we add two bags of sizes at most
$(4k+3)+(k+1) \leq 5k+5$ to the obtained decompositions of the
components, which by inductive hypothesis are of width at most $5k+4$.

Finally, we show that every connected component of $G[S \cup U]\setminus V(G')$
has size at most $\log n$ and that the neighborhood of each of these
connected component is contained in some bag on the partial tree
decomposition $\td'$.  First, by simply breaking out of the loop shown
in Algorithm~\ref{alg:nlogin-findptd} at the point we get a pair
$(\pin, l)$ such that $l < \log n$, we are guaranteed that the
connected component of $G[S \cup U]\setminus \sep$ containing $\pin$
has size less than $\log n$, and so does every other connected
component of $G[S \cup U]$ not containing a vertex from $F$ and which
has not been visited by $\ds.\qpin()$.  Furthermore, since immediately
before we break out of the loop due to small size we add $S \cup \sep$
to a bag, we have ensured that the neighborhood of any such small
component is contained in this bag.  The bound on the size of this
neighborhood has been already argued.


\subsubsection{Complexity}
Finally, we show that the running time of the algorithm is $O(c^k\cdot
n)$.  The data structure operations all take time $O(c^k \log n)$ and
we get the data structure $\ds$ as input.

The following combinatorial lemma will be helpful to bound the number
of bags in the tree decomposition produced by $\findPTD$.  We aim to
show that the tree decomposition $\td'$ contains at most $O(n / \log
n)$ bags, so we will use the lemma with $\mu(i) = w_i / \log n$, where
$i$ is a node in a tree decomposition $\td'$ and $w_i$ is the number
of vertices in $G[U]$ when $i$ is added to $\td'$.  Having proven the
lemma, we can show that the number of bags is bounded by $O(\mu(r)) =
O(n / \log n)$, where $r$ is the root node of $\td'$.

\begin{lemma}
  \label{lemma:bounded-tree}
  Let $T$ be a rooted tree with root $r$.  Assume that we are given a
  measure $\mu:V(T)\to \R$ with the following properties:
  \begin{enumerate}[(i)]
  \item[(i)] $\mu(v)\geq 1$ for every $v\in V(T)$,
  \item[(ii)] for every vertex $v$, let $v_1,v_2,\ldots,v_p$ be its
    children, we have that $\sum_{i=1}^p \mu(v_i)\leq \mu(v)$, and
  \item[(iii)] there exists a constant $0<C<1$ such that for for every
    two vertices $v,v'$ such that $v$ is a parent of $v'$, it holds
    that $\mu(v')\leq C\cdot \mu(v)$.
  \end{enumerate}
  Then $|V(T)|\leq \left(1+\frac{1}{1-C}\right)\mu(r)-1$.
\end{lemma}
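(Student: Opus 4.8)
The plan is a strong induction on $|V(T)|$ in which we peel off the root $r$ and apply the inductive hypothesis to each subtree hanging below it. Abbreviate $A = 1 + \frac{1}{1-C}$, so the goal is $|V(T)| \le A\mu(r) - 1$; note that $0 < C < 1$ gives $\frac{1}{1-C} > 1$, hence $A > 2$, and also $A(1-C) = 2 - C > 1$. First I would dispose of the base case of a one-node tree: there $|V(T)| = 1$, and $1 \le A\mu(r) - 1$ follows from $\mu(r) \ge 1$ (property (i)) together with $A > 2$.

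For the inductive step, let $v_1,\dots,v_p$ be the children of $r$ and $T_1,\dots,T_p$ the subtrees they root. Each $T_i$ has strictly fewer nodes and still satisfies (i)--(iii) with the same constant $C$, so the inductive hypothesis yields $|V(T_i)| \le A\mu(v_i) - 1$. Summing, $|V(T)| = 1 + \sum_{i=1}^p |V(T_i)| \le (1-p) + A\sum_{i=1}^p \mu(v_i) \le (1-p) + A\mu(r)$, where the last inequality is property (ii). When $p \ge 2$ this is at most $A\mu(r) - 1$, and we are done.

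The only real obstacle is that this estimate is exactly one unit too weak when $r$ has a single child, since then $1 - p = 0$ rather than $\le -1$. The fix is to invoke property (iii): for $p = 1$ one instead bounds $|V(T)| \le 1 + (A\mu(v_1) - 1) = A\mu(v_1) \le A C\mu(r)$, and then $A C \mu(r) \le A\mu(r) - 1$ reduces to $A(1-C)\mu(r) \ge 1$, i.e. $(2-C)\mu(r) \ge 1$, which holds since $\mu(r) \ge 1$ and $C < 1$. So the three hypotheses split the work cleanly: (i) handles leaves, (iii) handles degree-one nodes, and (ii) handles branching nodes, which together closes the induction. I do not anticipate further subtleties; the argument is elementary once the case analysis on the out-degree of the root is in place.
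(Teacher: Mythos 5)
Your proof is correct and follows essentially the same route as the paper: induction on $|V(T)|$, splitting into the cases $p\ge 2$ (where property (ii) alone suffices) and $p=1$ (where property (iii) is invoked, and the inequality $A(1-C)\mu(r)=(2-C)\mu(r)\geq 1$ closes the gap). The only difference is cosmetic — you solve the inequality $AC\mu(r)\leq A\mu(r)-1$ directly, whereas the paper rewrites $C(1+\tfrac{1}{1-C})\mu(r)$ as $(1+\tfrac{1}{1-C})\mu(r)-(2-C)\mu(r)$ — but the algebra is identical.
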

\begin{proof}
  We prove the claim by induction with respect to the size of $V(T)$.
  If $|V(T)|=1$, the claim trivially follows from property (i).  We
  proceed to the induction step.
  
  Let $v_1,v_2,\ldots,v_p$ be the children of $r$ and let
  $T_1,T_2,\ldots,T_p$ be subtrees rooted in $v_1,v_2,\ldots,v_p$,
  respectively.  If we apply the induction hypothesis to trees
  $T_1,\ldots,T_p$, we infer that for each $i=1,2,\ldots,p$ we have
  that $|V(T_i)| \leq \left(1+\frac{1}{1-C}\right)\mu(v_i)-1$.  By
  summing the inequalities we infer that:
  \begin{eqnarray*}
    |V(T)|& \leq & 1-p+\left(1+\frac{1}{1-C}\right)\sum_{i=1}^p \mu(v_i).
  \end{eqnarray*}
  We now consider two cases.  Assume first that $p\geq 2$; then:
  \begin{eqnarray*}
    |V(T)|& \leq & 1-2+\left(1+\frac{1}{1-C}\right)\sum_{i=1}^p \mu(v_i)\leq
    \left(1+\frac{1}{1-C}\right)\mu(r)-1,
  \end{eqnarray*}
  and we are done.  Assume now that $p=1$; then
  \begin{eqnarray*}
    |V(T)| & \leq & \left(1+\frac{1}{1-C}\right)\mu(v_1) \leq
    C\left(1+\frac{1}{1-C}\right)\mu(r)\\
    & = & \left(1+\frac{1}{1-C}\right)\mu(r)-(2-C)\mu(r) \leq
    \left(1+\frac{1}{1-C}\right)\mu(r)-1, 
  \end{eqnarray*}
  and we are done as well.
\end{proof}

We now prove the following claim.

\newcommand{\sm}{\textrm{small}}
\newcommand{\lr}{\textrm{large}}
\newcommand{\inte}{\textrm{int}}
\newcommand{\leaf}{\textrm{leaf}}

\begin{claim}~\label{claim:sublinear-bags} The partial tree
  decomposition $\td'$ contains at most $42n / \log n$ nodes.
\end{claim}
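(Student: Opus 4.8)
The plan is to relate the nodes of $\td'$ to the recursive calls of $\findPTD$, to organize those calls into their recursion tree $T$, and then to bound $|V(T)|$ by applying Lemma~\ref{lemma:bounded-tree} not to $T$ itself but to a contracted version of it.

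First, note that every (successful) call of $\findPTD$ terminates by invoking $\build(\oldS,\sep,\children)$, which creates exactly two new bags, and distinct calls create distinct bags; hence $|V(\td')| = 2\,|V(T)|$, where $T$ is the tree of all recursive calls made during the run. When $\td'$ is returned (rather than $\false$) all these calls are successful, since a $\false$ anywhere propagates to the top. Each node of $T$ is either an \emph{$s$-call} (the one that invokes $\qSsep$) or a \emph{$u$-call} (the one that invokes $\qUsep$); because $\whatsep$ is initialized to $s$ and flipped around each batch of recursive calls, the $s$-calls are exactly the even-depth nodes of $T$ and the $u$-calls the odd-depth ones, so the root is an $s$-call and the two kinds alternate along every root-to-leaf path. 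For a node $v$ write $U_v$ for its active component at the moment of the call; if $w$ is a child of $v$ then $U_w$ is a connected component of $G[U_v]\setminus \sep_v$ of size at least $\log n$ (the loop recurses only while $l\ge \log n$), so the active components of the children of $v$ are pairwise disjoint subsets of $U_v$, each of size $\ge \log n$; moreover, if $v$ is a $u$-call then $\sep_v$ is a $\tfrac{8}{9}$-balanced separator of $U_v$, so in that case $|U_w|\le \tfrac{8}{9}|U_v|$ as well.

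Next I would bound the number of $s$-calls. Form the tree $\widehat{T}$ on the set of $s$-calls by dissolving each $u$-call into its parent (which exists and is an $s$-call); then the children of an $s$-call $v$ in $\widehat{T}$ are precisely the $s$-call grandchildren of $v$ in $T$. For such a grandchild $w$, lying below a $u$-child $v'$ of $v$, the observations above give $\sum_{w}|U_w|\le |U_v|$ and $|U_w|\le \tfrac{8}{9}|U_{v'}|\le\tfrac89|U_v|$. Setting $\mu(v)=|U_v|/\log n$: property (i) of Lemma~\ref{lemma:bounded-tree} holds because every non-root node has $|U_v|\ge \log n$ (we may assume the same for the root, for otherwise $T$ is a single node and there is nothing to prove); property (ii) holds since $\sum_{w}|U_w|\le|U_v|$; property (iii) holds with $C=\tfrac89$. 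Lemma~\ref{lemma:bounded-tree} then yields $|V(\widehat{T})|\le \bigl(1+\tfrac{1}{1-8/9}\bigr)\mu(r)-1 = 10\,\mu(r)-1 < 10n/\log n$, i.e.\ there are fewer than $10n/\log n$ $s$-calls. To finish, I would bound the $u$-calls: a $u$-call with at least one child has an $s$-call child, and since every non-root $s$-call has a unique parent, there are at most $(\#s\text{-calls})-1$ such $u$-calls; a $u$-call that is a leaf of $T$ has a nonempty active component of size $\ge\log n$, and the active components of any two leaves are disjoint (they sit under different children of their lowest common ancestor), so there are at most $n/\log n$ leaves in total. Hence the number of $u$-calls is less than $10n/\log n + n/\log n = 11n/\log n$, so $|V(\td')| = 2\,|V(T)| = 2(\#s\text{-calls}+\#u\text{-calls}) < 2\cdot 21n/\log n = 42n/\log n$.

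The main obstacle — and the reason the algorithm alternates $\qSsep$ with $\qUsep$ in the first place — is that $\qSsep$ gives no control over $|U|$: a balanced $S$-separator can leave a component containing almost all of $U$, so the natural measure $\mu(i)=|U_i|/\log n$ fails property (iii) of Lemma~\ref{lemma:bounded-tree} on the raw tree $T$. Passing to $\widehat{T}$ sidesteps this, since the $\tfrac89$-shrinkage supplied by $\qUsep$ gives the required constant $C<1$. A secondary point requiring care is that $u$-calls can themselves be leaves of $T$ and hence cannot all be charged to $s$-calls; they must be counted separately via the disjointness of the active components of the leaves.
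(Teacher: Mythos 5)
Your proof is correct and takes essentially the same approach as the paper's. You work with the recursion tree $T$ of $\findPTD$ calls (noting $|V(\td')| = 2|V(T)|$) whereas the paper works directly with the ``small'' bags $I_\sm \subseteq V(\td')$, but these are in one-to-one correspondence and your measure $\mu(v) = |U_v|/\log n$ coincides with the paper's $\mu(i) = w_i/\log n$; your contracted tree $\widehat{T}$ of $s$-calls is exactly the paper's auxiliary tree on $I_\sm^s$ (nodes four levels apart in $\td'$), and your three-way split of calls into $s$-calls, $u$-calls with children, and $u$-call leaves mirrors the paper's $I_\sm^s$, $I_\sm^{u,\inte}$, $I_\sm^{u,\leaf}$ with the same three bounds ($10n/\log n$, $\leq\#s$-calls, $n/\log n$) yielding the same constant $42$.
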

\begin{proof}
  Let us partition the set of nodes $V(\td')$ into two subsets.  At
  each recursive call of $\findPTD$, we create two nodes: one
  associated with the bag $\oldS$, and one associated with the bag
  $\oldS\cup \sep$.  Let $I_\sm$ be the set of nodes associated with
  bags $\oldS$, and let $I_\lr$ the the set of remaining bags,
  associated with bags $\oldS\cup \sep$.  As bags are always
  constructed in pairs, it follows that
  $|I_\sm|=|I_\lr|=\frac{1}{2}|V(\td')|$.  Therefore, it remains to
  establish a bound on $|I_\sm|$.
  
  For a node $i\in V(\td')$, let $w_i$ be the number of vertices
  strictly below $i$ in the tree decomposition $\td'$, also counting
  the vertices outside the tree decomposition.  Note that by the
  construction it immediately follows that $w_i\geq \log n$ for each
  $i\in I_\sm$.
  
  We now partition $I_\sm$ into three parts: $I_\sm^s$,
  $I_\sm^{u,\inte}$, and $I_\sm^{u,\leaf}$.  $I_\sm^s$ consists of all
  the nodes created in recursive calls where $\whatsep=s$.
  $I_\sm^{u,\leaf}$ consists of all the nodes created in recursive
  calls where $\whatsep=u$, and moreover the algorithm did not make
  any more recursive calls to $\findTD$ (in other words, all the
  components turned out to be of size smaller than $\log n$).
  $I_\sm^{u,\leaf}$ consists of all the remaining nodes created in
  recursive calls where $\whatsep=u$, that is, such that the algorithm
  made at least one more call to $\findTD$.  We aim at bounding the
  size of each of the sets $I_\sm^s$, $I_\sm^{u,\inte}$, and
  $I_\sm^{u,\leaf}$ separately.
  
  We first claim that $|I_\sm^{u,\leaf}|\leq n / \log n$.  Indeed, we
  have that the sets of vertices strictly below nodes of
  $I_\sm^{u,\leaf}$ are pairwise disjoint.
  And since any bag in $I_\sm^{i,\leaf}$ is a subset of its parent and the
  recursive call to create the bag was made we know that there is at least
  $\log n$ vertices below. As their total union is of size at most $n$, the
  claim follows.
  
  We now claim that $|I_\sm^{u,\inte}|\leq |I_\sm^{s}|$.  Indeed, if with
  every node $i\in I_\sm^{u,\inte}$ we associate any of its grandchild
  belonging to $I_\sm^{s}$, whose existence is guaranteed by the
  definition of $I_\sm^{u,\inte}$, we obtain an injective map from
  $I_\sm^{u,\inte}$ into $I_\sm^{s}$.
  
  We are left with bounding $|I_\sm^{s}|$.  For this, we make use of
  Lemma~\ref{lemma:bounded-tree}.  Recall that vertices of $I_\sm^{s}$
  are exactly those that are in levels whose indices are congruent to
  $1$ modulo $4$, where the root has level $1$; in particular, $r\in
  I_\sm^{s}$.  We define a rooted tree $T$ as follows.  The vertex set
  of $T$ is $I_\sm^{s}$, and for every two nodes $i,i'\in I^s_\sm$
  such that $i'$ is an ancestor of $i$ exactly $4$ levels above
  (grand-grand-grand-parent), we create an edge between $i$ and $i'$.
  It is easy to observe that $T$ created in this manner is a rooted
  tree, with $r$ as the root.
  
  We can now construct a measure $\mu:V(T)\to \R$ by taking
  $\mu(i)=w_i/\log n$.  Let us check that $\mu$ satisfies the
  assumptions of Lemma~\ref{lemma:bounded-tree} for $C=\frac{8}{9}$.
  Property (i) follows from the fact that $w_i\geq \log n$ for every
  $i\in I_\sm$.  Property (ii) follows from the fact that the parts of
  the components on which the algorithm recurses below the bags are
  always pairwise disjoint.  Property (iii) follows from the fact that
  between every pair of parent, child in the tree $T$ we have used a
  $\frac{8}{9}$-balanced $U$-separator.  Application of
  Lemma~\ref{lemma:bounded-tree} immediately gives that $|I_\sm^s|\leq
  10 n/\log n$, and hence $|V(\td')|\leq 42 n/\log n$.
\end{proof}

To conclude the running time analysis of $\findPTD$, we provide a
similar charging scheme as in Section~\ref{section:nlogn}.  More
precisely, we charge every node of $\td'$ with $O(c^k\cdot \log n)$
running time; Claim~\ref{claim:sublinear-bags} ensures us that then
the total running time of the algorithm is then $O(c^k\cdot n)$.

Let $B=\oldS$ and $B'=\oldS\cup \sep$ be the two bags constructed at some
call of $\findPTD$.  All the operations in this call, apart from the
two loops over the components, take $O(c^k\cdot \log n)$ time and are
charged to $B'$.  Moreover, the last call of $\qpin$, when a component
of size smaller than $\log n$ is discovered, is also charged to $B'$.
As this call takes $O(1)$ time, $B'$ is charged with $O(c^k\cdot \log
n)$ time in total.

We now move to examining the time spent while iterating through the
loops.  Let $B_j$ be the root bag of the decomposition created for
graph $G_j$.  We charge $B_j$ with all the operations that were done
when processing $G_j$ within the loops.  Note that thus every such
$B_j$ is charged at most once, and with running time $O(c^k\cdot \log
n)$.  Summarizing, every bag of $\td'$ is charged with $O(c^k\cdot
\log n)$ running time, and we have at most $42 n/\log n$ bags, so the
total running time of $\findPTD$ is $O(c^k\cdot n)$.

\section{An $O(c^k n)$ 5-approximation algorithm for treewidth}
\label{section:linear}
In this section we give the main result of the paper.  The algorithm
either calls $\alg{\depth}$ for $\depth = 2$ or a version of
Bodlaender~\cite{Bodlaender96} applying a table lookup implementation
of the dynamic programming algorithm by Bodlaender and
Kloks~\cite{BodlaenderK96}, depending on how $n$ and $k$ relate.
These techniques combined will give us a $5$-approximation algorithm
for treewidth in time single exponential in $k$ and linear in $n$.

\begin{theorem}[Theorem~\ref{thm:mainThm}, restated]
  There exists an algorithm, that given an $n$-vertex graph $G$ and an
  integer $k$, in time $2^{O(k)} n$ either outputs that the treewidth
  of $G$ is larger than $k$, or constructs a tree decomposition of $G$
  of width at most $5k + 4$.
\end{theorem}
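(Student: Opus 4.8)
The plan is to split on the size of $n$ relative to $k$, using the threshold $N_0 = N_0(k) := 2^{2^{c_0 k^3}}$ for a sufficiently large constant $c_0$. As always, if $|E(G)| > kn$ we immediately report $\tw(G) > k$ by Lemma~\ref{lemma:prelim:nk-edges}, and we may process the connected components of $G$ independently; so assume $G$ is connected with $|E(G)| \le kn$. If $n \le N_0$, I would simply call the algorithm $\alg{2}$ (Theorem~\ref{theorem:nlogin} with $\depth = 2$) with $S_0 = \emptyset$: it runs in time $O(c^k n \log^{(2)} n)$ and returns either ``$\tw(G) > k$'' or a tree decomposition of width at most $5k+4$. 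Since $n \le N_0$ gives $\log^{(2)} n \le \log^{(2)} N_0 = c_0 k^3$, this is $O(c^k k^3 n) = 2^{O(k)} n$, as required.

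The substantial case is $n > N_0$. Here I would run the linear-time exact algorithm of Bodlaender~\cite{Bodlaender96}, whose only computationally heavy ingredient is the Bodlaender--Kloks dynamic programming subroutine~\cite{BodlaenderK96}: given $G$ and a tree decomposition of width $O(k)$, in time $2^{O(k^3)} n$ it either outputs an exact decomposition of width at most $k$ or reports $\tw(G) > k$. I would replace this subroutine by an evaluation of the equivalent bottom-up tree automaton~\cite{FellowsL89,AbrahamsonF93}. Run on a \emph{nice} tree decomposition of width $O(k)$, the Bodlaender--Kloks procedure stores at each node a table of size $2^{O(k^3)}$; regarding such a table as the \emph{state} of a node, there are only $2^{2^{O(k^3)}}$ distinct states, and the state of a node is determined by its type (introduce/forget/join, of which $O(k)$ kinds are relevant) and the states of its children. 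One therefore precomputes, without looking at $G$, a transition table of $2^{2^{O(k^3)}}$ entries; a single entry is computed in $2^{O(k^3)}$ time, so the whole table is built in $2^{2^{O(k^3)}}$ time, which by the choice of $c_0$ is at most $O(N_0) = O(n)$. After this one-time preprocessing, evaluating the automaton on a width-$O(k)$ decomposition of a subgraph on $n'$ vertices costs $O(k^{O(1)} n')$: making the decomposition nice costs $O(kn')$; each of its $O(kn')$ nodes is processed by one lookup in the transition table, in which a state is stored merely as an index into an array of fewer than $n$ entries (hence one machine word), and such a lookup is $O(1)$ in the RAM model; finally the decomposition itself is recovered by a second $O(kn')$ top-down pass through further precomputed tables. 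Thus the compression step now costs $O(k^{O(1)} n')$ instead of $2^{O(k^3)} n'$.

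It then remains to total the running time of the $n > N_0$ branch and note correctness. By Lemma~\ref{lem:prelim:bodlaender}, each step of the scheme, after an $O(k^{O(1)} n')$ application of that lemma, either halts or recurses on subgraphs containing in total a $(1 - \Omega(1/k^6))$ fraction of the vertices (obtained by contracting a matching, or by passing to the improved graph and deleting an $I$-simplicial set), followed by a rebuilding of the decomposition that costs $O(k^{O(1)} n')$ (the $I$-simplicial reintroduction is the second part of Lemma~\ref{lem:prelim:bodlaender}, and the compression in the matching case is the automaton evaluation above); hence all subproblem sizes $n_i$ satisfy $\sum_i n_i = O(k^6 n)$, and with a brute-force base case of cost depending on $k$ only once the vertex count drops to $O(k^6)$, the total time is $O(2^{2^{O(k^3)}}) + \sum_i O(k^{O(1)} n_i) = O(n) + O(k^{O(1)} n) = 2^{O(k)} n$. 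For correctness, the small-$n$ branch is exactly Theorem~\ref{theorem:nlogin}, while the large-$n$ branch is Bodlaender's exact algorithm with only the compression step replaced by an equivalent (faster) one, so it outputs ``$\tw(G) > k$'' or a decomposition of width at most $k \le 5k+4$; soundness of matching contraction, of the passage to the improved graph, and of $I$-simplicial deletion is as in~\cite{Bodlaender96} (using Lemma~\ref{lem:improvedGraphTw}), and reassembling the per-component outputs is routine.

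The step I expect to be the main obstacle is the middle one: presenting the Bodlaender--Kloks dynamic program rigorously as a finite-state bottom-up tree automaton --- exhibiting a state set whose size is bounded by a function of $k$ alone, checking that the forward transitions and the witness-recovery pass are genuine constant-time table lookups, and making precise exactly where $n$ being large relative to $k$ is used, namely to store a state in $O(1)$ machine words and to access in $O(1)$ time an array of size exponential in $k^3$ but linear in $n$ in the RAM model. The remaining calibrations --- choosing $c_0$ so that $2^{2^{O(k^3)}} \le N_0$, and bounding $\log^{(2)} N_0$ by $c_0 k^3$ --- are elementary.
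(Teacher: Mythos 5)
Your proof is correct and follows essentially the same route as the paper: split on whether $n$ exceeds a threshold $N_0 = 2^{2^{c_0 k^3}}$, using $\alg{2}$ for small $n$ and Bodlaender's exact recursion with the Bodlaender--Kloks DP replaced by a precomputed tree automaton for large $n$. The one cosmetic difference is how the large-$n$ recursion bottoms out --- you precompute the automaton once and continue the exact width-$k$ recursion down to instances of size $O(k^6)$ before brute-forcing, whereas the paper switches to $\alg{2}$ once the current instance drops below $N_0$ (at the cost of having to run the table-lookup DP with $\ell = 10k+9$ one level above); both variants are sound and give the claimed $2^{O(k)} n$ bound.
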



As mentioned above, our algorithm distinguishes between two cases.
The first case is when $n$ is ``sufficiently small'' compared to $k$.
By this, we mean that $n \leq 2^{2^{c_0 k^3}}$.  The other case is
when this is \emph{not} the case.  For the first case, we can apply
$\alg{2}$ and since $n$ is sufficiently small compared to $k$ we can
observe that $\log^{(2)}n= k^{O(1)}$, resulting in a $2^{O(k)}n$ time
algorithm.  For the case when $n$ is large compared to $k$, we
construct a tree automata in time double exponential in $k$.  In this
case, double exponential in $k$ is in fact also linear in $n$.  This
automaton is then applied on a nice expression tree constructed from
our tree decomposition and this results in an algorithm running in
time $2^{O(k)}n$.

\begin{lemma}[Bodlaender and Kloks~\cite{BodlaenderK96}]
  There is an algorithm, that given a graph $G$, an integer $k$, and a
  nice tree decomposition of $G$ of width at most $\ell$ with $O(n)$
  bags, either decides that the treewidth of $G$ is more than $k$, or
  finds a tree decomposition of $G$ of width at most $k$ in time
  $O(2^{O(k \ell^2)} n )$.
  \label{lemma:BodlaenderKloks}
\end{lemma}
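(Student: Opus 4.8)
The plan is to reconstruct the dynamic programming of Bodlaender and Kloks over the given nice tree decomposition $\td_\apx=(\{B_i\},T)$, processing its $O(n)$ nodes bottom-up and maintaining at each node $i$ a set of \emph{characteristics} of \emph{partial solutions}. A partial solution at $i$ is a tree decomposition $Y_i$ of $G_i=G[V_i]$ of width at most $k$; since $B_i$ separates $W_i$ from $V\setminus V_i$, the restriction to $G_i$ of any tree decomposition of $G$ of width at most $k$ is such a $Y_i$, and conversely the recursion reassembles a global decomposition of $G$ from compatible local pieces. The characteristic $C(Y_i)$ keeps only how $Y_i$ meets the interface $B_i$: it is the rooted tree $\tau$ obtained from $Y_i$ by taking the minimal subtree spanning the bags that meet $B_i$ and then contracting runs, in which each node carries as a label the trace on $B_i$ of the corresponding $Y_i$-bag (a subset of $B_i$), and each edge carries the \emph{typical sequence} of the sizes of the $Y_i$-bags lying on that segment. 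Typical sequences are the standard compression of an integer sequence that forgets multiplicities and positions while retaining everything that matters for an upper bound on the entries; there are only $2^{O(k)}$ of them over $\{0,\dots,k\}$, and a standard argument shows that $\tau$ has $O(\ell)$ nodes, so counting labelled trees with typical-sequence edge decorations bounds the number of characteristics at a node with $|B_i|\le\ell+1$ by $2^{O(k\ell^2)}$.

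Correctness of this invariant rests on two statements, the first of which I expect to be the main obstacle. The first is \textbf{interchangeability}: if $C(Y_i)=C(Y_i')$ then $Y_i$ extends to a tree decomposition of $G$ of width at most $k$ if and only if $Y_i'$ does. Its proof needs the combinatorial theory of typical sequences --- that two integer sequences with a common typical sequence stay mutually substitutable under merging with any third sequence, and that whenever some realisation of a prescribed typical sequence fits under a given pointwise bound then one can be produced --- which lets one transplant the bags of $Y_i'$ in place of those of $Y_i$ inside any extension without breaking the bound $k$. The second is \textbf{computability}: for each of the four node types of a nice tree decomposition --- leaf, introduce-vertex, forget-vertex, join --- the set of characteristics realisable at $i$ is determined by, and computable from, the sets realisable at the child(ren) of $i$, in time polynomial in $2^{O(k\ell^2)}$. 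Introduce and forget steps edit the labels of $\tau$ and splice or merge the adjacent typical-sequence segments; the join step forms the appropriate product of the two children's skeletons, merges the paired typical sequences, and discards any outcome that would force a bag of more than $k+1$ vertices.

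Running the recursion, at the root $r$ we have $V_r=V(G)$, so $\tw(G)\le k$ holds exactly when the computed set of characteristics at $r$ is nonempty; if it is empty the algorithm reports $\tw(G)>k$. To output an explicit tree decomposition of width at most $k$ rather than merely decide, I would have the dynamic program store with each realisable characteristic at $i$ a pointer to the child characteristics from which it was produced together with the local editing data, and then reconstruct a concrete $Y_r$ by a single top-down pass that materialises each typical sequence into an explicit bag sequence respecting the width bound; this pass takes time linear in the output size, which is $O(n)$ bags since $\td_\apx$ has $O(n)$ bags and each step adds only $O(1)$ new ones. For the running time, each of the $O(n)$ nodes of $\td_\apx$ is handled in $2^{O(k\ell^2)}$ time --- the bottleneck being the join step's product of two sets of $2^{O(k\ell^2)}$ characteristics --- so the whole algorithm runs in $2^{O(k\ell^2)}n$ time, as claimed.
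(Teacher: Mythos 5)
The paper does not prove this lemma at all: it is imported as a black box from Bodlaender and Kloks~\cite{BodlaenderK96}, and the only proof work the paper does in its vicinity is the table-lookup reimplementation (Lemma~\ref{lemma:tablelookupBodlaenderKloks}). Your sketch is a faithful outline of the cited reference's own argument --- characteristics as a contracted interface tree decorated with bag traces and typical sequences, the interchangeability lemma, the full-set computation for the four node types, and the pointer-based top-down reconstruction --- so it matches the intended (external) proof in approach; the genuinely hard technical content you would still owe is exactly the typical-sequence machinery behind interchangeability, which you correctly identify as the main obstacle.
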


Our implementation with table lookup of this result gives the
following:

\begin{lemma}
  There is an algorithm, that given a graph $G$, an integer $k$, and a
  nice tree decomposition of $G$ of width at most $\ell = O(k)$ with
  $O(n)$ bags, either decides that the treewidth of $G$ is more than 
  $k$, or finds a tree decomposition of $G$ of width at most $k$, in time
  $O( 2^{2^{c_0 k \ell^2}} + k^{c_1} n)$, for constants
  $c_0$ and $c_1$.
  \label{lemma:tablelookupBodlaenderKloks}
\end{lemma}

The proof of Lemma~\ref{lemma:tablelookupBodlaenderKloks} will be
given later.  We first discuss how
Lemma~\ref{lemma:tablelookupBodlaenderKloks} combined with the results
in other sections imply the main result of the paper.  First we handle
the case when $n \leq 2^{2^{c_0 k^3}}$.  We then call $\alg{2}$ on
each connected component of $G$ separately, with $S=\emptyset$.  This
algorithm runs in time $O(c'^k \cdot n \log \log n) = O(c'^k n k^3) =
O(c^k n)$ time for some constants $c$ and $c'$.

For the remaining of this section we will assume $n > 2^{2^{c_0 k^3}}$.
An inspection of Bodlaender's algorithm~\cite{Bodlaender96} shows that it
contains the following parts:
\begin{itemize}
\item A recursive call to the algorithm is made on a graph with $c_3
  n$ vertices, for $c_3 = 1 - \frac{1}{8k^6+ O(k^4)}$.
\item The algorithm of Lemma~\ref{lemma:BodlaenderKloks} is called
  with $\ell = 2k+1$.
\item Some additional work that uses time linear in $n$ and
  polynomial in $k$.
\end{itemize}

The main difference of our algorithm in the case of ``large $n$'' is
that we replace the call to the algorithm of
Lemma~\ref{lemma:BodlaenderKloks} with a call to the algorithm of
Lemma~\ref{lemma:tablelookupBodlaenderKloks}, again with $\ell =
2k+1$.  At some point in the recursion, instances will have size less
than $2^{2^{c_0 k^3}}$.  At that point, we call $\alg{2}$ on this
instance; with the main difference that at one level higher in the
recursion, the algorithm of
Lemma~\ref{lemma:tablelookupBodlaenderKloks} is called with $\ell =
10k+9$.

The analysis of the running time is now simple.  A call to the
algorithm of Lemma~\ref{lemma:tablelookupBodlaenderKloks} uses time
which is bounded by $O( 2^{2^{c_0 k \ell^2}} + k^{c_1} n) = O(n +
k^{c_1} n)$, i.e., time linear in $n$ and polynomial in $k$.  The
total work of the algorithm is thus bounded by a function $T(n)$ that
fulfills $T(n) \leq T(c_3 n) + O(k^{c_4}n) = O(k^{6+c_4}n)$, for
constant $c_4$, i.e., time polynomial in $k$ and linear in $n$.  This
proves our main result.  What remains for this section is a proof of
Lemma~\ref{lemma:tablelookupBodlaenderKloks}.

\subsection{Nice expression trees}
The dynamic programming algorithm in~\cite{BodlaenderK96} is described
with help of so called {\em nice tree decompositions}.  As we need to
represented a nice tree decomposition as a labeled tree with the label
alphabet of size a function of $k$, we use a slightly changed notion
of {\em labeled nice tree decomposition}.  The formalism is quite
similar to existing formalisms, e.g., the operations on $k$-terminal
graphs by Borie~\cite{Borie88}.

A labeled terminal graph is a 4-tuple $G=(V,E,X,f)$, with $(V,E)$ a
graph, $X \rightarrow V$ a set of {\em terminals}, and $f: X
\rightarrow {\mathcal{N}}$ an injective mapping of the terminals to
non-negative integers, which we call {\em labels}.  A $k$-labeled
terminal graph is a labeled terminal graph with the maximum label at
most $k$, i.e., $\max_{x\in X} f(x)\leq k$.  Let $O_k$ be the set of
the following operations on $k$-terminal graphs.

\vskip 0.3cm
\noindent{\bf{Leaf$_\ell$():}} gives a $k$-terminal graph with one vertex $v$,
  no edges, with $v$ a terminal with label $\ell$
  \vskip 0.3cm
  \noindent{\bf{Introduce$_{\ell, S}$($G$):}} $G = (V,E,X,f)$ is a
  $k$-terminal graph, $\ell$ a non-negative integer, and $S \subseteq
  \{1, \ldots, k\}$ a set of labels.  If there is a terminal vertex in
  $G$ with label $\ell$, then the operation returns $G$, otherwise it
  returns the graph, obtained by adding a new vertex $v$, making $v$ a
  terminal with label $\ell$, and adding edges $\{v,w\}$ for each
  terminal $w\in X$ with $f(w)\in S$.  I.e., we make the new vertex
  adjacent to each existing terminal whose label is in $S$.  \vskip
  0.3cm
  \noindent{\bf{Forget{$_\ell$}($G$):}} Again $G=(V,E,X,f)$ is a
  $k$-terminal graph.  If there is no vertex $v\in X$ with
  $f(v)=\ell$, then the operation returns $G$, otherwise, we 'turn $v$
  into a non-terminal, i.e., we return the $k$-terminal graph
  $(V,E,X-\{v\},f')$ for the vertex $v$ with $f(v)=\ell$, and $f'$ is
  the restriction of $f$ to $X-\{v\}$.  \vskip 0.3cm
  \noindent{\bf{Join($G$,$H$):}} $G =(V,E,X,f)$ and $H=(W,F,Y,g)$ are
  $k$-terminal graphs.  If the range of $f$ and $g$ are not equal,
  then the operation returns $G$.  Otherwise, the result is obtained
  by taking the disjoint union of the two graphs, and then identifying
  terminals with the same label.
  \vskip 0.3cm Note that for given $k$, $O_k$ is a collection of $k +
  k\cdot 2^k + k + k^2$ operations.  When the treewidth is $k$, we
  work with $k+1$-terminal graphs.  The set of operations mimics
  closely the well known notion of nice tree decompositions (see
  e.g.,~\cite{Kloks93,Bodlaender98}).

\begin{proposition}
  Suppose a tree decomposition of $G$ is given of width at most $k$
  with $m$ bags.  Then, in time, linear in $n$ and polynomial in $k$,
  we can construct an expression giving a graph isomorphic to $G$ in
  terms of operations from $O_{k+1}$ with the length of the expression
  $O(mk)$.
\end{proposition}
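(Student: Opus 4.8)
The plan is to convert the given tree decomposition into a nice tree decomposition of essentially the same width with a linearly-bounded number of bags, and then to read off a labeled nice-tree-decomposition expression node-by-node, using the labels to encode "which slot of the bag a vertex occupies''. The key point is that with at most $k+1$ vertices in each bag we can afford to give each vertex of each bag a label from $\{1,\dots,k+1\}$, chosen so that a vertex keeps the same label as long as it stays in consecutive bags, and the operations from $O_{k+1}$ exactly mirror the standard local moves of a nice tree decomposition.

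\paragraph{Step 1: Make the decomposition nice.} First I would recall (e.g.\ from~\cite{Kloks93,Bodlaender98}) that any tree decomposition of width at most $k$ with $m$ bags can be turned, in time linear in $n$ and polynomial in $k$, into a \emph{nice} tree decomposition of the same width with $O(mk)$ bags: a rooted binary tree whose nodes are of four kinds --- leaf nodes (bag of size $1$), introduce nodes (child's bag plus one new vertex), forget nodes (child's bag minus one vertex), and join nodes (two children with identical bags). This is the standard transformation: subdivide edges to interpolate between adjacent bags one vertex at a time, duplicate bags to make the tree binary at join points, and pad the leaves. Since each of the original $m$ bags has size at most $k+1$, each original tree edge is replaced by a path of length $O(k)$, giving the claimed $O(mk)$ bound on the number of nodes.

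\paragraph{Step 2: Assign labels consistently.} Next I would process the nice tree decomposition from the leaves towards the root, maintaining for each node a bijection between the vertices in its bag and a subset of $\{1,\dots,k+1\}$, with the invariant that a vertex present in both a child and its parent keeps the same label. This is possible precisely because each bag has at most $k+1$ vertices and the moves are local: at a leaf, assign the single vertex label $1$; at a forget node, inherit the labels of the child and simply drop the forgotten one; at an introduce node, inherit the child's labels and give the new vertex the smallest free label; at a join node, the two children's bags are equal, so I would first make their label assignments agree by relabeling one subtree (propagating the relabeling downwards, which costs only $O(k)$ per node of that subtree and is absorbed into the overall polynomial-in-$k$ factor). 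Having done this, each node carries a $(k+1)$-labeled terminal graph $G_i$, namely $G[V_i]$ with the bag $B_i$ as terminals labeled as above.

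\paragraph{Step 3: Read off the expression and verify.} Finally I would emit one operator of $O_{k+1}$ per node, matching the node type: a leaf node becomes $\mathbf{Leaf}_\ell()$ with $\ell$ its label; a forget node of vertex with label $\ell$ becomes $\mathbf{Forget}_\ell(\cdot)$; an introduce node of a vertex with label $\ell$ becomes $\mathbf{Introduce}_{\ell,S}(\cdot)$ where $S$ is the set of labels of its already-present neighbors in the child's bag --- here one uses that every edge of $G$ has both endpoints together in some bag, and by the consistency invariant one can always choose the nice decomposition so that an edge is "introduced'' exactly at the introduce node of its later endpoint; a join node becomes $\mathbf{Join}(\cdot,\cdot)$, which is well-defined because the two children's label ranges were made equal in Step 2. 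A straightforward bottom-up induction then shows the evaluated expression produces a graph isomorphic to $G$. The length of the expression equals the number of nodes, which is $O(mk)$, and the construction visits each node a constant number of times doing $O(k^{O(1)})$ work, plus $O(n)$ total for reading the graph, so the running time is linear in $n$ and polynomial in $k$.

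\paragraph{Main obstacle.} The routine parts (niceness, counting bags, emitting operators) are standard; the one genuinely delicate point is the \emph{label consistency across join nodes}. Naively the two subtrees under a join might use incompatible label assignments, and one must argue that relabeling one side to match the other can be done globally without cost blow-up. The clean way is to fix the labeling in a single top-down pass \emph{before} emitting anything: root to leaves, pushing each node's label assignment to its children (for a join, both children inherit the parent's assignment on the shared bag; each then extends it freely below). This makes consistency automatic and confines all label bookkeeping to $O(k^{O(1)})$ work per node, which is where I would focus the careful writing.
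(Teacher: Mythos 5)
Your proposal is correct, and Steps 2--3 achieve the same end as the paper's proof, but by a genuinely different mechanism. The paper does not propagate labels along the tree at all: it forms the auxiliary graph $H$ on $V(G)$ in which two vertices are adjacent iff they share a bag, notes that $H$ is chordal with clique number at most $k+1$, and greedily colors $H$ in linear time with $k+1$ colors; the color $c(v)$ is then used as $v$'s label everywhere. This single global coloring automatically yields both properties you need --- distinct labels within each bag (a bag is a clique of $H$) and label consistency across bags (a vertex has one fixed color) --- and sidesteps the join-node issue entirely. Your route is more elementary and avoids chordal-graph machinery, but you should adopt the top-down pass from your ``Main obstacle'' paragraph as the actual construction rather than a patch: the bottom-up version in Step 2 is not, as written, ``absorbed into the polynomial-in-$k$ factor'', since a chain of nested joins can force the same subtree to be relabeled $\Theta(m)$ times, giving a quadratic blow-up. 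With the top-down sweep (each vertex receives its label exactly once, at the node where it first appears while descending from the root) the linear-in-$n$, polynomial-in-$k$ bound holds, and the rest of your argument --- the $O(mk)$ bag count, the per-node emission of $\mathbf{Leaf}$/$\mathbf{Introduce}$/$\mathbf{Forget}$/$\mathbf{Join}$ operators, and the bottom-up correctness induction --- matches the paper's.
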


\begin{proof}
  First build with standard methods a nice tree decomposition of $G$
  of width $k$; this has $O(km)$ bags, and $O(m)$ join nodes.  Now,
  construct the graph $H=(V,F)$, with for all $v,w\in V$, $\{v,w\}\in
  F$, if and only if there is a bag $i$ with $v,w\in X_i$.  It is well
  known that $H$ is a chordal super graph of $G$ with maximum clique
  size $k+1$ (see e.g.,~\cite{Bodlaender98}).  Use a greedy linear
  time algorithm to find an optimal vertex coloring $c$ of $H$
  (see~\cite[Section 4.7]{Golumbic80}.)
  
  Now, we can transform the nice tree decomposition to the expression
  as follows: each leaf bag that contains a vertex $v$ is replaced by
  the operation Leaf$_{c(v)}$, i.e., we label the vertex by its color
  in $H$.  We can now replace bottom up each bag in the nice tree
  decomposition by the corresponding operation; as we labeled vertices
  with the color in $H$, we have that all vertices in a bag have
  different colors, which ensures that a Join indeed performs
  identifications of vertices correctly.  Bag sizes are bounded by
  $k+1$, so all operations belong to $O_{k+1}$.
\end{proof}

View the expression as a labeled rooted tree, i.e., each node is
labeled with an operation from $O_{k+1}$; leaves are labeled with a
Leaf operation, and binary nodes have a Join-label.  To each node of
the tree $i$, we can associate a graph $G_i$, and the graph $G_r$
associated to the root node $r$ is isomorphic to $G$.  Call such a
labeled rooted tree a {\em nice expression tree} of width $k$.

\subsection{Dynamic programming and finite state tree automata}
The discussion in this paragraph holds for all problems invariant
under isomorphism,.  Note that the treewidth of a graph is also
invariant under isomorphisms.  We use ideas from the early days of
treewidth, see e.g.,~\cite{FellowsL89,AbrahamsonF93}.

A dynamic programming algorithm on nice tree decompositions can be
viewed also as a dynamic programming algorithm on a nice expression
tree of width $k$.  Suppose that we have a dynamic programming
algorithm that computes in bottom-up order for each node of the
expression tree a table with at most $r=O(1)$ bits per table, and to
compute a table, only the label of the node (type of operation) and
the tables of the children of the node are used.  We remark that the
DP algorithm for treewidth from Bodlaender and
Kloks~\cite{BodlaenderK96} is indeed of this form, if we see $k$ as a
fixed constant.  Such an algorithm can be seen as a finite state tree
automaton: the states of the automaton correspond to the at most $2^r
= O(1)$ different tables; the alphabet are the $O(1)$ different labels
of tree nodes.

To decide if the treewidth of $G$ is at most $k$, we first explicitly
build this finite state tree automaton, and then execute it on the
expression tree.  For actually building the corresponding tree
decomposition of $G$ of width at most $k$, if existing, some more work
has to be done, which is described later.

\subsection{Table lookup implementation of dynamic programming}
The algorithm of Bodlaender and Kloks~\cite{BodlaenderK96} builds for
each node in the nice tree decomposition of {\em table of
  characteristics}: each characteristic represents the `essential
information' of a tree decomposition of width at most $k$ of the graph
associated with the bag.

Inspection of the algorithm~\cite{BodlaenderK96} easily shows that the
number of different characteristics is bounded by $2^{O(k \cdot
  \ell^2)}$ when we are given an expression tree of width $\ell$ and
want to test if the treewidth is at most $k$.  (See~\cite[Definition
5.9]{BodlaenderK96}.)

We now use that we represent the vertices in the bag, i.e., the
terminals, by a label from $\{1, \ldots, \ell+1\}$ if $\ell$ is the
width of the nice expression tree.  Thus, we have a set $C_{k,\ell}$
(only depending of $k$ and $\ell$) that contains all possible
characteristics that belong to a table; each table is just a subset of
$C_{k,\ell}$, i.e., an element of ${\cal P}(C_{k,\ell})$.  I.e., we
can view the decision variant of the dynamic programming algorithm of
Bodlaender and Kloks as a finite state tree automaton with alphabet
$O_{\ell+1}$ and state set ${\cal P}(C_{k,\ell})$.

The first step of the algorithm is now to explicitly construct this
finite state tree automaton.  We can do this as follows.  Enumerate
all characteristics in ${\cal P}(C_{k,\ell})$, and number them $c_1,
\ldots, c_s$, $s = 2^{O(k \cdot \ell^2)}$.  Enumerate all elements of
${\cal P}(C_{k,\ell})$, and number them $t_1, \ldots, t_{s'}$, $s' =
2^{2^{O(k \cdot \ell^2)}}$; store with $t_i$ the elements of its set.

Then, we compute a transition function $F: O_{\ell+1} \times \{1, \ldots,
s'\} \times \{1, \ldots, s'\} \rightarrow \{1, \ldots, s'\}$.  In
terms of the finite state automaton view, $F$ gives the state of a
node given its symbol and the states of its children.  (If a node has
less than two children, the third, and possibly the second argument
are ignored.) In terms of the DP algorithm, if we have a tree node $i$
with operation $o \in O_{\ell+1}$, and the children of $i$ have tables
corresponding to $t_\alpha$ and $t_\beta$, then $F(o,\alpha,\beta)$
gives the number of the table obtained for $i$ by the algorithm.  To
compute one value of $F$, we just execute one part of the algorithm of
Bodlaender and Kloks~\cite{BodlaenderK96}.  Suppose we want to compute
$F(o,\alpha,\beta)$.  If $o$ is a shift operation, then a simple
renaming suffices.  Otherwise, build the tables $T_\alpha$ and
$T_\beta$ corresponding to $t_\alpha$ and $t_\beta$, and execute the
step of the algorithms from~\cite{BodlaenderK96} for a node with
operation $o$ whose children have tables $T_\alpha$ and $T_\beta$.
(If the node is not binary, we ignore the second and possibly both
tables.)  Then, lookup what is the index of the resulting table; this
is the value of $F(o,\alpha,\beta)$.

We now estimate the time to compute $F$.  We need to compute $O(2^\ell
\cdot \ell \cdot s'^2) = O( 2^{2^{O(k \cdot \ell^2)}})$ values; each
executes one step of the DP algorithm and does a lookup in the table,
which is easily seen to be bounded again by $O( 2^{2^{O(k \cdot
    \ell^2)}})$, so the total time to compute $F$ is still bounded by
$O( 2^{2^{O(k \cdot \ell^2)}})$.  To decide if the treewidth of $G$ is
at most $k$, given a nice tree decomposition of width at most $\ell$,
we thus carry out the following steps:
\begin{itemize}
\item Compute $F$.
\item Transform the nice tree decomposition to a nice expression tree
  of width $\ell$.
\item Compute bottom-up (e.g., in post-order) for each node $i$ in the
  expression tree a value $v_i$, with a node $i$ labeled by operation
  $o\in O_{\ell+1}$ and children with values $v_{j_1}$, $v_{j_2}$, we
  have $v_i = F(o,v_{j_1},v_{j_2})$.  If $v$ has less than two
  children, we take some arbitrary argument for the values of missing
  children.  In this way, $v_i$ corresponds to the table that is
  computed by the DP algorithm of~\cite{BodlaenderK96}.
\item If the value $v_r$ for the root of the expression tree
  corresponds to the empty set, then the treewidth of $G$ is more than
  $k$, otherwise the treewidth of $G$ is at most $k$.
  (See~\cite[Section 4.6 and 5.6]{BodlaenderK96}.)
\end{itemize}

If our decision algorithm decides that the treewidth of $G$ is more
than $k$, we reject, and we are done.  Otherwise, we need to do
additional work to construct a tree decomposition of $G$ of width at
most $k$, which is described next.

\subsection{Constructing tree decompositions}
After the decision algorithm has determined that the treewidth of $G$
is at most $k$, we need to find a tree decomposition of $G$ of width
at most $k$.  Again, the discussion is necessarily not self contained
and we refer to details given in~\cite[Chapter 6]{BodlaenderK96}.

Basically, each table entry (characteristic) in the table of a join
node is the result of a combination of a table entry in the table of
the left child and a table entry of the table of the right child.
Similarly, for nodes with one child, each table entry is the result of
an operation to a table entry in the table of the child node.  Leaf
nodes represent a graph with one vertex, and we have just one tree
decomposition of this graph, and thus one table entry in the table of
a leaf node.

If we select a characteristic of the root bag, this recursively
defines one characteristic from each table.  In the first phase of the
construction, we make such a selection.  In order to do this, we first
pre-compute another function $g$, that helps to make this selection.
$g$ has four arguments: an operation from $O_{\ell+1}$, the index of a
characteristic (a number between $1$ and $s$), and the indexes of two
states (numbers between $1$ and $s'=2^s$).  As value, $g$ yields
$\false$ or a pair of two indexes of characteristics.  The intuition
is as follows: suppose we have a node $i$ in the nice expression tree
labeled with $o$, an index $c_i$ of a characteristic of a (not yet
known) tree decomposition of $G_i$, and indexes of the tables of the
children of $i$, say $t_{j_1}$ and $t_{j_2}$.  Now,
$g(o,c_i,t_{j_1},t_{j_2})$ should give a pair $(c_{j_1},c_{j_2})$ such
that $c_i$ is the result of the combination of $c_{j_1}$ and $c_{j_2}$
(in case that $o$ is the join operation) or of the operation as
indicated above to $c_{j_1}$ (in case $o$ is an operation with one
argument; $c_{j_2}$ can have any value and is ignored).  If no such
pair exists, the output is $\false$.

To compute $g$, we can perform the following steps for each 4-tuple
$o,c_i,t_{j_1},t_{j_2}$.  Let $S_1 \subseteq C_{k,\ell}$ be the set
corresponding to $t_{j_1}$, and $S_2 \subseteq C_{k,\ell}$ be the set
corresponding to $t_{j_2}$.  For each $c\in S_1$ and $c'\in S_2$, see
if a characteristic $c$ and a characteristic $c'$ can be combined (or,
in case of a unary operation, if the relevant operation can be applied
to $c$) to obtain $c_1$.  If we found a pair, we return it; if no
combination gives $c_1$, we return $\false$.  Again, in case of unary
operations $o$, we ignore $c'$.  We do not need $g$ in case $o$ is a
leaf operation, and can give any return values in such cases.  One can
easily see that the computation of $g$ uses again $2^{2^{O(k \cdot
    \ell^2)}}$ time.

The first step of our construction phase is to build $g$, as
described.  After this, we select a characteristic from $C_{k,\ell}$
for each node in the nice expression tree, as follows.  As we arrived
in this phase, the state of the root bag corresponds to a nonempty set
of characteristics, and we take an arbitrary characteristic from this
set (e.g., the first one from the list).  Now, we select top-down in
the expression tree (e.g., in pre-order) a characteristic.  Leaf nodes
always receive the characteristic of the trivial tree decomposition of
a graph with one vertex.  In all other cases, if node $i$ has
operation $o$ and has selected characteristic $c$, the left child of
$i$ has state $t_{j_1}$ and the right child of $i$ has state $t_{j_2}$
(or, take any number, e.g., 1, if $i$ has only one child, i.e., $o$ is
a unary operation), look up the precomputed value of
$g(o,c,t_{j+1},t_{j_2})$.  As $c$ is a characteristic in the table
that is the result of $F(o,t_{j+1},t_{j_2})$, $g \neq \false$, so
suppose $g$ is the pair $(c', c'')$.  We associate $c'$ as
characteristic with the left child of $i$, and (if $i$ has two
children) $c''$ as characteristic with the right child of $i$.

At this point, we have associated a characteristic with each node in
the nice expression tree.  These characteristics are precisely the
same as the characteristics that are computed in the constructive
phase of the algorithm from~\cite[Section 6]{BodlaenderK96}, with the
sole difference that we work with labeled terminals instead of the
`names' of the vertices (i.e., in~\cite{BodlaenderK96}, terminals /
bag elements are identified as elements from $V$).

From this point on, we can follow without significant changes the
algorithm from~\cite[Section 6]{BodlaenderK96}: bottom-up in the
expression tree, we build for each node $i$, a tree decomposition of
$G_i$ whose characteristic is the characteristic we just selected for
$i$, together with a number of pointers from the characteristic to the
tree decomposition.  Again, the technical details can be found
in~\cite{BodlaenderK96}, our only change is that we work with
terminals labeled with integers in $\{1, \ldots, \ell+1\}$ instead of
bag vertices.

At the end of this process, we obtain a tree decomposition of the
graph associated with the root bag $G_r = G$ whose characteristic
belongs to the set corresponding to the state of $r$.  As we only work
with characteristics of tree decompositions of width at most $k$, we
obtained a tree decomposition of $G$ of width at most $k$.

All work we do, except for the pre-computation of the tables of $F$
and $g$ is linear in $n$ and polynomial in $k$; the time for the
pre-computation does not depend on $n$, and is bounded by $2^{2^{O(k
    \ell^2)}}$.  This ends the description of the proof of
Lemma~\ref{lemma:tablelookupBodlaenderKloks}.

\section{A data structure for queries in $O(c^k \log n)$ time}
\label{section:datastructure}

\subsection{Overview of the data structure}
Assume we are given a tree decomposition $(\{B_i \mid i \in I\},
T=(I,F))$ of $G$ of width $O(k)$. First we turn our tree decomposition
into a tree decomposition of depth $O(\log n)$, keeping the width to
$t = O(k)$, by the work of Bodlaender and
Hagerup~\cite{BodlaenderH98}.  Furthermore, by standard arguments we
turn this decomposition into a {\emph{nice}} tree decomposition in
$O(t^{O(1)}\cdot n)$ time, that is, a decomposition of the same width
and satisfying following properties:
\begin{itemize}
\item All the leaf bags, as well as the root bag, are empty.
\item Every node of the tree decomposition is of one of four different
  types:
  \begin{itemize}
  \item {\bf{Leaf node}}: a node $i$ with $B_i=\emptyset$ and no children.
  \item {\bf{Introduce node}}: a node $i$ with exactly one child $j$
    such that $B_i=B_j\cup\{v\}$ for some vertex $v\notin B_j$; we say
    that $v$ is {\emph{introduced}} in $i$.
  \item {\bf{Forget node}}: a node $i$ with exactly one child $j$ such
    that $B_i=B_j\setminus \{v\}$ for some vertex $v\in B_i$; we say
    that $v$ is {\emph{forgotten}} in $i$.
  \item {\bf{Join node}}: a node $i$ with two children $j_1,j_2$ such
    that $B_i=B_{j_1}=B_{j_2}$.
  \end{itemize}
\end{itemize}
The standard technique of turning a tree decomposition into a nice one
includes (i) adding paths to the leaves of the decomposition on which
we consecutively introduce the vertices of corresponding bags; (ii)
adding a path to the root on which we consecutively forget the
vertices up to the new root, which is empty; (iii) introducing paths
between every non-root node and its parent, on which we first forget
all the vertices that need to be forgotten, and then introduce all the
vertices that need to be introduced; (iv) substituting every node with
$d>2$ children with a balanced binary tree of $O(\log d)$ depth.  It
is easy to check that after performing these operations, the tree
decomposition has depth at most $O(t\log n)$ and contains at most
$O(t\cdot n)$ bags. Moreover, using folklore preprocessing routines,
in $O(t^{O(1)}\cdot n)$ time we may prepare the decomposition for
algorithmic uses, e.g., for each bag compute and store the list of
edges contained in this bag.  We omit here the details of this
transformation and refer to Kloks~\cite{Kloks93}.

In the data structure, we store a number of tables: three special
tables that encode general information on the current state of the
graph, and one table per each query.  The information stored in the
tables reflect some choice of subsets of $V$, which we will call the
{\emph{current state of the graph}}.  More precisely, at each moment
the following subsets will be distinguished: $S,X,F$ and a single
vertex $\pin$, called the pin.  The meaning of these sets is described
in Section~\ref{section:nlogn}. On the data structure we can perform
the following updates: adding/removing vertices to $S,X,F$ and
marking/unmarking a vertex as a pin.  In the following table we gather
the tables used by the algorithm, together with an overview of the
running times of updates.  The meaning of the table entries uses
terminology that is described in the following sections.

The following lemma follows from each of the entries in the table
below, and will be proved in this section:
\begin{lemma}
  \label{lemma:ds:init-time}
  The data structure can be initialized in $O(c^k n)$ time.
\end{lemma}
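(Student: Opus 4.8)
The plan is to prove the lemma by summing, over all the tables maintained by the data structure, the time needed to compute them from scratch. The key structural fact I would exploit is that the input to the data structure is a nice tree decomposition of width $t = O(k)$ with $O(t\cdot n) = O(kn)$ bags and depth $O(t\log n)$, obtained in $O(k^{O(1)}n)$ time from the given decomposition via Bodlaender--Hagerup (Proposition~\ref{proposition:parallel}) followed by the standard "make nice" preprocessing described above. All subsequent cost estimates will therefore be measured against a decomposition with $O(kn)$ nodes.

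First I would handle the three \emph{special tables} that encode the current state $(S,X,F,\pin)$. At initialization $S=X=F=\emptyset$ and $\pin$ is an arbitrary vertex, so each special table is obtained by running its underlying bottom-up dynamic program once over the nice tree decomposition against the all-empty state. Each such table has one entry per bag, the entry at bag $i$ depends only on the bag $B_i$ (of size $O(k)$), the type of node $i$, the list of edges inside $B_i$ (precomputed in the preprocessing step), and the entries of the children of $i$; since there are at most $c^k$ possibilities for an entry, computing one entry costs $2^{O(k)}$ time, and there are $O(kn)$ bags, giving $2^{O(k)}\cdot n$ in total. The same accounting applies verbatim to the \emph{query tables}: each of the queries \qSsep{}, \qUsep{}, \qpin{}, \qnei{} has an associated table with $O(kn)$ entries, each computable in $2^{O(k)}$ time from child entries and the local bag data, because the corresponding dynamic programs (the balanced-$S$-separator DP built on Lemma~\ref{lem:balanced-3coloring}, the balanced-$U$-separator DP from the trick of Section~\ref{sec:firstCompress}, and the connectivity/neighborhood bookkeeping tracking the active component $U$) all have state space of size $c^k$ per bag. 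Summing a constant number of tables, each costing $2^{O(k)}n$, and adding the $O(k^{O(1)}n)$ preprocessing, yields the claimed $O(c^k n)$ bound (absorbing polynomial-in-$k$ factors into the base of the exponent).

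The main obstacle I anticipate — and the point that the detailed table in this section is meant to discharge — is justifying that each query's per-bag state really has size only $c^k$, in particular for \qSsep{} and \qUsep{}, where the naive dynamic program for balanced separators carries an extra cardinality coordinate of size $n$ (as flagged in Section~\ref{sec:firstCompress}). The resolution is exactly the combinatorial reformulation of Lemma~\ref{lem:balanced-3coloring} (for $S$-separators, where $|S|\le 4k+3$ so a partition of $S\cap B_i$ into four parts, together with a partition of $B_i$ into the three "sides" plus separator, is all that must be remembered) and the more involved locate-the-separation argument of Section~\ref{sec:queries} (for $U$-separators), so that in both cases the state is a bounded-arity labeling of the current bag rather than anything depending on $n$. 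A secondary, more routine obstacle is the bookkeeping needed so that the active component $U$ is handled correctly without storing "all ways the bag connects below it"; here one argues that, relative to a \emph{fixed} pin $\pin$, the relevant connectivity information at each bag is again a partition-type label of $B_i$, hence of size $c^k$. Once these state-size claims are in place for every table, the lemma follows by the uniform "$O(kn)$ bags times $2^{O(k)}$ per bag" count, which is precisely what the remainder of Section~\ref{section:datastructure} verifies table by table.
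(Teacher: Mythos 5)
Your proposal is correct and matches the paper's approach: preprocess into a nice tree decomposition of width $O(k)$, depth $O(k\log n)$, and $O(kn)$ bags via Bodlaender--Hagerup plus standard nicification, then initialize each of the constantly many tables ($P$, $C$, $CardU$, $T_1,\dots,T_4$) by a single bottom-up pass, spending $c^{O(k)}$ time per bag because each per-bag table is indexed by $c^{O(k)}$ signatures/interfaces; you also correctly identify that the nontrivial content is bounding the per-bag state (Lemma~\ref{lem:balanced-3coloring} for \qSsep{} and the pushed-separation reformulation for \qUsep{}), which the section then verifies table by table. The only imprecision is calling the per-bag data ``one entry'' --- the tables in the paper have one entry per (bag, signature) or (bag, signature, interface) pair, so there are $c^{O(k)}\cdot kn$ entries in total, each computed in $k^{O(1)}$ time --- but your accounting of $c^{O(k)}$ work per bag over $O(kn)$ bags is the same bound.
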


\vskip 0.3cm

\noindent\begin{tabular}{| c | c | c | c |}
\hline
{\bf{Table}} & {\bf{Meaning}} & {\bf{Update}} & {\bf{Initialization}} \\
\hline
$P[i]$ & Boolean value $\pin\in W_i$ & $O(t\cdot \log n)$ & $O(t\cdot n)$ \\
\hline
$C[i][(S_i,U_i)]$ & Connectivity information on $U^{\ext}_i$ & $O(3^t\cdot t^{O(1)}\cdot \log n)$ & $O(3^t\cdot t^{O(1)}\cdot n)$ \\
\hline
$CardU[i][(S_i,U_i)]$ & Integer value $|U^{\ext}_i\cap W_i|$ & $O(3^t\cdot t^{O(1)}\cdot \log n)$ & $O(3^t\cdot t^{O(1)}\cdot n)$ \\
\hline
$T_1[i][(S_i,U_i)]$ & Table for query \qnei & $O(3^t\cdot k^{O(1)}\cdot \log n)$ & $O(3^t\cdot k^{O(1)}\cdot  n)$ \\
\hline
$T_2[i][(S_i,U_i)][\psi]$ & Table for query \qSsep & $O(9^t\cdot k^{O(1)}\cdot \log n)$ & $O(9^t\cdot k^{O(1)}\cdot  n)$ \\
\hline
$T_3[i][(S_i,U_i,X_i,F_i)]$ & Table for query \qpin & $O(6^t\cdot t^{O(1)}\cdot \log n)$ & $O(6^t\cdot t^{O(1)}\cdot n)$ \\
\hline
$T_4[i][(S_i,U_i)][\psi]$ & Table for query \qUsep & $O(5^t\cdot k^{O(1)}\cdot \log n)$ & $O(5^t\cdot k^{O(1)}\cdot n)$ \\
\hline
\end{tabular}

\vskip 0.3cm

We now proceed to description of the description of the table $P$, and
then to the two tables $C$ and $CardU$ that handle the important
component $U$.  The tables $T_1,T_2,T_3$ are described together with the
description of realization of the corresponding queries.  Whenever
describing the table, we argue how the table is updated during updates
of the data structure, and initialized in the beginning.

\subsection{The table $P$}

In the table $P$, for every node $i$ of the tree decomposition we store a
boolean value $P[i]$ equal to $(\pin\in W_i)$.  We now show how to
maintain the table $P$ when the data structure is updated. The table $P$
needs to be updated whenever the pin $\pin$ is marked or unmarked.
Observe, that the only nodes $i$ for which the information whether
$\pin\in W_i$ changed, are the ones on the path from $r_\pin$ to the
root of the tree decomposition.  Hence, we can simply follow this path
and update the values.  As the tree decomposition has depth $O(t\log
n)$, this update can be performed in $O(t\cdot \log n)$ time.  As when
the data structure is initialized, no pin is assigned, $P$ is
initially filled with $\bot$.

\subsection{Maintaining the important component $U$}

Before we proceed to the description of the queries, let us describe
what is the reason of introducing the pin $\pin$.  During the
computation, the algorithm recursively considers smaller parts of the
graph, separated from the rest via a small separator: at each step we
have distinguished set $S$ and we consider only one connected
component $U$ of $G\setminus S$.  Unfortunately, we cannot afford
recomputing the tree decomposition of $U$ at each recurrence call, or
even listing the vertices of $U$.  Therefore we employ a different
strategy for identification of $U$.  We will distinguish one vertex of
$U$ as a representative pin $\pin$, and $U$ can then be defined as the
set of vertices reachable from $\pin$ in $G\setminus S$.  Instead of
recomputing $U$ at each recursive call we will simply change the
pin.

In the tables, for each node $i$ of the tree decomposition we store
entries for each possible intersection of $U$ with $B_i$, and in this
manner we are prepared for every possible interaction of $U$ with
$G_i$.  In this manner, changing the pin can be done more efficiently.
Information needs to be recomputed on two paths to the root in the
tree decomposition, corresponding the previous and the next pin, while
for subtrees unaffected by the change we do not need to recompute
anything as the tables stored there already contain information about
the new $U$ as well --- as they contain information for {\emph{every
    possible}} new $U$.  As the tree decomposition is of logarithmic
depth, the update time is logarithmic instead of linear.

We proceed to the formal description.  We store the information about
$U$ in two special tables: $C$ and $CardU$.  As we intuitively
explained, tables $C$ and $CardU$ store information on the
connectivity behavior in the subtree, for every possible interaction
of $U$ with the bag.  Formally, for every node of the tree
decomposition $i$ we store an entry for every member of the family of
{\emph{signatures}} of the bag $B_i$.  A signature of the bag $B_i$ is
a pair $(S_i,U_i)$, such that $S_i,U_i$ are disjoint subsets of $B_i$.
Clearly, the number of signatures is at most $3^{|B_i|}$.

Let $i$ be a node of the tree decomposition.  For a signature
$\phi=(S_i,U_i)$ of $B_i$, let $S^{\ext}_i=S_i\cup (S\cap W_i)$ and
$U^{\ext}_i$ consists of all the vertices reachable in $G_i\setminus
S^{\ext}_i$ from $U_i$ or $\pin$, providing that it belongs to $W_i$.
Sets $S^{\ext}_i$ and $U^{\ext}_i$ are called {\emph{extensions}} of
the signature $\phi$; note that given $S_i$ and $U_i$, the extensions
are defined uniquely.  We remark here that the definition of
extensions depend not only on $\phi$ but also on the node $i$; hence,
we will talk about extensions of signatures only when the associated
node is clear from the context.

We say that signature $\phi$ of $B_i$ with extensions $S^{\ext}_i$ and
$U^{\ext}_i$ is {\emph{valid}} if it holds that
\begin{itemize}
\item[(i)] $U^\ext_i\cap B_i=U_i$,
\item[(ii)] if $U_i\neq \emptyset$ and $\pin\in W_i$ (equivalently, $P[i]$ is 
  true), then the
  component of $G[U^{\ext}_i]$ that contains $\pin$ contains also at
  least one vertex of $U_i$.
\end{itemize}
Intuitively, invalidity means that $\phi$ cannot contain consistent
information about intersection of $U$ and $G_i$.  The second condition
says that we cannot fully forget the component of $\pin$, unless the
whole $U^{\ext}_i$ is already forgotten.

Formally, the following invariant explains what is stored in tables
$C$ and $CardU$:
\begin{itemize}
\item if $\phi$ is invalid then $C[i][\phi]=CardU[i][\phi]=\bot$;
\item otherwise, $C[i][\phi]$ contains an equivalence relation $R$
  consisting of all pairs of vertices $(a,b)\in U_i$ that are
  connected in $G_i[U^{\ext}_i]$, while $CardU[i][\phi]$ contains
  $|U^{\ext}_i\cap W_i|$.
\end{itemize}

Note that in this definition we actually ignore the information about
alignment of vertices of $B_i$ to set $S,F,X$ in the current state of
the graph: the stored information depends only on the alignment of forgotten
vertices and the signature of the bag that overrides the actual
information in the current state.  In this manner we are prepared for
possible changes in the data structure, as after an update some other
signature will reflect the current state of the graph.  Moreover, it
is clear from this definition that during the computation, the
alignment of every vertex $v$ in the current state of the graph is
being checked only in the single node $r_v$ when this vertex is being
forgotten; we use this property heavily to implement the updates
efficiently enough.

We now explain how for every node $i$, values in tables $C[i]$ and
$CardU[i]$ can be computed using entries for the children of $i$.
These formulas will be crucial both for implementing updates and
initialization.  We consider different cases, depending on the type of
node $i$. 

\vskip 0.3cm
\noindent{\bf{Case 1: Leaf node.}} If $i$ is a leaf node then
$C[i][(\emptyset,\emptyset)]=\emptyset$ and $CardU[i][(\emptyset,\emptyset)]=0$.  \vskip 0.3cm

\noindent{\bf{Case 2: Introduce node.}} Let $i$ be a node that
introduces vertex $v$, and $j$ be its only child.  Consider some
signature $\phi=(S_i,U_i)$ of $B_i$; we would like to compute
$R_i = C[i][\phi]$.  Let $\phi'$ be a natural projection of $\phi$ onto
$B_j$, that is, $\phi'=(S_i\cap B_j, U_i\cap B_j)$.  Let
$R_j = C[j][\phi']$.  We consider some sub-cases, depending on the
alignment of $v$ in $\phi$.

\vskip 0.1cm {\bf{Case 2.1: $v\in S_i$.}} If we introduce a vertex from
$S_i$, then it follows that extensions of $U_i=U_j$ are equal.
Therefore, we can put $C[i][\phi]=C[j][\phi']$ and
$CardU[i][\phi]=CardU[j][\phi']$.

\vskip 0.1cm {\bf{Case 2.2: $v\in U_i$.}} In the beginning we check
whether conditions of validity are not violated.  First, if $v$ is the
only vertex of $U_i$ and $P[i]=\true$, then we simply put
$C[i][\phi]=\bot$: condition (ii) of validity is violated.  Second, we
check whether $v$ is adjacent only to vertices of $S_j$ and $U_j$; if
this is not the case, we put $C[i][\phi]=\bot$ as condition (i) of
validity is violated.

If the validity checks are satisfied, we can infer that the extension
$U^{\ext}_i$ of $U_i$ is extension $U^{\ext}_j$ of $U_j$ with $v$
added; this follows from the fact that $B_j$ separates $v$ from $W_j$,
so the only vertices of $U^{\ext}_i$ adjacent to $v$ are already
belonging to $U_j$.  Now we would like to compute the equivalence
relation $R_i$ out of $R_j$.  Observe that $R_i$ should be basically
$R_j$ augmented by connections introduced by the new vertex $v$
between its neighbors in $B_j$.  Formally, $R_i$ may be obtained from
$R_j$ by merging equivalence classes of all the neighbors of $v$ from
$U_j$, and adding $v$ to the obtained equivalence class; if $v$ does
not have any neighbors in $U_j$, we put it as a new singleton
equivalence class.  Clearly, $CardU[i][\phi]=CardU[j][\phi']$.

\vskip 0.1cm

{\bf{Case 2.3: $v\in B_i\setminus (S_i\cup U_i)$.}} We first check whether the
validity constraints are not violated.  As $v$ is separated from $W_j$
by $B_j$, the only possible violation introduced by $v$ is that $v$ is
adjacent to a vertex from $U_j$.  In this situation we put
$C[i][\phi]=CardU[i][\phi]=\bot$, and otherwise we can put
$C[i][\phi]=C[j][\phi']$ and $CardU[i][\phi]=CardU[j][\phi']$, because
extensions of $\phi$ and $\phi'$ are equal.

\vskip 0.3cm

\noindent{\bf{Case 3: Forget node.}} Let $i$ be a node that forgets
vertex $w$, and $j$ be its only child.  Consider some signature
$\phi=(S_i,U_i)$ of $B_i$ and define extensions $S^{\ext}_i$,
$U^{\ext}_i$ for this signature.  Observe that there is at most one
valid signature $\phi'=(S_j,U_j)$ of $B_j$ for which
$S^{\ext}_j=S^{\ext}_i$ and $U^{\ext}_j=U^{\ext}_i$, and this
signature is simply $\phi$ with $w$ added possibly to $S_i$ or $U_i$,
depending whether it belongs to $S^{\ext}_i$ or $U^{\ext}_i$: the
three candidates are $\phi_S=(S_i\cup \{w\},U_i)$,
$\phi_U=(S_i,U_i\cup \{w\})$ and $\phi_0=(S_i,U_i)$.
Moreover, if $\phi$ is valid then so is $\phi'$.  Formally, in the
following manner we can define signature $\phi'$, or conclude that $\phi$
is invalid:
\begin{itemize}
\item if $w\in S$, then $\phi'=\phi_S$;
\item otherwise, if $w=\pin$ then $\phi'=\phi_U$;
\item otherwise, we look into entries $C[j][\phi_U]$ and $C[j][\phi_0]$.  If
  \begin{itemize}
  \item[(i)] $C[j][\phi_U]=C[j][\phi_0]=\bot$ then $\phi$ is invalid, and we 
    put
    $C[i][\phi]=CardU[i][\phi]=\bot$;
  \item[(ii)] if $C[j][\phi_U]=\bot$ or $C[j][\phi_0]=\bot$, we take 
    $\phi'=\phi_0$ or
    $\phi'=\phi_U$, respectively;
  \item[(iii)] if $C[j][\phi_U]\neq\bot$ and $C[j][\phi_0]\neq\bot$, it follows 
    that $w$
    must be a member of a component of $G_i\setminus S^{\ext}_i$ that
    is fully contained in $W_i$ and does not contain $\pin$.  Hence we
    take $\phi'=\phi_0$.
  \end{itemize}
\end{itemize}
The last point is in fact a check whether $w\in U^{\ext}_i$: whether $w$
is connected to a vertex from $U_i$ in $G_i$, can be looked up in
table $C[j]$ by adding or not adding $w$ to $U_i$, and checking the
stored connectivity information.  If $w\in S^{\ext}_i$ or $w\in
U^{\ext}_i$, we should be using the information for the signature with
$S_i$ or $U_i$ updated with $w$, and otherwise we do not need to add
$w$ anywhere.

As we argued before, if $\phi$ is valid then so does $\phi'$, hence if
$C[j][\phi']=\bot$ then we can take $C[i][\phi]=CardU[i][\phi]=\bot$.
On the other hand, if $\phi'$ is valid, then the only possibility for
$\phi$ to be invalid is when condition (ii) cease to be satisfied.
This could happen only if $\phi'=\phi_U$ and $w$ is in a singleton
equivalence class of $C[j][\phi']$ (note that then the connected
component corresponding to this class needs to necessarily contain
$\pin$, as otherwise we would have $\phi'=\phi_0$).
Therefore, if this is the case, we put
$C[i][\phi]=CardU[i][\phi]=\bot$, and otherwise we conclude that
$\phi$ is valid and move to defining $C[i][\phi]$ and $CardU[i][\phi]$.

Let now $R_j=C[j][\phi']$.  As extensions of $\phi'$ and $\phi$ are equal, it
follows directly from the maintained invariant that $R_i$ is equal to
$R_j$ with $w$ removed from its equivalence class.  Moreover,
$CardU[i][\phi]$ is equal to $CardU[j][\phi']$, possibly incremented
by $1$ if we concluded that $\phi'=\phi_U$.

\vskip 0.3cm

\noindent{\bf{Case 4: Join node.}} Let $i$ be a join node and
$j_1,j_2$ be its two children.  Consider some signature
$\phi=(S_i,U_i)$ of $B_i$.  Let $\phi_1=(S_i,U_i)$ be a signature of
$B_{j_1}$ and $\phi_2=(S_i,U_i)$ be a signature of $B_{j_2}$.  From
the maintained invariant it follows that $C[i][\phi]$ is a minimum
transitive closure of $C[j_1][\phi_1]\cup C[j_2][\phi_2]$, or $\bot$
if any of these entries contains $\bot$.  Similarly,
$CardU[i][\phi]=CardU[j_1][\phi_1]+CardU[j_2][\phi_2]$.

\vskip 0.3cm

We now explain how to update tables $C$ and $CardU$ in $O(3^t\cdot
t^{O(1)}\cdot \log n)$ time.  We perform a similar strategy as with
table $P$: whenever some vertex $v$ is included or
removed from $S$, or marked or unmarked as a pin, we follow the path
from $r_v$ to the root and fully recompute the whole tables $C,CardU$
in the traversed nodes using the formulas presented above.  At each
step we recompute the table for some node using the tables of its
children; these tables are up to date since they did not need an
update at all, or were updated in the previous step.  Observe that
since the alignment of $v$ in the current state of the graph is
accessed only in computation for $r_v$, the path from $r_v$ to the
root of the decomposition consists of all the nodes for which the
tables should be recomputed.  Note also that when marking or unmarking
the pin $\pin$, we must first update $P$ and then $C$ and $CardU$.
The update takes $O(3^t\cdot t^{O(1)}\cdot \log n)$ time:
re-computation of each table takes $O(3^t\cdot t^{O(1)})$ time, and we
perform $O(t\log n)$ re-computations as the tree decomposition has
depth $O(t\log n)$.

Similarly, tables $C$ and $CardU$ can be initialized in $O(3^t\cdot
t^{O(1)}\cdot n)$ time by processing the tree in a bottom-up manner:
for each node of the tree decomposition, in $O(3^t\cdot t^{O(1)})$
time we compute its table based on the tables of the children, which
were computed before.

\subsection{Queries}

In our data structure we store one table per each query.  When
describing every query, we first introduce the formal invariant on
storage of table's entry, and how this stored information can be
computed based on the entries for children.  We then shortly discuss
performing updates and initialization of the tables, as they are in
all the cases based on the same principle as with tables $C$ and
$CardU$.  Queries themselves can be performed by reading a single
entry of the data structure, with the exception of query \qUsep, whose
implementation is more complex.


\subsubsection{Query \qnei}

\newcommand{\tldr}{\XBox}

We begin the description of the queries with the simplest one, namely
\qnei{}.  This query lists all the vertices of $S$ that are adjacent
to $U$.  In the algorithm we have an implicit bound on the size of
this neighborhood, which we can use to cut the computation when the
accumulated list grows too long.  We use $\ell$ to denote this bound;
in our case we have that $\ell=O(k)$.

\defquery{\qnei}{A list of vertices of $N(U)\cap S$, or marker '$\tldr$'
  if their number is larger than $\ell$.}{$O(\ell)$}

Let $i$ be a node of the tree decomposition, let $\phi=(S_i,U_i)$ be a
signature of $B_i$, and let $U^{\ext}_i, S^{\ext}_i$ be extensions of
this signature.  In entry $T_1[i][\phi]$ we store the following:
\begin{itemize}
\item if $\phi$ is invalid then $T_1[i][\phi]=\bot$;
\item otherwise $T_1[i][\phi]$ stores the list of elements of
  $N(U^{\ext}_i)\cap S^{\ext}_i$ if there is at most $\ell$ of them,
  and $\tldr$ if there is more of them.
\end{itemize}
Note that the information whether $\phi$ is invalid can be looked up in
table $C$.  The return value of the query is stored in $T[r][(\emptyset,\emptyset)]$.

We now present how to compute entries of table $T_1$ for every node
$i$ depending on the entries of children of $i$.  We consider
different cases, depending of the type of node $i$.  For every case,
we consider only signatures that are valid, as for the invalid ones we
just put value $\bot$.  

\vskip 0.3cm
\noindent{\bf{Case 1: Leaf node.}} If $i$ is a leaf node then
$T_1[i][(\emptyset,\emptyset)]=\emptyset$.  \vskip 0.3cm

\noindent{\bf{Case 2: Introduce node.}} Let $i$ be a node that
introduces vertex $v$, and $j$ be its only child.  Consider some
signature $\phi=(S_i,U_i)$ of $B_i$; we would like to compute
$T_1[i][\phi]=L_i$.  Let $\phi'$ be a natural intersection of $\phi$
with $B_j$, that is, $\phi'=(S_i\cap B_j, U_i\cap B_j)$.  Let
$T_1[j][\phi']=L_j$.  We consider some sub-cases, depending on the
alignment of $v$ in $\phi$.

\vskip 0.1cm {\bf{Case 2.1: $v\in S_i$.}} If we introduce a vertex from
$S_i$, we have that $U$-extensions of $\phi$ and $\phi'$ are equal.
It follows that $L_i$ should be simply list $L_j$ with $v$ appended if
it is adjacent to any vertex of $U_j=U_i$.  Note here that $v$ cannot
be adjacent to any vertex of $U^{\ext}_i\setminus U_i$, as $B_j$
separates $v$ from $W_j$.  Hence, we copy the list $L_j$ and append
$v$ if it is adjacent to any vertex of $U_j$ and $L_j\neq \tldr$.
However, if the length of the new list exceeds the $\ell$ bound, we
replace it by $\tldr$.  Note that copying the list takes $O(\ell)$
time, as its length is bounded by $\ell$.

\vskip 0.1cm {\bf{Case 2.2: $v\in U_i$.}} If we introduce a vertex from
$U_i$, then possibly some vertices of $S_i$ gain a neighbor in
$U^{\ext}_i$.  Note here that vertices of $S^{\ext}_i\setminus S_i$
are not adjacent to the introduced vertex $v$, as $B_j$ separates $v$
from $W_j$.  Hence, we copy list $L_j$ and append to it all the
vertices of $S_i$ that are adjacent to $v$, but were not yet on $L_j$.
If we exceed the $\ell$ bound on the length of the list, we put
$\tldr$ instead.  Note that both copying the list and checking whether
a vertex of $S_i$ is on it can be done in $O(\ell)$ time, as its
length is bounded by $\ell$.

\vskip 0.1cm

{\bf{Case 2.3: $v\in B_i\setminus (S_i\cup U_i)$.}} In this case extensions of $\phi$
and $\phi'$ are equal, so it follows from the invariant that we may
simply put $T[i][\phi]=T[j][\phi']$.

\vskip 0.3cm

\noindent{\bf{Case 3: Forget node.}} Let $i$ be a node that forgets
vertex $w$, and $j$ be its only child.  Consider some signature
$\phi=(S_i,U_i)$ of $B_i$.  Define $\phi'$ in the same manner as in
the Forget step in the computation of $C$.  As extensions of $\phi$
and $\phi'$ are equal, it follows that $T_1[i][\phi]=T_1[j][\phi']$.

\vskip 0.3cm

\noindent{\bf{Case 4: Join node.}} Let $i$ be a join node and
$j_1,j_2$ be its two children.  Consider some signature
$\phi=(S_i,U_i)$ of $B_i$.  Let $\phi_1=(S_i,U_i)$ be a signature of
$B_{j_1}$ and $\phi_2=(S_i,U_i)$ be a signature of $B_{j_2}$.  It
follows that $T_1[i][\phi]$ should be the merge of lists
$T_1[j_1][\phi_1]$ and $T_1[j_2][\phi_2]$, where we remove the
duplicates.  Of course, if any of these entries contains $\tldr$, we
simply put $\tldr$.  Otherwise, the merge can be done in $O(\ell)$
time due to the bound on lengths of $T_1[j_1][\phi_1]$ and
$T_1[j_2][\phi_2]$, and if the length of the result exceeds the bound
$\ell$, we replace it by $\tldr$.

\vskip 0.3cm

Similarly as before, for every addition/removal of vertex $v$ to/from
$S$, or marking/unmarking $v$ as a pin, we can update table $T_1$ in
$O(3^t\cdot k^{O(1)}\cdot \log n)$ time by following the path from
$r_v$ to the root and recomputing the tables in the traversed nodes.
Also, $T_1$ can be initialized in $O(3^t\cdot k^{O(1)}\cdot n)$ time
by processing the tree decomposition in a bottom-up manner and
applying the formula for every node.  Note that updating/initializing
table $T_1$ must be performed after updating/initializing tables $P$
and $C$.

\subsubsection{Query \qSsep}\label{sec:qssep}


We now move to the next query, namely finding a balanced
$S$-separator.  By Lemma~\ref{lemma:halfhalf}, as $G[U\cup S]$ has 
treewidth at most $k$, such a $\frac{1}{2}$-balanced $S$-separator
of size at most $k+1$ always exists.  We therefore implement the following query.

\defquery{\qSsep}{A list of elements of a $\frac{1}{2}$-balanced $S$-separator of
  $G[U\cup S]$ of size at most $k+1$, or $\bot$ if no such
  exists.}{$O(t^{O(1)})$}

Before we proceed to the implementation of the query, we show how to
translate the problem of finding a $S$-balanced separator into a
partitioning problem.

\begin{lemma}[Lemma~\ref{lem:balanced-3coloring}, restated]
  Let $G$ be a graph and $S\subseteq V(G)$.  Then a set $X$ is a balanced
  $S$-separator if and only if there exists a partition
  $(M_1,M_2,M_3)$ of $V(G)\setminus X$, such that there is no edge
  between $M_i$ and $M_j$ for $i\neq j$, and $|M_i\cap S|\leq |S|/2$
  for $i=1,2,3$.
\end{lemma}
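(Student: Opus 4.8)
The plan is to prove the two directions of the equivalence separately, using only the definitions. Recall that $X$ being a balanced $S$-separator means every connected component of $G\setminus X$ contains at most $|S|/2$ vertices of $S$.

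For the ($\Leftarrow$) direction, suppose such a partition $(M_1,M_2,M_3)$ of $V(G)\setminus X$ exists. The key observation is that since there is no edge between $M_i$ and $M_j$ for $i\neq j$, every connected component $C$ of $G\setminus X = G[M_1\cup M_2\cup M_3]$ is entirely contained in a single part $M_i$: a path inside $C$ cannot cross from one part to another without using a forbidden edge. Hence $|C\cap S| \leq |M_i\cap S| \leq |S|/2$, so $X$ is a balanced $S$-separator. This direction is essentially immediate.

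For the ($\Rightarrow$) direction, suppose $X$ is a balanced $S$-separator; I need to distribute the components of $G\setminus X$ into three groups $M_1,M_2,M_3$ so that each group carries at most $|S|/2$ vertices of $S$. Since each individual component has at most $|S|/2$ vertices of $S$, this is a bin-packing / scheduling statement: items of size at most half the capacity can always be packed into three bins of capacity $|S|/2$ when the total is $|S|$ — actually even two bins suffice here, but three is all we need. The clean argument is greedy: order the components arbitrarily and assign each one to whichever of $M_1$ (say) it fits into while keeping $|M_1\cap S|\leq |S|/2$, and once $M_1$ is ``full enough'' move to $M_2$, etc. More carefully: let $s_1\geq s_2\geq\dots$ be the $S$-weights of the components; put components into $M_1$ until adding the next one would exceed $|S|/2$, then the weight already in $M_1$ is more than $|S|/2 - |S|/2 = 0$... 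I should instead argue that a prefix of the components has total $S$-weight in the interval $[\lceil |S|/2\rceil - \text{(something)}, |S|/2]$ — but the slickest route is just: place components one at a time into the currently-lightest bin among $M_1,M_2,M_3$; if some bin ever exceeds $|S|/2$ after an insertion, then before the insertion it had weight $> |S|/2 - |S|/2 = 0$, which gives no contradiction directly, so instead observe that if the lightest bin already has weight $> |S|/2 - s$ where $s\leq|S|/2$ is the incoming item, then all three bins have weight $> |S|/2 - s \geq 0$; summing, the total placed so far exceeds $3(|S|/2-s)$, and combined with $s\leq|S|/2$ one derives that the remaining unplaced weight is small enough — here the bound is comfortable because we only need two of the three bins to absorb everything. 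In fact the simplest fully rigorous statement: components with weights each $\leq |S|/2$ and total $|S|$ can be split into two groups each of weight $\leq \lceil 2|S|/3 \rceil$... — no, I want $\leq |S|/2$, so let me just use three bins and the exchange/greedy argument, which visibly works since three bins of capacity $|S|/2$ have total capacity $3|S|/2 > |S|$ and no single item blocks a bin (each item $\leq |S|/2 =$ bin capacity). The main (minor) obstacle is writing this packing argument cleanly; everything else is definitional. I would present the $(\Leftarrow)$ direction first as a one-line component-confinement argument, then the $(\Rightarrow)$ direction via the greedy packing of components into three parts, noting that the constraint ``each component has $\leq |S|/2$ vertices of $S$'' is exactly what makes every component fit into a single part of capacity $|S|/2$, and that three parts more than suffice to hold a total of $|S|$.
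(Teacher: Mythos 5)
Your $(\Leftarrow)$ direction is fine and is the same one-line component-confinement argument as the paper. The gap is in $(\Rightarrow)$. You correctly reduce it to a packing claim about the $S$-weights of the components, but the argument you settle on is not valid. The ``place each component into the currently-lightest of three bins'' greedy, in arbitrary order, can genuinely fail: with $|S|=4$ (bin capacity $2$) and component $S$-weights $0.99, 0.99, 0.99, 1.03$, the greedy reaches $(0.99,0.99,0.99)$ and then overflows one bin to $2.02$ on the last item, even though $\{0.99,0.99\},\{0.99\},\{1.03\}$ packs. Your final justification---``three bins of total capacity $3|S|/2>|S|$ and no single item blocks a bin, so it visibly works''---is not a proof even in principle: two bins of capacity $5$ and items $3,3,3$ have total capacity $10>9$ with every item fitting in a bin, yet they do not pack. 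So the key combinatorial step is left unproved.

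The paper supplies exactly this missing piece as Lemma~\ref{lem:comb-balanced-3coloring}, via a merging greedy: start with each component as its own group and, while more than three groups remain, merge the two groups of smallest $S$-weight. Whenever at least four groups are present, the two smallest are term-by-term dominated by two others, so their sum is at most half of the grand total; hence every merge preserves the bound $\leq |S|/2$, and the process stops at three groups. For what it is worth, your \emph{first} sketch (fill $M_1$ until something does not fit, then move to $M_2$, then $M_3$) can also be completed: the component $t_1$ that overflows $M_1$ lands in $M_2$, so $w(M_1)+w(M_2)\geq w(M_1)+s(t_1)>|S|/2$; if some $t_3$ overflows $M_3$ then $w(M_3)+s(t_3)>|S|/2$; adding these two gives a total $S$-weight $>|S|$, a contradiction. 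You need to commit to one of these actual arguments rather than the ``lightest-bin / capacity'' heuristic.
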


The following combinatorial observation is crucial in the proof of
Lemma~\ref{lem:balanced-3coloring}.

\begin{lemma}\label{lem:comb-balanced-3coloring}
  Let $a_1,a_2,\dots,a_p$ be non-negative integers such that $\sum_{i=1}^p
  a_i = q$ and $a_i \leq q/2$ for $i=1,2,\ldots,p$.  Then there exists
  a partition of these integers into three sets, such that sum of
  integers in each set is at most $q / 2$.
\end{lemma}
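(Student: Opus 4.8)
The plan is to prove Lemma~\ref{lem:comb-balanced-3coloring} by a greedy assignment of the integers to three sets, processing them in decreasing order of size and always placing the current integer into a set whose running sum is currently smallest.

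First I would dispose of the trivial cases: if $p \leq 3$, simply put each $a_i$ into its own set; since $a_i \leq q/2$ for every $i$, each of the three sums is at most $q/2$ and we are done. So assume $p \geq 4$, and reindex so that $a_1 \geq a_2 \geq \dots \geq a_p$. The key observation is that at most three of the $a_i$ can be strictly larger than $q/4$, since otherwise four of them would sum to strictly more than $q$, contradicting $\sum_i a_i = q$ and nonnegativity; hence $a_i \leq q/4$ for all $i \geq 4$.

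Then I would run the following procedure: place $a_1, a_2, a_3$ into sets $1, 2, 3$ respectively, and for $j = 4, 5, \dots, p$ in turn, add $a_j$ to whichever of the three sets currently has the smallest sum (ties broken arbitrarily). I claim, by induction on the number of integers placed, that after every step all three running sums are at most $q/2$. The base case is immediate, since the sums are $a_1, a_2, a_3$, each at most $q/2$. For the inductive step, suppose the invariant holds just before $a_j$ (with $j \geq 4$) is placed; let $s_1 \geq s_2 \geq s_3$ be the three running sums, so $a_j$ is added to the set of sum $s_3$. If the invariant failed, we would have $s_3 + a_j > q/2$. Since $s_1 \geq s_3$ and $s_2 \geq s_3$, the total weight placed after this step is $s_1 + s_2 + s_3 + a_j \geq 3 s_3 + a_j$; but this total is a partial sum of $a_1, \dots, a_p$ and hence at most $q$, so $3 s_3 + a_j \leq q$, i.e. $s_3 \leq (q - a_j)/3$. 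Combining with $s_3 > q/2 - a_j$ gives $q/2 - a_j < (q - a_j)/3$, which simplifies to $q < 4 a_j$, contradicting $a_j \leq q/4$. This establishes the invariant, and applying it after the last integer is placed yields the desired partition of the $a_i$ into three sets, each of total at most $q/2$.

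I do not expect a genuine obstacle here; the only point requiring a little care is the bookkeeping observation that the only integers capable of breaking the greedy invariant are those exceeding $q/4$ — which is exactly why the three largest integers are peeled off and handled separately at the start, before the ``add to the lightest bin'' phase begins. Everything else is the standard greedy argument together with the trivial fact that a partial sum of the $a_i$ never exceeds $q$.
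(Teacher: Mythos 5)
Your proof is correct, but it takes a genuinely different route from the paper's. The paper starts from all singleton sets and repeatedly merges the two sets with the smallest sums until only three remain, using the observation that when at least four sets exist, the two lightest together account for at most half the total (because the two smallest of four or more numbers summing to at most $q$ contribute at most half that sum). You instead sort the items in decreasing order, seed the three bins with the three largest items, and run the classical ``add to the lightest bin'' (LPT-style) greedy on the rest; your key observation is that $a_j \leq q/4$ for every $j \geq 4$, which is exactly what makes the lightest-bin invariant survive. Both are short averaging arguments. The paper's merging variant avoids the need to sort and the $q/4$ bound, while your LPT variant is the standard list-scheduling argument and generalizes cleanly to any number $b$ of bins (with the threshold $q/b$ playing the role of $q/4$). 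Either proof is acceptable; there is no gap in yours.
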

\begin{proof}
  Without loss of generality assume that $p>3$, as otherwise the claim
  is trivial.
  We perform a greedy procedure as follows.  At each time step of the
  procedure we have a number of sets, maintaining an invariant that each
  set is of size at most $q/2$.  During the procedure we gradually
  merge the sets, i.e., we take two sets and replace them with their
  union.  We begin with each integer in its own set.  If we arrive at
  three sets, we end the procedure, thus achieving a feasible
  partition of the given integers.  We therefore need to present how
  the merging step is performed.
  
  At each step we choose the two sets with smallest sums of elements
  and merge them (i.e., replace them by their union).  As the number
  of sets is at least $4$, the sum of elements of the two chosen ones
  constitute at most half of the total sum, so after merging them we
  obtain a set with sum at most $q/2$.  Hence, unless the number of
  sets is at most $3$, we can always apply this merging step.
\end{proof}

\begin{proof}[Proof of Lemma~\ref{lem:balanced-3coloring}]
  One of the implications is trivial: if there is a partition
  $(M_1,M_2,M_3)$ of $G\setminus X$ with the given properties, then
  every connected component of $G\setminus X$ must be fully contained
  either in $M_1$, $M_2$, or $M_3$, hence it contains at most $|S|/2$
  vertices of $S$.  We proceed to the second implication.
  
  Assume that $X$ is a balanced $S$-separator of $G$ and let
  $C_1,C_2,\ldots,C_p$ be connected components of $G\setminus X$.  For
  $i=1,2,\ldots, p$, let $a_p=|S\cap C_i|$.  By
  Lemma~\ref{lem:comb-balanced-3coloring}, there exists a partition
  of integers $a_i$ into three sets, such that the sum of elements of
  each set is at most $|S|/2$.  If we partition vertex sets of
  components $C_1,C_2,\ldots,C_p$ in the same manner, we obtain a
  partition $(M_1,M_2,M_3)$ of $V(G)\setminus X$ with postulated properties.
\end{proof}

Lemma \ref{lem:balanced-3coloring} shows that, when looking for a
balanced $S$-separator, instead of trying to bound the number of
elements of $S$ in each connected component of $G[U\cup S]\setminus X$
separately, which could be problematic because of connectivity
condition, we can just look for a partition of $G[U\cup S]$ into four
sets with prescribed properties that can be checked locally.  This
suggest the following definition of table $T_2$.

In table $T_2$ we store entries for every node $i$ of the tree
decomposition, for every signature $\phi=(S_i,U_i)$ of $B_i$, and for
every $8$-tuple $\psi=(M_1,M_2,M_3,X,m_1,m_2,m_3,x)$ where
\begin{itemize}
\item $(M_1,M_2,M_3,X)$ is a partition of $S_i\cup U_i$,
\item $m_1,m_2,m_3$ are integers between $0$ and $|S|/2$,
\item and $x$ is an integer between $0$ and $k+1$.
\end{itemize}
This $8$-tuple $\psi$ will be called the {\emph{interface}}, and
intuitively it encodes the interaction of a potential solution with
the bag.  Observe that the set $U$ is not given in our graph directly but
rather via connectivity information stored in table $C$, so we need to
be prepared also for all the possible signatures of the bag; this is
the reason why we introduce the interface on top of the signature.
Note however, that the number of possible pairs $(\phi,\psi)$ is at
most $9^{|B_i|}\cdot k^{O(1)}$, so for every bag $B_i$ we store
$9^{|B_i|}\cdot k^{O(1)}$ entries.

We proceed to the formal definition of what is stored in table $T_2$.
For a fixed signature $\phi=(S_i,U_i)$ of $B_i$, let
$(S^{\ext}_i,U^{\ext}_i)$ be its extension, we say that partitioning
$(M^{\ext}_1, M^{\ext}_2, M^{\ext}_3, X^{\ext})$ of $S^{\ext}_i\cup
U^{\ext}_i$ is an {\emph{extension consistent}} with interface
$\psi=(M_1,M_2,M_3,X,m_1,m_2,m_3,x)$, if:
\begin{itemize}
\item $X^{\ext}\cap B_i=X$ and $M^{\ext}_j\cap B_i=M_j$ for $j=1,2,3$;
\item there is no edge between vertices of $M^{\ext}_j$ and
  $M^{\ext}_{j'}$ for $j\neq j'$;
\item $|X^{\ext}\cap W_i|=x$ and $|M_j^{\ext}\cap W_i|=m_j$ for $j=1,2,3$.
\end{itemize}

In entry $T_2[i][\phi][\psi]$ we store:
\begin{itemize}
\item $\bot$ if $\phi$ is invalid or no consistent extension of $\psi$ exists;
\item otherwise, a list of length $x$ of vertices of $X^{\ext}\cap W_i$
  in some consistent extension of $\psi$.
\end{itemize}

The query \qSsep can be realized in $O(t^{O(1)})$ time by checking entries
in the table $T$, namely 
$T[r][(\emptyset,\emptyset)][(\emptyset,\emptyset,\emptyset,\emptyset,m_1,m_2,m_3,x)]$
for all possible values $0\leq m_j\leq |S|/2$ and $0\leq x\leq k+1$, and
outputting the list contained in any of them that is not equal to
$\bot$, or $\bot$ if all of them are equal to $\bot$.

We now present how to compute entries of table $T_2$ for every node
$i$ depending on the entries of children of $i$.  We consider
different cases, depending of the type of node $i$.  For every case,
we consider only signatures that are valid, as for the invalid ones we
just put value $\bot$.  

\vskip 0.3cm
\noindent{\bf{Case 1: Leaf node.}} If $i$ is a leaf node then
$T_2[i][(\emptyset,\emptyset)][(\emptyset,\emptyset,\emptyset,\emptyset,0,0,0,0)]=\emptyset$,
and all the other interfaces are assigned $\bot$.  \vskip 0.3cm

\noindent{\bf{Case 2: Introduce node.}} Let $i$ be a node that
introduces vertex $v$, and $j$ be its only child.  Consider some
signature $\phi=(S_i,U_i)$ of $B_i$ and an interface
$\psi=(M_1,M_2,M_3,X,m_1,m_2,m_3,x)$; we would like to compute
$T_2[i][\phi][\psi]=L_i$.  Let $\phi',\psi'$ be natural intersections
of $\phi,\psi$ with $B_j$, respectively, that is, $\phi'=(S_i\cap B_j,
U_i\cap B_j)$ and $\psi'=(M_1\cap B_j,M_2\cap B_j,M_3\cap B_j,X\cap
B_j,m_1,m_2,m_3,x)$.  Let $T_2[j][\phi'][\psi']=L_j$.  We consider
some sub-cases, depending on the alignment of $v$ in $\phi$ and $\psi$.
The cases with $v$ belonging to $M_1$, $M_2$ and $M_3$ are symmetric,
so we consider only the case for $M_1$.

\vskip 0.1cm {\bf{Case 2.1: $v\in X$.}} Note that every extension
consistent with interface $\psi$ is an extension consistent with
$\psi'$ after trimming to $G_j$.  On the other hand, every extension
consistent with $\psi'$ can be extended to an extension consistent
with $\psi$ by adding $v$ to the extension of $X$.  Hence, it follows
that we can simply take $L_i=L_j$.

\vskip 0.1cm {\bf{Case 2.2: $v\in M_1$.}} Similarly as in the previous
case, every extension consistent with interface $\psi$ is an extension
consistent with $\psi'$ after trimming to $G_j$.  On the other hand,
if we are given an extension consistent with $\psi'$, we can add $v$
to $M_1$ and make an extension consistent with $\psi$ if and only if
$v$ is not adjacent to any vertex of $M_2$ or $M_3$; this follows from
the fact that $B_j$ separates $v$ from $W_j$, so the only vertices
from $M^{\ext}_2$, $M^{\ext}_3$ that $v$ could be possibly adjacent
to, lie in $B_j$.  However, if $v$ is adjacent to a vertex of $M_2$ or
$M_3$, we can obviously put $L_i=\bot$, as there is no extension
consistent with $\psi$: property that there is no edge between
$M^{\ext}_1$ and $M^{\ext}_3\cup M^{\ext}_3$ is broken already in the
bag.  Otherwise, by the reasoning above we can put $L_i=L_j$.

\vskip 0.1cm

{\bf{Case 2.3: $v\in B_i\setminus (S_i\cup U_i)$.}} Again, in this case we have
one-to-one correspondence of extensions consistent with $\psi$ with
$\psi'$ after trimming to $B_j$, so we may simply put $L_i=L_j$.

\vskip 0.3cm

\noindent{\bf{Case 3: Forget node.}} Let $i$ be a node that forgets
vertex $w$, and $j$ be its only child.  Consider some signature
$\phi=(S_i,U_i)$ of $B_i$, and some interface
$\psi=(M_1,M_2,M_3,X,m_1,m_2,m_3,x)$; we would like to compute
$T_2[i][\phi][\psi]=L_i$.  Let $\phi'=(S_j,U_j)$ be the only extension of 
signature $\phi$ to $B_j$ that has the same extension as
$\phi$; $\phi'$ can be deduced by looking up which signatures are
found valid in table $C$ in the same manner as in the forget step for
computation of table $C$.  We consider three cases depending on
alignment of $w$ in $\phi'$:

\vskip 0.1cm {\bf{Case 3.1: $w\notin S_j\cup U_j$.}} If $w$ is not in $S_j\cup
U_j$, then it follows that we may put $L_i=T_2[j][\phi'][\psi']$:
extensions of $\psi$ consistent with $\psi$ correspond one-to-one to
extensions consistent with $\psi'$.

\vskip 0.1cm {\bf{Case 3.2: $w\in S_j$.}} Assume that there exist some
extension $(M^{\ext}_1, M^{\ext}_2, M^{\ext}_3, X^{\ext})$ consistent
with $\psi$.  In this extension, vertex $w$ is either in $M^{\ext}_1$,
$M^{\ext}_2$, $M^{\ext}_3$, or in $X^{\ext}$.  Let us define the
corresponding interfaces:
\begin{itemize}
\item $\psi_1=(M_1\cup \{w\},M_2,M_3,X,m_1-1,m_2,m_3,x)$;
\item $\psi_2=(M_1,M_2\cup \{w\},M_3,X,m_1,m_2-1,m_3,x)$;
\item $\psi_3=(M_1,M_2,M_3\cup \{w\},X,m_1,m_2,m_3-1,x)$;
\item $\psi_X=(M_1,M_2,M_3,X\cup \{w\},m_1,m_2,m_3,x-1)$.
\end{itemize}
If any of integers $m_1-1,m_2-1,m_3-1,x-1$ turns out to be negative,
we do not consider this interface.  It follows that for at least one
$\psi'\in \{\psi_1,\psi_2,\psi_3,\psi_X\}$ there must be an extension
consistent with $\psi'$: it is just the extension $(M^{\ext}_1,
M^{\ext}_2, M^{\ext}_3, X^{\ext})$.  On the other hand, any extension
consistent with any of interfaces $\psi_1,\psi_2,\psi_3,\psi_X$ is
also consistent with $\psi$.  Hence, we may simply put
$L_i=T_2[i][\phi'][\psi']$, and append $w$ on the list in case
$\psi'=\psi_X$.

\vskip 0.1cm {\bf{Case 3.3: $w\in U_j$.}} We proceed in the same manner
as in Case 3.2, with the exception that we do not decrement $m_j$ by
$1$ in interfaces $\psi_j$ for $j=1,2,3$.

\vskip 0.3cm

\noindent{\bf{Case 4: Join node.}} Let $i$ be a join node and
$j_1,j_2$ be its two children.  Consider some signature
$\phi=(S_i,U_i)$ of $B_i$, and an interface
$\psi=(M_1,M_2,M_3,X,m_1,m_2,m_3,x)$; we would like to compute
$T_2[i][\phi][\psi]=L_i$.  Let $\phi_1=(S_i,U_i)$ be a signature of
$B_{j_1}$ and $\phi_2=(S_i,U_i)$ be a signature of $B_{j_2}$.  Assume
that there is some extension $(M^{\ext}_1, M^{\ext}_2, M^{\ext}_3,
X^{\ext})$ consistent with $\psi$.  Define $m^p_q=|W_{j_p}\cap M_q|$
and $x^p=|W_{j_p}\cap X|$ for $p=1,2$ and $q=1,2,3$; note that
$m^1_q+m^2_q=m_q$ for $q=1,2,3$ and $x^1+x^2=x$.  It follows that in
$G_{j_1}$, $G_{j_2}$ there are some extensions consistent with
$(M_1,M_2,M_3,X,m^1_1,m^1_2,m^1_3,x^1)$ and
$(M_1,M_2,M_3,X,m^2_1,m^2_2,m^2_3,x^2)$, respectively --- these are
simply extension $(M^{\ext}_1, M^{\ext}_2, M^{\ext}_3, X^{\ext})$
intersected with $V_i,V_j$, respectively.  On the other hand, if we
have some extensions in $G_{j_1}$, $G_{j_2}$ consistent with
$(M_1,M_2,M_3,X,m^1_1,m^1_2,m^1_3,x^1)$ and
$(M_1,M_2,M_3,X,m^2_1,m^2_2,m^2_3,x^2)$ for numbers $m^q_p,x^p$ such
that $m^1_q+m^2_q=m_q$ for $q=1,2,3$ and $x^1+x^2=x$, then the
point-wise union of these extensions is an extension consistent with
$(M_1,M_2,M_3,X,m_1,m_2,m_3,x)$.  It follows that in order to compute
$L_i$, we need to check if for any such choice of $m^q_p,x^p$ we have
non-$\bot$ entries in
$T_2[j_1][\phi_1][(M_1,M_2,M_3,X,m^1_1,m^1_2,m^1_3,x^1)]$ and
$T_2[j_2][\phi_2][(M_1,M_2,M_3,X,m^2_1,m^2_2,m^2_3,x^2)]$.  This is
the case, we put the union of the lists contained in these entries as
$L_i$, and otherwise we put $\bot$.  Note that computing the union of
these lists takes $O(k)$ time as their lengths are bounded by $k$, and
there is $O(k^4)$ possible choices of $m^q_p,x^p$ to check.

\vskip 0.3cm

Similarly as before, for every addition/removal of vertex $v$ to/from
$S$ or marking/unmarking $v$ as a pin, we can update table $T_2$ in
$O(9^t\cdot k^{O(1)}\cdot \log n)$ time by following the path from
$r_v$ to the root and recomputing the tables in the traversed nodes.
Also, $T_2$ can be initialized in $O(9^t\cdot k^{O(1)}\cdot n)$ time
by processing the tree decomposition in a bottom-up manner and
applying the formula for every node.  Note that updating/initializing
table $T_2$ must be performed after updating/initializing tables $P$
and $C$.

\subsubsection{Query \qpin}

We now proceed to the next query.  Recall that at each point, the
algorithm maintains the set $F$ of vertices marking components of
$G[U\cup S]\setminus (X\cup S)$ that have been already processed.  A
component is marked as processed when one of its vertices is added to
$F$.  Hence, we need a query that finds the next component to process
by returning any of its vertices.  As in the linear-time approximation
algorithm we need to process the components in decreasing order of
sizes, the query in fact provides a vertex of the largest component.

\defquery{\qpin}{A pair $(u,\ell)$, where (i) $u$ is a vertex of a
  component of $G[U\cup S]\setminus (X\cup S)$ that does not contain a
  vertex from $F$ and is of maximum size among such components, and
  (ii) $\ell$ is the size of this component; or, $\bot$ if no such
  component exists.}{$O(1)$}

To implement the query we create a table similar to table $C$, but
with entry indexing enriched by subsets of the bag corresponding to
possible intersections with $X$ and $F$.  Formally, we store entries
for every node $i$, and for every signature $\phi=(S_i,U_i,X_i,F_i)$,
which is a quadruple of subsets of $B_i$ such that (i) $S_i\cap
U_i=\emptyset$, (ii) $X_i\subseteq S_i\cup U_i$, (iii) $F_i\subseteq
U_i\setminus X_i$.  The number of such signatures is equal to
$6^{|B_i|}$.

For a signature $\phi=(S_i,U_i,X_i,F_i)$, we say that
$(S^{\ext}_i,U^{\ext}_i,X^{\ext}_i,F^{\ext}_i)$ is the extension of
$\phi$ if (i) $(S^{\ext}_i,U^{\ext}_i)$ is the extension of
$(S_i,U_i)$ as in the table $C$, (ii) $X^{\ext}_i=X_i\cup (W_i\cap X)$
and $F^{\ext}_i=F_i\cup (W_i\cap F)$.  We may now state what is stored
in entry $T_3[i][(S_i,U_i,X_i,F_i)]$:
\begin{itemize}
\item if $(S_i,U_i)$ is invalid then we store $\bot$;
\item otherwise we store:
  \begin{itemize}
  \item an equivalence relation $R$ between vertices of $U_i\setminus X_i$,
    such that $(v_1,v_2)\in R$ if and only if $v_1,v_2$ are connected
    in $G[U^{\ext}_i\setminus X^{\ext}_i]$;
  \item for every equivalence class $K$ of $R$, an integer $m_K$ equal
    to the number of vertices of the connected component of
    $G[U^{\ext}_i\setminus X^{\ext}_i]$ containing $K$, which are
    contained in $W_i$, or to $\bot$ if this connected component
    contains a vertex of $F^{\ext}_i$;
  \item a pair $(u,m)$, where $m$ is equal to the size of the largest
    component of $G[U^{\ext}_i\setminus X^{\ext}_i]$ not containing
    any vertex of $F^{\ext}_i$ or $U_i$, while $u$ is any vertex of
    this component; if no such component exists, then $(u,m)=(\bot,\bot)$.
  \end{itemize}
\end{itemize}

Clearly, query \qpin may be implemented by outputting the pair $(u,m)$
stored in the entry
$T_3[r][(\emptyset,\emptyset,\emptyset,\emptyset)]$, or $\bot$ if this
pair is equal to $(\bot,\bot)$.

We now present how to compute entries of table $T_3$ for every node
$i$ depending on the entries of children of $i$.  We consider
different cases, depending of the type of node $i$.  For every case,
we consider only signatures $(S_i,U_i,X_i,F_i)$ for which $(S_i,U_i)$
is valid, as for the invalid ones we just put value $\bot$. 

\vskip 0.3cm
\noindent{\bf{Case 1: Leaf node.}} If $i$ is a leaf node then
$T_3[i][(\emptyset,\emptyset,\emptyset,\emptyset)]=(\emptyset,\emptyset,(\bot,\bot))$.  \vskip 0.3cm

\noindent{\bf{Case 2: Introduce node.}} Let $i$ be a node that
introduces vertex $v$, and $j$ be its only child.  Consider some
signature $\phi=(S_i,U_i,X_i,F_i)$ of $B_i$; we would like to compute
$T_3[i][\phi]=(R_i,(m^i_K)_{K\in R_i},(u_i,m_i))$.  Let $\phi'$ be a
natural projection of $\phi$ onto $B_j$, that is, $\phi'=(S_i\cap B_j,
U_i\cap B_j, X_i\cap B_j, F_i\cap B_j)$.  Let
$T_3[j][\phi']=(R_j,(m^j_K)_{K\in R_j},(u_j,m_j))$; note that this
entry we know, but entry $T_3[i][\phi]$ we would like to compute.  We
consider some sub-cases, depending on the alignment of $v$ in $\phi$.

\vskip 0.1cm {\bf{Case 2.1: $v\in U_i\setminus (X_i\cup F_i)$.}} If we introduce a
vertex from $U_i\setminus (X_i\cup F_i)$, then the extension of $\phi$
is just the extension of $\phi'$ plus vertex $v$ added to
$U^{\ext}_i$.  If we consider the equivalence classes of $R_i$, then
these are equivalence classes of $R_j$ but possibly some of them have
been merged because of connections introduced by vertex $v$.  As $B_j$
separates $v$ from $W_j$, $v$ could only create connections between
two vertices from $B_j\cap (U_j\setminus X_j)$.  Hence, we can obtain
$R_i$ from $R_j$ by merging all the equivalence classes of vertices of
$U_j\setminus X_j$ adjacent to $v$; the corresponding entry in
sequence $(m_K)_{K\in R_i}$ is equal to the sum of entries from the
sequence $(m^j_K)_{K\in R_j}$ corresponding to the merged classes.  If
any of these entries is equal to $\bot$, we put simply $\bot$.  If $v$
was not adjacent to any vertex of $U_j\setminus X_j$, we put $v$ in a
new equivalence class $K$ with $m_K=0$.  Clearly, we can also put
$(u_i,m_i)=(u_j,m_j)$.

\vskip 0.1cm {\bf{Case 2.2: $v\in (U_i\setminus X_i)\cap F_i$.}} We perform in the
same manner as in Case 2.2, with the exception that the new entry in
sequence $(m_K)_{K\in R_i}$ will be always equal to $\bot$, as the
corresponding component contains a vertex from $F^{\ext}_i$.

\vskip 0.1cm {\bf{Case 2.3: $v\in S_i\cup X_i$.}} In this case we can
simply put $T_3[i][\phi]=T_3[j][\phi']$ as the extensions of $\phi$
and $\phi'$ are the same with the exception of $v$ being included into
$X^{\ext}_i$ and/or into $S^{\ext}_i$, which does not influence
information to be stored in the entry.

\vskip 0.1cm

{\bf{Case 2.4: $v\in B_i\setminus (S_i\cup U_i)$.}} In this case we can simply put
$T_3[i][\phi]=T_3[j][\phi']$ as the extensions of $\phi$ and $\phi'$ are equal.

\vskip 0.3cm

\noindent{\bf{Case 3: Forget node.}} Let $i$ be a node that forgets
vertex $w$, and $j$ be its only child.  Consider some signature
$\phi=(S_i,U_i,X_i,F_i)$ of $B_i$; we would like to compute
$T_3[i][\phi]=(R_i,(m^i_K)_{K\in R_i},(u_i,m_i))$.  Let
$(S^{\ext}_i,U^{\ext}_i,X^{\ext}_i,F^{\ext}_i)$ be extension of
$\phi$.  Observe that there is exactly one signature
$\phi'=(S_j,U_j,X_j,F_j)$ of $B_j$ with the same extension as $\phi$,
and this signature is simply $\phi$ with $w$ added possibly to $S_i$,
$U_i$, $X_i$ or $F_i$, depending whether it belongs to $S^{\ext}_i$,
$U^{\ext}_i$, $X^{\ext}_i$, or $F^{\ext}_i$.  Coloring $\phi'$ may be
defined similarly as in case of forget node for table $C$; we just
need in addition to include $w$ in $X^{\ext}_i$ or $F^{\ext}_i$ if it
belongs to $X$ or $F$, respectively.

Let $T_3[j][\phi]=(R_j,(m^j_K)_{K\in R_j},(u_j,m_j))$.  As the extensions
of $\phi$ and $\phi'$ are equal, it follows that we may take $R_i$
equal to $R_j$ with $w$ possibly excluded from its equivalence class.
Similarly, for every equivalence class $K\in R_i$ we put $m^i_K$ equal
to $m^j_{K'}$, where $K'$ is the corresponding equivalence class of
$R_j$, except the class that contained $w$ which should get the
previous number incremented by $1$, providing it was not equal to
$\bot$.  We also put $(u_i,m_i)=(u_j,m_j)$ except the situation, when
we forget the last vertex of a component of $G[U^{\ext}_j\setminus
X^{\ext}_j]$: this is the case when $w$ is in $U_j\setminus X_j$ and
constitutes a singleton equivalence class of $R_j$.  Let then
$m^j_{\{w\}}$ be the corresponding entry in sequence $(m^j_K)_{K\in
  R_j}$.  If $m^j_{\{w\}}=\bot$, we simply put $(u_i,m_i)=(u_j,m_j)$.
Else, if $(u_j,m_j)=(\bot,\bot)$ or $m^j_{\{w\}}>m_j$, we put
$(u_i,m_i)=(w,m^j_{\{w\}})$, and otherwise we put
$(u_i,m_i)=(u_j,m_j)$.

\vskip 0.3cm

\noindent{\bf{Case 4: Join node.}} Let $i$ be a join node and
$j_1,j_2$ be its two children.  Consider some signature
$\phi=(S_i,U_i,X_i,F_i)$ of $B_i$; we would like to compute
$T_3[i][\phi]=(R_i,(m^i_K)_{K\in R_i},(u_i,m_i))$.  Let
$\phi_1=(S_i,U_i,X_i,F_i)$ be a signature of $B_{j_1}$ and
$\phi_2=(S_i,U_i,X_i,F_i)$ be a signature of $B_{j_2}$.  Let
$T_3[j_1][\phi_1]=(R_{j_1},(m^{j_1}_K)_{K\in
  R_{j_1}},(u_{j_1},m_{j_1}))$ and
$T_3[j_2][\phi_2]=(R_{j_2},(m^{j_2}_K)_{K\in
  R_{j_2}},(u_{j_2},m_{j_2}))$.  Note that equivalence relations
$R_{j_1}$ and $R_{j_2}$ are defined on the same set $U_i\setminus X_i$.  It
follows from the definition of $T_3$ that we can put:
\begin{itemize}
\item $R_i$ to be the minimum transitive closure of $R_{j_1}\cup
  R_{j_2}$;
\item for every equivalence class $K$ of $R_i$, $m^i_K$ equal to the
  sum of (i) numbers $m^{j_1}_{K_1}$ for $K_1\subseteq K$, $K_1$ being
  an equivalence class of $R_{j_1}$, and (ii) numbers $m^{j_2}_{K_2}$
  for $K_2\subseteq K$, $K_2$ being an equivalence class of $R_{j_2}$;
  if any of these numbers is equal to $\bot$, we put $m^i_K=\bot$;
\item $(u_i,m_i)$ to be equal to $(u_{j_1},m_{j_1})$ or
  $(u_{j_2},m_{j_2})$, depending whether $m_{j_1}$ or $m_{j_2}$ is
  larger; if any of these numbers is equal to $\bot$, we take the
  second one, and if both are equal to $\bot$, we put $(u_i,m_i)=(\bot,\bot)$.
\end{itemize}

\vskip 0.3cm

Similarly as before, for every addition/removal of vertex $v$ to/from
$S$, to/from $X$, to/from $F$, or marking/unmarking $v$ as a pin, we
can update table $T_3$ in $O(6^t\cdot t^{O(1)}\cdot \log n)$ time by
following the path from $r_v$ to the root and recomputing the tables
in the traversed nodes.  Also, $T_3$ can be initialized in $O(6^t\cdot
t^{O(1)}\cdot n)$ time by processing the tree decomposition in a
bottom-up manner and applying the formula for every node.  Note that
updating/initializing table $T_3$ must be performed after
updating/initializing tables $P$ and $C$.

\subsubsection{Query \qUsep}\label{sec:queries}

In this section we implement the last query, needed for the
linear-time algorithm; the query is significantly more involved than
the previous one.  The query specification is as follows:

\defquery{\qUsep}{A list of elements of a $\frac{8}{9}$-balanced
  separator of $G[U]$ of size at most $k+1$, or $\bot$ if no such
  exists.}{$O(c^t\cdot k^{O(1)}\cdot \log n)$}

  Note that Lemma~\ref{lemma:halfhalf} guarantees that in fact $G[U]$ contains 
  a
$\frac{1}{2}$-balanced separator of size at most $k+1$.  Unfortunately,
we are not able to find a separator with such a good guarantee on the
sizes of the sides; the difficulties are explained in 
Section~\ref{sec:outline}.
Instead, we again make use of the precomputed approximate tree
decomposition to find a balanced separator with slightly worse
guarantees on the sizes of the sides.

In the following we will also use the notion of a {\emph{balanced separation}}. 
For a graph $G$, we say that a partition $(L,X,R)$ of $V(G)$ is an 
$\alpha${\emph{-balanced separation of $G$}}, if there is no edge between $L$ 
and $R$, and $|L|,|R|\leq \alpha |V(G)|$. The {\emph{order}} of a separation is 
the size of $X$. Clearly, if $(L,X,R)$ is an $\alpha$-balanced separation of 
$G$, then $X$ is an $\alpha$-balanced separator of $G$. By folklore [see the 
proof of Lemma~\ref{lem:halfhalfalg}] we know that every graph of treewidth at 
most $k$ has a $\frac{2}{3}$-balanced separation of order at most $k+1$.

\paragraph{Expressing the search for a balanced separator as a
  maximization problem.}

Before we start explaining the query implementation, we begin with a
few definitions that enable us to express finding a balanced separator
as a simple maximization problem.

\begin{definition}
  Let $G$ be a graph, and $T_L,T_R$ be disjoint sets of terminals in
  $G$.  We say that a partition $(L,X,R)$ of $V(G)$ is a
  {\emph{terminal separation of $G$ of order $\ell$}}, if the
  following conditions are satisfied:
  \begin{itemize}
  \item[(i)] $T_L\subseteq L$ and $T_R\subseteq R$;
  \item[(ii)] there is no edge between $L$ and $R$;
  \item[(iii)] $|X|\leq \ell$.
  \end{itemize}
  We moreover say that $(L,X,R)$ is {\emph{left-pushed}}
  ({\emph{right-pushed}}) if $|L|$ ($|R|$) is maximum among possible terminal 
  separations of order $\ell$.
\end{definition}

Pushed terminal separations are similar to important separators of
Marx~\cite{Marx06}, and their number for fixed $T_L,T_R$ can be
exponential in $\ell$.  Pushed terminal separations are useful for us
because of the following lemma, that enables us to express finding a
small balanced separator as a maximization problem, providing that
some separator of a reasonable size is given.

\begin{lemma}\label{lem:c4vc}
  Let $G$ be a graph of treewidth at most $k$ and let $(A_1,B,A_2)$ be
  some separation of $G$, such that
  $|A_1|,|A_2|\leq \frac{3}{4}|V(G)|$.  Then there exists a partition
  $(T_L,X_B,T_R)$ of $B$ and integers $k_1,k_2$ with
  $k_1+k_2+|X_B|\leq k+1$, such that if $G_1,G_2$ are $G[A_1\cup
  (B\setminus X_B)]$ and $G[A_2\cup (B\setminus X_B)]$ with terminals
  $T_L,T_R$, then
  \begin{itemize}
  \item[(i)] there exist a terminal separations of $G_1,G_2$ of orders
    $k_1,k_2$, respectively;
  \item[(ii)] for any left-pushed terminal separation $(L_1,X_1,R_1)$
    of order $k_1$ in $G_1$ and any right-pushed separation
    $(L_2,X_2,R_2)$ of order $k_2$ in $G_2$, the triple $(L_1\cup
    T_L\cup L_2,X_1\cup X_B\cup X_2,R_1\cup T_R\cup R_2)$ is a
    terminal separation of $G$ of order at most $k+1$ with $|L_1\cup T_L\cup
    L_2|,|R_1\cup T_R\cup R_2|\leq \frac{7}{8}|V(G)|+\frac{|X|+(k+1)}{2}$.
  \end{itemize}
\end{lemma}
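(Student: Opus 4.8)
The plan is to read off all the required data from a suitably chosen balanced separation of $G$ and then verify (i), the first assertion of (ii), and the cardinality bound in turn; only the last step is delicate. Since $\tw(G)\le k$, Lemma~\ref{lemma:halfhalf} (together with the packing argument from the proof of Lemma~\ref{lem:halfhalfalg}) yields a balanced separation $(L^*,X^*,R^*)$ of $G$ of order at most $k+1$. The point is to choose it so that $L^*$ lies almost entirely in $A_1\cup B$ and $R^*$ almost entirely in $A_2\cup B$ — precisely, so that $|L^*\cap(A_2\cup B)|\le\tfrac78|V(G)|-|A_1|+O(k)$ and $|R^*\cap(A_1\cup B)|\le\tfrac78|V(G)|-|A_2|+O(k)$; the hypotheses $|A_1|,|A_2|\le\tfrac34|V(G)|$, hence $|A_1\cup B|,|A_2\cup B|\ge\tfrac14|V(G)|$, are exactly what make such a choice available. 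Then put $X_B:=X^*\cap B$, $T_L:=L^*\cap B$, $T_R:=R^*\cap B$ (a partition of $B$), and $k_1:=|X^*\cap A_1|$, $k_2:=|X^*\cap A_2|$, so that $k_1+k_2+|X_B|=|X^*|\le k+1$.

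For (i), I would restrict $(L^*,X^*,R^*)$ to $V(G_1)=A_1\cup T_L\cup T_R$, obtaining $\big((L^*\cap A_1)\cup T_L,\ X^*\cap A_1,\ (R^*\cap A_1)\cup T_R\big)$: this partitions $V(G_1)$, contains $T_L$ on the left and $T_R$ on the right, has no edge between its two sides (as $(L^*,X^*,R^*)$ is a separation of $G$), and has order $k_1$ — so it is a terminal separation of $G_1$ of order $k_1$, and symmetrically for $G_2$ and $k_2$. For the first part of (ii), write $L:=L_1\cup T_L\cup L_2$, $X:=X_1\cup X_B\cup X_2$, $R:=R_1\cup T_R\cup R_2$. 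Because $L_i,X_i,R_i\subseteq V(G_i)\subseteq A_i\cup B$ with $T_L\subseteq L_i$, $T_R\subseteq R_i$ and $X_i\cap(T_L\cup T_R)=\emptyset$, one gets $L_i=(L_i\cap A_i)\cup T_L$, $X_i=X_i\cap A_i$, $R_i=(R_i\cap A_i)\cup T_R$; it follows that $(L,X,R)$ is a partition of $V(G)=A_1\cup T_L\cup X_B\cup T_R\cup A_2$ with $|X|=|X_1|+|X_B|+|X_2|\le k_1+|X_B|+k_2\le k+1$, and that there is no edge between $L$ and $R$ — such an edge would have both endpoints in $V(G_1)$ or both in $V(G_2)$ (the mixed $A_1$–$A_2$ case being excluded by the separation $(A_1,B,A_2)$), contradicting that $(L_i,X_i,R_i)$ is a terminal separation of $G_i$.

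For the size bound, the set descriptions above give $L_1\cap L_2=T_L$, so $|L|=|L_1|+|L_2|-|T_L|$ (and symmetrically for $R$). Since $T_R\subseteq R_1$, trivially $|L_1|\le|V(G_1)|-|T_R|=|A_1|+|T_L|$; since $(L_2,X_2,R_2)$ is right-pushed, $|R_2|\ge|R_2^*|$ where $(L_2^*,X_2^*,R_2^*)=\big((L^*\cap A_2)\cup T_L,X^*\cap A_2,(R^*\cap A_2)\cup T_R\big)$ is the restriction from (i), whence $|L_2|\le|V(G_2)|-|R_2^*|=|L^*\cap A_2|+|T_L|+k_2$. Combining, $|L|\le|A_1|+|L^*\cap(A_2\cup B)|+k_2$, and the choice of $(L^*,X^*,R^*)$ bounds this by $\tfrac78|V(G)|+O(k)$; the symmetric computation bounds $|R|$. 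Tracking the additive terms ($k_1,k_2,|X_B|$) exactly rather than as $O(k)$ then yields the stated $\tfrac78|V(G)|+\tfrac{|X_B|+(k+1)}{2}$.

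The hard part is the very first step: showing that $\tw(G)\le k$ together with $|A_1|,|A_2|\le\tfrac34|V(G)|$ already forces the existence of a balanced separation of order $\le k+1$ whose two sides are aligned with $A_1\cup B$ and $A_2\cup B$ as required. I expect this to need more than the cardinality-maximality of pushed terminal separations — namely their dominance (``important-separator'') property — together with a careful packing of the components of $G$ after deleting a $\tfrac12$-balanced separator. Everything after that choice is the routine bookkeeping sketched above.
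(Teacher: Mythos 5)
You have the right skeleton: start from a balanced separation $(L^*,X^*,R^*)$ of $G$ of order $\le k+1$, set $(T_L,X_B,T_R)=(L^*\cap B,X^*\cap B,R^*\cap B)$, $k_i=|X^*\cap A_i|$, verify (i) and the separation property in (ii) by restriction, and reduce the size bound on $|L_1\cup T_L\cup L_2|$ to a lower bound on $|R^*\cap A_2|$ (equivalently $|L^*\cap A_1|$) via the pushed property. Your bookkeeping for (i) and the first half of (ii) is correct, and your inequality $|L|\le|A_1|+|L^*\cap(A_2\cup B)|+k_2$ is algebraically equivalent to the paper's $|L|\le|V(G)|-|R_2\cap A_2|\le|V(G)|-|R^*\cap A_2|$.

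However, you explicitly stop at the one step that carries the content of the lemma: you say the required alignment of $(L^*,X^*,R^*)$ with $(A_1,B,A_2)$ is ``the hard part,'' speculate that it needs important-separator dominance and a careful component packing, and leave it unproven. That is a genuine gap, and the speculation points in the wrong direction. No special choice of separation and no dominance property are needed. Take \emph{any} $\tfrac{2}{3}$-balanced separation $(L,X,R)$ of order $\le k+1$. Since $|L|,|R|\ge \tfrac{1}{3}|V(G)|-(k+1)$ and $L,R\subseteq A_1\cup B\cup A_2$, one has
$|L\cap A_1|+|L\cap A_2|\ge \tfrac{1}{3}|V(G)|-|B|-(k+1)$ and likewise for $R$; and since $|A_1|,|A_2|\ge \tfrac{1}{4}|V(G)|-|B|$ and $A_i\subseteq L\cup X\cup R$, one has $|L\cap A_i|+|R\cap A_i|\ge \tfrac{1}{4}|V(G)|-|B|-(k+1)$. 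A two-line case analysis on these four inequalities shows that if $|L\cap A_1|<\tfrac{1}{8}|V(G)|-\tfrac{|B|+(k+1)}{2}$, then both $|L\cap A_2|$ and $|R\cap A_1|$ exceed that threshold, so after swapping the roles of $L$ and $R$ (which is free) one may assume $|L\cap A_1|,|R\cap A_2|\ge \tfrac{1}{8}|V(G)|-\tfrac{|B|+(k+1)}{2}$. Plugging this into your final inequality closes the argument with exactly the stated constant. In short: the missing step is an averaging argument on the four ``corner'' sizes $|L\cap A_i|$, $|R\cap A_i|$, not a structural argument about separators, and without it your proof does not go through.
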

\begin{proof}
  As the treewidth of $G$ is at most $k$, there is a separation
  $(L,X,R)$ of $G$ such that $|L|,|R|\leq \frac{2}{3} |V(G)|$ and $|X|\leq k+1$ 
  by folklore [see the proof of lemma~\ref{lem:halfhalfalg}].  Let us set 
  $(T_L,X_B,T_R)=(L\cap B, X\cap B, R\cap B)$,
  $k_1=|X\cap A_1|$ and $k_2=|X\cap A_2|$.  Observe that $X\cap A_1$
  and $X\cap A_2$ are terminal separations in $G_1$ and $G_2$ of
  orders $k_1$ and $k_2$, respectively, hence we are done with (i).
  We proceed to the proof of (ii).
  
  Let us consider sets $L\cap A_1$, $L\cap A_2$, $R\cap A_1$ and $R\cap A_2$.
  Since $(A_1,B,A_2)$ and $(L,X,R)$ are $\frac{1}{4}$- and
  $\frac{1}{3}$- balanced separations, respectively, we know that:
  \begin{itemize}
  \item $|L\cap A_1|+|L\cap A_2|+|B|\geq \frac{1}{3}|V(G)|-(k+1)$;
  \item $|R\cap A_1|+|R\cap A_2|+|B|\geq \frac{1}{3}|V(G)|-(k+1)$;
  \item $|L\cap A_1|+|R\cap A_1|+(k+1)\geq \frac{1}{4}|V(G)|-|B|$;
  \item $|L\cap A_2|+|R\cap A_2|+(k+1)\geq \frac{1}{4}|V(G)|-|B|$.
  \end{itemize}
  We claim that either $|L\cap A_1|,|R\cap A_2|\geq
  \frac{1}{8}|V(G)|-\frac{|B|+(k+1)}{2}$, or $|L\cap A_2|,|R\cap A_1|\geq
  \frac{1}{8}|V(G)|-\frac{|B|+(k+1)}{2}$.  Assume first that $|L\cap
  A_1|<\frac{1}{8}|V(G)|-\frac{|B|+(k+1)}{2}$.  Observe that then $|L\cap
  A_2|\geq
  \frac{1}{3}|V(G)|-|B|-(k+1)-(\frac{1}{8}|V(G)|-\frac{|B|+(k+1)}{2})\geq
  \frac{1}{8}|V(G)|-\frac{|B|+(k+1)}{2}$.  Similarly, $|R\cap A_1|\geq
  \frac{1}{4}|V(G)|-|B|-(k+1)-(\frac{1}{8}|V(G)|-\frac{|B|+(k+1)}{2})\geq
  \frac{1}{8}|V(G)|-\frac{|B|+(k+1)}{2}$.  The case when $|R\cap
  A_2|<\frac{1}{8}|V(G)|-\frac{|B|+(k+1)}{2}$ is symmetric.  Without loss
  of generality, by possibly flipping separation $(L,X,R)$, assume
  that $|L\cap A_1|,|R\cap A_2|\geq \frac{1}{8}|V(G)|-\frac{|B|+(k+1)}{2}$.
  
  Let $(L_1,X_1,R_1)$ be any left-pushed terminal separation of order
  $k_1$ in $G_1$ and $(L_2,X_2,R_2)$ be any right-pushed terminal
  separation of order $k_2$ in $G_2$.  By the definition of being
  left- and right-pushed, we have that $|L_1\cap A_1|\geq |L\cap
  A_1|\geq \frac{1}{8}|V(G)|-\frac{|B|+(k+1)}{2}$ and $|R_2\cap A_2|\geq
  |R\cap A_2|\geq \frac{1}{8}|V(G)|-\frac{|B|+(k+1)}{2}$.  Therefore, we
  have that $|L_1\cup T_L\cup L_2|\leq
  \frac{7}{8}|V(G)|+\frac{|B|+(k+1)}{2}$ and $|L_1\cup T_L\cup L_2|\leq
  \frac{7}{8}|V(G)|+\frac{|B|+(k+1)}{2}$.
\end{proof}

The idea of the rest of the implementation is as follows.  First,
given an approximate tree decomposition of with $O(k)$ in the data
structure, in logarithmic time we will find a bag $B_{i_0}$ that
splits the component $U$ in a balanced way.  This bag will be used as
the separator $B$ in the invocation of Lemma~\ref{lem:c4vc}; the right
part of the separation will consist of vertices contained in the
subtree below $B_{i_0}$, while the whole rest of the tree will
constitute the left part.  Lemma~\ref{lem:c4vc} ensures us that we may
find some balanced separator of $U$ by running two maximization
dynamic programs: one in the subtree below $B_{i_0}$ to identify a
right-pushed separation, and one on the whole rest of the tree to find
a left-pushed separation.  As in all the other queries, we will store
tables of these dynamic programs in the data structure, maintaining them
with $O(c^t \log n)$ update times.

\paragraph{Case of a small $U$}
At the very beginning of the implementation of the query we read
$|U|$, which is stored in the entry $CardU[r][(\emptyset,\emptyset)]$.
If it turns out that $|U|< 36(k+t+2)=O(k)$, we perform the following
explicit construction.  We apply a DFS search from $\pin$ to identify
the whole $U$; note that this search takes $O(k^2)$ time, as $U$ and
$S$ are bounded linearly in $k$.  Then we build subgraph $G[U]$, which
again takes $O(k^2)$ time.  As this subgraph has $O(k)$ vertices and
treewidth at most $k$, we may find its $\frac{1}{2}$-balanced
separator of order at most $k+1$ in $c^k$ time using a brute-force
search through all the possible subsets of size at most $k+1$.  This
separator may be returned as the result of the query.  Hence, from now
on we assume that $|U|\geq 36(k+t+2)$.

\paragraph{Tracing $U$}
We first aim to identify bag $B_{i_0}$ in logarithmic time.  The
following lemma encapsulates the goal of this subsection.  Note that
we are not only interested in the bag itself, but also in the
intersection of the bag with of $S$ and $U$ (defined as the connected
component of $G\setminus S$ containing $\pin$).  While intersection
with $S$ can be trivially computed given the bag, we will need to
trace the intersection with $U$ inside the computation.

\begin{lemma}\label{lem:tracing}
  There exists an algorithm that, given access to the data structure,
  in $O(t^{O(1)}\cdot \log n)$ time finds a node $i_0$ of the tree
  decomposition such that $|U|/4 \leq |W_{i_0}\cap U|\leq |U|/2$
  together with two subsets $U_i,S_i$ of $B_{i_0}$ such that $U_0=U\cap
  B_{i_0}$ and $S_0=S\cap B_{i_0}$.
\end{lemma}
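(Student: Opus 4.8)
The plan is to descend from the root of the nice, logarithmic-depth tree decomposition along a single root-to-leaf path, maintaining at each visited node $i$ the ``correct'' signature $\phi_i=(S\cap B_i,\,U\cap B_i)$, and stopping as soon as the quantity $c_i:=|W_i\cap U|$ first falls into the window $[|U|/4,\,|U|/2]$. Since the decomposition has depth $O(t\log n)$ and, as I will argue, each step of the descent costs only $O(t^{O(1)})$ time, this yields the claimed running time. Note we may assume $|U|\ge 4$: the query \qUsep{} invokes this routine only after the case $|U|<36(k+t+2)$ has been handled explicitly.

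First I would record the combinatorial skeleton of the descent. By the standard properties of nice tree decompositions, $c_r=|U|$ at the root $r$ (empty bag), $c_i=0$ at a leaf, $c_i=c_j$ at an introduce node with child $j$, $c_i=c_j+[\,w\in U\,]$ at a forget node forgetting $w$ with child $j$, and $c_i=c_{j_1}+c_{j_2}$ at a join node with children $j_1,j_2$. I descend maintaining the invariant $c_i>|U|/2$ (true at $r$): at an introduce node move to the unique child; at a forget node with child $j$, move to $j$ if $c_j>|U|/2$, and otherwise observe that $c_j\le|U|/2<c_i$ forces $[\,w\in U\,]=1$, so $c_j=c_i-1>|U|/2-1\ge|U|/4$ and we stop with $i_0:=j$; at a join node with, say, $c_{j_1}\ge c_{j_2}$, move to $j_1$ if $c_{j_1}>|U|/2$, and otherwise note $c_{j_1}\ge c_i/2>|U|/4$ and $c_{j_1}\le|U|/2$, so we stop with $i_0:=j_1$. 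A leaf is never reached while the invariant holds (there $c_i=0$), so the descent terminates at a forget- or join-child whose value lies in $[|U|/4,|U|/2]$, as required.

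The part I expect to require the most care is showing that each step can be executed using the data structure in $O(t^{O(1)})$ time, i.e.\ that we can both read off the relevant values $c_j$ and propagate the correct signature down one level. The key fact to establish is that $\phi_i=(S\cap B_i,U\cap B_i)$ is always a \emph{valid} signature of $B_i$ and that its extension $U^{\ext}_i$ equals $U\cap V_i$; this follows from the fact that $B_i$ separates $W_i$ from $V\setminus V_i$ together with the connectedness of $U$ in $G\setminus S$, by exactly the argument already underlying the invariant of table $C$. Granting this, $CardU[i][\phi_i]=|U^{\ext}_i\cap W_i|=c_i$, so reading $c_i$ is a single table lookup. To pass from $\phi_i$ to a child's signature I would reuse the (inverse of the) case analysis already performed when computing $C$: for an introduce node, delete the introduced vertex from both coordinates of $\phi_i$; for a join node, both children inherit $\phi_i$ unchanged; for a forget node forgetting $w$, the child signature is $(S_i\cup\{w\},U_i)$ if $w\in S$, is $(S_i,U_i\cup\{w\})$ if $w=\pin$, and otherwise is chosen between $(S_i,U_i\cup\{w\})$ and $(S_i,U_i)$ by checking which of the corresponding entries of $C[j]$ equals $\bot$ --- the same rule and correctness argument as in the forget case of the $C$-update. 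Each step thus inspects $O(1)$ entries of $C$ and $CardU$, each of size $t^{O(1)}$.

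Finally, when the descent halts at $i_0$ the signature $\phi_{i_0}=(S\cap B_{i_0},\,U\cap B_{i_0})$ has already been computed along the way, so the routine returns $i_0$ together with $S_0=S\cap B_{i_0}$ and $U_0=U\cap B_{i_0}$ at no extra cost. The descent path has length at most the depth $O(t\log n)$ of the decomposition and each step is $O(t^{O(1)})$, giving total running time $O(t^{O(1)}\cdot\log n)$, which completes the argument.
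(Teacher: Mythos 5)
Your proposal is correct and follows essentially the same approach as the paper's proof: descend from the root maintaining the invariant that $CardU$ at the current node exceeds $|U|/2$, read $c_i$ off the $CardU$ table by maintaining the correct signature $(S\cap B_i, U\cap B_i)$ (updated at each step, with the forget-node case resolved by inspecting which entries of $C$ at the child are $\bot$), and stop at the first child where the value falls to $\leq|U|/2$, observing it cannot drop below $|U|/4$ in a single forget or join step. The only cosmetic difference is the boundary of the invariant ($>$ vs.\ $\geq |U|/2$) and that you isolate the exact requirement $|U|\geq 4$ rather than quoting the stronger bound $|U|\geq 36(k+t+2)$ that the paper inherits from the surrounding \qUsep{} implementation.
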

\begin{proof}
  The algorithm keeps track of a node $i$ of the tree decomposition
  together with a pair of subsets $(U_i,S_i)=(B_i\cap U,B_i\cap S)$
  being the intersections of the bag associated to the current node
  with $U$ and $S$, respectively.  The algorithm starts with the root
  node $r$ and two empty subsets, and iteratively traverses down the
  tree keeping an invariant that $CardU[i][(U_i,S_i)]\geq |U|/2$.
  Whenever we consider a join node $i$ with two sons $j_1,j_2$, we
  choose to go down to the node where $CardU[i_t][(U_{j_t},U_{j_t})]$
  is larger among $t=1,2$.  In this manner, at each step
  $CardU[i][(U_i,S_i)]$ can be decreased by at most $1$ in case of a
  forget node, or can be at most halved in case of a join node.  As
  $|U|\geq 36(k+t+2)$, it follows that the first node $i_0$ when the
  invariant $CardU[i][(U_i,S_i)]\geq |U|/2$ ceases to hold, satisfies
  $|U|/4 \leq CardU[i_0][(U_{i_0},S_{i_0})]\leq |U|/2$, and therefore
  can be safely returned by the algorithm.
  
  It remains to argue how sets $(U_i,S_i)$ can be updated at each step
  of the traverse down the tree.  Updating $S_i$ is trivial as we
  store an explicit table remembering for each vertex whether it
  belongs to $S$.  Therefore, now we focus on updating $U$.
  
  The cases of introduce and join nodes are trivial.  If $i$ is an
  introduce node with son $j$, then clearly $U_j=U_i\cap B_j$.
  Similarly, if $i$ is a join node with sons $j_1,j_2$, then
  $U_{j_1}=U_{j_2}=U_i$.  We are left with the forget node.
  
  Let $i$ be a forget node with son $j$, and let $B_j=B_i\cup \{w\}$.  We
  have that $U_j=U_i\cup \{w\}$ or $U_j=U_i$, depending whether $w\in
  U$ or not.  This information can be read from the table $C[j]$ as
  follows:
  \begin{itemize}
  \item if $C[j][(U_i\cup \{w\},S_j)]=\bot$, then $w\notin U$ and $U_j=U_i$;
  \item if $C[j][(U_i,S_j)]=\bot$, then $w\in U$ and $U_j=U_i\cup \{w\}$;
  \item otherwise, both $C[j][(U_i,S_j)]$ and $C[j][(U_i\cup \{w\},S_j)]$
    are not equal to $\bot$; this follows from the fact that at least
    one of them, corresponding to the correct choice whether $w\in U$
    or $w\notin U$, must be not equal to $\bot$.  Observe that in this
    case $w$ is in a singleton equivalence class of $C[j][(U_i\cup
    \{w\},S_j)]$, and the connected component of $w$ in the extension
    of $U_i\cup \{w\}$ cannot contain the pin $\pin$.  It follows that
    $w\notin U$ and we take $U_j=U_i$.
  \end{itemize}
  
  Computation at each step of the tree traversal takes $O(t^{O(1)})$
  time.  As the tree has logarithmic depth, the whole algorithm runs
  in $O(t^{O(1)}\cdot \log n)$ time.
\end{proof}

\paragraph{Dynamic programming for pushed separators}
In this subsection we show how to construct dynamic programming tables
for finding pushed separators.  The implementation resembles that of
table $T_2$, used for balanced $S$-separators.

In table $T_4$ we store entries for every node $i$ of the tree
decomposition, for every signature $\phi=(S_i,U_i)$ of $B_i$, and for
every $4$-tuple $\psi=(L,X,R,x)$, called again the {\emph{interface}},
where
\begin{itemize}
\item $(L,X,R)$ is a partition of $U_i$,
\item $x$ is an integer between $0$ and $k+1$.
\end{itemize}
Again, the intuition is that the interface encodes the interaction of
a potential solution with the bag.  Note that for every bag $B_i$ we
store at most $5^{|B_i|}\cdot (k+2)$ entries.

We proceed to the formal definition of what is stored in table $T_4$.
Let us fix a signature $\phi=(S_i,U_i)$ of $B_i$, and let
$(S^{\ext}_i,U^{\ext}_i)$ be its extension.  For an interface
$\psi=(L,X,R,x)$, we say that a terminal separation $(L^{\ext},
X^{\ext}, R^{\ext})$ in $G[U^{\ext}_i]$ with terminals $L,R$ is an
{\emph{extension consistent}} with interface $\psi=(L,X,R,x)$ if
\begin{itemize}
\item $L^{\ext}\cap B_i=L$, $X^{\ext}\cap B_i=X$ and $R^{\ext}\cap B_i=R$;
\item $|X^{\ext}\cap W_i|=x$.
\end{itemize}
Then entry $T_4[i][\phi][\psi]$ contains the pair $(r,X_0)$ where $r$ is the
maximum possible $|R^{\ext}\cap W_i|$ among extensions consistent with
$\psi$, and $X_0$ is the corresponding set $X^{\ext}\cap W_i$ for
which this maximum was attained, or $\bot$ if the signature $\phi$ is
invalid or no consistent extension exists.

We now present how to compute entries of table $T_4$ for every node
$i$ depending on the entries of children of $i$.  We consider
different cases, depending of the type of node $i$.  For every case,
we consider only signatures that are valid, as for the invalid ones we
just put value $\bot$.  

\vskip 0.3cm
\noindent{\bf{Case 1: Leaf node.}} If $i$ is a leaf node then
$T_4[i][(\emptyset,\emptyset)][(\emptyset,\emptyset,\emptyset,0)]=(0,\emptyset)$,
and all the other entries are assigned $\bot$.  \vskip 0.3cm

\noindent{\bf{Case 2: Introduce node.}} Let $i$ be a node that
introduces vertex $v$, and $j$ be its only child.  Consider some
signature $\phi=(S_i,U_i)$ of $B_i$ and an interface $\psi=(L,X,R,x)$;
we would like to compute $T_4[i][\phi][\psi]=(r_i,X^i_0)$.  Let
$\phi',\psi'$ be natural intersections of $\phi,\psi$ with $B_j$,
respectively, that is, $\phi'=(S_i\cap B_j, U_i\cap B_j)$ and
$\psi'=(L\cap B_j,X\cap B_j,R\cap B_j,x)$.  Let
$T_4[j][\phi'][\psi']=(r_j,X^j_0)$.  We consider some sub-cases,
depending on the alignment of $v$ in $\phi$ and $\psi$.  The cases
with $v$ belonging to $L$ and $R$ are symmetric, so we consider only
the case for $L$.

\vskip 0.1cm {\bf{Case 2.1: $v\in X$.}} Note that every extension
consistent with interface $\psi$ is an extension consistent with
$\psi'$ after trimming to $G_j$.  On the other hand, every extension
consistent with $\psi'$ can be extended to an extension consistent
with $\psi$ by adding $v$ to the extension of $X$.  Hence, it follows
that we can simply take $(r_i,X^i_0)=(r_j,X^j_0)$.

\vskip 0.1cm {\bf{Case 2.2: $v\in L$.}} Similarly as in the previous
case, every extension consistent with interface $\psi$ is an extension
consistent with $\psi'$ after trimming to $G_j$.  On the other hand,
if we are given an extension consistent with $\psi'$, then we can add
$v$ to $L$ and make an extension consistent with $\psi$ if and only if
$v$ is not adjacent to any vertex of $R$; this follows from the fact
that $B_j$ separates $v$ from $W_j$, so the only vertices from
$R^{\ext}$ that $v$ could be possibly adjacent to, lie in $B_j$.
However, if $v$ is adjacent to a vertex of $R$, then we can obviously
put $(r_i,X^i_0)=\bot$ as there is no extension consistent with
$\psi$: property that there is no edge between $L$ and $R$ is broken
already in the bag.  Otherwise, by the reasoning above we can put
$(r_i,X^i_0)=(r_j,X^j_0)$.

\vskip 0.1cm

{\bf{Case 2.3: $v\in B_i\setminus U_i$.}} Again, in this case we have one-to-one
correspondence of extensions consistent with $\psi$ and extensions
consistent with $\psi'$, so we may simply put $(r_i,X^i_0)=(r_j,X^j_0)$.

\vskip 0.3cm

\noindent{\bf{Case 3: Forget node.}} Let $i$ be a node that forgets
vertex $w$, and $j$ be its only child.  Consider some signature
$\phi=(S_i,U_i)$ of $B_i$, and some interface $\psi=(L,X,R,x)$; we
would like to compute $T_4[i][\phi][\psi]=(r_i,X^i_0)$.  Let
$\phi'=(S_j,U_j)$ be the only the extension of signature $\phi$ to
$B_j$ that has the same extension as $\phi$; $\phi'$ can be deduced by
looking up which signatures are found valid in table $C$ in the same
manner as in the forget step for computation of table $C$.  We
consider two cases depending on alignment of $w$ in $\phi'$:

\vskip 0.1cm {\bf{Case 3.1: $w\notin U_j$.}} If $w$ is not in $U_j$, then
it follows that we may put $(r_i,X^i_0)=T_4[j][\phi'][\psi']$:
extensions consistent with $\psi$ correspond one-to-one to extensions
consistent with $\psi'$.

\vskip 0.1cm {\bf{Case 3.2: $w\in U_j$.}} Assume that there exists some
extension $(L^{\ext}, X^{\ext}, R^{\ext})$ consistent with $\psi$, and
assume further that this extension is the one that maximizes
$|R^{\ext}\cap W_i|$.  In this extension, vertex $w$ is either in
$L^{\ext}$, $X^{\ext}$, or in $R^{\ext}$.  Let us define the
corresponding interfaces:
\begin{itemize}
\item $\psi_L=(L\cup \{w\},X,R,x)$;
\item $\psi_X=(L,X\cup \{w\},R,x-1)$;
\item $\psi_R=(L,X,R\cup \{w\},x)$.
\end{itemize}
If $x-1$ turns out to be negative, we do not consider $\psi_X$.  For
$t\in \{L,X,R\}$, let $(r_j,X^{j,t}_0)=T_4[j][\phi'][\psi_t]$.  It
follows that for at least one $\psi'\in \{\psi_L,\psi_X,\psi_R\}$
there must be an extension consistent with $\psi'$: it is just the
extension $(L^{\ext}, X^{\ext}, R^{\ext})$.  On the other hand, any
extension consistent with any of interfaces $\psi_L,\psi_X,\psi_R$ is
also consistent with $\psi$.  Hence, we may simply put
$r_i=\max(r_L,r_X,r_R+1)$, and define $X_0^i$ as the corresponding
$X_0^{j,t}$, with possibly $w$ appended if $t=X$.  Of course, in this
maximum we do not consider the interfaces $\psi_t$ for which
$T_4[j][\phi'][\psi_t]=\bot$, and if $T_4[j][\phi'][\psi_t]=\bot$ for
all $t\in \{L,X,R\}$, we put $(r_i,X^i_0)=\bot$.

\vskip 0.3cm

\noindent{\bf{Case 4: Join node.}} Let $i$ be a join node and
$j_1,j_2$ be its two children.  Consider some signature
$\phi=(S_i,U_i)$ of $B_i$, and an interface $\psi=(L,X,R,x)$; we would
like to compute $T_4[i][\phi][\psi]=(r_i,X_0^i)$.  Let
$\phi_1=(S_i,U_i)$ be a signature of $B_{j_1}$ and $\phi_2=(S_i,U_i)$
be a signature of $B_{j_2}$.  Assume that there is some extension
$(L^{\ext}, X^{\ext}, R^{\ext})$ consistent with $\psi$, and assume
further that this extension is the one that maximizes $|R^{\ext}\cap
W_i|$.  Define $r^p=|W_{j_p}\cap R|$ and $x^p=|W_{j_p}\cap X|$ for
$p=1,2$; note that $r^1+r^2=r_i$ and $x^1+x^2=x$.  It follows that in
$G_{j_1}$, $G_{j_2}$ there are some extensions consistent with
$(L,X,R,x^1)$ and $(L,X,R,x^2)$, respectively --- these are simply
extension $(L^{\ext}, X^{\ext}, R^{\ext})$ intersected with $V_i,V_j$,
respectively.  On the other hand, if we have some extensions in
$G_{j_1}$, $G_{j_2}$ consistent with $(L,X,R,x^1)$ and $(L,X,R,x^2)$
for numbers $x^p$ such that $x^1+x^2=x$, then the point-wise union of
these extensions is an extension consistent with $(L,X,R,x)$.  It
follows that in order to compute $(r_i,X^i_0)$, we need to iterate
through choices of $x^p$ such that we have non-$\bot$ entries in
$T_2[j_1][\phi_1][(L,X,R,x^1)]=(r^{x^1}_{j_1},X^{j_1,x^1}_0)$ and
$T_2[j_2][\phi_2][(L,X,R,x^2)]=(r^{x^1}_{j_1},X^{j_1,x^1}_0)$, choose
$x^1,x^2$ for which $r^{x^1}_{j_1}+r^{x^2}_{j_2}$ is maximum, and
define $(r_i,X_0^i)=(r^{x^1}_{j_1}+r^{x^2}_{j_2},X^{j_1,x^1}_0\cup
X^{j_2,x^2}_0)$.  Of course, if for no choice of $x^1,x^2$ it is
possible, we put $(r_i,X_0^i)=\bot$.  Note that computing the union of
the sets $X^{j_p,x^p}_0$ for $p=1,2$ takes $O(k)$ time as their sizes
are bounded by $k$, and there is $O(t)$ possible choices of $x^p$ to
check.

\vskip 0.3cm

Similarly as before, for every addition/removal of vertex $v$ to/from
$S$ or marking/unmarking $v$ as a pin, we can update table $T_4$ in
$O(5^t\cdot k^{O(1)}\cdot \log n)$ time by following the path from
$r_v$ to the root and recomputing the tables in the traversed nodes.
Also, $T_4$ can be initialized in $O(5^t\cdot k^{O(1)}\cdot n)$ time
by processing the tree decomposition in a bottom-up manner and
applying the formula for every node.  Note that updating/initializing
table $T_4$ must be performed after updating/initializing tables $P$
and $C$.

\paragraph{Implementing query \qUsep}
We now show how to combine Lemmata~\ref{lem:c4vc}
and~\ref{lem:tracing} with the construction of table $T_4$ to
implement query \qUsep.

The algorithm performs as follows.  First, using
Lemma~\ref{lem:tracing} we identify a node $i_0$ of the tree
decomposition, together with disjoint subsets
$(U_{i_0},S_{i_0})=(U\cap B_{i_0},S\cap B_{i_0})$ of $B_{i_0}$, such
that $ |U|/4\leq |W_{i_0}\cap U|\leq |U|/2$.  Let $A_2=W_{i_0}$ and
$A_1=V(G)\setminus V_{i_0}$.  Consider separation $(A_1\cap
U,B_{i_0}\cap U,A_2\cap U)$ of $G[U]$ and apply Lemma~\ref{lem:c4vc}
to it.  Let $(T^0_L,X^0_B,T^0_R)$ be the partition of $B_{i_0}$ and
$k^0_1,k^0_2$ be the integers with $k^0_1+k^0_2+|X^0_B|\leq k+1$, whose
existence is guaranteed by Lemma~\ref{lem:c4vc}.

The algorithm now iterates through all possible partitions
$(T_L,X_B,T_R)$ of $B_{i_0}$ and integers $k_1,k_2$ with
$k_1+k_2+|X_B|\leq k+1$.  We can clearly discard the partitions where
there is an edge between $T_L$ and $T_R$.  For a partition
$(T_L,X_B,T_R)$, let $G_1,G_2$ be defined as in Lemma~\ref{lem:c4vc}
for the graph $G[U]$.  For a considered tuple $(T_L,X_B,T_R,k_1,k_2)$,
we try to find:
\begin{itemize}
\item[(i)] a separator of a right-pushed separation of order $k_2$ in
  $G_2$, and the corresponding cardinality of the right side;
\item[(ii)] a separator of a left-pushed separation of order $k_1$ in
  $G_1$, and the corresponding cardinality of the left side.
\end{itemize}
Goal (i) can be achieved simply by reading entries
$T_4[i_0][(U_{i_0},S_{i_0})][(T_L,X_B,T_R,k')]$ for $k'\leq k_2$, and
taking the right-pushed separation with the largest right side.  We
are going to present how goal (ii) is achieved in the following
paragraphs, but firstly let us show that achieving both of the goals
is sufficient to answer the query.

Observe that if for some $(T_L,X_B,T_R)$ and $(k_1,k_2)$ we obtained
both of the separators, denote them $X_1,X_2$, together with
cardinalities of the corresponding sides, then using these
cardinalities and precomputed $|U|$ we may check whether $X_1\cup
X_2\cup X_B$ gives us a $\frac{8}{9}$-separation of $G[U]$.  On the
other hand, Lemma~\ref{lem:c4vc} asserts that when
$(T^0_L,X^0_B,T^0_R)$ and $(k^0_1,k^0_2)$ are considered, we will find
some pushed separations, and moreover any such two separations will
yield a $\frac{8}{9}$-separation of $G[U]$.  Note that this is indeed
the case as the sides of the obtained separation have cardinalities at
most $\frac{7}{8}|U|+\frac{(k+1)+(t+1)}{2}=\frac{8}{9}|U|+\frac{k+t+2}{2}-\frac{|U|}{72}\leq
\frac{8}{9}|U|$, since $|U|\geq 36(k+t+2)$.

We are left with implementing goal (ii).  Let $G_1'$ be $G_1$ with
terminal sets swapped; clearly, left-pushed separations in $G_1$
correspond to right-pushed separations in $G_1'$.  We implement
finding a right-pushed separations in $G_1'$ as follows.

Let $P=(i_0,i_1,\ldots,i_h=r)$ be the path from $i_0$ to the root $r$
of the tree decomposition.  The algorithm traverses the path $P$,
computing tables $D[i_t]$ for consecutive indexes $t=1,2,\ldots,t$.
The table $D[i_t]$ is indexed by signatures $\phi$ and interfaces $\psi$
in the same manner as $T_4$.  Formally, for a fixed signature
$\phi=(S_{i_t},U_{i_t})$ of $B_{i_t}$ with extension
$(S^{\ext}_{i_t},U^{\ext}_{i_t})$, we say that this signature is
{\emph{valid with respect to $(S_{i_0},U_{i_0})$}} if it is valid and
moreover $(S_{i_0},U_{i_0})=(S^{\ext}_{i_t}\cap
B_{i_0},U^{\ext}_{i_t}\cap B_{i_0})$.  For an interface $\psi$ we say
that separation $(L^{\ext}, X^{\ext}, R^{\ext})$ in
$G[U^{\ext}_i\setminus W_{i_0}]$ with terminals $L,R$ is
{\emph{consistent with $\psi$ with respect to $(T_L,X_B,T_R)$}}, if it
is consistent in the same sense as in table $T_4$, and moreover
$(T_L,X_B,T_R)=(L^{\ext}\cap B_{i_0}, X^{\ext}\cap B_{i_0},
R^{\ext}\cap B_{i_0})$.  Then entry $T[i_t][\phi][\psi]$ contains the
pair $(r,X_0)$ where $r$ is the maximum possible $|R^{\ext}\cap W_i|$
among extensions consistent with $\psi$ with respect to
$(T_L,X_B,T_R)$, and $X_0$ is the corresponding set $X^{\ext}\cap W_i$
for which this maximum was attained, or $\bot$ if the signature $\phi$
is invalid with respect to $(S_{i_0},U_{i_0})$ or no such consistent
extension exists.

The tables $D[i_t]$ can be computed by traversing the path $P$ using the
same recurrential formulas as for table $T_4$.  When computing the
next $D[i_t]$, we use table $D[i_{t-1}]$ computed in the previous step
and possible table $T_4$ from the second child of $i_t$.  Moreover, as
$D[i_0]$ we insert the dummy table $Dummy[\phi][\psi]$ defined as follows:
\begin{itemize}
\item $Dummy[(U_{i_0},S_{i_0})][(T_R,X_B,T_L,0)]=0$;
\item all the other entries are evaluated to $\bot$.
\end{itemize}
It is easy to observe that table $Dummy$ exactly satisfies the
definition of $D[i_0]$.  It is also straightforward to check that the
recurrential formulas used for computing $T_4$ can be used in the same
manner to compute tables $D[i_t]$ for $t=1,2,\ldots,h$.  The
definition of $D$ and the method of constructing it show, that the
values
$D[r][(\emptyset,\emptyset)][(\emptyset,\emptyset,\emptyset,x)]$ for
$x=0,1,\ldots,k$, correspond to exactly right-pushed separations with
separators of size exactly $x$ in the graph $G_1'$: insertion of the
dummy table removes $A_2$ from the graph and forces the separation to
respect the terminals in $B_{i_0}$.

Let us conclude with a summary of the running time of the query.
Algorithm of Lemma~\ref{lem:tracing} uses $O(t^{O(1)}\cdot \log n)$
time.  Then we iterate through at most $O(3^t\cdot k^2)$ tuples
$(T_L,X_B,T_R)$ and $(k_1,k_2)$, and for each of them we spend $O(k)$
time on achieving goal (i) and $O(5^t\cdot k^{O(1)}\cdot \log n)$ time
on achieving goal (ii).  Hence, in total the running time is
$O(15^t\cdot k^{O(1)}\cdot \log n)$.

\bibliographystyle{plain}

\end{document}